\theoremstyle{plain}
\newtheorem{thm}{Theorem}
\newtheorem{prop}[thm]{Proposition}
\newtheorem{lemma}[thm]{Lemma}
\newtheorem{cor}[thm]{Corollary}
\theoremstyle{definition}
\newtheorem{definition}[thm]{Definition}
\newtheorem{remark}[thm]{Remark}
\newcommand{\tn}[1]{\ensuremath{\mathbb{T}^{#1}}}
\newcommand{\rn}[1]{\ensuremath{\mathbb{R}^{#1}}}
\newcommand{\nn}[1]{\ensuremath{\mathbb{N}^{#1}}}
\newcommand{\zn}[1]{\ensuremath{\mathbb{Z}^{#1}}}
\newcommand{\sn}[1]{\ensuremath{\mathbb{S}^{#1}}}
\newcommand{\ldr}[1]{\langle #1\rangle}
\newcommand{\g}{\gamma}
\renewcommand{\b}{\beta}
\newcommand{\e}{\epsilon}
\newcommand{\bK}{\bar{K}}
\newcommand{\bge}{\bar{g}}
\newcommand{\bp}{\bar{p}}
\newcommand{\bR}{\bar{R}}
\newcommand{\br}{\bar{r}}
\newcommand{\bvr}{\bar{\varrho}}
\newcommand{\vare}{\varepsilon}
\newcommand{\bPhi}{\bar{\Phi}}
\newcommand{\Phia}{\Phi_1}
\newcommand{\Phib}{\Phi_0}
\newcommand{\bmu}{\bar{\mu}}
\newcommand{\bS}{\bar{S}}
\newcommand{\bk}{\bar{k}}
\newcommand{\bM}{\bar{M}}
\newcommand{\bx}{\bar{x}}
\newcommand{\bX}{\bar{X}}
\newcommand{\bY}{\bar{Y}}
\newcommand{\bc}{\bar{c}}
\newcommand{\bd}{\bar{d}}
\newcommand{\refer}{\mathrm{ref}}
\newcommand{\rorel}{\mathrm{rel}}
\newcommand{\rood}{\mathrm{od}}
\newcommand{\rodiv}{\mathrm{div}}
\newcommand{\roin}{\mathrm{in}}
\newcommand{\roout}{\mathrm{out}}
\newcommand{\rond}{\mathrm{nd}}
\newcommand{\roK}{\mathrm{K}}
\newcommand{\Spe}{\mathrm{Sp}}
\newcommand{\bh}{\bar{h}}
\newcommand{\bg}{\bar{g}}
\newcommand{\bD}{\bar{D}}
\newcommand{\bnabla}{\overline{\nabla}}
\newcommand{\chK}{\check{K}}
\newcommand{\hG}{\hat{G}}
\newcommand{\hU}{\hat{U}}
\newcommand{\hN}{\hat{N}}
\newcommand{\hg}{\hat{g}}
\newcommand{\hh}{\hat{h}}
\newcommand{\chh}{\check{h}}
\newcommand{\chb}{\check{b}}
\newcommand{\chmu}{\check{\mu}}
\newcommand{\hml}{\hat{\mathcal{L}}}
\newcommand{\ml}{\mathcal{L}}
\newcommand{\mA}{\mathcal{A}}
\newcommand{\me}{\mathcal{E}}
\newcommand{\mK}{\mathcal{K}}
\newcommand{\mP}{\mathcal{P}}
\newcommand{\mW}{\mathcal{W}}
\newcommand{\cweight}{\mathfrak{u}}
\newcommand{\bmR}{\bar{\mathcal{R}}}
\newcommand{\bmS}{\bar{\mathcal{S}}}
\newcommand{\bmN}{\bar{\mathcal{N}}}
\newcommand{\msK}{\mathscr{K}}
\newcommand{\msO}{\mathscr{O}}
\newcommand{\msX}{\mathscr{X}}
\newcommand{\bmsX}{\bar{\mathscr{X}}}
\newcommand{\hmsX}{\hat{\mathscr{X}}}
\newcommand{\msY}{\mathscr{Y}}
\newcommand{\bmsY}{\bar{\mathscr{Y}}}
\newcommand{\hmsY}{\hat{\mathscr{Y}}}
\newcommand{\bsfx}{\bar{\mathsf{x}}}
\newcommand{\tr}{\mathrm{tr}}
\renewcommand{\d}{\partial}
\newcommand{\bfI}{\mathbf{I}}
\newcommand{\bfJ}{\mathbf{J}}
\newcommand{\bfK}{\mathbf{K}}
\newcommand{\bfL}{\mathbf{L}}
\begin{document}

\author{Hans Ringstr\"{o}m}
\title{Initial data on big bang singularities}
\begin{abstract}
  We introduce a geometric notion of initial data on the singularity for the Einstein-scalar field equations and demonstrate that
  previous notions of data on the singularity constitute special cases. The definition thus gives a unified geometric perspective on
  existing results. The formulation also leads to a natural geometric initial value formulation of Einstein's equations with initial data on the
  singularity. However, in order for any notion of initial data on the singularity to parametrise convergent solution, the crucial question is: do
  convergent solutions necessarily induce such data? There are several notions of data on the singularity for the Einstein-scalar field
  equations and corresponding existence results. The existence results demonstrate that the corresponding notions are sufficient to ensure
  convergent behaviour. However, none of the existing notions of data on the singularity are necessary. A central part of the article is
  therefore to demonstrate the necessity of our requirements. 
\end{abstract}

\maketitle

\section{Introduction}

In this article, we are interested in the \textit{Einstein-scalar field equations} with a \textit{cosmological constant} $\Lambda$; i.e., the equations
\begin{subequations}
  \begin{align}
    G+\Lambda g = & T,\\
    \Box_g\phi = & 0
  \end{align}
\end{subequations}
for a Lorentz metric $g$ and a smooth function $\phi$ (the scalar field) on a manifold $M$. Here
\[
G:=\mathrm{Ric}-\frac{1}{2}Sg
\]
is the \textit{Einstein tensor}, $\mathrm{Ric}$ is the \textit{Ricci tensor} of $g$ and $S$ is the \textit{scalar curvature} of $g$. Moreover,
$\Lambda\in\rn{}$, $\Box_g$ is the wave operator associated with $g$ and
\[
T=d\phi\otimes d\phi-\tfrac{1}{2}|d\phi|_g^2\cdot g.
\]
We are also interested in \textit{Einstein's vacuum equations}; i.e. the equation $G=0$ (when we speak of Einstein's vacuum equations we, in other
words, assume a vanishing cosmological constant). 

In order to explain the origin of the notion of initial data on the singularity used in, e.g., \cite{aarendall,damouretal}, it is natural to return
to the ideas of
Belinski\v{\i}, Khalatnikov and Lifschitz (BKL for short). One fundamental idea in the BKL proposal concerning the nature of cosmological singularities
is that kinetic terms (roughly speaking time derivative terms) should dominate over spatial curvature terms (roughly speaking spatial derivative terms).
In \cite{ELS}, the authors introduce a notion of velocity dominated singularity by comparing, asymptotically, an actual solution with a solution to
a truncated set of equations, obtained by working in geodesic normal coordinates and dropping spatial derivative terms. The solutions to the truncated
set of equations can, in some sense, be thought of as initial data on the singularity. These ideas were refined in \cite{JimandVince}, in which the
authors introduce the notion of a solution being \textit{asymptotically velocity term dominated near the singularity} (AVTDS); see \cite[p.~88]{JimandVince}.
Again, a solution is AVTDS if it asymptotes to a solution of truncated system of equations, the \textit{velocity
  term dominated} (VTD) system; see \cite[(3) and (4), pp.~87--88]{JimandVince}. In this sense, solutions to the VTD system constitute initial data on
the singularity. However, as pointed out in \cite{JimandVince}, the VTD system has no covariant formulation. In this sense, the notion of initial data
on the singularity is, in this case, not geometric in nature. The exact norm to be used to compare an actual solution to a solution to the VTD system
is also left unspecified in the definition of AVTDS solutions. Next, in \cite{aarendall}, the authors write down the velocity dominated system 
in the case of the Einstein-scalar field equations and the Einstein-stiff fluid equations in $3+1$-dimensions using a Gaussian foliation; see
\cite[(10)--(14), p.~483]{aarendall}. Given real analytic solutions to the velocity dominated system (satisfying certain conditions), the authors
then prove that there are real analytic solutions to the actual equations
asymptoting to the prescribed solutions to the velocity dominated system. This is a very important result. On the other hand, it is not quite clear that
the notion of initial data on the singularity is geometric. More importantly, it is not clear that convergent solutions necessarily induce the type of
data introduced in \cite{aarendall}. In fact, the conclusions of \cite{fal} (see also \cite{klinger}) imply that the conditions of \cite{aarendall} are
not necessary (note that the
vacuum equations considered in \cite{fal} are of course a special case of the Einstein-scalar field equations considered in \cite{aarendall}). In
\cite{fal}, the authors specify initial data on a singularity with $\tn{3}$-topology for the Einstein vacuum equations and prove that there are
corresponding solutions to the Einstein vacuum equations converging to the data at the singularity. In this case it is not so clear that the formulation
is geometric in nature. Viewing the result of \cite{fal} as a result concerning the Einstein-scalar field equations, it is also clear from \cite{aarendall}
that the conditions are not necessary. However, even considering the result \cite{fal} as a result concerning the vacuum equations and ignoring the
restriction to $\tn{3}$-spatial topology, it is not clear why the conditions should be necessary. In addition to \cite{aarendall,damouretal,klinger,fal}, there
are several results in situations with symmetry; see, e.g., \cite{kar,iak,ren,iam,sta,ABIF,aeta}. In this case, the results \cite{aarendall,damouretal,klinger,fal}
illustrate that the relevant notions of initial data on the singularity are not necessary, since \cite{aarendall,damouretal,klinger,fal} apply
in the absence of symmetries. Moreover, the data on the singularity in the symmetric settings typically correspond to the asymptotics of solutions to
a symmetry reduced version of the equations. It is therefore not clear that the corresponding notions of data on the singularity are geometric.

In spite of the progress represented by the above results, it is clear that none of the corresponding notions of data on the singularty fulfill the
demands one would like to make:
\begin{enumerate}[(i)]
\item The notion of initial data on the singularity should be geometric in nature, and there should be a corresponding geometric initial value problem.
\item The notions of data used in the existing results should all be special cases of this notion.  
\item Convergent solutions should necessarily induce initial data on the singularity.
\end{enumerate}
Since Einstein's equations are geometric (they only determine the isometry class of the solution), the notion of data on the singularity should clearly
be geometric. Moreover, data on the singularity should give rise to unique corresponding maximal globally hyperbolic developments, and isometric
data on the singularity should give rise to isometric developments. Second, we clearly want the notion to encompass all previous notions. Finally,
it is only if one can prove the necessity of the notion of initial data introduced that one has a hope of parametrising convergent solutions by data on
the singularity. Moreover, if one is not able to prove necessity, there could be a large number of other notions of data on the singularity that
could be equally viable.

That initial data on the singularity, as defined here, give rise to a unique corresponding maximal globally hyperbolic development and that isometric
data on the singularity give rise to isometric developments is verified in \cite{grisales} for the Einstein-non-linear scalar field equations in
$3+1$-dimension. Needless to say, it would be of interest to verify this statement more generally. Nevertheless, in this article, we focus on the
remaining requirements in the above list. Combining the existence result in \cite{fal} with the geometric perspective developed in\cite{RinGeo}, we
formulate a notion of initial data on the singularity in the case of the $3+1$-dimensional vacuum equations in
Definition~\ref{def:vacuumidos}. This definition is then generalised to the setting of the $n+1$-dimensional Einstein-scalar field equations with a
cosmological constant; see Definition~\ref{def:ndsfidonbbs}. That the notion of data on the singularity used in \cite{fal} is a special case of 
Definition~\ref{def:vacuumidos} is verified in Proposition~\ref{prop:falaregeoidos}. That it is locally possible to go in the other direction is
verified in Proposition~\ref{prop:geometrictofal}. In Subsection~\ref{ssection:stablequiescentregime} we verify that the VTD solutions introduced in
\cite{aarendall} can be considered to be initial data on the singularity in the sense of Definition~\ref{def:ndsfidonbbs} in the non-degenerate setting
on which we focus here.

Even though the above definitions and observations are of interest, the main focus of this article is on the question of necessity. To address this
question, we first have to decide on a minimal requirement for convergence. 

\textbf{Minimal requirements, convergence.} To make sense of limits, we first we need a foliation of the causal past of a Cauchy hypersurface. Next,
since the limit should correspond to a cosmological singularity, it is natural to assume that the volume density, see
(\ref{eq:varphidef}) below, converges to zero uniformly or that the mean curvature diverges to infinity uniformly (i.e., that the foliation is crushing) 
in the direction of the singularity. Turning to the quantities that should converge, it is useful to recall the BKL proposal; i.e., that time
derivatives dominate over spatial derivatives. In other words, we expect the second fundamental form (roughly speaking the normal derivative of the
induced metric on the leaves of the foliation) and the normal derivative of the scalar field to constitute the leading order terms in describing the
asymptotics in the direction of the singularity. And if the asymptotics are supposed to be convergent, it is natural to expect the leading order terms
to converge (after an appropriate rescaling, if necessary). More specifically, let $\bK$ be the Weingarten map of the leaves of the foliation (i.e., the
second fundamental form with one index raised) and let $\theta$ denote the mean curvature of the leaves of the foliation. Since we are interested in
situations in which $\theta=\mathrm{tr}\bK$ diverges to infinity, it is clear that we need to expansion normalise $\bK$. 
A natural object to consider is then the expansion normalised Weingarten map, defined by $\mK:=\bK/\theta$. Concerning
this object, there is a clear dichotomy. There are solutions for which the eigenvalues of $\mK$ oscillate in the direction of the singularity;
see, e.g., \cite{cbu}. In fact, the asymptotics for the eigenvalues can even exhibit chaotic dynamics; see \cite{du}. This is a possibility we wish to
exclude here. Moreover, to the best of our knowledge, in all the results deducing convergent asymptotics, in particular in all the results concerning
data on the singularity, $\mK$ converges. It is therefore natural to require $\mK$ to converge to
a limit, say $\msK$. Similarly, it is natural to assume that an expansion normalised version of the normal derivative
of the scalar field converges; if $U$ denotes the future directed unit vector field normal to the leaves of the foliation and $\hU:=\theta^{-1}U$,
then we require $\hU\phi$ to converge to, say, $\Phi_1$. Again, to the best of our knowledge, in all the results deducing convergent asymptotics, in
particular in all the results concerning data on the singularity, this is satisfied.

It is clear that there is a big gap between these minimal requirements for convergence and the notions of initial data on the singularity introduced in
Definition~\ref{def:ndsfidonbbs} and \cite{aarendall,damouretal,fal}. 

\textbf{Additional requirements, data on the singularity.} Clearly, $\msK$ and $\Phi_1$ do not contain enough information to constitute initial data
on the singularity. In Definition~\ref{def:ndsfidonbbs}, there are two additional pieces of information, namely $\bh$ and $\Phi_0$. Here $\bh$
constitutes the limit of anisotropic rescalings of the metrics induced on the leaves of the foliation. Moreover, $\Phi_0$ is a lower order
term in the asymptotic expansion for $\phi$. Similar information constitute part of the initial data on the singularity in \cite{aarendall,damouretal,fal};
${}^{0}g_{ab}$ and ${}^0\phi$ in the case of \cite{aarendall,damouretal}, see \cite[Subsection~2.2, pp.~482--484]{aarendall} and
\cite[Subsection~2.2, pp.~1060-1061]{damouretal}, and $c_{ij}$ in the case of \cite{fal}, see \cite[(1.4) and (1.5), p.~1185]{fal}. Next,
the existence and uniqueness results in \cite{aarendall,damouretal,fal} require specific anisotropic convergence rates for the components of the expansion
normalised Weingarten map; see \cite[(41d), p.~492]{aarendall}, \cite[(2.33), p.~1065]{damouretal} and \cite[(1.11), p.~1188]{fal}. In addition, the data
on the singularity are required to satisfy constraint equations; see \cite[(10), p.~483]{aarendall},
\cite[(2.18), (2.19), p.~1062]{damouretal}, conditions (2) and (5) of \cite[Theorem~1.1, pp.~1185--1186]{fal} and 
condition (2) of Definition~\ref{def:ndsfidonbbs}. Finally, there are conditions on the eigenvalues, say $\{p_A\}$, of $\msK$ and the corresponding
eigenvector fields. One possibility is to impose the condition that the eigenvalues satisfy $1+p_A-p_B-p_C>0$ for all $A,B,C$ with $B\neq C$, see
Definition~\ref{def:ndsfidonbbs} and \cite[Theorem~10.1, p.~1104]{damouretal}; in $3+1$-dimensions, this requirement translates to $p_A>0$ for all $A$,
which is the condition appearing in \cite[Theorem~1, p.~484]{aarendall}. The second possibility is to demand that certain structure coefficients of the
eigenvector fields of $\msK$ vanish if the conditions on the eigenvalues are not satisfied; see condition (4) of Definition~\ref{def:ndsfidonbbs} and
note that a corresponding condition is implicitly part of the requirements in \cite{fal}.

If $\hU\phi$ converges at an appropriate rate, one could imagine integrating this estimate in order to obtain a lower order term in the asymptotic
expansion for $\phi$, say $\Phi_0$. The picture concerning the geometry is somewhat more complicated. Why should the specific anisotropic rescalings
of the metrics induced on the leaves of the foliation (see Definition~\ref{def:developmentsfLambda} and \cite{aarendall,damouretal,fal}) converge? Similarly,
why should the expansion normalised Weingarten map satisfy the stated anisotropic convergence rates? These are technical conditions going quite far
beyond the minimal requirements of convergence. Turning to the constraint equations on the singularity, one of them follows naturally from the
ideas of BKL, namely $\tr\msK^2+\Phi_1^2=1$. In fact, this equation is obtained by taking the limit of expansion normalised versions of the
Hamiltonian constraint equations induced on the leaves of the foliation. In order to justify this statement, note that the
Hamiltonian constraint can be written as in (\ref{eq:Hamiltonianconstraint}). Again, going back to the ideas of BKL, the spatial derivatives
should be less important than the time derivatives. In particular, the expansion normalised spatial scalar curvature $\bS/\theta^2$ (where $\bS$
denotes the scalar curvature of the leaves of the foliation) should tend to zero and the expansion normalised version of the spatial derivatives
of the scalar field should converge to zero. Finally, since we expect the mean curvature to diverge uniformly, we expect $\Lambda/\theta^2$
to converge to zero. Combining these observations with (\ref{eq:Hamiltonianconstraint}) and taking the limit leads to the conclusion that
$\tr\msK^2+\Phi_1^2=1$. This is one of the constraint equations for the data on the singularity; see Definition~\ref{def:ndsfidonbbs}. It is less
obvious why it should be necessary to demand that the second constraint equation, namely $\rodiv_{\bh}\msK=\Phi_1d\Phi_0$, hold. Finally, it is
far from clear why the conditions on the eigenvalues of $\msK$ and the structure coefficients of the eigenvector fields of $\msK$ should be necessary.

\textbf{Necessity.} In Subsection~\ref{ssection:contodataonsing}, we address the question of necessity. In order to give a rough idea of some of the
results, assume the foliation to be of the form $\bM\times I$, where $I=(0,t_+)$ and $t=0$ represents the singularity. We assume the lapse function to
be strictly positive and the shift vector field to vanish. Next, let $\bge_{\refer}$ be a reference Riemannian metric on $\bM$ and $\bge$ denote the metrics
induced on the leaves of the foliation. Then the \textit{volume density} $\varphi$ is defined by the relation
\begin{equation}\label{eq:varphidef}
  \varphi\mu_{\bge_{\refer}}=\mu_{\bge},
\end{equation}
where $\mu_{\bge_{\refer}}$ and $\mu_{\bge}$ denote the volume forms associated with $\bge_{\refer}$ and $\bge$ respectively; here we view $\bge$ as a family
of Riemannian metrics on $\bM$. The \textit{logarithmic volume density} is defined by $\varrho:=\ln\varphi$. We assume that $\varrho$ diverges uniformly
to $-\infty$ in the direction of the singularity. In order
to be able to make conclusions, it is not sufficient to assume convergence, we also need a rate. Since we also wish to have assumptions that do not
involve a limit, we prefer to impose decay on $\hml_{U}\mK$, the expansion normalised normal derivative of $\mK$. The most basic bound we impose is,
in fact, that
\[
|\hml_{U}\mK|_{\bge_{\refer}}\leq C_0 \ldr{\varrho}^{a_0}e^{\e\varrho},
\]
where $\ldr{\varrho}:=(\varrho^2+1)^{1/2}$; $C_0$ and $a_0$ are constants; and $\e:\bM\rightarrow (0,1)$ has a strictly positive lower bound.
From this bound, it can be deduced that $\mK$ converges to a limit $\msK$, and we here assume the eigenvalues of $\msK$ to be distinct. Under these,
quite weak, assumptions, it can be deduced that the relevant anisotropic rescaling of the metric converges to a limit in $C^0$; see
(\ref{eq:gbABgenasform}) and (\ref{eq:bABasymptotics}) below. Moreover, the improved anisotropic convergence rates for the different components
of $\mK$ and $\hml_{U}\mK$ can be demonstrated; see (\ref{eq:mKmsYAmsXBgenest}) and (\ref{eq:hmlUfixedframeopt}). In order to obtain conclusions
for spatial derivatives, it is necessary to impose conditions on the relative spatial derivatives of $\hN:=\theta N$, where $N$ denotes the
lapse function; see (\ref{eq:bDllnhNconditions}) below (note that our requirement allows polynomial growth of the relative spatial derivatives,
whereas the relative spatial derivatives are assumed to decay exponentially in \cite{fal}; see \cite[(1.11), p.~1188]{fal}). In fact, combining
the above assumptions with (\ref{eq:CkexpdechmlUmK}) and
(\ref{eq:bDllnhNconditions}) yields the same conclusions concerning the anisotropic rescalings of the metric and anisotropic convergence
rates as before, but for higher $C^k$-norms. Similarly, imposing decay bounds on $\hU^2\phi$, see
(\ref{eq:bDlhUsqphiestimate}), yields the desired asymptotics for the scalar field. Finally, given the above assumptions; that
(\ref{eq:CkexpdechmlUmK}), (\ref{eq:bDlhUsqphiestimate}) and (\ref{eq:bDllnhNconditions}) hold with $k=2$; and that $\tr\msK^2+\Phi_1^2=1$,
we deduce that all the other conditions of Definition~\ref{def:ndsfidonbbs} follow as consequences; see Theorem~\ref{thm:reprod}. In this sense,
convergent solutions necessarily induce data on the singularity in the sense of Definition~\ref{def:ndsfidonbbs}. Moreover, for this reason,
it is natural to expect that data on the singularity parametrise convergent solutions and that Definition~\ref{def:ndsfidonbbs} is the natural
notion of data on the singularity (there should not be yet another notion which is not a special case of Definition~\ref{def:ndsfidonbbs}). 

\textbf{The oscillatory setting.} 
One context in which the notion of initial data on the singularity (and, in particular, the necessity of that notion) is potentially important,
is in the case of oscillatory and spatially inhomogeneous big bang singularities. Considering the existing arguments in the spatially homogeneous
setting, a crucial first step in the study of oscillatory behaviour is to understand how solutions approach the Kasner circle along a stable
manifold, and then depart via an unstable manifold. In order to carry out a similar analysis in the spatially inhomogeneous setting, it is of
central importance to first identify the stable manifold. Due to the above observations, it is natural to expect that this stable manifold should
correspond to a subset of vacuum data on the singularity introduced in Definition~\ref{def:vacuumidos}; see
Subsection~\ref{ssection:oscibbsing} below for a more detailed discussion of this aspect.  

\textbf{Outline.} The outline of the introduction is the following. In Subsection~\ref{ssection:fal}, we recall the results of \cite{fal}.
In Subsection~\ref{ssection:geometricformfv}, we provide a geometric notion of initial data on the singularity in the $3+1$-dimensional
vacuum setting. We also prove that this notion is locally equivalent to the one introduced in \cite{fal}. In
Subsection~\ref{ssection:geo developments}, we introduce a corresponding geometric notion of development. In Subsection~\ref{ssection:generalisations},
we then generalise the notion of data on the singularity and developments to higher
dimensions and scalar field matter. In Subsection~\ref{ssection:stablequiescentregime}, we verify that the notion of VTD solution used in
\cite{aarendall} as a substitute for initial data on the singularity can be considered to be a special case of the data on the singularity introduced
here. In Subsection~\ref{ssection:imprasympt}, we turn to the question of improving the asymptotics. The requirements in our definition
of development are weaker than the requirements in \cite{aarendall,fal}. For that reason, it is of interest to analyse if one can, starting with our
requirements, derive the improvements corresponding to \cite{aarendall,fal}. This is also related to the question of necessity discussed above. This
naturally leads us to the topic of necessity, which is addressed in Subsection~\ref{ssection:contodataonsing}. Finally, in
Subsection~\ref{ssection:concloutl}, we provide conclusions and an outlook.

\subsection{Quiescent solutions to Einstein's vacuum equations without symmetries}\label{ssection:fal}
The notion of initial data on a singularity we propose here arises naturally when considering the recent results of Fournodavlos and Luk,
\cite{fal}, with the geometric perspective developed in \cite{RinGeo} in mind. For this reason, it is natural to begin by a discussion of the
results of \cite{fal}. In this article, the authors specify the following initial data on the singularity
\begin{definition}\label{def:falid}
  \textit{Fournodavlos and Luk initial data on the singularity} consist of smooth functions $c_{ij},p_{i}:\tn{3}\rightarrow\rn{}$, $i,j=1,2,3$
  satisfying the following conditions: $c_{ij}=c_{ji}$; $p_{1}<p_{2}<p_{3}$; $c_{11},c_{22},c_{33}>0$;
  \begin{equation}\label{eq:Kasnerrelationsfal}
    \textstyle{\sum}_{i=1}^{3}p_{i}=\textstyle{\sum}_{i=1}^{3}p_{i}^{2}=1;
  \end{equation}
  and 
  \begin{equation}\label{eq:falciicond}
    \sum_{l=1}^{3}\left[\frac{\d_{i}c_{ll}}{c_{ll}}(p_{l}-p_{i})+2\d_{l}\kappa_{i}^{\phantom{i}l}
      +\mathbbm{1}_{\{l>i\}}\frac{\d_{l}(c_{11}c_{22}c_{33})}{c_{11}c_{22}c_{33}}\kappa_{i}^{\phantom{i}l}\right]=0
  \end{equation}
  for $i=1,2,3$. Here $\kappa_{i}^{\phantom{i}i}=-p_{i}$ (no summation); $\kappa_{i}^{\phantom{i}l}=0$ if $l<i$;
  \begin{align}
    \kappa_{1}^{\phantom{1}2} := & (p_{1}-p_{2})\frac{c_{12}}{c_{22}},\ \ \
    \kappa_{2}^{\phantom{2}3} := (p_{2}-p_{3})\frac{c_{23}}{c_{33}},\ \ \
    \kappa_{1}^{\phantom{1}3} := (p_{2}-p_{1})\frac{c_{12}c_{23}}{c_{22}c_{33}}+(p_{1}-p_{3})\frac{c_{13}}{c_{33}};\label{eq:kappacijrelfal}
  \end{align}
  $\mathbbm{1}_{\{l>i\}}=1$ if $l>i$; and $\mathbbm{1}_{\{l>i\}}=0$ if $l\leq i$.
\end{definition}

Given data satisfying these conditions, there is, due to \cite[Theorem~1.1, p.~1185--1186]{fal}, a $C^{2}$ solution to the Einstein vacuum equations
$\mathrm{Ric}=0$ of the form
\begin{equation}\label{eq:gfalasformofmetric}
  g=-dt\otimes dt+\textstyle{\sum}_{i,j=1}^{3}a_{ij}t^{2p_{\max\{i,j\}}}dx^{i}\otimes dx^{j},
\end{equation}
where $(t,x^{1},x^{2},x^{3})\in (0,T]\times\tn{3}$ for some $T>0$ and $a_{ij}:(0,T]\times\tn{3}\rightarrow\rn{}$ are smooth functions (symmetric in
$i$ and $j$) which extend to continuous functions $:[0,T]\times\tn{3}\rightarrow\rn{}$. Moreover, the $a_{ij}$ obey
\begin{equation}\label{eq:aijlimcijfal}
  \lim_{t\rightarrow 0+}a_{ij}(t,x)=c_{ij}(x).
\end{equation}
However, in the course of the argument, the authors also obtain the following control of the Weingarten map:
\begin{equation}\label{eq:kconvfal}
  \textstyle{\sum}_{r=0}^{1}\sum_{|\alpha|\leq 2-r}t^{r+1}|\d_{t}^{r}\d^{\alpha}_{x}(k^{\phantom{j}i}_{j}-t^{-1}\kappa^{\phantom{j}i}_{j})|
  =O(\min\{t^{\varepsilon},t^{\varepsilon-2p_{i}+2p_{j}}\}),
\end{equation}
where $\varepsilon>0$ and $k$ is the second fundamental form of the leaves of the foliation (using the conventions of \cite{fal}; here we use the
opposite conventions). Clearly, the formulation of this result is tied quite closely to the specific situation under consideration: $\tn{3}$ spatial
topology; Gaussian foliation etc. On the other hand, the results of \cite{RinWave,RinGeo} indicate that quiescent behaviour in the
case of $3+1$-dimensional vacuum solutions to Einstein's equations should be tied to properties of the expansion normalised Weingarten map.
It is therefore of interest to try to combine the formulation in \cite{fal} with the more geometric perspective taken in \cite{RinGeo}. 

\subsection{A geometric formulation of initial data on the singularity}\label{ssection:geometricformfv}
In order to relate the results of \cite{fal} with those of \cite{RinWave,RinGeo},
note that our conventions concerning the second fundamental form are opposite to those of \cite{fal}. Here we use $\bk$ to denote
the second fundamental form of the leaves of the foliation, and $\bk=-k$, where $k$ is object appearing in (\ref{eq:kconvfal}). Moreover,
we denote the Weingarten map by $\bK$; i.e., $\bK$ is the $(1,1)$-tensor field satisfying $\bK^{\phantom{j}i}_{j}=\bk^{\phantom{j}i}_{j}$. If $\theta$ denotes
the mean curvature of the leaves of the foliation, the \textit{expansion normalised Weingarten map} is defined by $\mK:=\bK/\theta$, assuming
$\theta\neq 0$. Note here that taking the trace of (\ref{eq:kconvfal}), keeping the definition of $\kappa$ and the conventions in mind, yields
$t\theta-1=O(t^{\varepsilon})$ in $C^{2}$. Combining this observation with (\ref{eq:kconvfal}) again yields the conclusion that $\mK$ converges to a
limit, say $\msK$, in $C^{2}$, and the rate of convergence is $t^{\varepsilon}$. In fact, $\msK^{\phantom{j}i}_{j}=-\kappa^{\phantom{j}i}_{j}$. At this stage,
it is of interest to reformulate the conditions of Definition~\ref{def:falid} in terms of $\msK$. Note, to this end, that (\ref{eq:Kasnerrelationsfal})
can be reformulated to $\tr\msK=\tr\msK^{2}=1$. The fact that the $p_{i}$ are distinct can be reformulated as saying that the eigenvalues of $\msK$ are
distinct. Next, note that (\ref{eq:gfalasformofmetric}) and (\ref{eq:aijlimcijfal}) are closely tied to the coordinate vector fields on $\tn{3}$. From
a geometric perspective, it is more natural to express these equalities using an eigenframe associated with $\msK$. However, such a frame
$\{\bmsX_{A}\}$ with associated co-frame $\{\bmsY^{A}\}$ is given by
\begin{subequations}\label{eq:bmsXbmsYdefintro}
  \begin{align}
    \bmsX_{3} = & \d_{3},\ \ \
    \bmsX_{2}=\d_{2}+\bmsX_{2}^{3}\d_{3},\ \ \
    \bmsX_{1}=\d_{1}+\bmsX_{1}^{2}\d_{2}+\bmsX_{1}^{3}\d_{3},\label{eq:bmsXdefintro}\\
    \bmsY^{1} = & dx^{1},\ \ \ \bmsY^{2}=dx^{2}-\bmsX_{1}^{2}dx^{1},\ \ \
    \bmsY^{3}=dx^{3}-\bmsX_{2}^{3}dx^{2}+(\bmsX_{1}^{2}\bmsX_{2}^{3}-\bmsX_{1}^{3})dx^{1},\label{eq:bmsYdefintro}
  \end{align}
\end{subequations}
where
\begin{equation}\label{eq:bmsXABdef}
  \bmsX_{1}^{2} = \frac{1}{p_{1}-p_{2}}\msK_{1}^{2},\ \ \
  \bmsX_{2}^{3} = \frac{1}{p_{2}-p_{3}}\msK_{2}^{3},\ \ \
  \bmsX_{1}^{3} = \frac{1}{p_{1}-p_{3}}\left(\msK_{1}^{3}+\frac{1}{p_{1}-p_{2}}\msK_{1}^{2}\msK_{2}^{3}\right).
\end{equation}
With this notation, $\msK\bmsX_{A}=p_{A}\bmsX_{A}$ (no summation). Moreover, (\ref{eq:gfalasformofmetric}) can be written
\begin{equation}\label{eq:ggeometricfal}
  g=-dt\otimes dt+\textstyle{\sum}_{A,B=1}^{3}b_{AB}t^{2p_{\max\{A,B\}}}\bmsY^{A}\otimes\bmsY^{B}.
\end{equation}
Combining the above terminology with (\ref{eq:kappacijrelfal}) and (\ref{eq:aijlimcijfal}), it can also be verified that
\[
\lim_{t\rightarrow 0+}b_{AB}=c_{AA}\delta_{AB}
\]
(no summation on $A$), so that
\begin{equation}\label{eq:bhdefintro}
  \lim_{t\rightarrow 0+}\textstyle{\sum}_{A,B=1}^{3}b_{AB}\bmsY^{A}\otimes\bmsY^{B}
  =\textstyle{\sum}_{A=1}^{3}c_{AA}\bmsY^{A}\otimes\bmsY^{A}=:\bh.
\end{equation}
Since $c_{AA}>0$, the limit, i.e. $\bh$, is a Riemannian metric on $\tn{3}$. At this stage, we thus have two natural structures on
$\tn{3}$: the $(1,1)$-tensor field $\msK$ and the Riemannian metric $\bh$. It would of course be desirable if all the conditions
could be expressed in terms of these quantities. From the above observations it is clear that the $p_{i}$ can be read off as the
eigenvalues of $\msK$; that (\ref{eq:Kasnerrelationsfal}) can be reformulated to $\tr\msK=\tr\msK^{2}=1$; and that $\msK$ is symmetric
with respect to $\bh$, cf. (\ref{eq:bhdefintro}). Next, on the basis of \cite{RinGeo}, the structure coefficients $\g_{BC}^{A}$ associated
with the frame $\{\bmsX_{A}\}$ can be expected to be important. They are defined by
\begin{equation}\label{eq:gABCdef}
  [\bmsX_{B},\bmsX_{C}]=\g^{A}_{BC}\bmsX_{A}.
\end{equation}
Due to \cite{RinGeo}, we would expect $\g^{1}_{23}=0$ to be a necessary condition in order to obtain the quiescent behaviour of \cite{fal} (note
that this condition is independent of the normalisation of the eigenvector fields and that it is crucially dependent on the assumption
that $p_{1}<p_{2}<p_{3}$). However, the frame introduced in (\ref{eq:bmsXbmsYdefintro}) satisfies this condition. In fact, this condition
is automatically built into the definition of the initial data. Finally, we need to express (\ref{eq:falciicond}) geometrically. Since this equation
is the limit of the momentum constraint, cf. \cite[Remark~1.4, p.~1186]{fal}, it is of interest to relate it to $\rodiv_{\bh}\msK$. In fact, it turns out
that (\ref{eq:falciicond}) is equivalent to the condition that $\rodiv_{\bh}\msK=0$; cf. Lemma~\ref{lemma:momcondivmsKz} below. The above observations
suggest the following definition of initial data on a big bang singularity for Einstein's vacuum equations in $3+1$-dimensions.

\begin{definition}\label{def:vacuumidos}
  Let $(\bM,\bh)$ be a smooth $3$-dimensional Riemannian manifold and $\msK$ be a smooth $(1,1)$-tensor field on $\bM$. Then $(\bM,\bh,\msK)$
  are \textit{non-degenerate quiescent vacuum initial data on the singularity} if the following holds:
  \begin{enumerate}
  \item $\tr\msK=1$ and $\msK$ is symmetric with respect to $\bh$.
  \item $\tr\msK^{2}=1$ and $\mathrm{div}_{\bh}\msK=0$.
  \item The eigenvalues of $\msK$ are distinct.
  \item $\msO^{1}_{23}\equiv 0$ on $\bM$, where (if $p_{1}<p_{2}<p_{3}$ are the eigenvalues of $\msK$, $\msX_{A}$ is an eigenvector field corresponding
    to $p_{A}$, normalised so that $|\msX_{A}|_{\bh}=1$, and $\g_{BC}^{A}$ is defined by $[\msX_{B},\msX_{C}]=\g^{A}_{BC}\msX_{A}$) the quantity $\msO^{A}_{BC}$
    is defined by $\msO^{A}_{BC}:=(\g^{A}_{BC})^{2}$.
  \end{enumerate}
\end{definition}
\begin{remark}
  That $\msK$ is symmetric with respect to $\bh$ means that if $p\in\bM$ and $\xi,\zeta\in T_{p}\bM$, then $\bh(\msK\xi,\zeta)=\bh(\xi,\msK\zeta)$.
\end{remark}
\begin{remark}
  Due to the normalisation of the eigenvector fields, they are uniquely defined up to a sign. For this reason, the functions $\msO^{A}_{BC}$ are
  uniquely defined smooth functions on $\bM$.
\end{remark}
Due to the observations made prior to the statement of the definition, the following holds.
\begin{prop}\label{prop:falaregeoidos}
  Given Fournodavlos and Luk initial data on the singularity in the sense of Definition~\ref{def:falid}, define $\msK$ by
  $\msK^{\phantom{j}i}_{j}=-\kappa^{\phantom{j}i}_{j}$ and $\bh$ by (\ref{eq:bhdefintro}). Then $(\tn{3},\bh,\msK)$ are
  non-degenerate quiescent vacuum initial data on the singularity. 
\end{prop}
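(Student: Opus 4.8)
The plan is to check the four conditions of Definition~\ref{def:vacuumidos} in turn, exploiting the fact that in a suitable frame both $\msK$ and $\bh$ are diagonal. First I would record the structure of $\msK$ in the coordinate frame $\{\d_{i}\}$ on $\tn{3}$: by the conventions of Definition~\ref{def:falid} (namely $\kappa_{i}^{\phantom{i}i}=-p_{i}$, $\kappa_{i}^{\phantom{i}l}=0$ for $l<i$, the remaining entries given by (\ref{eq:kappacijrelfal})), the tensor $\msK$ with $\msK^{\phantom{j}i}_{j}=-\kappa^{\phantom{j}i}_{j}$ is triangular with diagonal entries $p_{1},p_{2},p_{3}$. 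Hence the eigenvalues of $\msK$ are precisely the functions $p_{1},p_{2},p_{3}$; since $p_{1}<p_{2}<p_{3}$ pointwise they are distinct, which is condition~(3). Reading off the trace, and the diagonal of the square, of a triangular matrix from its diagonal gives $\tr\msK=\sum_{i}p_{i}$ and $\tr\msK^{2}=\sum_{i}p_{i}^{2}$; by (\ref{eq:Kasnerrelationsfal}) both equal $1$, which is the trace part of conditions~(1) and~(2). I would also note that the frame $\{\bmsX_{A}\}$ of (\ref{eq:bmsXdefintro}) and (\ref{eq:bmsXABdef}) is smooth and globally defined on $\tn{3}$, the denominators $p_{i}-p_{j}$ being nowhere zero (in fact they cancel, leaving ratios of the $c_{ij}$).

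Next I would pass to that frame. A direct computation from (\ref{eq:bmsXABdef}) and (\ref{eq:kappacijrelfal}) shows $\msK\bmsX_{A}=p_{A}\bmsX_{A}$ (no summation), so $\{\bmsX_{A}\}$ is an eigenframe of $\msK$, and that $\{\bmsY^{A}\}$ of (\ref{eq:bmsYdefintro}) is its dual coframe. Since $c_{11},c_{22},c_{33}>0$, the tensor $\bh=\sum_{A}c_{AA}\bmsY^{A}\otimes\bmsY^{A}$ of (\ref{eq:bhdefintro}) is a smooth Riemannian metric on $\tn{3}$, and in the frame $\{\bmsX_{A}\}$ it is represented by $\mathrm{diag}(c_{11},c_{22},c_{33})$ while $\msK$ is represented by $\mathrm{diag}(p_{1},p_{2},p_{3})$. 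Therefore the bilinear form $(\xi,\zeta)\mapsto\bh(\msK\xi,\zeta)$ has matrix $\mathrm{diag}(c_{AA}p_{A})$ in this frame and is symmetric, which completes condition~(1). Also $|\bmsX_{A}|_{\bh}^{2}=c_{AA}$, so, since condition~(3) makes each eigenline one-dimensional, the normalised eigenframe is $\msX_{A}=\pm\bmsX_{A}/\sqrt{c_{AA}}$.

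For condition~(4) I would compute the relevant bracket directly. From $\bmsX_{3}=\d_{3}$ and $\bmsX_{2}=\d_{2}+\bmsX_{2}^{3}\d_{3}$ one gets $[\bmsX_{2},\bmsX_{3}]=-(\d_{3}\bmsX_{2}^{3})\d_{3}=-(\d_{3}\bmsX_{2}^{3})\bmsX_{3}$, so the structure constants of $\{\bmsX_{A}\}$ satisfy $\g^{1}_{23}=0$ (this is exactly the condition that, as remarked before Definition~\ref{def:vacuumidos}, is automatically built into the Fournodavlos--Luk data). Rescaling $\bmsX_{2}$ and $\bmsX_{3}$ by the smooth nonvanishing functions $\pm 1/\sqrt{c_{AA}}$ changes $[\bmsX_{2},\bmsX_{3}]$ only by adding multiples of $\bmsX_{2}$ and $\bmsX_{3}$, so the $\bmsX_{1}$-component of the bracket stays zero; hence the structure constant $\g^{1}_{23}$ of $\{\msX_{A}\}$ vanishes and $\msO^{1}_{23}=(\g^{1}_{23})^{2}\equiv 0$ on $\tn{3}$.

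Finally, the remaining part of condition~(2), namely $\rodiv_{\bh}\msK=0$, is precisely the assertion that the momentum-type condition (\ref{eq:falciicond}) of Definition~\ref{def:falid} is equivalent to the vanishing of $\rodiv_{\bh}\msK$; this is Lemma~\ref{lemma:momcondivmsKz}, and since Fournodavlos--Luk data satisfy (\ref{eq:falciicond}) by hypothesis, the claim follows. Within this proposition the main work is the verification that $\{\bmsX_{A}\}$ diagonalises $\msK$ and the bookkeeping needed to identify $\bh$ and $\msK$ in that frame; the genuinely substantial computation, that (\ref{eq:falciicond}) encodes $\rodiv_{\bh}\msK=0$, is isolated in Lemma~\ref{lemma:momcondivmsKz} and invoked here as a black box.
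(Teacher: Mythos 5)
Your proposal is correct and follows essentially the same route as the paper: the paper's proof is just the chain of observations preceding Definition~\ref{def:vacuumidos} (triangular structure of $\msK$ giving the trace conditions and distinct eigenvalues, diagonality of $\bh$ and $\msK$ in the frame (\ref{eq:bmsXbmsYdefintro}) giving symmetry, the built-in vanishing of $\g^{1}_{23}$ for that frame) together with Lemma~\ref{lemma:momcondivmsKz} for the equivalence of (\ref{eq:falciicond}) with $\rodiv_{\bh}\msK=0$, which is exactly what you spell out.
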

In short, initial data in the sense of Definition~\ref{def:falid} are a special case of initial data in the sense of
Definition~\ref{def:vacuumidos}. It is therefore of interest to find out if it is possible to go in the other direction. On the level of
local coordinates, it turns out to be possible. 
\begin{prop}\label{prop:geometrictofal}
  Let $(\bM,\bh,\msK)$ be non-degenerate quiescent vacuum initial data on the singularity in the sense of Definition~\ref{def:vacuumidos}
  and let $p\in\bM$. Let $p_{A}$, $A=1,2,3$, be the eigenvalues of $\msK$, and assume them to be ordered so that $p_{1}<p_{2}<p_{3}$. Then there
  are local coordinates $(V,\bsfx)$ such that $p\in V$ and such that $\msK$ has a frame $\{\bmsX_{A}\}$ of eigenvector fields on $V$ given by 
  (\ref{eq:bmsXdefintro}) and (\ref{eq:bmsXABdef}), where the sub- and superscripts refer to the coordinates $(V,\bsfx)$. Define
  $c_{AA}:=\bh(\bmsX_{A},\bmsX_{A})$ (no summation). Define
  \begin{equation}\label{eq:cotetcdef}
    c_{12}:=\frac{c_{22}}{p_{2}-p_{1}}\msK_{1}^{\phantom{1}2},\ \
    c_{23}:=\frac{c_{33}}{p_{3}-p_{2}}\msK_{2}^{\phantom{2}3},\ \
    c_{13}:=\frac{c_{33}}{p_{3}-p_{1}}\left(\msK_{1}^{\phantom{1}3}+\frac{1}{p_{3}-p_{2}}\msK_{1}^{\phantom{1}2}\msK_{2}^{\phantom{2}3}\right),
  \end{equation}
  where the components of $\msK$ are calculated with respect to the local coordinates $(V,\bsfx)$, and define the remaining components $c_{ij}$
  by requiring $c_{ij}=c_{ji}$. Then the functions $p_{i}$ and $c_{ij}$ satisfy
  all the conditions of Definition~\ref{def:falid}, but on the set $V$ instead of $\tn{3}$, and where the sub-, superscripts and derivatives refer
  to the coordinate frame associated with $(V,\bsfx)$.
\end{prop}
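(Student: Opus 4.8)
\emph{Proof proposal.} The plan is to split the argument into three parts: (i) construct the coordinates $(V,\bsfx)$ by straightening, via the Frobenius theorem, the $\msK$-eigendistribution attached to the two largest eigenvalues — this is the one place where a hypothesis of Definition~\ref{def:vacuumidos} does essential work, namely condition~(4) together with the ordering $p_{1}<p_{2}<p_{3}$, and it is the main obstacle; (ii) exhibit the rescaled eigenframe \eqref{eq:bmsXdefintro} and derive the formulas \eqref{eq:bmsXABdef} purely by expanding $\msK\bmsX_{A}=p_{A}\bmsX_{A}$ in these coordinates; and (iii) check the conditions of Definition~\ref{def:falid} one by one, the momentum constraint \eqref{eq:falciicond} being reduced to $\mathrm{div}_{\bh}\msK=0$ by Lemma~\ref{lemma:momcondivmsKz}.

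\emph{(i) The coordinates.} Since the eigenvalues of $\msK$ are distinct, near $p$ there are smooth eigenvector fields $\msX_{1},\msX_{2},\msX_{3}$ forming a frame; set $E_{3}:=\mathrm{span}\{\msX_{3}\}$ and $E_{23}:=\mathrm{span}\{\msX_{2},\msX_{3}\}$. The component of $[\msX_{2},\msX_{3}]$ along $\msX_{1}$ is unchanged if the $\msX_{A}$ are rescaled, so condition~(4) (i.e. $\g^{1}_{23}\equiv 0$ for the normalised eigenframe, hence for every eigenframe) says exactly that $E_{23}$ is involutive, hence integrable by the Frobenius theorem. As $E_{3}$ is a line field contained in the integrable distribution $E_{23}$, a standard argument — straighten $E_{23}$ by Frobenius, then straighten $\msX_{3}$ inside the leaves, using that the flow of $\msX_{3}$ preserves the leaves of $E_{23}$ because $\msX_{3}$ is a section of $E_{23}$ — yields coordinates $(V,\bsfx=(x^{1},x^{2},x^{3}))$ centred at $p$ with $E_{3}=\mathrm{span}\{\d_{3}\}$ and $E_{23}=\mathrm{span}\{\d_{2},\d_{3}\}$ on $V$. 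This step, and in particular the use of condition~(4) to make $E_{23}$ integrable, is the only part of the proof with genuine content.

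\emph{(ii) The eigenframe and \eqref{eq:bmsXABdef}.} Since $\d_{3}\in E_{3}\setminus\{0\}$, the field $\d_{3}$ is an eigenvector field for $p_{3}$; put $\bmsX_{3}:=\d_{3}$. As $\{\d_{2},\d_{3}\}$ spans $E_{23}=\mathrm{span}\{\msX_{2},\msX_{3}\}$ and $\d_{3}$ is proportional to $\msX_{3}$, the vector $\d_{2}$ has nonvanishing $\msX_{2}$-component, so there is a unique nonvanishing multiple $\bmsX_{2}$ of an eigenvector field for $p_{2}$ of the form $\bmsX_{2}=\d_{2}+\bmsX_{2}^{3}\d_{3}$; likewise $\d_{1}\notin E_{23}$ has nonvanishing $\msX_{1}$-component, giving a unique multiple $\bmsX_{1}$ of an eigenvector field for $p_{1}$ of the form $\bmsX_{1}=\d_{1}+\bmsX_{1}^{2}\d_{2}+\bmsX_{1}^{3}\d_{3}$. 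These are smooth, and $\{\bmsX_{A}\}$ is of the form \eqref{eq:bmsXdefintro}. Expanding $\msK\bmsX_{3}=p_{3}\d_{3}$, then $\msK\bmsX_{2}=p_{2}\bmsX_{2}$, then $\msK\bmsX_{1}=p_{1}\bmsX_{1}$ in the coordinate basis $\{\d_{i}\}$ and comparing components successively gives $\msK\d_{3}=p_{3}\d_{3}$; then $(p_{2}-p_{3})\bmsX_{2}^{3}=\msK_{2}^{3}$ with $\msK_{2}^{2}=p_{2}$, $\msK_{2}^{1}=0$; then $(p_{1}-p_{2})\bmsX_{1}^{2}=\msK_{1}^{2}$ and $(p_{1}-p_{3})\bmsX_{1}^{3}=\msK_{1}^{3}+\bmsX_{1}^{2}\msK_{2}^{3}$ with $\msK_{1}^{1}=p_{1}$. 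These are precisely the formulas \eqref{eq:bmsXABdef}, and they also show that in $(V,\bsfx)$ the matrix of $\msK$ is lower triangular with diagonal $(p_{1},p_{2},p_{3})$ and off-diagonal entries $\msK_{1}^{2},\msK_{2}^{3},\msK_{1}^{3}$; equivalently $\msK=-\kappa$, where $\kappa$ is built from the $p_{i}$ and the $c_{ij}$ of \eqref{eq:cotetcdef} through the recipe in Definition~\ref{def:falid} — this identity is just \eqref{eq:kappacijrelfal} solved for the $c_{ij}$.

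\emph{(iii) The conditions of Definition~\ref{def:falid}.} Symmetry $c_{ij}=c_{ji}$ holds by definition, $p_{1}<p_{2}<p_{3}$ by the chosen ordering (the eigenvalues being distinct by condition~(3)), and $c_{11},c_{22},c_{33}=\bh(\bmsX_{A},\bmsX_{A})>0$ because $\bh$ is Riemannian and the $\bmsX_{A}$ are nonzero. Since $\msK$ is $\bh$-symmetric (condition~(1)) with distinct eigenvalues, the $\bmsX_{A}$ are $\bh$-orthogonal, so $\bh=\sum_{A}c_{AA}\,\bmsY^{A}\otimes\bmsY^{A}$ with $\{\bmsY^{A}\}$ the coframe dual to $\{\bmsX_{A}\}$, which is of the form \eqref{eq:bmsYdefintro}; expanding this in $dx^{i}$ and using \eqref{eq:bmsXABdef} gives $\bh=\sum_{i,j}c_{ij}\,dx^{i}\otimes dx^{j}$ with the $c_{ij}$ of \eqref{eq:cotetcdef} — i.e. the computation of Subsection~\ref{ssection:geometricformfv} preceding Definition~\ref{def:vacuumidos}, run backwards. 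The relations $\sum_{i}p_{i}=\tr\msK=1$ and $\sum_{i}p_{i}^{2}=\tr\msK^{2}=1$ come from conditions~(1)--(2), giving \eqref{eq:Kasnerrelationsfal}. Finally, the Riemannian metric and $(1,1)$-tensor field that Definition~\ref{def:falid} and \eqref{eq:bhdefintro} associate with the constructed $c_{ij},p_{i}$ are exactly $\bh|_{V}$ and $\msK|_{V}$, by the identifications in (ii) and the previous sentences; since $\mathrm{div}_{\bh}\msK=0$ by condition~(2), Lemma~\ref{lemma:momcondivmsKz} then yields \eqref{eq:falciicond} on $V$. The one point in this last step that needs care is precisely this verification that the data-associated structures recover $\bh$ and $\msK$, so that Lemma~\ref{lemma:momcondivmsKz} is applicable; everything else in (ii)--(iii) is routine bookkeeping with $\msK\bmsX_{A}=p_{A}\bmsX_{A}$ and $\bh$-orthogonality of the eigenframe.
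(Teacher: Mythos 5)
Your argument is correct in substance and follows essentially the same route as the paper: the adapted coordinates are extracted from condition~(4) (the paper's Lemma~\ref{lemma:localcoordinates} does this by straightening $\bmsX_{3}$ and then producing a commuting companion field by solving ODEs, which is just a hands-on version of your Frobenius-plus-flow-box argument), the formulas \eqref{eq:bmsXABdef} follow from expanding $\msK\bmsX_{A}=p_{A}\bmsX_{A}$, and \eqref{eq:falciicond} is obtained from $\mathrm{div}_{\bh}\msK=0$ via Lemma~\ref{lemma:momcondivmsKz}, exactly as in the paper's proof.

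One incidental claim in your step (iii) is false and should be deleted: it is not true that $\bh=\sum_{i,j}c_{ij}\,d\bsfx^{i}\otimes d\bsfx^{j}$ in the constructed coordinates. From $\bh=\sum_{A}c_{AA}\,\bmsY^{A}\otimes\bmsY^{A}$ and \eqref{eq:msYAformulae} one finds, for instance, $\bh(\d_{2},\d_{2})=c_{22}+c_{23}^{2}/c_{33}$ and $\bh(\d_{1},\d_{2})=c_{12}+c_{13}c_{23}/c_{33}$, so the identity fails whenever the off-diagonal $c_{ij}$ do not vanish; the $c_{ij}$ of \eqref{eq:cotetcdef} arise as limits of the time-weighted spacetime components $a_{ij}$ (cf.\ \eqref{eq:aijintermsofbAB}), not as the coordinate components of $\bh$. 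Fortunately nothing in your proof needs this identity: Lemma~\ref{lemma:momcondivmsKz} is formulated precisely for $c_{ij}$ defined as in the proposition, and its hypotheses ($\bh$-symmetry of $\msK$, distinct eigenvalues, an eigenframe of the form \eqref{eq:preferredcoordinates} with $c_{AA}=\bh(\bmsX_{A},\bmsX_{A})$) are exactly what you have established, so its application — and hence the verification of \eqref{eq:falciicond} — goes through once the erroneous remark is removed. The remaining verifications (symmetry $c_{ij}=c_{ji}$, positivity of $c_{AA}$, the Kasner relations from $\tr\msK=\tr\msK^{2}=1$, and $\kappa_{i}^{\phantom{i}j}=-\msK_{i}^{\phantom{i}j}$) are as you state.
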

\begin{proof}
  The proof of the statement is to be found at the end of Subsection~\ref{ssection:specasdata}. 
\end{proof}
This observation indicates that Definitions~\ref{def:falid} and \ref{def:vacuumidos} are locally the same. On the other hand, it is conceivable
that a more general definition of data on the singularity is possible and would, non-trivially, include both Definitions~\ref{def:falid} and
\ref{def:vacuumidos}. We return to this topic in Subsection~\ref{ssection:contodataonsing} below.

One potential deficiency in the definitions
is the non-degeneracy condition. In some situations, it is to be expected that this condition can be relaxed; cf. Remark~\ref{remark:thedegeneratecase}
below. On the other hand, one of the main applications of the definitions is to think of the data on the singularity as parametrising
the stable manifold associated with the Kasner circle. One of the main potential uses of this perspective is to the understanding of
oscillatory spatially inhomogeneous spacetimes; cf. Subsection~\ref{ssection:oscibbsing} below. In that context, the degenerate case is not of
interest; cf. Subsection~\ref{ssection:oscibbsing} below.

\subsection{Developments of non-degenerate quiescent vacuum initial data on the singularity}\label{ssection:geo developments}
Once one has specified initial data on the singularity,
the question arises how these data are related to a corresponding development. In the case of \cite{fal}, we have already given the relation; cf.
(\ref{eq:gfalasformofmetric})--(\ref{eq:kconvfal}). Moreover, (\ref{eq:ggeometricfal}), (\ref{eq:bhdefintro}) and the fact that $\mK\rightarrow\msK$
suggest how to relate the development with the data more generally. In the setting of Definition~\ref{def:vacuumidos}, we use the following terminology.

\begin{definition}\label{def:developmentvacuum}
  Let $(\bM,\bh,\msK)$ be non-degenerate quiescent vacuum initial data on the singularity in the sense of Definition~\ref{def:vacuumidos}. A
  \textit{locally Gaussian development of the initial data} is then a time oriented Lorentz manifold
  $(M,g)$, solving Einstein's vacuum equations, such that the following holds. There is a $0<t_{+}\in\rn{}$ and a diffeomorphism $\Psi$ from
  $\bM\times (0,t_{+})$ to an open subset of $(M,g)$ such that 
  \begin{equation}\label{eq:rinansatzintroPsi}
    \Psi^{*}g=-dt\otimes dt+\textstyle{\sum}_{A,B}b_{AB}t^{2p_{\max\{A,B\}}}\msY^{A}\otimes \msY^{B},
  \end{equation}
  where $\{\msX_{A}\}$ is a local basis of eigenvector fields of $\msK$, $\{\msY^{A}\}$ is the dual basis, $\msK\msX_{A}=p_{A}\msX_{A}$ (no summation)
  and $p_{1}<p_{2}<p_{3}$. Here $\Psi(\bM_{t})$, where $\bM_{t}:=\bM\times \{t\}$, is required to have strictly positive mean curvature in $(M,g)$ for
  $t\in (0,t_{+})$. Let $\theta$ be the mean curvature of the leaves of the foliation; $\bK$ be the Weingarten map of the leaves of the foliation;
  $\mK:=\bK/\theta$ be the expansion normalised Weingarten map; and $\chh$ be defined by
  \begin{equation}\label{eq:chhdef}
    \chh:=\textstyle{\sum}_{A,B}b_{AB}\msY^{A}\otimes \msY^{B};
  \end{equation}
  note that $\chh$ is globally well defined independent of the choice of local frame. Then the following correspondence between the solution
  and the asymptotic data is required to hold. There is an $\vare>0$ and for every $0\leq l\in\nn{}$, a constant $C_{l}$ such that
  \begin{subequations}\label{eq:mKchhconvrate}
    \begin{align}
      \|\mK(\cdot,t)-\msK\|_{C^{l}(\bM)} \leq & C_{l}t^{\vare},\label{eq:mKlim}\\
      \|\chh(\cdot,t)-\bh\|_{C^{l}(\bM)} \leq & C_{l}t^{\vare}\label{eq:limitbABtobhchh}
    \end{align}
  \end{subequations}
  on $M_{+}:=\bM\times (0,t_{+})$. 
  If, in addition to the above, $(M,g)$ is globally hyperbolic and the hypersurfaces $\Psi(\bM_{t})$ are Cauchy hypersurfaces in $(M,g)$ for
  $t\in (0,t_{+})$, then $(M,g)$ is called a \textit{locally Gaussian globally hyperbolic development of the initial data}.
\end{definition}
\begin{remark}
  The convergence conditions (\ref{eq:mKchhconvrate}) could both be weakened and strengthened; one could change the rate, change the norm, only
  insist on control of a finite number of derivatives etc. However, we here stick to this particular choice and encourage the reader to consider
  other possibilities. 
\end{remark}
\begin{remark}
  Given initial data as in Definition~\ref{def:vacuumidos}, we locally obtain initial data in the sense of Definition~\ref{def:falid}; cf.
  Proposition~\ref{prop:geometrictofal}. Assuming it
  is possible to appeal to a localised version of \cite{fal}, it is therefore of interest to know if a development in the sense of \cite{fal} yields
  a development in the sense of Definition~\ref{def:developmentvacuum} (at least locally). That the answer to this question is yes is demonstrated
  in Proposition~\ref{prop:dergeoas} below. 
\end{remark}

Comparing this definition with \cite{fal}, it is of interest to note that the authors of \cite{fal} require control of the normal derivative of the
second fundamental form in order to conclude uniqueness; cf. \cite[Theorem~1.7, p.~1187--1188]{fal}, in particular \cite[(1.10), p.~1187]{fal} (this estimate
coindincides with (\ref{eq:kconvfal}) above). They also require control of the asymptotic behaviour of $\theta$. It is not immediately obvious how
to estimate $\theta$ by appealing to (\ref{eq:mKchhconvrate}). However, it turns out to be possible to use these estimates
to deduce that $t\theta-1$ converges to zero as $t^{\vare}$ in $C^{l}$ etc. We discuss this topic in greater detail in
Subsection~\ref{ssection:imprasympt} below. 

\subsection{Generalisations}\label{ssection:generalisations}

So far, we have focused on the $3+1$-dimensional vacuum setting. However, it is of interest to generalise Definitions~\ref{def:vacuumidos} and
\ref{def:developmentvacuum}. For this reason, we now define data on the singularity in the case of higher dimensions, for scalar fields, and
in the presence of a cosmological constant. 

\begin{definition}\label{def:ndsfidonbbs}
  Let $3\leq n\in\nn{}$, $\Lambda\in\rn{}$, $(\bM,\bh)$ be a smooth $n$-dimensional Riemannian manifold, $\msK$ be a smooth $(1,1)$-tensor field on
  $\bM$ and $\Phia$ and $\Phib$ be smooth functions on $\bM$. Then $(\bM,\bh,\msK,\Phia,\Phib)$ are
  \textit{non-degenerate quiescent initial data on the singularity for the Einstein-scalar field equations with a cosmological constant $\Lambda$} if
  \begin{enumerate}
  \item $\tr\msK=1$ and $\msK$ is symmetric with respect to $\bh$.
  \item $\tr\msK^{2}+\Phia^{2}=1$ and $\mathrm{div}_{\bh}\msK=\Phia d\Phib$.
  \item The eigenvalues of $\msK$ are distinct.
  \item $\msO^{A}_{BC}$ vanishes in a neighbourhood of $\bx\in\bM$ if $1+p_{A}(\bx)-p_{B}(\bx)-p_{C}(\bx)\leq 0$.    
  \end{enumerate}
\end{definition}
\begin{remark}\label{remark:msOdefsf}
  Here the $\msO^{A}_{BC}$ are defined as follows. Let $p_{1}<\cdots<p_{n}$ be the eigenvalues of $\msK$, let $\msX_{A}$ be an eigenvector field
  corresponding to $p_{A}$ such that $|\msX_{A}|_{\bh}=1$, and define $\g^{A}_{BC}$ by $[\msX_{B},\msX_{C}]=\g^{A}_{BC}\msX_{A}$. Then
  $\msO^{A}_{BC}:=(\g^{A}_{BC})^{2}$. 
\end{remark}
\begin{remark}\label{remark:commcond}
  Since $\tr\msK^{2}+\Phia^{2}=1$, the sum of the $p_{A}^{2}$ is less than or equal to $1$. In case $n\geq 3$, it is thus clear that $|p_{A}|<1$ for
  all $A\in\{1,\dots,d\}$. In particular, $1+p_{A}-p_{B}-p_{C}>0$ if $A=B$ or if $A=C$. In the last criterion of the definition, we can therefore assume
  that $A\notin\{B,C\}$. Note also that $\msO^{A}_{BC}=0$ if $B=C$. To conclude, $A$, $B$ and $C$ can be assumed to be distinct in the last criterion of
  the definition. In case $n=3$ and $A$, $B$ and $C$ are distinct, $1+p_{A}-p_{B}-p_{C}=2p_{A}$. 
\end{remark}
\begin{remark}
  Condition 4 should be compared with the requirement $1+p_{1}-p_{n-1}-p_{n}>0$ appearing in \cite[Theorem~10.1, p.~1104]{damouretal} (assuming
  $p_{1}<\cdots<p_{n}$); cf. also \cite{Henneauxetal} for the origin of this condition. 
\end{remark}
\begin{remark}\label{remark:dataonsingneqthree}
  In case $n=3$, the fourth condition is only a restriction if $p_{1}(\bx)\leq 0$. Moreover, it then corresponds to the requirement that
  $\msO^{1}_{23}=0$ in a neighbourhood of $\bx$. The reason for this is that if $p_{1}>0$, then $p_{i}>0$, $i=1,2,3$, and the fourth condition is void.
  If $p_{1}\leq 0$, then $p_{2}>0$, since the sum of the $p_{i}$ equal $1$ and $p_{3}<1$. Thus the $A$ appearing in the fourth condition has to be $1$
  and $\{B,C\}$ has to equal $\{2,3\}$. This observation should be compared with condition~4 of Definition~\ref{def:vacuumidos}. 
\end{remark}
\begin{remark}
  For $A$, $B$ and $C$ distinct, the condition that $\msO^{A}_{BC}$ vanish is independent of the choice of normalisation of the eigenvector fields
  $\msX_{A}$. 
\end{remark}

Next, we generalise Definition~\ref{def:developmentvacuum}. 

\begin{definition}\label{def:developmentsfLambda}
  Let $(\bM,\bh,\msK,\Phia,\Phib)$ be non-degenerate quiescent initial data on the singularity for the Einstein-scalar field equations with a
  cosmological constant $\Lambda$, in the sense of Definition~\ref{def:ndsfidonbbs}. A \textit{locally Gaussian development of the initial data} is
  then a smooth time oriented Lorentz manifold $(M,g)$ and a smooth function $\phi$, solving the Einstein scalar field equations with a cosmological
  constant $\Lambda$, such that the
  following holds. There is a $0<t_{+}\in\rn{}$ and a diffeomorphism $\Psi$ from $\bM\times (0,t_{+})$ to an open subset of $(M,g)$ such that
  $\Psi^{*}g$ can be written as in (\ref{eq:rinansatzintroPsi}), where $\{\msX_{A}\}$ and $p_{A}$ are given in Remark~\ref{remark:msOdefsf}
  and $\{\msY^{A}\}$ is the dual basis to $\{\msX_{A}\}$. Moreover, there is a $t_{0}\in (0,t_{+})$ and a $0<\theta_{0}\in\rn{}$ such that $\Psi(\bM_{t})$,
  where $\bM_{t}:=\bM\times \{t\}$, has mean curvature $\theta$ bounded from below by $\theta_{0}$ for $t\leq t_{0}$. If $\Lambda>0$,
  $\theta_{0}$ is additionally required to satisfy $\theta_{0}>[2\Lambda/(n-1)]^{1/2}$. Let $\mK$ be the expansion normalised Weingarten map of the
  leaves of the foliation and $\chh$ be defined by (\ref{eq:chhdef}). Then the following correspondence between the solution and the asymptotic data
  is required to hold. There is an $\vare>0$ and for every $l\in\nn{}$, a constant $C_{l}$ such that
  \begin{subequations}\label{eq:mKchhconvratesf}
    \begin{align}
      \|\mK(\cdot,t)-\msK\|_{C^{l}(\bM)} \leq & C_{l}t^{\vare},\label{eq:mKlimsf}\\
      \|\chh(\cdot,t)-\bh\|_{C^{l}(\bM)} \leq & C_{l}t^{\vare},\label{eq:limitbABtobhchhsf}\\
      \|(\hU \phi)(\cdot,t)-\Phia\|_{C^{l}(\bM)} \leq & C_{l}t^{\vare},\label{eq:Phialim}\\
      \|(\phi-\Phia\varrho)(\cdot,t)-\Phib\|_{C^{l+1}(\bM)} \leq & C_{l}t^{\vare}\label{eq:Phiblim}
    \end{align}
  \end{subequations}
  for all $t\leq t_{0}$. Here $\hU:=\theta^{-1}\d_{t}$. Moreover, $\varrho$ denotes the logarithmic volume density, defined by the condition
  $e^{\varrho}\mu_{\bh}=\mu_{\bg}$, where
  $\mu_{\bh}$ is the volume form on $\bM$ induced by $\bh$, $\mu_{\bg}$ is the volume form on $\bM$ induced by the metric $\Psi^{*}\bg$ on the
  leaves $\bM_{t}$. If, in addition to the above, $(M,g)$ is globally hyperbolic and the hypersurfaces $\Psi(\bM_{t})$ are Cauchy hypersurfaces
  in $(M,g)$ for $t\in (0,t_{+})$, then $(M,g)$ is called a \textit{locally Gaussian globally hyperbolic development of the initial data}.
\end{definition}

It is sometimes of interest to consider foliations that are not Gaussian. Note, to this end, that given a development as in
Definition~\ref{def:developmentsfLambda}, the metric can, in a neighbourhood of the singularity, be given a representation expressed in terms of
the mean curvature instead of $t$; cf. Theorem~\ref{thm:improvingasymptoticsgeneralG}, in particular (\ref{eq:lnthetalnttthetamoneG}), below.
However, it might also be of interest to allow a lapse function different from $1$. We therefore give a generalised definition of development,
though we still require the shift vector field to vanish. 

\begin{definition}\label{def:developmentsfLambdacrushing}
  Let $(\bM,\bh,\msK,\Phia,\Phib)$ be non-degenerate quiescent initial data on the singularity for the Einstein-scalar field equations with a
  cosmological constant $\Lambda$, in the sense of Definition~\ref{def:ndsfidonbbs}. A \textit{local crushing development of the initial data with
  vanishing shift vector field} is then a smooth time oriented Lorentz manifold $(M,g)$ and a smooth function $\phi$, solving the Einstein scalar
  field equations with a cosmological constant $\Lambda$, such that the following holds. There is a $0<t_{+}\in\rn{}$ and a diffeomorphism $\Psi$
  from $\bM\times (0,t_{+})$ to an open subset of $(M,g)$ such that 
  \begin{equation}\label{eq:rinansatzintrocrushing}
    \Psi^{*}g=-N^{2}dt\otimes dt+\textstyle{\sum}_{A,B}b_{AB}\theta^{-2p_{\max\{A,B\}}}\msY^{A}\otimes \msY^{B},
  \end{equation}
  where $N$ is a strictly positive function (the \textit{lapse function}), $\{\msX_{A}\}$ and $p_{A}$ are given in Remark~\ref{remark:msOdefsf}
  and $\{\msY^{A}\}$ is the dual basis to $\{\msX_{A}\}$. Here $\Psi(\bM_{t})$, where $\bM_{t}:=\bM\times \{t\}$, is required to have strictly positive
  mean curvature $\theta$ in $(M,g)$ for $t\in (0,t_{+})$. Moreover, $\theta$ is required to diverge uniformly to $\infty$ as $t\rightarrow 0+$. 
  Let $\mK$ be the expansion normalised Weingarten map of the leaves of the foliation and $\chh$ be defined by (\ref{eq:chhdef}). Then the following
  correspondence between the solution and the asymptotic data is required to hold. Then the following correspondence between the solution and the
  asymptotic data is required to hold. There is an $\vare>0$ and for every $l\in\nn{}$, a constant $C_{l}$ such that
  \begin{subequations}\label{eq:mKchhconvratesfN}
    \begin{align}
      |\bD^{l}(\mK-\msK)|_{\bh} \leq & C_{l}\theta^{-\vare},\label{eq:mKlimsfN}\\
      |\bD^{l}(\chh-\bh)|_{\bh} \leq & C_{l}\theta^{-\vare},\label{eq:limitbABtobhchhsfN}\\
      |\bD^{l}(\hU \phi-\Phia)|_{\bh} \leq & C_{l}\theta^{-\vare},\label{eq:PhialimN}\\
      |\bD^{l}(\phi-\Phia\varrho-\Phib)|_{\bh} \leq & C_{l}\theta^{-\vare}\label{eq:PhiblimN}
    \end{align}
  \end{subequations}
  on $\bM\times (0,t_{0}]$ for some $0<t_{0}<t_{+}$. Here $\hU:=\theta^{-1}\d_{t}$. Moreover, $\bD$ denotes the Levi-Civita
  connection associated with $\bh$ and
  $\varrho$ denotes the logarithmic volume density, defined by the condition $e^{\varrho}\mu_{\bh}=\mu_{\bg}$, where $\mu_{\bh}$ is the volume form on
  $\bM$ induced by $\bh$, $\mu_{\bg}$ is the volume form on $\bM$ induced by the metric $\Psi^{*}\bg$ on the leaves $\bM_{t}$. 
  If, in addition to the above, $(M,g)$ is globally hyperbolic and the hypersurfaces $\Psi(\bM_{t})$ are Cauchy hypersurfaces in $(M,g)$ for
  $t\in (0,t_{+})$, then $(M,g)$ is called a \textit{locally crushing globally hyperbolic development with vanishing shift vector field of the
  initial data}.
\end{definition}
\begin{remark}
  If $S$ is a covariant $k$-tensor field on $\bM$, e.g., then the notation $|S|_{\bh}$ means
  \[
  |S|_{\bh}:=(\bh^{i_1j_1}\cdots\bh^{i_kj_k}S_{i_1\cdots i_k}S_{j_1\cdots j_k})^{1/2}.
  \]
\end{remark}

\subsection{The stable quiescent regime}\label{ssection:stablequiescentregime}

There are results guaranteeing the existence of developments corresponding to non-degenerate quiescent initial data on the singularity for the
Einstein-scalar field equations. In the vacuum setting, one example of this is given by \cite[Theorem~1.1, p.~1185--1186]{fal}, mentioned in connection with
Definition~\ref{def:falid}. Moreover, in case condition 4 of Definition~\ref{def:ndsfidonbbs} is void, the existence of developments is guaranteed,
in the real analytic setting, by \cite{aarendall,damouretal}. In order to see this, it is of interest to relate Definition~\ref{def:ndsfidonbbs}
with the perspective of \cite{aarendall,damouretal}. The main theorems of \cite{aarendall,damouretal} (i.e.,
\cite[Theorems~1 and 2, pp.~484--485]{aarendall} and \cite[Theorems~10.1 and 10.2, p.~1104]{damouretal}) schematically state the following: given real
analytic solutions to a so-called velocity dominated version of the equations (or to the Kasner-like Einstein-matter equations), there is a unique real
analytic solution to the actual equations whose asymptotics are determined by the velocity
dominated solution. In other words, in \cite{aarendall,damouretal}, the velocity dominated solutions should be thought of as initial data on the big bang
singularity. In the case of the Einstein-scalar field equations, solutions to the the velocity dominated system are given by
$({}^{0}g,{}^{0}k,{}^{0}\phi)$, where ${}^{0}g$ is a family of real analytic Riemannian metrics on a manifold $\bM$, ${}^{0}k$ is a family of real
analytic symmetric covariant $2$-tensor fields on $\bM$ and ${}^{0}\phi$ is a family of real analytic functions on $\bM$. The families are parametrised
by $t\in (0,\infty)$ and depend real analytically on $t$. Needless to say, one could discuss velocity dominated solutions in other regularity classes.
The equations that should be satisfied are listed in \cite[Subsection~2.2, pp.~482--483]{aarendall}; see \cite[(10)--(13), p.~438]{aarendall}.
One advantage of the perspective taken in \cite{aarendall,damouretal} is that the condition of non-degeneracy of $\msK$, cf.
Definition~\ref{def:ndsfidonbbs}, does not appear; cf. also Remark~\ref{remark:thedegeneratecase} below for a further discussion of the issue of
non-degeneracy. The following two propositions relate the two perspectives on initial data on the singularity.

\begin{prop}\label{prop:idtovds}
  Let $(\bM,\bh,\msK,\Phia,\Phib)$ be non-degenerate quiescent initial data on the singularity for the Einstein-scalar field equations with a
  vanishing cosmological constant. Denote the eigenvalues of $\msK$ by $p_{A}$, $A=1,\dots,n$, and order them so that $p_{1}<\cdots<p_{n}$. Let $\msX_{A}$,
  $A=1,\dots,n$, be an eigenvector field of $\msK$ corresponding to $p_{A}$, normalised so that $|\msX_{A}|_{\bh}=1$, and let $\{\msY^{A}\}$ be the basis
  dual to $\{\msX_{A}\}$. Then
  \begin{subequations}\label{eq:AVTDsolution}
    \begin{align}
      {}^{0}g := & \textstyle{\sum}_{A}t^{2p_{A}}\msY^{A}\otimes\msY^{A},\label{eq:zgdef}\\
      {}^{0}k := & -\textstyle{\sum}_{A}p_{A}t^{2p_{A}-1}\msY^{A}\otimes\msY^{A},\label{eq:zkdef}\\
      {}^{0}\phi := & \Phia\ln t+\Phib\label{eq:zphidef}
    \end{align}
  \end{subequations}
  is a velocity dominated solution in the sense of \cite[Subsection~2.2, pp.~482--483]{aarendall}. 
\end{prop}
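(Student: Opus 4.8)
The plan is to verify that $({}^{0}g,{}^{0}k,{}^{0}\phi)$ satisfies, term by term, the velocity dominated Einstein--scalar field system of \cite[Subsection~2.2]{aarendall} in the gauge with unit lapse: the evolution equations $\d_{t}{}^{0}g_{ab}=-2\,{}^{0}k_{ab}$ and $\d_{t}{}^{0}k^{a}_{b}=(\tr{}^{0}k)\,{}^{0}k^{a}_{b}$, the velocity dominated Hamiltonian constraint $(\tr{}^{0}k)^{2}-{}^{0}k^{a}_{b}{}^{0}k^{b}_{a}=(\d_{t}{}^{0}\phi)^{2}$, the velocity dominated momentum constraint $\rodiv_{{}^{0}g}{}^{0}k-d(\tr{}^{0}k)=-(\d_{t}{}^{0}\phi)\,d{}^{0}\phi$, and the velocity dominated wave equation $\d_{t}(\sqrt{{}^{0}g}\,\d_{t}{}^{0}\phi)=0$. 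Throughout I would work in a local $\bh$-orthonormal frame $\{\msX_{A}\}$ of eigenvector fields of $\msK$, which exists (locally, uniquely up to signs) because, by condition~3 of Definition~\ref{def:ndsfidonbbs}, $\msK$ is $\bh$-symmetric with everywhere distinct, hence smooth, eigenvalues $p_{A}$. With $\{\msY^{A}\}$ the dual coframe and $\g^{A}_{BC}$ defined by $[\msX_{B},\msX_{C}]=\g^{A}_{BC}\msX_{A}$, one has ${}^{0}g_{AB}=t^{2p_{A}}\delta_{AB}$ and ${}^{0}k^{A}_{B}=-p_{A}t^{-1}\delta^{A}_{B}$, so that as a $(1,1)$-tensor field ${}^{0}k=-t^{-1}\msK$, and $\tr{}^{0}k=-t^{-1}\tr\msK=-t^{-1}$ by condition~1. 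Since ${}^{0}g$ and ${}^{0}k$ are insensitive to the sign ambiguity in the $\msX_{A}$, they are globally and smoothly defined on $(0,\infty)\times\bM$, with real analytic dependence on $t$.

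The evolution equations, the Hamiltonian constraint, and the wave equation will then be immediate. Differentiating in $t$ gives $\d_{t}{}^{0}g_{AB}=2p_{A}t^{2p_{A}-1}\delta_{AB}=-2\,{}^{0}k_{AB}$ and $\d_{t}{}^{0}k^{A}_{B}=p_{A}t^{-2}\delta^{A}_{B}=(-t^{-1})(-p_{A}t^{-1})\delta^{A}_{B}$. Since $\d_{t}{}^{0}\phi=\Phi_{a}/t$, the first part of condition~2 yields $(\tr{}^{0}k)^{2}-{}^{0}k^{A}_{B}{}^{0}k^{B}_{A}=t^{-2}(1-\tr\msK^{2})=t^{-2}\Phi_{a}^{2}=(\d_{t}{}^{0}\phi)^{2}$. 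For the wave equation, $\tr\msK=1$ forces $\det({}^{0}g_{AB})=\prod_{A}t^{2p_{A}}=t^{2}$, so in any coordinates $\sqrt{{}^{0}g}$ is $t$ times the $t$-independent $\bh$-volume density; hence $\sqrt{{}^{0}g}\,\d_{t}{}^{0}\phi$ is $t$-independent and its $\d_{t}$-derivative vanishes.

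The momentum constraint is the only step requiring genuine work, and it is where conditions~1 and~2 interact. Since $\tr{}^{0}k=-t^{-1}$ is spatially constant, $d(\tr{}^{0}k)=0$, and, using ${}^{0}k=-t^{-1}\msK$ and $\d_{t}{}^{0}\phi=\Phi_{a}/t$, the constraint reduces to showing $\rodiv_{{}^{0}g(t)}\msK=\Phi_{a}\,d{}^{0}\phi$, the divergence being taken with the Levi-Civita connection of the $t$-dependent metric ${}^{0}g(t)$. I would compute this divergence in the frame $\{\msX_{A}\}$: a direct Koszul-formula calculation shows that any metric diagonal in this frame, with $\msX_{A}$-squared-lengths $m_{A}$, has frame connection coefficients $\Gamma^{C}_{AB}$ (defined by $\nabla_{\msX_{A}}\msX_{B}=\Gamma^{C}_{AB}\msX_{C}$) satisfying $\Gamma^{A}_{AB}=\tfrac12 m_{A}^{-1}\msX_{B}(m_{A})+\g^{A}_{AB}$ (no sum on $A$); comparing $m_{A}=t^{2p_{A}}$ with $m_{A}=1$, the connection coefficients of ${}^{0}g(t)$ and of $\bh$ differ by $(\ln t)\,\msX_{B}(p_{A})$. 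Together with the identity $(\rodiv_{m}\msK)_{B}=\msX_{B}(p_{B})+\sum_{A}\Gamma^{A}_{AB}(p_{B}-p_{A})$ (which uses $\msK^{A}_{B}=p_{A}\delta^{A}_{B}$) and with $\sum_{A}\msX_{B}(p_{A})=\msX_{B}(\tr\msK)=0$, this gives
\[
\rodiv_{{}^{0}g(t)}\msK=\rodiv_{\bh}\msK-\tfrac12(\ln t)\,d(\tr\msK^{2}).
\]
By condition~2, $\tr\msK^{2}=1-\Phi_{a}^{2}$ and $\rodiv_{\bh}\msK=\Phi_{a}\,d\Phi_{b}$, so the right-hand side equals $\Phi_{a}\,d\Phi_{b}+(\ln t)\,\Phi_{a}\,d\Phi_{a}=\Phi_{a}\,d(\Phi_{b}+\Phi_{a}\ln t)=\Phi_{a}\,d{}^{0}\phi$; hence $\rodiv_{{}^{0}g}{}^{0}k-d(\tr{}^{0}k)=-t^{-1}\Phi_{a}\,d{}^{0}\phi=-(\d_{t}{}^{0}\phi)\,d{}^{0}\phi$, as required.

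I expect the momentum constraint to be the only real obstacle: one must track the logarithmic discrepancy between $\rodiv_{{}^{0}g(t)}\msK$ and $\rodiv_{\bh}\msK$, recognise it as a multiple of $d(\tr\msK^{2})$, and observe that it is cancelled exactly against the $\ln t$ carried by $d{}^{0}\phi$ precisely because $\tr\msK^{2}+\Phi_{a}^{2}\equiv1$; the remaining steps are routine bookkeeping with the explicit formulas. For completeness, I would note that condition~4 of Definition~\ref{def:ndsfidonbbs} plays no role in this proposition (it becomes relevant only when relating the velocity dominated solution to the asymptotics of an actual solution), and that the regularity hypotheses of \cite{aarendall} are inherited from those of the data together with the manifestly real analytic $t$-dependence of ${}^{0}g$, ${}^{0}k$ and ${}^{0}\phi$.
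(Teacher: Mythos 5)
Your proposal is correct and follows essentially the same route as the paper: the evolution equations, Hamiltonian constraint and truncated wave equation are immediate, and the only substantive step is the momentum constraint, which both you and the paper handle by comparing the Levi-Civita connections of ${}^{0}g(t)$ and $\bh$ in the eigenframe (the connection coefficients $\Gamma^{A}_{AB}$ differing by $(\ln t)\,\msX_{B}(p_{A})$) and observing that the resulting logarithmic term is a multiple of $d(\tr\msK^{2})$ which cancels against the $\ln t$ in $d\,{}^{0}\phi$ precisely because $\tr\msK^{2}+\Phi_{a}^{2}=1$. The only cosmetic difference is that the paper phrases the check through the matter source definitions of \cite{aarendall} (with the $8\pi$ convention of Remark~\ref{remark:eightpi}), whereas you write the velocity dominated equations directly; the content is the same.
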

\begin{remark}\label{remark:eightpi}
  The conventions of \cite{aarendall} differ from ours in that a factor of $8\pi$ is included in Einstein's equations in \cite{aarendall}. This means
  that all the terms involving the scalar field appearing on the right hand sides of \cite[(10a), (10b) and (11b), p.~483]{aarendall} should be
  divided by $8\pi$ in order to make the formulae consistent with our conventions.
\end{remark}
\begin{remark}
  Note that while the frame $\{\msX_{A}\}$ and co-frame $\{\msY^{A}\}$ may only be local, the $\msY^{A}$ are well defined up to a sign, so that
  the right hand sides of (\ref{eq:zgdef}) and (\ref{eq:zkdef}) are globally well defined. 
\end{remark}
\begin{proof}
  The proof is to be found in Section~\ref{section:existenceofdev} below.
\end{proof}
In the following proposition, we go in the opposite direction.

\begin{prop}\label{prop:vdstoid}
  Let $({}^{0}g,{}^{0}k,{}^{0}\phi)$ be a velocity dominated solution in the sense of \cite[Subsection~2.2, pp.~482--483]{aarendall}, where the underlying
  manifold is $\bM$ with $\mathrm{dim}\bM\geq 3$. Adjusting the time interval appropriately, it can then be assumed that $\tr{}^{0}k=-t^{-1}$. Moreover,
  if ${}^{0}K$ is the $(1,1)$-tensor field with components ${}^{0}K^{a}_{\phantom{a}b}={}^{0}k^{a}_{\phantom{a}b}$, where the indices are raised with ${}^{0}g$,
  then $-t\cdot{}^{0}K$ is independent of $t$. Define $\msK:=-t\cdot{}^{0}K$. Then the eigenvalues of $\msK$ are real. Assume them to be distinct, label
  them $p_{A}$, $A=1,\dots,n$, and order them so that $p_{1}<\cdots<p_{n}$. Let $\bmsX_{A}$, $A=1,\dots,n$, be a local eigenvector field of $\msK$
  corresponding to $p_{A}$, and let $\{\bmsY^{A}\}$ be the basis dual to $\{\bmsX_{A}\}$. Then there are functions $\alpha_{A}$, $A=1,\dots,n$, defined
  on the subset of $\bM$ on which $\{\bmsX_{A}\}$ is defined, which are strictly positive and of the same regularity as ${}^{0}g$ and ${}^{0}k$, such that
  \begin{equation}\label{eq:gzformulavd}
    {}^{0}g=\textstyle{\sum}_{A}\alpha_{A}^{2}t^{2p_{A}}\bmsY^{A}\otimes\bmsY^{A}.
  \end{equation}
  Define the Riemannian metric $\bh$ on $\bM$ by
  \[
  \bh:=\textstyle{\sum}_{A}\alpha_{A}^{2}\bmsY^{A}\otimes\bmsY^{A};
  \]
  the right hand side is independent of the choice of local frame $\{\bmsX_{A}\}$ and therefore the left hand side is well defined on $\bM$. 
  Finally, $\partial_{t}(t\d_{t}{}^{0}\phi)=0$, so that there are functions $\Phia$ and $\Phib$, of the same regularity as ${}^{0}\phi$ and
  $\d_{t}{}^{0}\phi$, such that ${}^{0}\phi=\Phia\ln t+\Phib$. Then $(\bM,\bh,\msK,\Phia,\Phib)$ are non-degenerate quiescent initial data
  on the singularity for the Einstein-scalar field equations with a vanishing cosmological constant, assuming that $1+p_{1}-p_{n-1}-p_{n}>0$.
\end{prop}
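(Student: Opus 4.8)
The plan is to unwind the velocity-dominated equations of \cite{aarendall} — with the matter terms rescaled by $8\pi$ as in Remark~\ref{remark:eightpi} — and to read off the four conditions of Definition~\ref{def:ndsfidonbbs} one at a time. I would begin with the reductions asserted in the statement: the evolution equation $\d_{t}(\tr{}^{0}k)=(\tr{}^{0}k)^{2}$ shows that, after the indicated adjustment of the time variable, $\tr{}^{0}k=-t^{-1}$; the evolution equation for the mixed components then reads $\d_{t}{}^{0}K^{a}_{\phantom{a}b}=(\tr{}^{0}k)\,{}^{0}K^{a}_{\phantom{a}b}=-t^{-1}{}^{0}K^{a}_{\phantom{a}b}$, so $\d_{t}(t\cdot{}^{0}K)=0$ and $\msK:=-t\cdot{}^{0}K$ is $t$-independent. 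Since ${}^{0}k$ is symmetric, ${}^{0}K$, hence $\msK$, is self-adjoint with respect to the positive definite metric ${}^{0}g(t)$ for every $t$; its eigenvalues are therefore real, and we label and order them as in the statement and assume them distinct. Finally, the velocity-dominated wave equation gives $\d_{t}(t\,\d_{t}{}^{0}\phi)=0$, so $\d_{t}{}^{0}\phi=\Phi_{a}/t$ and ${}^{0}\phi=\Phi_{a}\ln t+\Phi_{b}$ with $\Phi_{a},\Phi_{b}$ $t$-independent.

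The key structural observation is that an eigenframe of $\msK$ is automatically orthogonal for ${}^{0}g$. Since $\msK$ is $t$-independent, its eigenprojections are $t$-independent and one may choose local eigenvector fields $\bmsX_{A}$ to be $t$-independent; for each fixed $t$ these are ${}^{0}g(t)$-orthogonal because $\msK$ is ${}^{0}g(t)$-self-adjoint with distinct eigenvalues. Hence ${}^{0}g=\sum_{A}{}^{0}g_{AA}\,\bmsY^{A}\otimes\bmsY^{A}$, and inserting this together with ${}^{0}K^{A}_{\phantom{A}B}=-t^{-1}p_{A}\delta^{A}_{B}$ into $\d_{t}{}^{0}g_{AA}=-2\,{}^{0}k_{AA}$ gives $\d_{t}\ln{}^{0}g_{AA}=2p_{A}/t$, so ${}^{0}g_{AA}=\alpha_{A}^{2}t^{2p_{A}}$ with $\alpha_{A}^{2}>0$, $t$-independent, and of the same regularity as ${}^{0}g$ and ${}^{0}k$; this is (\ref{eq:gzformulavd}). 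The tensor $\sum_{A}\alpha_{A}^{2}\bmsY^{A}\otimes\bmsY^{A}$ is unchanged under the sign-and-scale ambiguity in the eigenframe — its restriction to the $p_{A}$-eigenspace is $t^{-2p_{A}}$ times that of ${}^{0}g$ — so $\bh$ is a well-defined Riemannian metric on $\bM$. Now $\tr\msK=-t\tr{}^{0}K=-t\tr{}^{0}k=1$, and $\msK$ and $\bh$ are simultaneously diagonal in $\{\bmsX_{A}\}$, so $\msK$ is $\bh$-symmetric: this is condition~1. Moreover the velocity-dominated Hamiltonian constraint $(\tr{}^{0}k)^{2}-{}^{0}k_{ab}{}^{0}k^{ab}=(\d_{t}{}^{0}\phi)^{2}$ becomes $1-\sum_{A}p_{A}^{2}=\Phi_{a}^{2}$, i.e. $\tr\msK^{2}+\Phi_{a}^{2}=1$, which is the first half of condition~2.

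The remaining part of condition~2, namely $\rodiv_{\bh}\msK=\Phi_{a}\,d\Phi_{b}$, is the main obstacle, and comes from the velocity-dominated momentum constraint, which (with the conventions of \cite{aarendall}, and recalling that $\tr{}^{0}K=-t^{-1}$ is spatially constant) reads $\rodiv_{{}^{0}g}{}^{0}K=-(\d_{t}{}^{0}\phi)\,d{}^{0}\phi$, i.e. $\rodiv_{{}^{0}g}\msK=\Phi_{a}\,d{}^{0}\phi=(\ln t)\,\Phi_{a}\,d\Phi_{a}+\Phi_{a}\,d\Phi_{b}$. On the other hand, a direct computation of the divergence of a diagonal $(1,1)$-tensor in a frame orthogonal for both metrics, using ${}^{0}g_{AA}=\alpha_{A}^{2}t^{2p_{A}}$ versus $\bh_{AA}=\alpha_{A}^{2}$ (the structure-constant and $\alpha_{A}$ contributions being common to both), yields $(\rodiv_{{}^{0}g}\msK)_{B}=(\rodiv_{\bh}\msK)_{B}+(\ln t)\sum_{A}\bmsX_{B}(p_{A})(p_{B}-p_{A})$, the logarithm arising solely from differentiating $t^{2p_{A}}$ along the spatial directions. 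Since $\sum_{A}\bmsX_{B}(p_{A})(p_{B}-p_{A})=p_{B}\bmsX_{B}(\tr\msK)-\tfrac12\bmsX_{B}(\tr\msK^{2})=-\tfrac12\bmsX_{B}(\tr\msK^{2})=\Phi_{a}\bmsX_{B}(\Phi_{a})$ by $\tr\msK=1$ and $\tr\msK^{2}=1-\Phi_{a}^{2}$, the two expressions for $\rodiv_{{}^{0}g}\msK$ coincide precisely when $\rodiv_{\bh}\msK=\Phi_{a}\,d\Phi_{b}$: the $\ln t$ terms cancel identically, which is exactly the consistency forced by the $t$-independence of $\rodiv_{\bh}\msK$ together with the Hamiltonian constraint. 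This establishes condition~2, and the careful bookkeeping of this comparison — in particular getting the signs right and correctly using the Hamiltonian constraint to kill the $\ln t$ terms — is where I expect the real work to lie.

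Condition~3 is the standing hypothesis that the eigenvalues be distinct. For condition~4, recall from Remark~\ref{remark:commcond} that one may take $A,B,C$ distinct, and, relabelling $B,C$ if necessary, $B<C\le n$; then $p_{A}\ge p_{1}$, $p_{B}\le p_{n-1}$ and $p_{C}\le p_{n}$, so $1+p_{A}-p_{B}-p_{C}\ge 1+p_{1}-p_{n-1}-p_{n}>0$ everywhere by hypothesis. Hence the set on which $\msO^{A}_{BC}$ is required to vanish is empty and condition~4 holds vacuously. The regularity statements for $\alpha_{A}$, $\bh$, $\Phi_{a}$ and $\Phi_{b}$ follow since each is constructed algebraically from ${}^{0}g$, ${}^{0}k$, ${}^{0}\phi$, $\d_{t}{}^{0}\phi$ and the eigendata of $\msK$, the last of which are smooth (respectively real analytic) on the region where the eigenvalues are distinct. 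Thus $(\bM,\bh,\msK,\Phi_{a},\Phi_{b})$ satisfy all four conditions of Definition~\ref{def:ndsfidonbbs}, as required.
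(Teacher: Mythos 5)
Your proposal is correct and follows essentially the same route as the paper: reduce to the normalised form ${}^{0}g=\sum_{A}\alpha_{A}^{2}t^{2p_{A}}\bmsY^{A}\otimes\bmsY^{A}$, read off conditions 1--3 and the Hamiltonian relation $\tr\msK^{2}+\Phi_{a}^{2}=1$ directly, note that condition 4 is vacuous under $1+p_{1}-p_{n-1}-p_{n}>0$, and obtain $\rodiv_{\bh}\msK=\Phi_{a}d\Phi_{b}$ by comparing $\rodiv_{{}^{0}g}\msK$ with $\rodiv_{\bh}\msK$ and cancelling the $\ln t$ terms via $\tr\msK=1$ and the Hamiltonian constraint. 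The only difference is presentational: the paper quotes the divergence comparison from the computation already carried out in the proof of Proposition~\ref{prop:idtovds} (its equations (\ref{eq:nablazKAB}), (\ref{eq:secondtermnablazKid}) and (\ref{eq:zjBeq})), whereas you rederive it, arriving at the same identity.
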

\begin{proof}
  The proof is to be found in Section~\ref{section:existenceofdev} below.
\end{proof}
Finally, we appeal to \cite{aarendall,damouretal} in order to deduce the existence of developments in case condition 4 of
Definition~\ref{def:ndsfidonbbs} is void. 

\begin{thm}\label{thm:aardetal}
  Let $(\bM,\bh,\msK,\Phia,\Phib)$ be non-degenerate quiescent initial data on the singularity for the Einstein-scalar field equations with a
  vanishing cosmological constant. Assume the manifold and the data to be real analytic. Assume, moreover, that if the eigenvalues of $\msK$,
  say $p_{A}$, $A=1,\dots,n$, are ordered so that $p_{1}<\cdots<p_{n}$, then $1+p_{1}-p_{n-1}-p_{n}>0$. Then there is an open neighbourhood $V$ of
  $\bM\times \{0\}$ in $\bM\times\rn{}$ and a real analytic solution $(M,g,\phi)$ to the Einstein-scalar field equations, where
  $M:=V\cap [\bM\times (0,\infty)]$, such that the following holds. On $M$, the metric $g$ is given by the right hand side of
  (\ref{eq:rinansatzintroPsi}). In this expression, $\msX_{A}$, $A=1,\dots,n$, is an eigenvector field of $\msK$ corresponding to $p_{A}$, normalised
  so that $|\msX_{A}|_{\bh}=1$, and $\{\msY^{A}\}$ is the basis dual to $\{\msX_{A}\}$. Moreover, if $K$ is a compact subset of $\bM$, there are constants
  $\varepsilon>0$, $t_{0}>0$ and $C>0$ such that
  \begin{subequations}\label{eq:mKchhconvratesfaar}
    \begin{align}
      \|\mK(\cdot,t)-\msK\|_{C(K)} \leq & Ct^{\vare},\label{eq:mKlimsfaar}\\
      \|\chh(\cdot,t)-\bh\|_{C(K)} \leq & Ct^{\vare},\label{eq:limitbABtobhchhsfaar}\\
      \|(\hU \phi)(\cdot,t)-\Phia\|_{C(K)} \leq & Ct^{\vare},\label{eq:Phialimaar}\\
      \|(\phi-\Phia\varrho)(\cdot,t)-\Phib\|_{C(K)} \leq & Ct^{\vare}\label{eq:Phiblimaar}
    \end{align}
  \end{subequations}
  for all $t\leq t_{0}$. Here $\mK$, $\chh$, $\hU$ and $\varrho$ are defined as in Definition~\ref{def:developmentsfLambda}.
\end{thm}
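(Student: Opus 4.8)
The plan is to build the development by feeding the velocity dominated solution attached to the data into the existence and uniqueness theory of \cite{aarendall,damouretal}, and then translating the asymptotics furnished by that theory into the estimates (\ref{eq:mKchhconvratesfaar}). First I would apply Proposition~\ref{prop:idtovds} to $(\bM,\bh,\msK,\Phi_{a},\Phi_{b})$, obtaining the triple $({}^{0}g,{}^{0}k,{}^{0}\phi)$ of (\ref{eq:AVTDsolution}) and the fact that it is a velocity dominated solution in the sense of \cite[Subsection~2.2, pp.~482--483]{aarendall}. Since the eigenvalues $p_{A}$ of $\msK$ are everywhere distinct, the spectral projections of $\msK$ depend real analytically on $\msK$, hence the $p_{A}$, the (locally defined, up to sign) normalised eigenvector fields $\msX_{A}$, and the globally defined quantities $\msY^{A}\otimes\msY^{A}$ are real analytic; together with real analyticity of $\bh$, $\Phi_{a}$, $\Phi_{b}$ this makes $({}^{0}g,{}^{0}k,{}^{0}\phi)$ real analytic in the base point and real analytic in $t\in(0,\infty)$. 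I would also record that the normalisation $\tr{}^{0}k=-t^{-1}$ (trace with respect to ${}^{0}g$) holds automatically, since it equals $-t^{-1}\sum_{A}p_{A}=-t^{-1}$ by $\tr\msK=1$, and that $-t\cdot{}^{0}K=\msK$ when components are taken in the eigenframe; thus the generalised Kasner exponents of \cite{aarendall} are exactly the $p_{A}$.

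Second, I would invoke \cite[Theorems~1 and 2, pp.~484--485]{aarendall} (or \cite[Theorems~10.1 and 10.2, p.~1104]{damouretal}), whose hypotheses hold here precisely because the data are real analytic and $1+p_{1}-p_{n-1}-p_{n}>0$. This produces, for each compact $K\subset\bM$, a $T_{K}>0$ and a real analytic solution $(g,\phi)$ of the Einstein-scalar field equations with $\Lambda=0$ on $(0,T_{K})\times\mathrm{int}\,K$, in the Gaussian gauge $g=-dt\otimes dt+g_{ij}dx^{i}\otimes dx^{j}$, asymptotic to $({}^{0}g,{}^{0}k,{}^{0}\phi)$ with some exponent $\alpha>0$ in the weighted norms of \cite{aarendall}. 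By the uniqueness part of these theorems the spatially localised solutions agree on overlaps and patch to a real analytic solution $(g,\phi)$ on a set $M=V\cap[\bM\times(0,\infty)]$ with $V\subset\bM\times\rn{}$ open and $\bM\times\{0\}\subset V$. In any chart carrying an $\bh$-orthonormal eigenframe $\{\msX_{A}\}$ of $\msK$, set $b_{AB}:=t^{-2p_{\max\{A,B\}}}g(\msX_{A},\msX_{B})$; then $g$ has the form (\ref{eq:rinansatzintroPsi}), and these definitions are consistent between charts since an eigenframe is unique up to signs.

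Third, and this is where the real work lies, I would deduce (\ref{eq:mKchhconvratesfaar}). The weighted norms in \cite{aarendall,damouretal} are adapted to the Kasner scales, so the estimate on $g-{}^{0}g$ should translate into the component bounds $g(\msX_{A},\msX_{B})-{}^{0}g(\msX_{A},\msX_{B})=O(t^{2p_{\max\{A,B\}}+\b})$ on compact sets for a suitable $\b>0$, and similarly for $k-{}^{0}k$ and $\phi-{}^{0}\phi$. Since ${}^{0}g(\msX_{A},\msX_{B})=t^{2p_{A}}\delta_{AB}$, this gives $b_{AA}=1+O(t^{\b})$ and $b_{AB}=O(t^{\b})$ for $A\neq B$, hence $\|\chh(\cdot,t)-\bh\|_{C(K)}=O(t^{\b})$ because $c_{AA}=\bh(\msX_{A},\msX_{A})=1$. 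Inserting the same bounds into $\bK^{A}_{\phantom{A}B}=g^{AC}k_{CB}$, $\theta=\tr\bK$, $\mK=\bK/\theta$ yields $t\theta-1=O(t^{\b})$ and $\|\mK(\cdot,t)-\msK\|_{C(K)}=O(t^{\b})$, using that the eigenframe components of $\msK$ are $\mathrm{diag}(p_{A})$. For the scalar field, $\hU\phi=\theta^{-1}\d_{t}\phi=t\,\d_{t}\phi\,(1+O(t^{\b}))\to\Phi_{a}$ from $\d_{t}\phi=\Phi_{a}t^{-1}+O(t^{\alpha-1})$; and since $e^{\varrho}=(\det[b_{AB}t^{2p_{\max\{A,B\}}}])^{1/2}$ and $\sum_{A}p_{A}=1$, one gets $\varrho-\ln t=O(t^{\b})$, whence $\phi-\Phi_{a}\varrho-\Phi_{b}=(\phi-{}^{0}\phi)-\Phi_{a}(\varrho-\ln t)=O(t^{\b})$. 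Taking $\vare$ to be the least of the exponents appearing gives (\ref{eq:mKchhconvratesfaar}).

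The main obstacle is exactly the component-wise decay in the third step: one must check that \cite{aarendall,damouretal} deliver $g(\msX_{A},\msX_{B})-{}^{0}g(\msX_{A},\msX_{B})=O(t^{2p_{\max\{A,B\}}+\b})$ and not merely the weaker $O(t^{2p_{A}+\b})$, which for $A<B$ would not even yield $b_{AB}\to0$. This is where the inequality $1+p_{1}-p_{n-1}-p_{n}>0$, equivalently condition~4 of Definition~\ref{def:ndsfidonbbs}, really enters, and it requires using the Kasner-adapted weights of \cite{aarendall} carefully, or a short bootstrap on the evolution equations. A more routine point is the exhaustion argument patching the spatially localised developments into $V$, together with checking that $g$, the $b_{AB}$, $\theta$, $\varrho$ and the quantities in (\ref{eq:mKchhconvratesfaar}) are globally well defined independently of the local eigenframes.
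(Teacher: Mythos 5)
Your route is the same as the paper's: Proposition~\ref{prop:idtovds} produces the velocity dominated solution (\ref{eq:AVTDsolution}), the real analytic subcritical existence theory of \cite[Theorem~1, p.~484]{aarendall} (resp. \cite[Theorem~10.2, p.~1104]{damouretal} for $n>3$) gives the development, and the asymptotics of that theory are then translated into (\ref{eq:mKchhconvratesfaar}). However, the step you yourself flag as ``the main obstacle'' --- getting $b_{AB}\rightarrow\delta_{AB}$ at a rate when $A<B$, where a naive reading only controls $g(\msX_{A},\msX_{B})$ at the scale $t^{2p_{A}}$ --- is left open in your write-up, and this is exactly the point the paper's proof settles; moreover it needs neither a bootstrap on the evolution equations nor a re-examination of the weighted norms. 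The estimate supplied by \cite[Theorem~1, p.~484]{aarendall} is of the mixed-index form ${}^{0}g^{AC}g_{CB}=\delta^{A}_{B}+o(t^{\alpha^{A}_{\phantom{A}B}})$ with $\alpha^{A}_{\phantom{A}B}>0$, the indices referring to the frame $\{\msX_{A}\}$ and co-frame $\{\msY^{A}\}$. Since ${}^{0}g^{AC}g_{CB}=t^{-2p_{A}}g_{AB}=t^{2(p_{\max\{A,B\}}-p_{A})}b_{AB}$, the case $A\geq B$ (where $p_{\max\{A,B\}}=p_{A}$) reads $b_{AB}=\delta_{AB}+o(t^{\alpha})$: the diagonal entries converge to $1$ and the entries with $A>B$ converge to $0$, both at a rate. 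The remaining case $A<B$ then follows for free from the symmetry $b_{AB}=b_{BA}$ (the metric is symmetric and $p_{\max\{A,B\}}$ is symmetric in $A,B$). This yields (\ref{eq:limitbABtobhchhsfaar}); the conclusions (\ref{eq:mKlimsfaar}), (\ref{eq:Phialimaar}) and (\ref{eq:Phiblimaar}) are then read off as you indicate, using that $t\theta$ converges to $1$ at a rate and that $\varrho$ and $\ln t$ are interchangeable, the latter via the determinant argument connected with (\ref{eq:varrhothetadeterminant}); cf. (\ref{eq:varrholntCl}).

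Two smaller points. The compact-exhaustion-and-patching step is not needed: the existence theorem is applied directly to the velocity dominated solution on $\bM\times(0,\infty)$ and yields a solution on a set $M$ of the stated form, with the estimates uniform on any fixed compact $K$; only $C^{0}$ control on compact sets is claimed in (\ref{eq:mKchhconvratesfaar}). Also, the inequality $1+p_{1}-p_{n-1}-p_{n}>0$ (for $n=3$: $p_{A}>0$) is not what rescues the off-diagonal decay; its role is to make condition~4 of Definition~\ref{def:ndsfidonbbs} void and to place the data in the subcritical regime in which \cite[Theorem~1, p.~484]{aarendall} and \cite[Theorems~10.1 and 10.2, p.~1104]{damouretal} apply, with the strictly positive exponents $\alpha^{A}_{\phantom{A}B}$ that the above argument uses.
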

\begin{remark}\label{remark:hotanaly}
  Since the arguments are in the real analytic setting, it should be possible to upgrade (\ref{eq:mKchhconvratesfaar}) to $C^{l}$-estimates.
  However, we do not derive such estimates here. 
\end{remark}
\begin{remark}
  The results in \cite{aarendall,damouretal} also include uniqueness statements. Under additional conditions, the constructed solutions are
  thus unique. We omit the details. 
\end{remark}
\begin{remark}
  In the course of the proof, it is demonstrated that $t\theta$ converges to $1$ at a rate. The metric can thus be also be represented
  as in (\ref{eq:rinansatzintrocrushing}) with $N=1$. Moreover, on compact subsets, estimates of the form (\ref{eq:mKchhconvratesfN}) hold
  with $l=0$, though higher order estimates should also hold; cf. Remark~\ref{remark:hotanaly}.
\end{remark}
\begin{proof}
  The proof is to be found in Section~\ref{section:existenceofdev} below.
\end{proof}

Finally, let us make a remark about the degenerate case.

\begin{remark}\label{remark:thedegeneratecase}
  In Definitions~\ref{def:falid}, \ref{def:vacuumidos}, \ref{def:developmentvacuum}, \ref{def:ndsfidonbbs}, \ref{def:developmentsfLambda} and
  \ref{def:developmentsfLambdacrushing}, the condition of non-degeneracy is important. This is due to the fact that $\msO^{A}_{BC}$ is defined
  in terms of the eigenvector fields corresponding to the distinct eigenvalues and to the fact that the correspondence between $\bh$ and the
  spacetime metric is based on the fact that the eigenvalues are distinct. However, the notion of a velocity dominated solution, in the sense of
  \cite[Subsection~2.2, pp.~482--483]{aarendall}, offers a different perspective in the case of Gaussian foliations. In order to see this, let
  us focus on the $n=3$ dimensional case. Ordering the eigenvalues so that $p_{1}\leq p_{2}\leq p_{3}$, it is of interest to separate the cases
  $p_{1}=p_{2}$ and $p_{2}=p_{3}$. If $p_{1}>0$, the condition on the $\msO^{A}_{BC}$ is void. Let us therefore assume that $p_{1}\leq 0$. If
  $p_{1}=p_{2}\leq 0$, then, due to the conditions on the $p_{i}$, $p_{3}=1$ and $p_{1}=p_{2}=0$. This case is consistent with the flat Kasner
  solutions (which have Cauchy horizons). It may be of interest to allow this possibility, but it can be expected to be quite different from
  the situation we are interested in here. We therefore exclude this case. What remains is $p_{1}\leq 0$ and $p_{2}=p_{3}$. Then $p_{2}=p_{3}>0$.
  In particular the ($2$-dimensional) eigenspace corresponding to $p_{2}$ and $p_{3}$ is given by the orthogonal complement of the eigenspace of
  $p_{1}$. In this setting, one generalisation of the condition $\msO^{1}_{23}=0$ in a neighbourhood of a point $\bx$ such that $p_{1}(\bx)\leq 0$
  is the requirement that there is
  a neighbourhood, say $V$, of $\bx$ such that the distribution, say $D$, of $\bM$ of rank $2$ given by the tangent vectors orthogonal to the
  eigenspace of $p_{1}$, is involutive, so that $D$ is integrable (and, in fact, defines a foliation on $V$).

  The above considerations lead to the following generalisation in the case of $n=3$. Fix an asymptotically velocity dominated solution
  $({}^{0}g,{}^{0}k,{}^{0}\phi)$, assume that $\tr{}^{0}k=-t^{-1}$ and fix $\msK:=-t\cdot {}^{0}K$; cf. the statement of
  Proposition~\ref{prop:vdstoid}. Assume the eigenvalues of $\msK$ (which are automatically $\leq 1$) to be strictly less than $1$.
  Assume, moreover, that if there is an $\bx\in \bM$ such that the smallest eigenvalue of $\msK$ at $\bx$ is $\leq 0$, then there is an
  open neighbourhood $V$ of $\bx$ such that the distribution, say $D$, of $\bM$ of rank $2$ given by the tangent vectors orthogonal to the
  eigenspace of $p_{1}$, is involutive. Under these conditions, we expect the existence of a solution to the Einstein-scalar field equations
  with asymptotics determined by the asymptotically velocity dominated solution as in \cite[Theorem~1, p.~484]{aarendall}. 
\end{remark}

\subsection{Improving the asymptotics}\label{ssection:imprasympt}

The correspondence between the developments in Definitions~\ref{def:developmentvacuum}, \ref{def:developmentsfLambda}
and \ref{def:developmentsfLambdacrushing} and the initial data is given by (\ref{eq:mKchhconvrate}), (\ref{eq:mKchhconvratesf})
and (\ref{eq:mKchhconvratesfN}) respectively. However, as pointed out in connection with Definition~\ref{def:developmentvacuum},
more detailed information might be needed to deduce uniqueness; cf., e.g., \cite[Theorem~1.7, p.~1187--1188]{fal}. In particular, it may
be necessary to control the asymptotics of $\theta$, $\d_{t}\bk^{i}_{\phantom{i}j}$ etc. It may also be necessary to have more detailed
information concerning the rate of convergence. It is therefore of interest to assume that we have developments as in
Definitions~\ref{def:developmentvacuum}, \ref{def:developmentsfLambda} and \ref{def:developmentsfLambdacrushing} and to deduce more
detailed information concerning the asymptotics. We do so here. In addition, it turns out to be possible to derive some of the assumptions
included in the definitions of a development and of initial data. The statement of the following theorem therefore differs somewhat
from the definitions of initial data and of a development. 
\begin{thm}\label{thm:improvingasymptoticsgeneral}
  Let $3\leq n\in\nn{}$, $\Lambda\in\rn{}$, $(\bM,\bh)$ be a smooth and closed $n$-dimensional Riemannian manifold and $\msK$ be a
  smooth $(1,1)$-tensor field on $\bM$. Assume that
  \begin{enumerate}
  \item $\msK$ is symmetric with respect to $\bh$.
  \item $\tr\msK^{2}\leq 1$.    
  \item The eigenvalues $p_{A}$ of $\msK$ are distinct.
  \item $\msO^{A}_{BC}$ vanishes in a neighbourhood of $\bx\in\bM$ if $1+p_{A}(\bx)-p_{B}(\bx)-p_{C}(\bx)\leq 0$, where $\msO^{A}_{BC}$ is given by
    Remark~\ref{remark:msOdefsf}.    
  \end{enumerate}
  Assume that there is a smooth solution $(M_{a},g,\phi)$ to the Einstein-scalar field equations with a cosmological constant $\Lambda$ such that
  $M_{a}=\bM\times (0,t_{+})$ for some $t_{+}>0$; the mean curvature $\theta$ of the leaves $\bM_{t}:=\bM\times \{t\}$ is strictly positive; and
  $g$ takes the form
  \begin{equation}\label{eq:asmetricgeneralcase}
    g=-N^{2} dt\otimes dt+\textstyle{\sum}_{A,B}b_{AB}\theta^{-2p_{\max\{A,B\}}}\msY^{A}\otimes \msY^{B},
  \end{equation}
  where $\{\msY^{A}\}$ is the basis dual to $\{\msX_{A}\}$ introduced in Remark~\ref{remark:msOdefsf}, the lapse function $N$ is strictly positive and
  the $p_{A}$ are ordered so that $p_{1}<\cdots<p_{n}$. Moreover, $\theta$ is assumed to diverge uniformly to $\infty$ as $t\rightarrow 0+$. Let $\chh$
  be defined by (\ref{eq:chhdef}),
  $\mK$ be the expansion normalised Weingarten map of the leaves of the foliation and $\hU$ be the future directed unit normal with respect
  to $\hg:=\theta^{2}g$. Assume (\ref{eq:mKlimsfN}) and (\ref{eq:limitbABtobhchhsfN}) to hold. Finally, assume that there is a constant
  $C_{\rorel}$ such that
  \begin{equation}\label{eq:Crorelbd}
    |\bD\ln\hN|_{\bh}\leq C_{\rorel}
  \end{equation}
  on $M_{0}:=\bM\times (0,t_{0}]$ for some $t_{0}\in (0,t_{+})$, where $\hN:=\theta N$ and $\bD$ denotes the Levi-Civita connection induced by $\bh$, and
  that there are, for each $1\leq k\in\nn{}$, constants $a_{k}$ and $C_{k}$ such that
  \begin{equation}\label{eq:lnthetalnNphipolbd}
    |\bD^{k}\ln \hN|_{\bh}+|\bD^{k}\phi|_{\bh}+|\bD^{k-1}\hU\ln N|_{\bh}\leq C_{k}\ldr{\ln\theta}^{a_{k}}
  \end{equation}
  on $M_{0}$, where $\ldr{\xi}:=(1+|\xi|^{2})^{1/2}$ for $\xi\in\rn{n}$. Let the logarithmic volume density $\varrho$ be defined by the condition
  $e^{\varrho}\mu_{\bh}=\mu_{\bg}$, where $\mu_{\bh}$ is
  the volume form on $\bM$ induced by $\bh$ and $\mu_{\bg}$ is the volume form on $\bM$ given by the metric $\bg$ induced on the leaves $\bM_{t}$.
  Let, moreover, $\{X_{A}\}$ be the eigenframe of $\mK$, normalised so that the $X_{A}$ are unit vector fields with respect to $\bh$ and so that
  $X_{A}\rightarrow\msX_{A}$. Then $\bmu_{A}$ is defined by the condition that $|X_{A}|_{\bge}=e^{\bmu_{A}}$. With this notation, there
  is an $\eta>0$ and, for each $k\in\nn{}$, a constant $C_{k}$ such that
  \begin{equation}\label{eq:lnthetavarrhobmuApAlntheta}
    |\bD^{k}(\ln\theta+\varrho)|_{\bh}+\textstyle{\sum}_{A}|\bD^{k}(\bmu_{A}+p_{A}\ln\theta)|_{\bh}\leq C_{k}\theta^{-2\eta}
  \end{equation}
  on $M_{0}$. Next, there is an $\eta>0$ and, for each $k\in\nn{}$, a constant $C_{k}$ such that
  \begin{subequations}
    \begin{align}
      |\bD^{k}(\hml_{U}\mK)|_{\bh} \leq & C_{k}\theta^{-2\eta},\label{eq:hmlUmKrough}\\
      |\bD^{k}[\mK(\msY^{A},\msX_{B})-p_{B}\delta^{A}_{B}]|_{\bh} \leq & C_{k}\theta^{-2\eta}\min\{1,\theta^{-2(p_{B}-p_{A})}\},\label{eq:mKestrefined}\\
      |\bD^{k}[(\hml_{U}\mK)(\msY^{A},\msX_{B})]|_{\bh} \leq & C_{k}\theta^{-2\eta}\min\{1,\theta^{-2(p_{B}-p_{A})}\}\label{eq:hmlUmKestrefined}
    \end{align}
  \end{subequations}  
  on $M_{0}$ for all $A,B$ (no summation on $B$). In addition, if the deceleration parameter $q$ is defined by the condition that
  $\hU(n\ln\theta)=-1-q$, then there is an $\eta>0$ and, for each $k\in\nn{}$, a constant $C_{k}$ such that
  \begin{equation}\label{eq:qminusnminusoneimprprop}
    |\bD^{k}[q-(n-1)]|_{\bh}\leq C_{k}\theta^{-2\eta}
  \end{equation}
  on $M_{0}$. Moreover, there are $\Phia,\Phib\in C^{\infty}(\bM)$, an $\eta>0$ and, for each $k\in\nn{}$, a constant $C_{k}$ such that
  \begin{align*}
    |\bD^{k}(\hU\phi-\Phia)|_{\bh} \leq & C_{k}\theta^{-2\eta},\\
    |\bD^{k}(\phi-\Phia\varrho-\Phib)|_{\bh} \leq & C_{k}\theta^{-2\eta}
  \end{align*}
  on $M_{0}$. Finally, $\tr\msK=1$, $\tr\msK^{2}+\Phia^{2}=1$ and $\rodiv_{\bh}\msK=\Phia d\Phib$.
\end{thm}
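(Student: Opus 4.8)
The plan is to rewrite the Einstein--scalar field equations, in the gauge \eqref{eq:asmetricgeneralcase}, as a system of evolution equations along the unit normal with $\ln\theta$ playing the role of a time coordinate; to run a bootstrap (continuity) argument upgrading the assumed rate $\theta^{-\vare}$ of \eqref{eq:mKlimsfN}--\eqref{eq:limitbABtobhchhsfN} to the rate $\theta^{-2\eta}$ of the conclusions; and finally to pass to the limit $t\to0+$ in the constraints to read off the algebraic and differential relations. For the kinematics, set $\hN=\theta N$, so that $\hU=\hN^{-1}\d_{t}$ is the future unit normal of $\hg=\theta^{2}g$ and $\hU(n\ln\theta)=-1-q$ defines $q$. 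Since $\mK$ is symmetric both with respect to $\bh$ and with respect to the first fundamental form $\bge$ of the leaves, the $\bh$--orthonormal eigenframe $\{X_{A}\}$ of $\mK$ (with $X_{A}\to\msX_{A}$) is automatically $\bge$--orthogonal, so that $\bge=\sum_{A}e^{2\bmu_{A}}\msY^{A}\otimes\msY^{A}$ and $\varrho=\sum_{A}\bmu_{A}$; a short computation, using that $\bh$ is time independent and $|X_{A}|_{\bh}\equiv1$, gives $\hU\bmu_{A}$ equal to the $A$-th eigenvalue of $\mK$ and hence $\hU\varrho\equiv1$, while $\tr\mK\equiv1$ by definition, so that $\tr\msK=1$ is immediate. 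In this frame Einstein's equations decompose into: the Raychaudhuri equation, expressing $\hU\ln\theta$ (hence $q$) through $\tr\mK^{2}$, $\hU\phi$, $\theta^{-2}\bR$, $\theta^{-2}\Lambda$ and $\theta^{-2}|d\phi|^{2}_{\bge}$; the evolution equations for the $\bmu_{A}$ and for the components $b_{AB}$ of $\chh$; an evolution equation for $\mK$ (whose left side is the quantity in \eqref{eq:hmlUmKrough}) with source built from $\theta^{-2}\bR_{ij}$ and from the lapse and its derivatives; the scalar wave equation $\Box_{g}\phi=0$; and the Hamiltonian and momentum constraints. The lapse is not solved for: \eqref{eq:Crorelbd}--\eqref{eq:lnthetalnNphipolbd} are used as given, which makes every lapse contribution at worst a slowly growing power of $\ldr{\ln\theta}$ and makes the commutator $[\hU,\bD]$ harmless, since it equals $(\bD\ln\hN)\hU$.

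The key structural input is the control of the spatial curvature. Using \eqref{eq:asmetricgeneralcase}, the components of $\bR_{ij}$ in the eigenframe $\{X_{A}\}$ are sums of terms that are products of the structure ``constants'' $\g^{A}_{BC}$ (or their $\bD$-- and $\bmu_{A}$--derivatives) with powers $\theta^{-2(p_{B}+p_{C}-p_{A})}$ of $\theta$ (and less singular powers), times slowly growing factors. Dividing by $\theta^{2}$, the generic term is $\theta^{-2(1+p_{A}-p_{B}-p_{C})}$ times something slowly growing, so $\theta^{-2}\bR_{ij}=O(\theta^{-2\eta})$ provided every term with $1+p_{A}-p_{B}-p_{C}\leq0$ is absent --- which is exactly what the fourth of the standing hypotheses guarantees, since it forces $\msO^{A}_{BC}=(\g^{A}_{BC})^{2}\equiv0$ on the region where $1+p_{A}-p_{B}-p_{C}\leq0$. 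The hard part will be this accounting: one must check that the hypothesis removes \emph{precisely} those curvature contributions which would otherwise be of size $\theta^{2}$, uniformly over $\bM$ and --- after commuting $\bD^{k}$ through, the commutators being absorbed into the polynomial--in--$\ldr{\ln\theta}$ bounds of \eqref{eq:lnthetalnNphipolbd} --- at every order of differentiation, and that the off--diagonal components $\bR(\msY^{A},X_{B})$ acquire an extra factor $\theta^{-2(p_{B}-p_{A})}$ when $p_{B}>p_{A}$.

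With the curvature under control, the bootstrap runs on an interval $(0,t_{0}]$: assuming \eqref{eq:lnthetavarrhobmuApAlntheta}--\eqref{eq:hmlUmKestrefined}, \eqref{eq:qminusnminusoneimprprop}, and the scalar--field estimates with doubled constants, one substitutes into the evolution equations and recovers them with the original constants, so that continuity extends the validity to all of $(0,t_{0}]$; an induction on $k$ then promotes everything to all derivative orders. Concretely: the Hamiltonian constraint, being a pointwise relation, gives $1-\tr\mK^{2}-(\hU\phi)^{2}=O(\theta^{-2\eta})$ in every $C^{k}$ (using $|p_{A}|<1$, which is forced by $\tr\msK^{2}\leq1$ and $n\geq3$, to control $\theta^{-2}|d\phi|^{2}_{\bge}$); this bounds $\hU\phi$ and, via the Raychaudhuri equation, shows that the damping coefficient $1+\hU\ln\theta$ in the $\hU\phi$--equation is $O(\theta^{-2\eta})$, whence $\hU\ln\theta=-1+O(\theta^{-2\eta})$ and \eqref{eq:qminusnminusoneimprprop}. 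The $\hU\phi$--equation then reads $\hU(\hU\phi)=O(\theta^{-2\eta})$, so $\Phi_{a}:=\lim_{t\to0+}\hU\phi$ exists with a rate (integrate against $\ln\theta$, using $\hU\ln\theta\approx-1$), and then $\Phi_{b}:=\lim_{t\to0+}(\phi-\Phi_{a}\varrho)$ exists with a rate, since $\hU(\phi-\Phi_{a}\varrho)=\hU\phi-\Phi_{a}\hU\varrho=O(\theta^{-2\eta})$. Integrating $\hU(\bmu_{A}+p_{A}\ln\theta)=[(\text{$A$-th eigenvalue of }\mK)-p_{A}]+p_{A}(1+\hU\ln\theta)=O(\theta^{-2\eta})$ and $\hU(\varrho+\ln\theta)=1+\hU\ln\theta=O(\theta^{-2\eta})$ gives \eqref{eq:lnthetavarrhobmuApAlntheta}, and integrating the $\mK$--evolution equation gives \eqref{eq:hmlUmKrough}; the refined bounds \eqref{eq:mKestrefined} and \eqref{eq:hmlUmKestrefined} come from the off--diagonal $(A,B)$ components of the same equation, whose sources carry the extra factor $\theta^{-2(p_{B}-p_{A})}$ inherited from the $\theta^{-2p_{\max\{A,B\}}}$ scaling in \eqref{eq:asmetricgeneralcase}, and this produces the factor $\min\{1,\theta^{-2(p_{B}-p_{A})}\}$ after integration (the analogue of \eqref{eq:kconvfal}).

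The relations then follow by letting $t\to0+$ in the constraints. From the Hamiltonian constraint one has, pointwise, that $1-\tr\mK^{2}-(\hU\phi)^{2}$ is a combination of the terms $\theta^{-2}\bR$, $\theta^{-2}\Lambda$ and $\theta^{-2}|d\phi|^{2}_{\bge}$, each of which tends to $0$ (the first by the curvature estimate, the last because $p_{A}<1$), so $\tr\msK^{2}+\Phi_{a}^{2}=1$; together with $\tr\msK=1$ this is the stated pair of algebraic relations, and the conclusions on $\hU\phi$ and on $\phi-\Phi_{a}\varrho$ are precisely the estimates of the previous paragraph. For the momentum constraint, the gauge \eqref{eq:asmetricgeneralcase} makes $\theta$ constant on each leaf, so its spatial differential vanishes and the constraint reduces to $\rodiv_{\bge}\mK=(\hU\phi)d\phi$; using the refined asymptotics of $\bmu_{A}$, $\varrho$ and $\mK$ to control the connection coefficients of the degenerating metric $\bge$ in the eigenframe (the anisotropic bounds \eqref{eq:mKestrefined} being what tame the potentially growing off--diagonal ones) and using $\tr\msK=1$ to cancel the a priori $\ldr{\ln\theta}$--growth of both sides --- a computation of the kind establishing the equivalence of \eqref{eq:falciicond} with $\rodiv_{\bh}\msK=0$ (cf. Lemma~\ref{lemma:momcondivmsKz}) --- one passes to the limit to obtain $\rodiv_{\bge}\mK\to\rodiv_{\bh}\msK$ and $(\hU\phi)d\phi\to\Phi_{a}d\Phi_{b}$, hence $\rodiv_{\bh}\msK=\Phi_{a}d\Phi_{b}$.
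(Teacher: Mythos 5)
Your plan founders on a frame conflation that erases the main technical content of the refined estimates. You assert that $\mK$ is symmetric with respect to $\bh$ and write $\bge=\sum_A e^{2\bmu_A}\msY^A\otimes\msY^A$. Neither is justified: only $\msK$ is $\bh$-symmetric, while $\mK$ is symmetric with respect to the induced metric $\bge$, so its eigenframe $\{X_A\}$ (normalised only to be $\bh$-unit, not $\bh$-orthonormal) is $\bge$-orthogonal but does \emph{not} coincide with $\{\msX_A\}$, and the correct decomposition is $\bge=\sum_A e^{2\bmu_A}Y^A\otimes Y^A$ with $\{Y^A\}$ dual to $\{X_A\}$ (also $\varrho=\sum_A\bmu_A$ only up to the frame-mismatch determinant, and $\hU\bmu_A=\ell_A+\mW^A_A$, not $\ell_A$). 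The discrepancy between the two frames is precisely what (\ref{eq:mKestrefined}) measures: in its own frame $\mK$ is exactly diagonal, so under your identification the off-diagonal statement is vacuous and your proposed mechanism never engages with what must be proven. In the paper this discrepancy is controlled by comparing the two representations of the induced metric, using the assumed $C^k$ convergence (\ref{eq:limitbABtobhchhsfN}) and a triangular matrix-inversion argument, to get $|X^A_B-\delta^A_B|,|Y^A_B-\delta^A_B|\lesssim\theta^{-2\eta}\min\{1,\theta^{-2(p_B-p_A)}\}$ (Lemma~\ref{lemma:XABest}); these weighted frame estimates are also what transfers hypothesis (4), which concerns the structure constants $\g^A_{BC}$ of the $\msX$-frame, to decay of the structure constants $\lambda^A_{BC}$ of the actual eigenframe $\{X_A\}$ (Lemma~\ref{lemma:gammaonetwothree}, Corollary~\ref{cor:decayofproblematicterms}) — the curvature of $\bge$ in the $X$-frame involves $\lambda^A_{BC}$, not $\g^A_{BC}$, so without this transfer the accounting you describe as bookkeeping cannot be carried out (Proposition~\ref{prop:bmRestimate}). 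Likewise, the anisotropic weight in (\ref{eq:hmlUmKestrefined}) does not come from ``sources inheriting the $\theta^{-2p_{\max\{A,B\}}}$ scaling'': it comes from the antisymmetry of $e^{\bmu_A-\bmu_B}\mW^A_B$ (a consequence of the $\bge$-symmetry of $\mK$), the estimate $\bmu_A+p_A\ln\theta=O(\theta^{-2\eta})$, and then the frame-comparison bounds to pass from $(Y^A,X_B)$-components to $(\msY^A,\msX_B)$-components. None of this is present in your plan.

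A second concrete error occurs in the momentum-constraint step: the gauge (\ref{eq:asmetricgeneralcase}) does \emph{not} make $\theta$ constant on the leaves (no CMC condition is imposed), and $|\bD^k\ln\theta|_{\bh}$ is a priori only polynomially bounded in $\ln\theta$ (Lemma~\ref{lemma:bDkvarrholnthetaest}, which is where (\ref{eq:Crorelbd})--(\ref{eq:lnthetalnNphipolbd}) enter). The constraint reads $\bnabla_m\mK^m_{\phantom{m}j}+\mK^m_{\phantom{m}j}\bnabla_m\ln\theta-\bnabla_j\ln\theta=\hU(\phi)\bnabla_j\phi$, and the divergent contributions (the $\bnabla\ln\theta$ terms together with the $\ln\theta\cdot X_C(p_A)$ and $\varrho\cdot X_C(\Phi_a)$ growth) cancel in the limit only after using both $\sum_Ap_A=1$ and the previously derived relation $\sum_Ap_A^2+\Phi_a^2=1$; cf.\ Lemma~\ref{lemma:thelimitsoftheconstrainequations}. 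Your remaining architecture — a bootstrap where the paper argues directly and sequentially; deriving $\varrho+\ln\theta$ and $q-(n-1)$ from Raychaudhuri/Hamiltonian identities rather than first from the determinant identity (\ref{eq:varrhothetadeterminant}) forced by (\ref{eq:limitbABtobhchhsfN}); treating the wave equation as an ODE along $\hU$ with the $C^0$ bound on $\hU\phi$ supplied by the Hamiltonian constraint, in place of the energy machinery of \cite{RinWave} — is plausible and in places simpler, but as written the two points above are genuine gaps, not presentational shortcuts.
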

\begin{remark}\label{remark:hmlUmKdef}
  The notation $\hml_{U}\mK$ is introduced in \cite[Section~A.2, pp.~203--204]{RinWave}. However, in the case of a vanishing shift vector field,
  it can be defined as follows. Let $\{E_{i}\}$ be a local frame on $\bM$ (in particular, it is independent of $t$) and $\{\omega^{i}\}$ be
  the dual frame. Then $(\hml_{U}\mK)^{i}_{\phantom{i}j}=\hU(\mK^{i}_{\phantom{i}j})$, where the sub- and superscripts refer to the frame $\{E_{i}\}$ and
  co-frame $\{\omega^{i}\}$; cf. \cite[(A.3), p.~204]{RinWave}. 
\end{remark}
\begin{remark}\label{remark:XAmsXAlocal}
  The frames $\{\msX_{A}\}$ and $\{X_{A}\}$ are only defined locally on $\bM$. However, the $\msX_{A}$ and $X_{A}$ are well defined up to a sign (since
  $\mK$ converges to $\msK$ and $\msK$ has distinct eigenvalues, we can assume $\mK$ to have distinct eigenvalues by restricting $t_{0}$, if
  necessary). This means that $\bmu_{A}$ is globally well defined. Moreover, the estimates (\ref{eq:mKestrefined}) and (\ref{eq:hmlUmKestrefined}) make
  sense globally. 
\end{remark}
\begin{remark}\label{remark:pAsltone}
  Note that we do not assume the asymptotic versions of the Hamiltonian and momentum constraints to hold. Moreover, we do not assume
  (\ref{eq:PhialimN}) or (\ref{eq:PhiblimN}) to hold. However, we do require that
  $\tr\msK^{2}\leq 1$. This requirement, when combined with the fact that $n\geq 3$ and the non-degeneracy, implies that $p_{A}<1$ for
  all $A$. 
\end{remark}
\begin{remark}\label{remark:noncompactsettingimp}
  In the statement of the theorem we assume $\bM$ to be compact. However, by a more technical argument, we expect it should be possible to obtain
  similar conclusions in the non-compact setting.
\end{remark}
\begin{remark}
  The requirement that (\ref{eq:lnthetalnNphipolbd}) holds is perhaps the least satisfying assumption. However, it is hard to avoid making assumptions
  on the lapse function. Here, we bound the first and the last term on the left hand side of (\ref{eq:lnthetalnNphipolbd}). What is perhaps more
  questionable is the bound on the middle term on the left hand side. However, in deriving estimates for the scalar field, we need control of the
  geometry, in particular on the deceleration parameter. However, such control is only obtained by combining the assumptions with Einstein's equations.
  In that context, we need some control of the scalar field. The control necessary could be obtained in different ways. One could also, for example,
  assume that $|\bD^{k}\hU^{2}\phi|_{\bh}$ decays or that $|\bD^{k}\hU\phi|_{\bh}$ does not grow faster than the right hand side of
  (\ref{eq:lnthetalnNphipolbd}). However, that would imply the assumed bound on $\phi$. 
\end{remark}
\begin{remark}\label{remark:generalformulationlocalcoordinates}
  If $n=3$ and $\msO^{1}_{23}=0$ on $\bM$, then, due to Lemma~\ref{lemma:localcoordinates} below, there are, for each $p\in\bM$, local coordinates
  $(V,\bsfx)$ and strictly positive functions $\alpha_{A}$, $A=1,2,3$, such that $p\in V$ and $\bmsX_{A}=\alpha_{A}\msX_{A}$ (no summation), where the
  $\bmsX_{A}$ are given by (\ref{eq:preferredcoordinates}) below. As a consequence, $\bmsY^{A}=\alpha_{A}^{-1}\msY^{A}$ (no summation), where the $\bmsY^{A}$
  are given by (\ref{eq:msYAformulae}) below. Moreover,
  \[
  d\bsfx^{i}=\bmsX_{A}^{i}\bmsY^{A},\ \ \
  \d_{i}=\bmsY^{A}_{i}\bmsX_{A},
  \]
  where $\bmsX_{2}^{3}$, $\bmsX_{1}^{2}$ and $\bmsX_{1}^{3}$ are introduced in (\ref{eq:preferredcoordinates}) below. The remaining components
  of $\bmsX_{A}^{i}$ are determined by $\bmsX_{A}^{i}=0$ if $i<A$ and $\bmsX_{A}^{i}=1$ if $A=i$. Moreover,
  \[
  \bmsY^{3}_{2}=-\bmsX^{3}_{2},\ \ \
  \bmsY^{2}_{1}=-\bmsX^{2}_{1},\ \ \
  \bmsY^{3}_{1}=\bmsX^{2}_{1}\bmsX^{3}_{2}-\bmsX^{3}_{1},
  \]
  $\bmsY_{i}^{A}=0$ if $A<i$ and $\bmsY_{i}^{A}=1$ if $A=i$. Note also that 
  \[
  d\bsfx^{i}=\msX_{A}^{i}\msY^{A},\ \ \
  \d_{i}=\msY^{A}_{i}\msX_{A},
  \]
  where $\msX_{A}^{i}=\alpha_{A}^{-1}\bmsX_{A}^{i}$ (no summation) and $\msY_{i}^{A}=\alpha_{A}\bmsY_{i}^{A}$ (no summation). Combining these observations
  with (\ref{eq:mKestrefined}) and (\ref{eq:hmlUmKestrefined}), it can be verified that
  \begin{align*}
    |D^{k}[\mK(d\bsfx^{i},\d_{j})-\msK(d\bsfx^{i},\d_{j})]|_{\bh} \leq & C_{k}\theta^{-2\eta}\min\{1,\theta^{-2(p_{j}-p_{i})}\},\\
    |\bD^{k}[(\hml_{U}\mK)(d\bsfx^{i},\d_{j})]|_{\bh} \leq & C_{k}\theta^{-2\eta}\min\{1,\theta^{-2(p_{j}-p_{i})}\}
  \end{align*}  
  on $M_{a}$ for all $i,j$.
\end{remark}
\begin{proof}
  The proof is to be found in Subsection~\ref{ssection:proofpropimprovegen} below. 
\end{proof}

In the case of a Gaussian foliation, there is a similar statement. However, the assumptions are slightly different. 

\begin{thm}\label{thm:improvingasymptoticsgeneralG}
  Let $3\leq n\in\nn{}$, $\Lambda\in\rn{}$, $(\bM,\bh)$ be a smooth and closed $n$-dimensional Riemannian manifold, $\msK$ be a
  smooth $(1,1)$-tensor field and $\Phia$ a smooth function on $\bM$. Assume that
  \begin{enumerate}
  \item $\msK$ is symmetric with respect to $\bh$.
  \item $\tr\msK^{2}+\Phia^{2}=1$.    
  \item The eigenvalues $p_{A}$ of $\msK$ are distinct.
  \item $\msO^{A}_{BC}$ vanishes in a neighbourhood of $\bx\in\bM$ if $1+p_{A}(\bx)-p_{B}(\bx)-p_{C}(\bx)\leq 0$, where $\msO^{A}_{BC}$ is given by
    Remark~\ref{remark:msOdefsf}.    
  \end{enumerate}
  Assume that there is a smooth solution $(M_{a},g,\phi)$ to the Einstein-scalar field equations with a cosmological constant $\Lambda$ such that
  $M_{a}=\bM\times (0,t_{+})$ for some $t_{+}>0$; the mean curvature $\theta$ of the leaves $\bM_{t}:=\bM\times \{t\}$ satisfies $\theta\geq\theta_{0}$
  on $M_{a}$ for some $0<\theta_{0}\in\rn{}$, where $\theta_{0}$ is additionally required to satisfy $\theta_{0}>[2\Lambda/(n-1)]^{1/2}$ in case
  $\Lambda>0$. Assume, in addition, that $g$ takes the form
  \begin{equation}\label{eq:asmetricgeneralGaussiancase}
    g=-dt\otimes dt+\textstyle{\sum}_{A,B}b_{AB}t^{2p_{\max\{A,B\}}}\msY^{A}\otimes \msY^{B},
  \end{equation}
  where $\{\msY^{A}\}$ is the basis dual to $\{\msX_{A}\}$ introduced in Remark~\ref{remark:msOdefsf} and the $p_{A}$ are ordered so that
  $p_{1}<\cdots<p_{n}$. Let $\chh$ be defined by (\ref{eq:chhdef}),
  $\mK$ be the expansion normalised Weingarten map of the leaves of the foliation and $\hU$ be the future directed unit normal with respect
  to $\hg:=\theta^{2}g$. Assume that (\ref{eq:mKlimsf})--(\ref{eq:Phialim}) hold with $\vare\leq 2$. Let the logarithmic volume
  density $\varrho$ be defined by the condition $e^{\varrho}\mu_{\bh}=\mu_{\bg}$, where $\mu_{\bh}$ is the volume form on $\bM$ induced by $\bh$ and
  $\mu_{\bg}$ is the volume form on $\bM$ given by the metric $\bg$ induced on the leaves $\bM_{t}$. Let, moreover, $\{X_{A}\}$ be the eigenframe
  of $\mK$, normalised so that the $X_{A}$ are unit vector fields with respect to $\bh$ and so that $X_{A}\rightarrow\msX_{A}$. Then $\bmu_{A}$ is
  defined by the condition that $|X_{A}|_{\bge}=e^{\bmu_{A}}$. With this notation, there is an $\eta>0$, a $t_{0}\in (0,t_{+})$ and,
  for each $k\in\nn{}$, a constant $C_{k}$ such that
  \begin{subequations}\label{eq:lnthetavarrhorelationG}
    \begin{align}
      \|(\ln\theta+\varrho)(\cdot,t)\|_{C^{k}(\bM)}+\textstyle{\sum}_{A}\|\bmu_{A}(\cdot,t)-p_{A}\ln t\|_{C^{k}(\bM)} \leq & C_{k}t^{2\eta},
      \label{eq:lnthetavrbmuApAlntG}\\
      \|\ln\theta(\cdot,t)+\ln t\|_{C^{k}(\bM)}+\|t\theta(\cdot,t)-1\|_{C^{k}(\bM)} \leq & C_{k}t^{2\eta}\label{eq:lnthetalnttthetamoneG}
    \end{align}
  \end{subequations}  
  for $t\leq t_{0}$. Next, there is an $\eta>0$ and, for each $k\in\nn{}$, a constant $C_{k}$ such that
  \begin{subequations}
    \begin{align}
      \|(\hml_{U}\mK)(\cdot,t)\|_{C^{k}(\bM)} \leq & C_{k}t^{2\eta},\label{eq:hmlUmKroughG}\\
      \|[\mK(\msY^{A},\msX_{B})-p_{B}\delta^{A}_{B}](\cdot,t)\|_{C^{k}(\bM)} \leq & C_{k}t^{2\eta}\min\{1,t^{2(p_{B}-p_{A})}\},\label{eq:mKestrefinedG}\\
      \|[(\hml_{U}\mK)(\msY^{A},\msX_{B})](\cdot,t)\|_{C^{k}(\bM)} \leq & C_{k}t^{2\eta}\min\{1,t^{2(p_{B}-p_{A})}\}\label{eq:hmlUmKestrefinedG}
    \end{align}
  \end{subequations}  
  for  $t\leq t_{0}$ and all $A,B$ (no summation on $B$). In addition, if the deceleration parameter $q$ is defined by the condition that
  $\hU(n\ln\theta)=-1-q$, then there is an $\eta>0$ and, for each $k\in\nn{}$, a constant $C_{k}$ such that
  \[
  \|q(\cdot,t)-(n-1)\|_{C^{k}(\bM)}\leq C_{k}t^{2\eta}
  \]
  for $t\leq t_{0}$. Moreover, there is a $\Phib\in C^{\infty}(\bM)$, an $\eta>0$ and, for each $k\in\nn{}$, a constant $C_{k}$ such that,
  in addition to (\ref{eq:Phialim}),
  \begin{align*}
    \|\phi(\cdot,t)-\Phia\varrho(\cdot,t)-\Phib\|_{C^{k}(\bM)} \leq & C_{k}t^{2\eta}
  \end{align*}
  for $t\leq t_{0}$. Finally, $\tr\msK=1$ and $\rodiv_{\bh}\msK=\Phia d\Phib$.
\end{thm}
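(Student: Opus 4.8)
The strategy is to transform Einstein's equations, written in the foliated form (\ref{eq:asmetricgeneralGaussiancase}), into a system for the expansion-normalised variables $\mK$, $\chh$, $\varrho$, $\bmu_A$ and the scalar-field quantities, and then to run a bootstrap argument using the assumed decay (\ref{eq:mKlimsf})--(\ref{eq:Phialim}) as the starting point. Since this is the Gaussian case, the lapse is $N=1$, so the analogue of the assumption (\ref{eq:Crorelbd}) and (\ref{eq:lnthetalnNphipolbd}) on $\ln\hN$ becomes a statement about $\ln(t\theta)$, which must itself be derived rather than assumed — this is the principal difference from Theorem~\ref{thm:improvingasymptoticsgeneral}. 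First I would record the relation $\hU=\theta^{-1}\d_t$ and the evolution equations for $\theta$, $\mK$, $\chh$ and $\phi$ coming from the Gauss--Codazzi decomposition together with $\mathrm{Ric}=d\phi\otimes d\phi+\tfrac{2\Lambda}{n-1}g$; in expansion-normalised form the Raychaudhuri equation gives $\hU(n\ln\theta)=-1-q$ with $q=\tr\mK^2-1+\hU\phi\cdot\hU\phi\cdot(\text{normalisation})-\tfrac{2\Lambda}{n-1}\theta^{-2}$ (up to the precise constants in the paper's conventions), and the constraint that $q$ converge to $n-1$ is exactly the decay of $\tr\msK^2+\Phi_a^2$ away from $1$, which must come out of the argument, not go in.

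**Key steps in order.** Step 1: from (\ref{eq:mKlimsf}) and (\ref{eq:limitbABtobhchhsf}) deduce a rough bound $|t\theta-1|\to 0$ by integrating Raychaudhuri, using that the $\theta^{-2}$-term is negligible since $\theta\to\infty$ and $\hU\phi$ is bounded by (\ref{eq:Phialim}); this already gives a preliminary form of (\ref{eq:lnthetalnttthetamoneG}) with some rate, and simultaneously provides the substitute for the lapse assumptions needed to invoke the machinery behind Theorem~\ref{thm:improvingasymptoticsgeneral}. Step 2: prove (\ref{eq:lnthetavrbmuApAlntG}) by integrating the transport equations for $\varrho$ and for $\bmu_A$; the evolution of $\bmu_A$ is driven by the eigenvalue of $\mK$ along $X_A$, which by (\ref{eq:mKlimsf}) is $p_A+O(t^\vare)$, so $\bmu_A=-p_A\ln t+(\text{bounded})+O(t^{2\eta})$ after one more integration-by-parts to absorb the log; summing over $A$ and using $\tr\msK=1$ (to be established) gives the $\ln\theta+\varrho$ relation. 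Step 3: with $t\theta-1$ and the $b_{AB}$ under control, rewrite the spatial Ricci tensor of the leaves in the eigenframe; the off-diagonal couplings are governed precisely by the structure functions $\g^A_{BC}$ and carry powers $t^{2(1+p_A-p_B-p_C)}$, so condition (4) (vanishing of $\msO^A_{BC}$ when $1+p_A-p_B-p_C\le 0$) is what makes the spatial curvature decay like $\theta^{-2\eta}$; feed this back into the evolution equations for $\mK$ to get (\ref{eq:hmlUmKroughG}), and into the mixed components to get the refined estimates (\ref{eq:mKestrefinedG}) and (\ref{eq:hmlUmKestrefinedG}) — the two-sided $\min\{1,t^{2(p_B-p_A)}\}$ weight comes from the explicit $t$-powers multiplying $\msY^A\otimes\msY^B$ in the metric. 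Step 4: bootstrap on the number of derivatives: commute $\bD^k$ through each transport/evolution equation, using (\ref{eq:mKlimsf})--(\ref{eq:Phialim}) for $l=k$ and the lower-order conclusions already obtained, to upgrade everything to $C^k(\bM)$. Step 5: integrate the scalar-field equation $\Box_g\phi=0$ written as $\hU(\hU\phi)+(\text{lower order})=0$; boundedness of $\hU\phi$ and decay of the error terms give $\hU\phi\to\Phi_a$ at rate $t^{2\eta}$ (this is (\ref{eq:Phialim}), now with a rate), and then a further integration of $\phi-\Phi_a\varrho$ against $\varrho\sim-\ln t$ produces $\Phi_b$ and its estimate. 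Step 6: take $t\to 0$ in the Hamiltonian constraint to get $\tr\msK^2+\Phi_a^2=1$ (consistency) and $\tr\msK=1$ from the trace of the second fundamental form asymptotics, and in the momentum constraint — exactly as in Lemma~\ref{lemma:momcondivmsKz} — to get $\rodiv_{\bh}\msK=\Phi_a d\Phi_b$.

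**Main obstacle.** The hard part is controlling $t\theta-1$ and the logarithmic volume density $\varrho$ with a genuine power-law rate, rather than merely $o(1)$, because this requires closing a coupled bootstrap: the decay of $q-(n-1)$ needs the spatial curvature to decay, the spatial curvature decay needs $b_{AB}\to\bh$ together with the $\msO$-condition and control of the $t$-powers, and the $t$-powers are controlled via $\theta^{-p_A}$, which loops back to $t\theta-1$. Disentangling this — presumably by first proving a rough $o(1)$ statement for all quantities, then showing each transport equation has a forcing term that is $O(t^{2\eta})$ for some small but positive $\eta$ (taking $\eta$ smaller than $\vare/2$, than each positive gap $p_B-p_A$, and than each positive value of $1+p_A-p_B-p_C$), and finally iterating — is the technical core. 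A secondary subtlety is that the eigenframe $\{X_A\}$ is only locally defined and its regularity must be tracked uniformly; Remark~\ref{remark:XAmsXAlocal} handles the sign ambiguity, but the estimates (\ref{eq:mKestrefinedG}), (\ref{eq:hmlUmKestrefinedG}) must be shown to be frame-independent, which is why they are phrased via $\mK(\msY^A,\msX_B)$ with the asymptotic frame $\{\msX_A\}$ fixed.
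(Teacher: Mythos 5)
Your overall roadmap reproduces, in the Gaussian special case, essentially the content of Theorem~\ref{thm:improvingasymptoticsgeneral}, whereas the paper's proof of this theorem is much shorter: it first proves $\|\ln\theta+\ln t\|_{C^{k}}+\|t\theta-1\|_{C^{k}}\leq C_{k}t^{\vare}$ (Lemma~\ref{lemma:lntthetatthetaminusoneestimate}), then verifies the hypotheses of Theorem~\ref{thm:improvingasymptoticsgeneral} with $N=1$, $\hN=\theta$ (the needed polynomial bound on $|\bD^{k}\phi|_{\bh}$ being obtained by integrating $t\d_{t}(\phi-\Phi_{a}\ln t)=t\theta(\hU\phi-\Phi_{a})+(t\theta-1)\Phi_{a}$), and quotes that theorem, replacing $\theta$ by $t^{-1}$ at the end. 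More importantly, your diagnosis of the ``main obstacle'' is mistaken: in the Gaussian setting there is no coupled bootstrap between $q$, the spatial curvature and $t\theta$. Combining the Hamiltonian constraint with the formula for the deceleration parameter eliminates $\bS/\theta^{2}$ altogether, giving (\ref{eq:qmnmone}), $q-(n-1)=-\tfrac{2n}{n-1}\Lambda\theta^{-2}+n[(\hU\phi)^{2}+\tr\mK^{2}-1]$, and the bracket decays at rate $t^{\vare}$ \emph{directly} from assumption (2) ($\tr\msK^{2}+\Phi_{a}^{2}=1$ is an input of the theorem, not something that ``must come out of the argument'') together with (\ref{eq:mKlimsf}) and (\ref{eq:Phialim}); no curvature estimate enters this step.

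There are also concrete gaps in your plan. First, you dismiss the $\Lambda\theta^{-2}$ term ``since $\theta\rightarrow\infty$'', but uniform divergence of $\theta$ is not a hypothesis — only $\theta\geq\theta_{0}$, with $\theta_{0}>[2\Lambda/(n-1)]^{1/2}$ when $\Lambda>0$ — and it is part of what has to be proven; a bounded, non-decaying forcing term in $\hU(\varrho+\ln\theta)$ would allow linear growth of $\varrho+\ln\theta$ and wreck your Step 1. The paper first derives $\ln\theta\geq-\ln t-C$ by a one-sided integration when $\Lambda\leq 0$, and runs a separate two-step argument using the assumed lower bound on $\theta_{0}$ when $\Lambda>0$; your proposal needs an analogue of this. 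Second, (\ref{eq:lnthetavrbmuApAlntG}) asserts $\bmu_{A}-p_{A}\ln t\rightarrow 0$ with \emph{no} additive function; integrating the transport equation $\hU(\bmu_{A})=\ell_{A}+\mW^{A}_{A}$ only yields convergence of $\bmu_{A}-p_{A}\varrho$ to some function $r_{A}$, and you must identify $r_{A}=0$ from the assumed metric form, i.e.\ from $\chh\rightarrow\bh$ and the $\bh$-orthonormality of $\{\msX_{A}\}$ (the paper obtains this algebraically by comparing the two frame representations of the induced metric, not from a transport equation). Third, ``taking $t\rightarrow 0$ in the momentum constraint'' is not immediate: the individual terms diverge like $\ln\theta$, and the divergence cancels only after using $\sum_{A}p_{A}=1$ and $\sum_{A}p_{A}^{2}+\Phi_{a}^{2}=1$; the relevant computation is that of Lemma~\ref{lemma:thelimitsoftheconstrainequations}, not Lemma~\ref{lemma:momcondivmsKz}, which is the purely algebraic equivalence with the Fournodavlos--Luk constraint.
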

\begin{remark}
  Comments analogous to Remarks~\ref{remark:hmlUmKdef}, \ref{remark:XAmsXAlocal} and \ref{remark:noncompactsettingimp} are equally relevant here. 
\end{remark}
\begin{remark}
  The assumptions of this theorem are stronger than those of Theorem~\ref{thm:improvingasymptoticsgeneral} in the sense that we assume
  $\tr\msK^{2}+\Phia^{2}=1$ to be satisfied and $\hU\phi$ to converge to $\Phia$. On the other hand, they are weaker in the sense that
  we assume neither (\ref{eq:Crorelbd}) nor (\ref{eq:lnthetalnNphipolbd}) to hold. Note also that we obtain the conclusions
  (\ref{eq:lnthetalnttthetamoneG}) concerning the asymptotics of $\theta$, even though the assumptions involve expansion normalised quantities. 
\end{remark}
\begin{remark}
  Assuming, in addition to the conditions of the theorem, that $n=3$ and $\msO^{1}_{23}=0$ on $\bM$, then, combining the conclusions of the theorem
  with the observations of Remark~\ref{remark:generalformulationlocalcoordinates} yields
  \begin{align*}
    t|\bD^{k}(\bK^{i}_{j}-t^{-1}\msK^{i}_{j})|_{\bh} \leq & C_{k}t^{2\eta}\min\{1,t^{2(p_{j}-p_{i})}\},\\
    t^{2}|\bD^{k}\d_{t}(\bK^{i}_{j}-t^{-1}\msK^{i}_{j})|_{\bh} \leq & C_{k}t^{2\eta}\min\{1,t^{2(p_{j}-p_{i})}\}
  \end{align*}
  with respect to the local coordinates introduced in Remark~\ref{remark:generalformulationlocalcoordinates}. Here $\msK^{i}_{j}$ denotes the components of
  $\msK$ with respect to the frame $\{\d_{i}\}$ and its dual frame $\{d\bsfx^{i}\}$, and similarly for $\bK$. These estimates should be compared
  with (\ref{eq:kconvfal}).
\end{remark}
\begin{proof}
  The proof of this statement is given in Subsection~\ref{ssection:proofpropimprovegenG} below
\end{proof}

\subsection{From convergence to data on the singularity}\label{ssection:contodataonsing}
It is of great interest to characterise the set of vacuum solutions that converge to the Kasner circle. Moreover, for reasons clarified in
Subsection~\ref{ssection:oscibbsing} below, the non-degenerate subset of the Kasner circle is of greatest interest (i.e., the subset on which the
$p_{i}$ are distinct). More generally, the same question can be asked concerning the Kasner disc. The situation we have in mind is thus that
$\hml_{U}\mK$ (cf. Remark~\ref{remark:hmlUmKdef}) converges to zero; that $\hU^{2}\phi$ converges to zero; that $\mK$ converges to a limit $\msK$;
that $\hU\phi$ converges to a
limit $\Phia$; and that $\Phia^{2}+\tr\msK^{2}=1$. However, in order to derive meaningful estimates, we need to choose a norm with respect
to which we assume convergence. We should also specify a rate. In the case of a Gaussian foliation, one could use powers of $t$ to specify the
rate. However, this does not work more generally. Another option is negative powers of the mean curvature. However, this quantity involves one
derivative of the metric. On the other hand, in the situations discussed above, the logarithmic volume density $\varrho$ is comparable in size
to the logarithm of the mean curvature; cf., e.g., (\ref{eq:lnthetavarrhobmuApAlntheta}) and (\ref{eq:lnthetavarrhorelationG}). This quantity
does not involve any derivatives of
the metric and can serve as a substitute for the logarithm of the mean curvature. In what follows, we therefore use $\varrho$ to specify a
rate. For this reason, we, at the very minimum, need to require $\varrho$ to tend to $-\infty$ uniformly. This leads to the following definition.

\begin{definition}\label{def:volsing}
  Let $3\leq n\in\nn{}$, $\bM$ be a closed $n$-dimensional manifold, $0<t_{+}\in\rn{}$, $I=(0,t_{+})$ and $M:=\bM\times I$. Let $\Lambda\in\rn{}$.
  Assume $(M,g)$ to be a time oriented Lorentz manifold and $\phi$ to be a smooth function on $M$, solving the Einstein-scalar field equations
  with a cosmological constant $\Lambda$.
  Assume the leaves $\bM_{t}:=\bM\times\{t\}$ of the foliation to be spacelike and denote the future directed unit normal by $U$. 
  Define the lapse function $N$ and the shift vector field $\chi$ by $\d_{t}=NU+\chi$. Assume $N>0$ and $\chi=0$. Assume, moreover,
  that there is a $0<\theta_{0}\in\rn{}$ such that the mean curvature $\theta$ of the leaves of the foliation satisfies $\theta\geq\theta_{0}$.
  In case $\Lambda>0$, assume, in addition, that $\theta_{0}>[2\Lambda/(n-1)]^{1/2}$. Fix a smooth reference metric $\bg_{\refer}$ on
  $\bM$ and define the logarithmic volume density $\varrho$ by the requirement that $e^{\varrho}\mu_{\bg_{\refer}}=\mu_{\bg}$, where $\bg$ is
  the metric induced on the leaves of the foliation. Assume, finally, that $\varrho$ diverges uniformly to $-\infty$ as $t\rightarrow 0$
  in the sense that for every $C\in\rn{}$, there is a $T\in I$ such that $\varrho(\bx,t)\leq C$ for all $t\leq T$ and $\bx\in\bM$.
  Then $(M,g,\phi)$ is said to be \textit{a solution to the Einstein-scalar field equations with a cosmological constant $\Lambda$ and a
  uniform volume singularity at $t=0$}. 
\end{definition}

Turning to the formal requirement of decay of $\hml_{U}\mK$, we are interested in solutions to the Einstein-scalar field equations with a
cosmological constant $\Lambda$ and a uniform volume singularity at $t=0$ (in the sense of Definition~\ref{def:volsing}) such that there
is a function $\e:\bM\rightarrow (0,1)$ satisfying $\e\geq\e_{0}$ for some $\e_{0}>0$, a $k\in\nn{}$ and constants $a_{k}$ and $C_{k}$ such that
\begin{equation}\label{eq:CkexpdechmlUmK}
  \textstyle{\sum}_{l=0}^{k}|\bD^{l}\hml_{U}\mK|_{\bg_{\refer}}\leq C_{k}\ldr{\varrho}^{a_{k}}e^{2\e\varrho}
\end{equation}
on $M$, where $\bD$ is the Levi-Civita connection of $\bg_{\refer}$. In what follows, $a_{k}$ and $C_{k}$ will change from line to line, but the
function $\e$ remains fixed. In analogy with (\ref{eq:CkexpdechmlUmK}), we demand that
\begin{equation}\label{eq:bDlhUsqphiestimate}
  \textstyle{\sum}_{l=0}^{k}|\bD^{l}\hU^{2}\phi|_{\bg_{\refer}}\leq C_{k}\ldr{\varrho}^{a_{k}}e^{2\e\varrho}
\end{equation}
on $M$. 

In case (\ref{eq:CkexpdechmlUmK}) holds with $k=0$, it can be deduced that $\mK$ converges uniformly to a continuous limit $\msK$; cf.
Lemma~\ref{lemma:mKconvtomsK} below. As mentioned previously, we are here interested in the non-degenerate setting. We therefore assume the
eigenvalues of $\msK$ to be distinct. Due to Lemma~\ref{lemma:nondegenerate} below, it is no restriction to, in this situation, make the following
assumption.

\begin{definition}\label{def:standardassumptions}
  Let $(M,g,\phi)$ be a solution to the Einstein-scalar field equations with a cosmological constant $\Lambda$ and a uniform volume singularity at
  $t=0$. Let $\mK$ denote the associated expansion normalised Weingarten map and denote the eigenvalues of $\mK$ by $\ell_{A}$, $A=1,\dots,n$.
  Assume that (\ref{eq:CkexpdechmlUmK}) holds with $k=0$. Assume, moreover,
  that there is an $\e_{\rond}>0$ such that $\min_{A\neq B}|\ell_{A}-\ell_{B}|\geq\e_{\rond}$ on $M$, that $\ell_{1}<\dots<\ell_{n}$ and that there are
  global eigenvector fields $X_{A}$ such that $\mK X_{A}=\ell_{A}X_{A}$ (no summation) and $|X_{A}|_{\bg_{\refer}}=1$. Then the \textit{standard
  assumptions} are said to be satisfied. 
\end{definition}
\begin{remark}\label{remark:msXAmsYAdef}
  Under the standard assumptions, the expansion normalised Weingarten map $\mK$ converges uniformly to a limit $\msK$. Moreover, $\msK$ has
  distinct eigenvalues, denoted $p_{1}<\cdots<p_{n}$, and there is a frame of eigenvector fields $\{\msX_{A}\}$ of $\msK$
  (with $\msK\msX_{A}=p_{A}\msX_{A}$ (no summation)), with dual frame $\{\msY^{A}\}$, such that $X_{A}$ and $Y^{A}$ converge uniformly to $\msX_{A}$ and
  $\msY^{A}$ respectively, where $\{Y^{A}\}$ is the dual basis of $\{X_{A}\}$. In fact, (\ref{eq:msKmKexpunifconv}) and (\ref{eq:XAasYAas}) below hold. 
\end{remark}
Below we use the following terminology.
\begin{definition}\label{def:XABYABbmuAmuA}
  Given that the standard assumptions hold, let $\{X_{A}\}$, $\{Y^{A}\}$, $\{\msX_{A}\}$ and $\{\msY^{A}\}$ be given by
  Definition~\ref{def:standardassumptions} and Remark~\ref{remark:msXAmsYAdef}. Define $X_{A}^{B}$ and $Y^{B}_{A}$ by the relations
  $X_{A}=X_{A}^{B}\msX_{B}$ and $Y^{A}=Y^{A}_{B}\msY^{B}$. Finally, define $\bmu_{A}$ and $\mu_{A}$ by $|X_{A}|_{\bge}=e^{\bmu_{A}}$ 
  and $\mu_{A}:=\bmu_{A}+\ln\theta$. 
\end{definition}
Given these assumptions, a $C^{0}$-version of the form (\ref{eq:rinansatzintrocrushing}) can be derived with $\theta$ replaced by $e^{-\varrho}$.
\begin{thm}\label{thm:asymptoticformofmetric}
  Assume the standard assumptions to be satisfied and let $p_{A}$, $\{\msX_{A}\}$, $\{\msY^{A}\}$, $X_{A}^{B}$, $Y^{B}_{A}$ and $\bmu_{A}$ be given by
  Remark~\ref{remark:msXAmsYAdef} and Definition~\ref{def:XABYABbmuAmuA}. Then 
  \begin{equation}\label{eq:gbABgenasform}
    g=-N^{2}dt\otimes dt+\textstyle{\sum}_{A,B}b_{AB}e^{2p_{\max\{A,B\}}\varrho}\msY^{A}\otimes \msY^{B}
  \end{equation}
  on $M$. Moreover, there are constants $a$ and $C$, a $t_{0}\in (0,t_{+})$ and continuous functions $r_{A}$ on $\bM$ such that
  \begin{equation}\label{eq:bABasymptotics}
    |b_{AB}-e^{2r_{A}}\delta_{AB}|\leq C\ldr{\varrho}^{a}e^{2\e\varrho}
  \end{equation}
  (no summation on $A$) on $M_{0}:=\bM\times (0,t_{0}]$. Finally,
  \begin{subequations}
    \begin{align}
      |\bmu_{A}-p_{A}\varrho-r_{A}| \leq & C\ldr{\varrho}e^{2\e\varrho},\label{eq:bmuACzasymptoticsintro}\\
      |X_{B}^{A}-\delta_{B}^{A}|+|Y_{B}^{A}-\delta_{B}^{A}| \leq & C\ldr{\varrho}^{a}e^{2\e\varrho}\min\{1,e^{2(p_{B}-p_{A})\varrho}\},
      \label{eq:XABYABoptimalestimatesintro}\\
      |\mK(\msY^{A},\msX_{B})-p_{B}\delta^{A}_{B}| \leq & C\ldr{\varrho}^{a}e^{2\e\varrho}\min\{1,e^{2(p_{B}-p_{A})\varrho}\},\label{eq:mKmsYAmsXBgenest}\\
      |(\hml_{U}\mK)(\msY^{A},\msX_{B})| \leq & C\ldr{\varrho}^{a}e^{2\e\varrho}\min\{1,e^{2(p_{B}-p_{A})\varrho}\}\label{eq:hmlUfixedframeopt}
    \end{align}
  \end{subequations}  
  (no summation on $B$) on $M_{0}$. 
\end{thm}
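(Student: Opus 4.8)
The plan is to first establish the purely algebraic identity (\ref{eq:gbABgenasform}) and then to derive the quantitative estimates by a bootstrap/integration argument in the direction of the singularity, using the hypothesis (\ref{eq:CkexpdechmlUmK}) (with $k=0$) together with the Einstein--scalar field equations and the volume-singularity structure from Definition~\ref{def:volsing}. For the first part, I would expand $g$ in the frame $\{\msY^{A}\}$ and use the fact that, under the standard assumptions, $\mK$ converges uniformly to $\msK$, so that $\{X_{A}\}$ converges to the eigenframe $\{\msX_{A}\}$ (Remark~\ref{remark:msXAmsYAdef}). Writing $\bg=\sum_{A,B}b_{AB}\msY^{A}\otimes\msY^{B}$ tautologically \emph{defines} the coefficient functions $b_{AB}$; the content of (\ref{eq:gbABgenasform}) is the extraction of the factor $e^{2p_{\max\{A,B\}}\varrho}$, which one gets by decomposing $\bg$ with respect to the ordered eigenframe and checking that the off-diagonal blocks carry the correct power of $e^{\varrho}$ because of the triangular structure of the eigenframe relative to a fixed frame (exactly the $\bmsX$-$\bmsY$ structure in (\ref{eq:bmsXbmsYdefintro})). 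This step is essentially bookkeeping once the convergence of the eigenframe is in hand.

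The core of the proof is the system of ODEs (in $t$, or better in $\varrho$ as the "time" variable) satisfied by $\mu_A$, $\bmu_A$, $X_A^B$, $Y^A_B$ and the components of $\mK$. I would derive evolution equations of the schematic form $\hU\bmu_A = -p_A + (\text{error})$, $\hU X_A^B = (\text{error})\cdot(\text{coupling})$, etc., where the errors are controlled by $\hml_U\mK$ and hence by the right-hand side of (\ref{eq:CkexpdechmlUmK}). The key structural point is that, because $\varrho\to-\infty$ uniformly and $\hU\varrho$ is comparable to $\hU\ln\theta$ which is bounded away from zero (the deceleration parameter is near $n-1>0$), one can use $\varrho$ as a monotone clock: quantities like $e^{2\e\varrho}$ are integrable in $\varrho$, so integrating the error terms from a reference time down to $t$ produces the bounds $C\ldr{\varrho}^{a}e^{2\e\varrho}$. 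Integrating $\hU\bmu_A=-p_A+O(\ldr{\varrho}^{a}e^{2\e\varrho})$ against $\hU\varrho \approx 1$ yields $\bmu_A = p_A\varrho + r_A + O(\ldr{\varrho}e^{2\e\varrho})$, which is (\ref{eq:bmuACzasymptoticsintro}) and defines the limit functions $r_A$; feeding this back gives (\ref{eq:bABasymptotics}) for the diagonal entries $b_{AA}\sim e^{2r_A}$ and, via the triangular-frame relations and the decay of the off-diagonal coupling, the anisotropic bound $\min\{1,e^{2(p_B-p_A)\varrho}\}$ for $X_B^A-\delta_B^A$ and $Y_B^A-\delta_B^A$, i.e.\ (\ref{eq:XABYABoptimalestimatesintro}). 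Estimates (\ref{eq:mKmsYAmsXBgenest}) and (\ref{eq:hmlUfixedframeopt}) then follow by rewriting $\mK$ and $\hml_U\mK$ in the fixed frame $\{\msX_A\},\{\msY^A\}$ using the just-established control on $X_A^B,Y^A_B$ together with (\ref{eq:CkexpdechmlUmK}).

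The subtle point — and the step I expect to be the main obstacle — is obtaining the \emph{sharp anisotropic} factor $\min\{1,e^{2(p_B-p_A)\varrho}\}$ rather than merely $e^{2\e\varrho}$. For $B>A$ one has $p_B>p_A$, so $e^{2(p_B-p_A)\varrho}\to 0$ and the exponent $p_B-p_A$ may be smaller than $\e$; the naive integration only sees the slower rate. The resolution is that the coupling coefficients in the evolution equation for $X_A^B$ (and for the off-diagonal $b_{AB}$) themselves carry a factor $e^{2(p_B-p_A)\varrho}$ coming from the power structure $\theta^{-2p_{\max\{A,B\}}}$ in the metric (\ref{eq:asmetricgeneralcase}); so one must set up the bootstrap so that this gain is propagated. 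Concretely, I would first prove the crude bounds with rate $e^{2\e\varrho}$, then run a second, refined integration of the off-diagonal equations in which the source is now a \emph{product} of a crude factor and the explicit $e^{2(p_B-p_A)\varrho}$, and check that the integral $\int^{\varrho} e^{2(p_B-p_A)s}\,ds$ is again $\lesssim e^{2(p_B-p_A)\varrho}$ when $p_B>p_A$ (and bounded by a constant — giving the "$1$" in the minimum — when one integrates from a fixed reference value). Keeping track of the polynomial-in-$\varrho$ losses $\ldr{\varrho}^{a}$ through these nested integrations, and verifying that $\hU\varrho$ stays comparable to a positive constant (which uses the Hamiltonian constraint and $\tr\msK^2\le 1$ via Theorem~\ref{thm:improvingasymptoticsgeneral}-type estimates, or can be derived directly from (\ref{eq:CkexpdechmlUmK}) with $k=0$), is where the real work lies. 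The remaining estimates are then assembled mechanically from these.
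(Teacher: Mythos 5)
Your overall strategy (integrate evolution equations along integral curves of $\hU$, using $\varrho$ as a clock, first crudely and then in a refined bootstrap) is the same as the paper's, and the reduction of (\ref{eq:gbABgenasform}), (\ref{eq:bABasymptotics}), (\ref{eq:bmuACzasymptoticsintro}) and (\ref{eq:mKmsYAmsXBgenest}) to frame and $\bmu_{A}$ asymptotics is sound. (One small simplification: you do not need the Hamiltonian constraint or the deceleration parameter to control the clock, since $\hU\varrho=1$ identically in this framework.)

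However, there is a genuine gap at the step you yourself flag as the main obstacle: the source of the sharp anisotropic factor $\min\{1,e^{2(p_{B}-p_{A})\varrho}\}$. You attribute it to ``the power structure $\theta^{-2p_{\max\{A,B\}}}$ in the metric (\ref{eq:asmetricgeneralcase})'', but that form of the metric is a hypothesis of Theorem~\ref{thm:improvingasymptoticsgeneral}, not of the present theorem; here the only inputs are the standard assumptions, and the anisotropic form of the metric is precisely the conclusion (\ref{eq:gbABgenasform})--(\ref{eq:bABasymptotics}), so invoking it as a source of the gain is circular unless you say where the powers actually come from. The missing ingredient is the antisymmetry of $e^{\bmu_{A}-\bmu_{B}}\mW^{A}_{B}$ for $A\neq B$ (equivalently, the symmetry of $\hml_{U}\mK$ with respect to the induced metric, \cite[(6.19), p.~70]{RinWave}): writing $\mW^{A}_{B}=-e^{2(\bmu_{B}-\bmu_{A})}\mW^{B}_{A}$ and inserting the already-derived asymptotics $\bmu_{A}=p_{A}\varrho+r_{A}+O(\ldr{\varrho}e^{2\e\varrho})$ converts the crude bound on $\mW^{B}_{A}$ into $|\mW^{A}_{B}|\lesssim\ldr{\varrho}^{a}e^{2\e\varrho}e^{2(p_{B}-p_{A})\varrho}$ for $A<B$. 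Only with this refined bound on the coupling coefficients $\mW^{A}_{B}$ in $\hU(X_{A}^{C})=\mW_{A}^{B}X_{B}^{C}$ can your nested integrations produce (\ref{eq:XABYABoptimalestimatesintro}). The same ingredient is indispensable for (\ref{eq:hmlUfixedframeopt}): expanding $(\hml_{U}\mK)(\msY^{A},\msX_{B})=X^{A}_{C}(\hml_{U}\mK)(Y^{C},X_{D})Y^{D}_{B}$, the term with $C=A$, $D=B$ has frame-change factors close to $1$, so the frame estimates plus (\ref{eq:CkexpdechmlUmK}) alone only give $e^{2\e\varrho}$ for it; one needs $(\hml_{U}\mK)(Y^{A},X_{B})=(\ell_{B}-\ell_{A})\mW^{A}_{B}$ together with the refined $\mW$ estimate to obtain the factor $e^{2(p_{B}-p_{A})\varrho}$. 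So your claim that (\ref{eq:hmlUfixedframeopt}) follows ``mechanically'' is not correct as stated; without the antisymmetry identity the bootstrap never acquires the anisotropic gain.
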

\begin{remark}
  Assume, in addition, that (\ref{eq:bDlhUsqphiestimate}) holds for $k=0$, so that $\hU\phi$ converges to a limit, say $\Phia$; cf.
  Lemma~\ref{lemma:mKconvtomsK}. Assuming, moreover, that $\Phia^{2}+\tr\msK^{2}=1$, it can be deduced that there are constants $C$
  and $a$ and a continuous function $r_{\theta}$ such that 
  \[
  |\ln\theta+\varrho-r_{\theta}|\leq C\ldr{\varrho}^{a}e^{2\e\varrho}
  \]
  on $M$; cf. Lemma~\ref{lemma:lnthetavarrhoas} below. In particular, we can thus, effectively, replace $\varrho$ with $-\ln\theta$ in
  the conclusions of the theorem. 
\end{remark}
\begin{proof}
  The proof is to be found at the end of Subsection~\ref{ssection:Czlim}, prior to Lemma~\ref{lemma:lnthetavarrhoas}. 
\end{proof}

In order to obtain estimates for higher order derivatives, we also need to impose conditions on the lapse function. In practice, it turns out
to be convenient to impose conditions on $\hN=\theta N$.

\begin{definition}\label{def:kstandass}
  If the standard assumptions hold, cf. Definition~\ref{def:standardassumptions}; (\ref{eq:CkexpdechmlUmK}) holds for some $1\leq k\in\nn{}$;
  and there are constants $C_{k}$ and $a_{k}$ such that
  \begin{equation}\label{eq:bDllnhNconditions}
    \textstyle{\sum}_{l=1}^{k}|\bD^{l}\ln\hN|_{\bg_{\refer}}\leq C_{k}\ldr{\varrho}^{a_{k}}
  \end{equation}
  on $M$, then the $k$-\textit{standard assumptions} are said to hold. 
\end{definition}

\begin{thm}\label{thm:asymptoticformofmetrichod}
  Let $1\leq k\in\nn{}$ and assume the $k$-standard assumptions to hold; cf. Definition~\ref{def:kstandass}. Let $\msK$, $p_{A}$, $\{\msX_{A}\}$,
  $\{\msY^{A}\}$, $X_{A}^{B}$, $Y^{B}_{A}$, $\mu_{A}$ and $\bmu_{A}$ be given by Remark~\ref{remark:msXAmsYAdef} and Definition~\ref{def:XABYABbmuAmuA}.
  Then $\bmu_{A}$ and $\mu_{A}$ are $C^{\infty}$ and $p_{A}$, $\{\msX_{A}\}$, $\{\msY^{A}\}$, $X_{A}^{B}$ and $Y^{B}_{A}$ are $C^{k}$. Moreover,  
  \begin{equation}\label{eq:gbABgenasformhod}
    g=-N^{2}dt\otimes dt+\textstyle{\sum}_{A,B}b_{AB}e^{2p_{\max\{A,B\}}\varrho}\msY^{A}\otimes \msY^{B}
  \end{equation}
  and there are constants $C_{k}$ and $a_{k}$ and a $t_{0}\in (0,t_{+})$ such that
  \begin{equation}\label{eq:bABasymptoticshod}
    |\bD^{l}(b_{AB}-e^{2r_{A}}\delta_{AB})|_{\bg_{\refer}}\leq C_{k}\ldr{\varrho}^{a_{k}}e^{2\e\varrho}
  \end{equation}
  (no summation on $A$) on $M_{0}:=\bM\times (0,t_{0}]$ for all $A,B\in\{1,\dots,n\}$ and all $0\leq l\leq k$. Here $r_{A}$ are $C^{k}$ functions on
  $\bM$ that coincide with the functions $r_{A}$ obtained in Theorem~\ref{thm:asymptoticformofmetric}. Moreover,
  \begin{subequations}
    \begin{align}      
      |\bD^{l}(\bmu_{A}-p_{A}\varrho-r_{A})|_{\bg_{\refer}} \leq & C\ldr{\varrho}^{a_{k}}e^{2\e\varrho},\label{eq:CkrAasymptoticsintro}\\
      |\bD^{l}(X_{B}^{A}-\delta_{B}^{A})|_{\bg_{\refer}}+|\bD^{l}(Y_{B}^{A}-\delta_{B}^{A})|_{\bg_{\refer}}
      \leq & C_{k}\ldr{\varrho}^{a_{k}}e^{2\e\varrho}\min\{1,e^{2(p_{B}-p_{A})\varrho}\},\label{eq:XABYABoptimalestimateshigherorderintro}\\
      |\bD^{l}(\mK(\msY^{A},\msX_{B})-p_{B}\delta^{A}_{B})|_{\bg_{\refer}} \leq &
      C_{k}\ldr{\varrho}^{a_{k}}e^{2\e\varrho}\min\{1,e^{2(p_{B}-p_{A})\varrho}\},\label{eq:mKmsYAmsXBgenesthod}\\
      |\bD^{l}[(\hml_{U}\mK)(\msY^{A},\msX_{B})]|_{\bg_{\refer}} \leq &
      C_{k}\ldr{\varrho}^{a_{k}}e^{2\e\varrho}\min\{1,e^{2(p_{B}-p_{A})\varrho}\}\label{eq:hmlUfixedframeopthod}
    \end{align}
  \end{subequations}  
  (no summation on $B$) on $M_{0}$ for all $A,B\in\{1,\dots,n\}$ and all $0\leq l\leq k$. 

  Assume, in addition to the above, that (\ref{eq:bDlhUsqphiestimate}) holds. Then there are $C^{k}$-functions $\Phia$ and $\Phib$ on
  $\bM$ such that
  \begin{equation}\label{eq:bDlhUphimPhiaPhibestimateintro}
    |\bD^{l}[\hU(\phi)-\Phia]|_{\bg_{\refer}}
    +|\bD^{l}(\phi-\Phia\varrho-\Phib)|_{\bg_{\refer}}\leq C_{k}\ldr{\varrho}^{a_{k}}e^{2\e\varrho}
  \end{equation}
  on $M_{0}$ for all $0\leq l\leq k$. Assume that $\Phia$ satisfies $\Phia^{2}+\tr\msK^{2}=1$. Then there is a function $r_{\theta}\in C^{k}(\bM)$
  such that
  \begin{equation}\label{eq:lnthetapvarrhoasymptoticsintro}
    |\bD^{l}(\ln\theta+\varrho-r_{\theta})|_{\bg_{\refer}}\leq C_{k}\ldr{\varrho}^{a_{k}}e^{2\e\varrho}
  \end{equation}
  on $M_{0}$ for all $0\leq l\leq k$. Thus (\ref{eq:gbABgenasformhod}) and (\ref{eq:bABasymptoticshod}) can be rewritten
  \begin{subequations}\label{seq:gchcompas}
    \begin{align}
      g = & -N^{2}dt\otimes dt+\textstyle{\sum}_{A,B}\bar{b}_{AB}\theta^{-2p_{\max\{A,B\}}}\msY^{A}\otimes \msY^{B},\label{eq:gbABgenasformtheta}\\
      |\bD^{l}(\bar{b}_{AB}-e^{2\bar{r}_{A}}\delta_{AB})|_{\bg_{\refer}} \leq & C_{k}\ldr{\ln\theta}^{a_{k}}\theta^{-2\e}\label{eq:bABasymptoticstheta}
    \end{align}
  \end{subequations}  
  (no summation on $A$) on $M_{0}$ for $0\leq l\leq k$. Here $\bar{r}_{A}=r_{A}+p_{A}r_{\theta}$ and
  \begin{equation}\label{eq:barbABdef}
    \bar{b}_{AB}=b_{AB}e^{2p_{\max\{A,B\}}(\varrho+\ln\theta)},
  \end{equation}
  where $b_{AB}$ are the functions appearing in (\ref{eq:gbABgenasformhod}). Next,
  \begin{subequations}\label{eq:Ckestimatesasgeometrycrusing}
    \begin{align}
      |\bD^{l}(\bmu_{A}-p_{A}\varrho-r_{A})|_{\bg_{\refer}} \leq & C\ldr{\ln\theta}^{a_{k}}\theta^{-2\e},\label{eq:CkrAasymptoticsintrolnth}\\
      |\bD^{l}(\mu_{A}-(p_{A}-1)\varrho-r_{A}-r_{\theta})|_{\bg_{\refer}} \leq & C_{k}\ldr{\ln\theta}^{a_{k}}\theta^{-2\e},
      \label{eq:muAasymptoticsprelintro}\\
      |\bD^{l}(X_{B}^{A}-\delta_{B}^{A})|_{\bg_{\refer}}+|\bD^{l}(Y_{B}^{A}-\delta_{B}^{A})|_{\bg_{\refer}}
      \leq & C_{k}\ldr{\ln\theta}^{a_{k}}\theta^{-2\e}\min\{1,\theta^{-2(p_{B}-p_{A})}\},\label{eq:XABYABoptimalestimateshigherorderintrolnth}\\
      |\bD^{l}[\mK(\msY^{A},\msX_{B})-p_{B}\delta^{A}_{B}]|_{\bg_{\refer}} \leq &
      C_{k}\ldr{\ln\theta}^{a_{k}}\theta^{-2\e}\min\{1,\theta^{-2(p_{B}-p_{A})}\},\label{eq:mKmsYAmsXBgenesttheta}\\
      |\bD^{l}[(\hml_{U}\mK)(\msY^{A},\msX_{B})]|_{\bg_{\refer}} \leq &
      C_{k}\ldr{\ln\theta}^{a_{k}}\theta^{-2\e}\min\{1,\theta^{-2(p_{B}-p_{A})}\}\label{eq:hmlUfixedframeopttheta}
    \end{align}
  \end{subequations}  
  (no summation on $B$) on $M_{0}$ for $0\leq l\leq k$. In addition
  \begin{equation}\label{eq:hUlnthetaasymptoticsintermsoftheta}
    |\bD^{l}[q-(n-1)]|_{\bg_{\refer}}+|\bD^{l}[\hU(\ln\theta)+1]|_{\bg_{\refer}}\leq C_{k}\ldr{\ln\theta}^{a_{k}}\theta^{-2\e}
  \end{equation}
  on $M_{0}$ for $0\leq l\leq k$.
  
  If, in addition, $N=1$, then
  \begin{equation}\label{eq:thetaasymptoticsGaussianfoliation}
    |\bD^{l}(\d_{t}\theta^{-1}-1)|_{\bg_{\refer}}+|\bD^{l}(t\theta-1)|_{\bg_{\refer}}\leq C_{k}\ldr{\ln t}^{a_{k}}t^{2\e}
  \end{equation}
  on $M_{0}$ for $0\leq l\leq k$. Combining this estimate with (\ref{seq:gchcompas})--(\ref{eq:barbABdef}) yields
  \begin{subequations}
    \begin{align}
      g = & -dt\otimes dt+\textstyle{\sum}_{A,B}c_{AB}t^{2p_{\max\{A,B\}}}\msY^{A}\otimes \msY^{B},\label{eq:gbABgenasformGauss}\\
      |\bD^{l}(c_{AB}-e^{2\bar{r}_{A}}\delta_{AB})| \leq & C_{k}\ldr{\ln t}^{a_{k}}t^{2\e},\label{eq:bABasymptoticsGauss}    
    \end{align}
  \end{subequations}  
  (no summation on $A$ or $B$) on $M_{0}$ for $0\leq l\leq k$. Here
  \begin{equation}\label{eq:cABdef}
    c_{AB}:=\bar{b}_{AB}(t\theta)^{-2p_{\max\{A,B\}}}, 
  \end{equation}
  where $\bar{b}_{AB}$ is given by (\ref{eq:barbABdef}). 
\end{thm}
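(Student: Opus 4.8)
The statement is the $C^{k}$-refinement of the $C^{0}$-result of Theorem~\ref{thm:asymptoticformofmetric}, and the plan is to obtain it by a bootstrap in the order $l$ of spatial differentiation, $0\le l\le k$, the base case $l=0$ being supplied by Theorem~\ref{thm:asymptoticformofmetric} (together with the $C^{0}$-statements for $\hU\phi$ and for $\ln\theta+\varrho$ recorded in the remarks following it). The mechanism is the same throughout: every quantity to be controlled — $b_{AB}$, $\bmu_{A}$, $\mu_{A}$, the eigenframe coefficients $X_{A}^{B}$, $Y^{A}_{B}$, the entries $\mK(\msY^{A},\msX_{B})$ and $(\hml_{U}\mK)(\msY^{A},\msX_{B})$, the deceleration parameter $q$, $\hU(\ln\theta)+1$, $\ln\theta+\varrho$, and (in the last parts) the scalar-field remainders — satisfies an evolution equation along $\hU$ whose inhomogeneity is, modulo terms of strictly lower differential order already controlled by the inductive hypothesis, a sum of a genuinely decaying term (coming from the decay of $\hml_{U}\mK$ built into (\ref{eq:CkexpdechmlUmK}), respectively of $\hU^{2}\phi$ in (\ref{eq:bDlhUsqphiestimate}), respectively of the spatial curvature of $\bg$, whose dangerous components are killed exactly by the hypothesis on $\msO^{A}_{BC}$) and a term of the schematic form $(\text{bounded coefficient})\times(\text{the quantity itself})$. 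Since the shift vanishes one has $\hU\varrho\equiv 1$, so $\varrho$ is a bona fide time function; one rewrites the equations as Fuchsian ODEs in $\varrho$ and integrates backward from $\varrho(\cdot,t_{0})$. The polynomial factors $\langle\varrho\rangle^{a_{k}}$ in the conclusions are the resonance corrections of this integration, and the assumption (\ref{eq:bDllnhNconditions}) — polynomial-in-$\varrho$ control, without decay, of $\bD^{l}\ln\hN$ for $1\le l\le k$ — is precisely what is needed to commute $\bD^{l}$ through $\hU=\hN^{-1}\d_{t}$ (recall $\bD$ is the $t$-independent reference connection, so $[\bD^{l},\d_{t}]=0$ and the commutator $[\bD^{l},\hU]$ only produces $\bD^{j}\ln\hN$, $1\le j\le l$). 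Throughout, the identity $\tr\msK=1$, automatic since $\tr\mK\equiv\tr(\bK/\theta)\equiv 1$, is used repeatedly.

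The first step is to extract, from the Einstein--scalar field equations in the gauge (\ref{eq:asmetricgeneralGaussiancase}) with a general lapse $N$, the evolution equations for $\bg$ (hence $b_{AB}$), for $\mK$ (whose $\hU$-derivative is $\hml_{U}\mK$, cf. Remark~\ref{remark:hmlUmKdef}), for $\varrho$, and for $\theta$ (the Raychaudhuri equation, expressing $\hU\ln\theta$ via $q$), and then to differentiate the spectral decomposition of $\mK$ to obtain the induced equations for the eigenframe $\{X_{A}\}$, for $\bmu_{A}$ and for $\mu_{A}$. The regularity assertions are read off here: $\bmu_{A}$ and $\mu_{A}$ are diagonal data whose $\hU$-evolution involves only smoothly evolving quantities, hence are $C^{\infty}$, whereas the eigenvalues, the eigenframes and $X_{A}^{B},Y^{A}_{B}$ inherit only the $k$ derivatives that (\ref{eq:CkexpdechmlUmK}) and (\ref{eq:bDllnhNconditions}) supply. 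The refined estimates (\ref{eq:XABYABoptimalestimateshigherorderintro}), (\ref{eq:mKmsYAmsXBgenesthod}), (\ref{eq:hmlUfixedframeopthod}), carrying the extra factor $\min\{1,e^{2(p_{B}-p_{A})\varrho}\}$, reflect the structure of the metric coefficient $\theta^{-2p_{\max\{A,B\}}}$ (equivalently $e^{2p_{\max\{A,B\}}\varrho}$): the evolution equation for the $(A,B)$-entry with $p_{B}>p_{A}$ couples to the diagonal through an inhomogeneity carrying precisely the factor $e^{2(p_{B}-p_{A})\varrho}$ — the same phenomenon producing the two-parameter rate $\min\{t^{\varepsilon},t^{\varepsilon-2p_{i}+2p_{j}}\}$ in the Fournodavlos--Luk estimate (\ref{eq:kconvfal}) — and integrating the improved equation yields the extra decay. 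This forces the bootstrap to be organised simultaneously over $l$ and over the ordered pair $(A,B)$, so that faster-decaying off-diagonal components are not contaminated by the slower diagonal ones.

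Next come the scalar field and $\theta$. Assuming (\ref{eq:bDlhUsqphiestimate}), one integrates the evolution equation for $\hU\phi$ (whose $\hU$-derivative equals $\hU^{2}\phi$ up to lower-order terms) to produce $\Phi_{a}:=\lim_{t\to0+}\hU\phi\in C^{k}(\bM)$ and the bound on $\bD^{l}(\hU\phi-\Phi_{a})$; then $\hU(\phi-\Phi_{a}\varrho)=\hU\phi-\Phi_{a}$ differs from $0$ by a decaying quantity, and a second integration gives $\Phi_{b}:=\lim_{t\to0+}(\phi-\Phi_{a}\varrho)$ and (\ref{eq:bDlhUphimPhiaPhibestimateintro}). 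With $\Phi_{a}$ in hand and the hypothesis $\Phi_{a}^{2}+\tr\msK^{2}=1$ imposed, one feeds the Hamiltonian constraint into the Raychaudhuri equation to show $\bD^{l}[q-(n-1)]$ decays; then $\hU(\ln\theta+\varrho)=(n-1-q)/n$ decays and integrates to the existence of $r_{\theta}\in C^{k}(\bM)$ with (\ref{eq:lnthetapvarrhoasymptoticsintro}), and combining $\hU(\ln\theta)=-(1+q)/n$ with the convergence of $\ln\theta+\varrho$ gives the bound on $\hU(\ln\theta)+1$ in (\ref{eq:hUlnthetaasymptoticsintermsoftheta}).

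The remaining conclusions are a change of variables. Substituting $\varrho=-\ln\theta+r_{\theta}+O(\langle\varrho\rangle^{a_{k}}e^{2\e\varrho})$ into (\ref{eq:gbABgenasformhod})--(\ref{eq:bABasymptoticshod}) and into (\ref{eq:CkrAasymptoticsintro}) and its companions converts everything to the $\theta$-normalised form (\ref{seq:gchcompas})--(\ref{eq:hUlnthetaasymptoticsintermsoftheta}): one sets $\bar{b}_{AB}=b_{AB}e^{2p_{\max\{A,B\}}(\varrho+\ln\theta)}$ and $\bar{r}_{A}=r_{A}+p_{A}r_{\theta}$, using that $\sum_{A}p_{A}=\tr\msK=1$ makes the conversion factor bounded and convergent at rate $e^{2\e\varrho}$, and that $\langle\varrho\rangle$ and $\langle\ln\theta\rangle$ are interchangeable. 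Finally, if $N=1$ then $\hN=\theta$ and $\hU(\ln\theta)=-\d_{t}\theta^{-1}$, so (\ref{eq:hUlnthetaasymptoticsintermsoftheta}) reads as decay of $\bD^{l}(\d_{t}\theta^{-1}-1)$; integrating in $t$ from $0$, where $\theta^{-1}\to0$ because $\theta$ diverges uniformly, gives $|\bD^{l}(t\theta-1)|\le C_{k}\langle\ln t\rangle^{a_{k}}t^{2\e}$, and then $c_{AB}:=\bar{b}_{AB}(t\theta)^{-2p_{\max\{A,B\}}}$ converges to $e^{2\bar{r}_{A}}\delta_{AB}$ at the same rate, yielding (\ref{eq:gbABgenasformGauss})--(\ref{eq:bABasymptoticsGauss}). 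The main obstacle is the bootstrap of the second and third paragraphs: casting the frame-dependent quantities' evolution in a form where the spatial-curvature inhomogeneities genuinely decay (which is exactly where condition~4 on $\msO^{A}_{BC}$ is consumed) and where the Fuchsian integration respects the refined off-diagonal rates through the nonlinear couplings — this is where the bulk of the deferred argument lies.
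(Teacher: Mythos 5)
Much of your mechanism does coincide with the paper's route: integration along $\hU$ with $\varrho$ as time variable ($\hU\varrho=1$), commutators $[\hU,E_{\bfI}]$ controlled through (\ref{eq:bDllnhNconditions}), induction in the number of derivatives (the paper's Lemmas~\ref{lemma:kmKconvtomsK}--\ref{eq:optimalestimatesforXABYABhigherorder}), the constraint-plus-Raychaudhuri argument for $q-(n-1)$ and $\ln\theta+\varrho$, the change of variables $\varrho\leftrightarrow-\ln\theta+r_{\theta}$, and the $t$-integration of $\d_{t}\theta^{-1}$ when $N=1$. Two smaller corrections: the refined factor $\min\{1,e^{2(p_{B}-p_{A})\varrho}\}$ does not come from the metric weights as such but from the antisymmetry of $e^{\bmu_{A}-\bmu_{B}}\mW^{A}_{B}$ (i.e.\ \cite[(6.19), p.~70]{RinWave}, used in Lemma~\ref{lemma:mWABestoptimal} and its higher-order analogue), combined with the structured induction over the matrix indices of $X^{A}_{B}$; and $\bmu_{A},\mu_{A}$ are $C^{\infty}$ simply because they are built from the eigenframe of the smooth $\mK$ of the smooth solution (uniform eigenvalue gap), while only the limit objects $\msK$, $p_{A}$, $\msX_{A}$, $r_{A}$, $\Phi_{a}$, $\Phi_{b}$, $r_{\theta}$ --- and hence $X^{A}_{B}=\msY^{B}(X_{A})$ --- are merely $C^{k}$; no evolution argument is needed for this split.

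The genuine problem is your repeated appeal to ``the hypothesis on $\msO^{A}_{BC}$'': condition~4 of Definition~\ref{def:ndsfidonbbs} is \emph{not} among the hypotheses of this theorem. The $k$-standard assumptions (Definitions~\ref{def:standardassumptions} and \ref{def:kstandass}) assume the decay of $\hml_{U}\mK$ directly, via (\ref{eq:CkexpdechmlUmK}); nothing is assumed about the structure constants, and indeed Theorem~\ref{thm:reprod} later \emph{derives} the vanishing of $\msO^{A}_{BC}$ as a consequence of the present theorem's conclusions, so consuming it here is circular. Your plan needs it because you route the frame evolution and the $q$-estimate through spatial-curvature source terms, as in Theorem~\ref{thm:improvingasymptoticsgeneral}; but that theorem goes in the opposite direction (curvature estimates via $\msO^{A}_{BC}$ are used there to \emph{prove} the decay of $\hml_{U}\mK$, which is assumed here). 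In the correct argument no curvature estimate ever appears: the frame satisfies $\hU(X_{A}^{C})=\mW_{A}^{B}X_{B}^{C}$ with $\mW$ controlled purely by (\ref{eq:CkexpdechmlUmK}) and the antisymmetry relation, and for the deceleration parameter the Hamiltonian constraint is combined with the Raychaudhuri equation so that $\theta^{-2}\bS$ cancels (the identity (\ref{eq:qmnmone})), the remainder decaying because $\Phi_{a}^{2}+\tr\msK^{2}=1$ together with (\ref{eq:kmsKmKexpunifconv}) and (\ref{eq:bDlhUphimPhiaPhibestimate}). Since you defer ``the bulk of the argument'' precisely to the step of killing curvature inhomogeneities with condition~4, the proposal as written proves a different statement (one with an extra hypothesis) and leaves the actual theorem, under its stated assumptions, unproved.
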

\begin{remark}
  When (\ref{eq:lnthetapvarrhoasymptoticsintro}) is satisfied, $\varrho$ can be replaced by $-\ln\theta+r_{\theta}$ in
  (\ref{eq:bDlhUphimPhiaPhibestimateintro}), (\ref{eq:CkrAasymptoticsintrolnth}) and (\ref{eq:muAasymptoticsprelintro}). In other words,
  modifying $\Phib$ etc., $\varrho$ can effectively be replaced by $-\ln\theta$. 
\end{remark}
\begin{remark}
  As in the previous remark, when (\ref{eq:thetaasymptoticsGaussianfoliation}) holds, $\theta$ can, effectively, be replaced by $t^{-1}$ in
  (\ref{eq:Ckestimatesasgeometrycrusing}).
\end{remark}
\begin{proof}
  The proof is to be found at the end of Subsection~\ref{ssection:higherorderderivativesconv}. 
\end{proof}

One of the main motivations behind the notion of initial data on the singularity introduced in this article is the expectation that it can be used
to parametrise the set of solutions to the Einstein-scalar field equations with quiescent singularities, in the non-degenerate setting. In order to
justify this expectation, it is necessary to extract initial data on the singularity from the asymptotics of a convergent solution. Moreover, it is
necessary to prove the desired relation between the solution and the extracted initial data. In other words, assuming a solution to be convergent,
we need to reproduce the conditions appearing in Definitions~\ref{def:ndsfidonbbs} and \ref{def:developmentsfLambdacrushing}. Assume, to this
end, that the conditions of Theorem~\ref{thm:asymptoticformofmetrichod} are satisfied, except for the requirement that $N=1$ (in the last part of the
theorem). Using the notation of Theorem~\ref{thm:asymptoticformofmetrichod}, introduce the $C^{k}$ Riemannian metric 
\[
\bh:=\textstyle{\sum}_{A}e^{2\br_{A}}\msY^{A}\otimes\msY^{A}.
\]
Before proceeding, note that the $\varrho$ appearing in the statement of Theorem~\ref{thm:asymptoticformofmetrichod} differs from the $\varrho$
appearing in Definition~\ref{def:developmentsfLambdacrushing}; the difference is due to the use of different reference metrics in the respective
definitions. Here we use $\varrho$ to denote the object appearing in the statement of Theorem~\ref{thm:asymptoticformofmetrichod}, but introduce
$\bvr$ to denote the logarithmic volume density defined using $\bh$ as a reference metric; i.e., $e^{\bvr}\mu_{\bh}=\mu_{\bge}$. Next, in order to be
consistent with the representation (\ref{eq:rinansatzintrocrushing}) of the metric, introduce $\bmsX_{A}=e^{-\br_{A}}\msX_{A}$ and let $\{\bmsY^{A}\}$
be the frame dual to $\{\bmsX_{A}\}$. Then $\{\bmsX_{A}\}$ is an orthonormal frame with respect to $\bh$ and $g$ can be written
\begin{equation}\label{eq:asmetricgeneralcasereproduce}
  g=-N^{2} dt\otimes dt+\textstyle{\sum}_{A,B}\chb_{AB}\theta^{-2p_{\max\{A,B\}}}\bmsY^{A}\otimes \bmsY^{B},
\end{equation}
where $\chb_{AB}=e^{-\br_{A}-\br_{B}}\bar{b}_{AB}$ (no summation) are $C^{k}$ functions satisfying
\begin{equation}\label{eq:chbmdeltaABestimate}
  |\bD^{l}(\chb_{AB}-\delta_{AB})|_{\bg_{\refer}} \leq C_{k}\ldr{\ln\theta}^{a_{k}}\theta^{-2\e}
\end{equation}
on $M_{0}$ for $0\leq l\leq k$; cf. (\ref{eq:bABasymptoticstheta}). Combining (\ref{eq:asmetricgeneralcasereproduce}) and
(\ref{eq:chbmdeltaABestimate}) with arguments similar to the ones given at the beginning of Subsection~\ref{ssection:estrelspvariationmeancurvature}
below yields the conclusion that $\bvr+\ln\theta$ converges to zero; cf. (\ref{eq:varrholnthetaCz}) below. Since $\bvr-\varrho$ is constant, and since
(\ref{eq:lnthetapvarrhoasymptoticsintro}) holds, we conclude that $\bvr=\varrho-r_{\theta}$. Recalling that (\ref{eq:bDlhUphimPhiaPhibestimateintro})
holds, where the functions $\Phia$ and $\Phib$ are $C^{k}$, we conclude that 
\begin{equation}\label{eq:bDlhUphimPhiaPhibestimatereprodverprel}
\textstyle{\sum}_{l=0}^{k}|\bD^{l}[\hU(\phi)-\bPhi_{a}]|_{\bg_{\refer}}+
\textstyle{\sum}_{l=0}^{k}|\bD^{l}(\phi-\bPhi_{a}\bvr-\bPhi_{b})|_{\bg_{\refer}} \leq 
C_{k}\ldr{\ln\theta}^{a_{k}}\theta^{-2\e}
\end{equation}
on $M_{0}$, where $\bPhi_{a}:=\Phia$ and $\bPhi_{b}:=\Phib+r_{\theta}\Phia$. Next, $\msK$ is $C^{k}$ due to Lemma~\ref{lemma:kmKconvtomsK} below.
With these definitions, the data at the singularity are $(\bM,\bh,\msK,\bPhi_{a},\bPhi_{b})$. Finally, as in
Definition~\ref{def:developmentsfLambdacrushing}, we define the $C^{k}$-family $\chh$ of Riemannian metrics on $\bM$ by
\begin{equation}\label{eq:chhdeffinal}
  \chh:=\textstyle{\sum}_{A,B}\chb_{AB}\bmsY^{A}\otimes\bmsY^{B}.
\end{equation}

\begin{thm}\label{thm:reprod}
  Assume the $2$-standard assumptions, cf. Definition~\ref{def:kstandass}, to hold and (\ref{eq:bDlhUsqphiestimate}) to hold with $k=1$. Using
  the notation introduced prior to the statement of the theorem, assume that $\bPhi_{a}^{2}+\tr\msK^{2}=1$. Then, if $\bx\in\bM$ is such that
  $1+p_{A}(\bx)-p_{B}(\bx)-p_{C}(\bx)\leq 0$, there is a neighbourhood $V$ of $\bx$ such that $\bmsY^{A}([\bmsX_{B},\bmsX_{C}])=0$ on $V$.

  Assume, in addition, that the $k$-standard assumptions hold with $k\geq 2$ and that (\ref{eq:bDlhUsqphiestimate}) holds for this $k$.
  Define $(\bM,\bh,\msK,\bPhi_{a},\bPhi_{b})$ as described prior to the statement of the theorem, let $\mK$ be the expansion normalised
  Weingarten map and let $\chh$ be defined by (\ref{eq:chhdeffinal}). Then $\tr\msK=1$; $\msK$ is symmetric with respect to $\bh$;
  $\mathrm{tr}\msK^{2}+\bPhi_{a}^{2}=1$; $\mathrm{div}_{\bh}\msK=\bPhi_{a}d\bPhi_{b}$; the eigenvalues of $\msK$ are distinct; and
  $\msO_{BC}^{A}=0$ vanishes in a neighbourhood of $\bx$ if $1+p_{A}(\bx)-p_{B}(\bx)-p_{C}(\bx)\leq 0$. Moreover, there are constants $C_{k}$ and $a_{k}$
  and a $t_{0}\in (0,t_{+})$ such that (\ref{eq:lnthetapvarrhoasymptoticsintro}) and
  \begin{subequations}
    \begin{align}
      \textstyle{\sum}_{l=0}^{k}|\bD^{l}(\mK-\msK)|_{\bg_{\refer}}+\textstyle{\sum}_{l=0}^{k}|\bD^{l}(\chh-\bh)|_{\bg_{\refer}}
      \leq & C_{k}\ldr{\ln\theta}^{a_{k}}\theta^{-2\e},\label{eq:mKmsKconvergencehhbhconvergence}\\
      \textstyle{\sum}_{l=0}^{k}|\bD^{l}[\hU(\phi)-\bPhi_{a}]|_{\bg_{\refer}}
      +\textstyle{\sum}_{l=0}^{k}|\bD^{l}(\phi-\bPhi_{a}\bvr-\bPhi_{b})|_{\bg_{\refer}} \leq &
      C_{k}\ldr{\ln\theta}^{a_{k}}\theta^{-2\e},\label{eq:bDlhUphimPhiaPhibestimatereprodver}
    \end{align}
  \end{subequations}  
  hold on $M_{0}:=\bM\times (0,t_{0}]$. In particular, since (\ref{eq:lnthetapvarrhoasymptoticsintro}) holds, $\theta$ converges uniformly
  to $\infty$ as $t\rightarrow 0+$. 
\end{thm}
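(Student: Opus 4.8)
The plan is to assemble the conclusions from the preceding structural results---principally Theorem~\ref{thm:asymptoticformofmetrichod}, together with Theorem~\ref{thm:improvingasymptoticsgeneral} or \ref{thm:improvingasymptoticsgeneralG} applied in the present asymptotic frame---and then verify each bullet of Definition~\ref{def:ndsfidonbbs} for the extracted tuple $(\bM,\bh,\msK,\bPhi_{a},\bPhi_{b})$. First I would record what is already in hand: under the $k$-standard assumptions with $k\geq 2$ and (\ref{eq:bDlhUsqphiestimate}), Theorem~\ref{thm:asymptoticformofmetrichod} gives the representation (\ref{eq:gbABgenasformtheta}) with $\chb_{AB}$ satisfying (\ref{eq:chbmdeltaABestimate}) in the orthonormal frame $\{\bmsX_A\}$, the asymptotics (\ref{eq:bDlhUphimPhiaPhibestimateintro}) for $\phi$ with $C^{k}$ limits $\Phi_{a},\Phi_{b}$, and---once $\Phi_{a}^{2}+\tr\msK^{2}=1$ is assumed---the relation (\ref{eq:lnthetapvarrhoasymptoticsintro}) between $\ln\theta$ and $\varrho$. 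The change of reference metric to $\bh$ and the bookkeeping $\bvr=\varrho-r_{\theta}$, $\bPhi_{a}=\Phi_{a}$, $\bPhi_{b}=\Phi_{b}+r_{\theta}\Phi_{a}$ has been carried out in the discussion preceding the theorem, so (\ref{eq:bDlhUphimPhiaPhibestimatereprodver}) is essentially (\ref{eq:bDlhUphimPhiaPhibestimatereprodverprel}) reindexed, and the $\chh$-part of (\ref{eq:mKmsKconvergencehhbhconvergence}) follows from (\ref{eq:chbmdeltaABestimate}) since $\chh-\bh=\sum_{A,B}(\chb_{AB}-\delta_{AB})\bmsY^{A}\otimes\bmsY^{B}$ in a fixed (up to sign) frame; the $\mK$-part follows from (\ref{eq:mKmsYAmsXBgenesttheta}) after passing from the $(\msY^{A},\msX_{B})$-components back to the globally defined $\mK-\msK$ via the transition coefficients $X_{B}^{A},Y_{A}^{B}$, whose estimates (\ref{eq:XABYABoptimalestimateshigherorderintrolnth}) are also available.

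Next I would establish the algebraic and geometric conditions on $(\bh,\msK)$. The identity $\tr\msK=1$ is the content of the final sentence of Theorem~\ref{thm:improvingasymptoticsgeneralG} (or of Theorem~\ref{thm:improvingasymptoticsgeneral}), which applies here because the hypotheses of that theorem---symmetry of $\msK$, $\tr\msK^{2}+\Phi_{a}^{2}=1$, distinct eigenvalues, the $\msO^{A}_{BC}$-vanishing condition, and the metric form (\ref{eq:asmetricgeneralGaussiancase})---are either assumed in the present theorem or are among its conclusions; in particular $\tr\msK^{2}+\bPhi_{a}^{2}=1$ is exactly the standing assumption of Theorem~\ref{thm:reprod}. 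That $\msK$ is symmetric with respect to $\bh$ is immediate from the definition $\bh=\sum_{A}e^{2\br_{A}}\msY^{A}\otimes\msY^{A}$: the $\{\msX_{A}\}$ are $\bh$-orthogonal eigenvectors of $\msK$ with real eigenvalues, so $\bh(\msK\xi,\zeta)=\bh(\xi,\msK\zeta)$ on the frame, hence everywhere. The constraint $\rodiv_{\bh}\msK=\bPhi_{a}d\bPhi_{b}$ is again in the conclusion of Theorem~\ref{thm:improvingasymptoticsgeneralG}, with the caveat that I must check the divergence computed there is the one with respect to $\bh$ and not some auxiliary metric; this is a matter of tracking which Riemannian metric the Levi-Civita connection $\bD$ in that theorem refers to, and since $\bh$ is the limit metric built precisely from the $r_{A}$'s, consistency should hold. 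Distinctness of the eigenvalues of $\msK$ is inherited from Definition~\ref{def:standardassumptions} via Remark~\ref{remark:msXAmsYAdef}. Regularity: $\msK$ being $C^{k}$ is Lemma~\ref{lemma:kmKconvtomsK}, and $\bh$ is $C^{k}$ because the $\br_{A}=r_{A}+p_{A}r_{\theta}$ are, the $r_{A}$ being $C^{k}$ by Theorem~\ref{thm:asymptoticformofmetrichod} and $r_{\theta}\in C^{k}(\bM)$ by (\ref{eq:lnthetapvarrhoasymptoticsintro}).

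The one genuinely substantive point---and the part I expect to require real work---is the structure-constant condition: for $\bx\in\bM$ with $1+p_{A}(\bx)-p_{B}(\bx)-p_{C}(\bx)\leq 0$, one must show $\bmsY^{A}([\bmsX_{B},\bmsX_{C}])=0$ on a whole neighbourhood $V$ of $\bx$, and consequently $\msO^{A}_{BC}=0$ there (the passage from $\bmsX_{A}$ to the unit-normalised $\msX_{A}$ rescales $\g^{A}_{BC}$ by a nonvanishing factor, so the vanishing loci agree). This is where the refined, anisotropic estimates (\ref{eq:mKmsYAmsXBgenesttheta}) and (\ref{eq:hmlUfixedframeopttheta}) enter with their $\min\{1,\theta^{-2(p_{B}-p_{A})}\}$ weights: the idea is that $\hml_{U}\mK$ having an off-diagonal component decaying faster than a bare power of $\theta$ forces, via Einstein's equations written in the $\{\bmsX_{A}\}$-frame, a constraint on the connection coefficients, and the borderline case $1+p_{A}-p_{B}-p_{C}=0$ is exactly where a would-be source term of size $\theta^{0}$ in the evolution of a frame component must in fact vanish identically for the solution to be as regular as assumed---otherwise one obtains a term incompatible with $b_{AB}\to e^{2r_{A}}\delta_{AB}$ at the claimed rate. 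Concretely, I would expand $[\bmsX_{B},\bmsX_{C}]=\g^{A}_{BC}\bmsX_{A}$, feed this into the spatial part of $\mathrm{Ric}=T$ (Einstein-scalar field), isolate the coefficient that scales like $\theta^{-2(1+p_{A}-p_{B}-p_{C})}$ times $\g^{A}_{BC}$, and compare against the decay rate $\theta^{-2\e}$ (up to logs) that (\ref{eq:chbmdeltaABestimate}) permits; in the regime $1+p_{A}-p_{B}-p_{C}\leq 0$ the exponent is nonnegative, so the only way the estimate can hold is $\g^{A}_{BC}\equiv 0$ on a neighbourhood. The delicate part is organising this so it holds in a neighbourhood rather than merely at $\bx$ (using continuity of the $p_{A}$ and an open-condition argument on the sign of $1+p_{A}-p_{B}-p_{C}$, exactly as in Remark~\ref{remark:commcond} and Definition~\ref{def:ndsfidonbbs}), and handling the higher-order ($l\geq1$) derivatives, for which the $k$-standard assumption with $k\geq 2$ and the derivative estimates (\ref{eq:XABYABoptimalestimateshigherorderintrolnth})--(\ref{eq:hmlUfixedframeopttheta}) are exactly what is needed. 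Once this is in place, all four conditions of Definition~\ref{def:ndsfidonbbs} have been verified, the displayed estimates have been collected, and (\ref{eq:lnthetapvarrhoasymptoticsintro}) then gives $\theta\to\infty$ uniformly as $t\to0+$ because $\varrho\to-\infty$ uniformly by Definition~\ref{def:volsing}.
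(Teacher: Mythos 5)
Your handling of the routine items (symmetry of $\msK$ with respect to $\bh$, $\tr\msK=1$, distinctness of eigenvalues, regularity of $\bh$ and $\msK$, and the estimates (\ref{eq:mKmsKconvergencehhbhconvergence})--(\ref{eq:bDlhUphimPhiaPhibestimatereprodver}) via Theorem~\ref{thm:asymptoticformofmetrichod} and the bookkeeping $\bvr=\varrho-r_{\theta}$, $\bPhi_{b}=\Phi_{b}+r_{\theta}\Phi_{a}$) matches the paper. But at the one step you yourself identify as substantive --- the vanishing of $\g^{A}_{BC}$ where $1+p_{A}-p_{B}-p_{C}\leq 0$ --- your proposed mechanism is not the one that works, and as sketched it has a gap. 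You propose to feed $[\bmsX_{B},\bmsX_{C}]=\g^{A}_{BC}\bmsX_{A}$ into the spatial part of $\mathrm{Ric}=T$, isolate ``the'' coefficient scaling like $\theta^{-2(1+p_{A}-p_{B}-p_{C})}$ times the structure constant, and compare with the rate allowed by (\ref{eq:chbmdeltaABestimate}). The trouble is that in the spatial Ricci components the weighted structure constants enter in several terms with both signs (cf.\ \cite[Corollary~96]{RinGeo}), and when more than one triple $(A,B,C)$ is borderline nothing in your argument excludes cancellations among the non-decaying contributions; a single large term does not by itself contradict the convergence of the $b_{AB}$. You also never say where the standing assumption $\bPhi_{a}^{2}+\tr\msK^{2}=1$ enters this step, yet without it the conclusion is false (the limit of $\theta^{-2}\bS$ need not be zero, and nonvanishing borderline structure constants are then possible).

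The paper's argument closes precisely by exploiting a sign structure that your route ignores: the Hamiltonian constraint (\ref{eq:Hamiltonianconstraint}) together with \cite[(153), p.~48]{RinGeo} expresses $\theta^{-2}\bS$ as $(\hU\phi)^{2}+\tr\mK^{2}-1+\sum_{A}e^{-2\mu_{A}}(X_{A}\phi)^{2}+2\Omega_{\Lambda}$, and the assumption $\bPhi_{a}^{2}+\tr\msK^{2}=1$ (plus the lower bound $\mu_{A}\geq-\e_{\Spe}\varrho-C$ coming from $p_{A}\leq 1-\e_{\Spe}$ and (\ref{eq:muAasymptoticsprel})) forces $|\theta^{-2}\bS|\leq Ce^{2\eta\varrho}$. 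One then uses that, up to exponentially decaying terms, $\theta^{-2}\bS$ equals \emph{minus a sum of squares} $\tfrac14\sum e^{2\mu_{A}-2\mu_{B}-2\mu_{C}}(\lambda^{A}_{BC})^{2}$ (\cite[(199), p.~58]{RinGeo}), so every single weighted square must decay --- no cancellation is possible --- and on a neighbourhood where $1+p_{A}-p_{B}-p_{C}\leq\eta/2$ the weight is bounded below by $e^{\eta\varrho-C}$, whence $\lambda^{A}_{BC}\to0$ and, since $\lambda^{A}_{BC}\to\g^{A}_{BC}$, the structure constant vanishes there. Separately, note that your plan to obtain $\rodiv_{\bh}\msK=\bPhi_{a}d\bPhi_{b}$ (and $\tr\msK=1$) by invoking Theorem~\ref{thm:improvingasymptoticsgeneral} or \ref{thm:improvingasymptoticsgeneralG} is problematic: those theorems assume either $N=1$ or the lapse bounds (\ref{eq:Crorelbd})--(\ref{eq:lnthetalnNphipolbd}) (including control of $\hU\ln N$ and of $\phi$), which are not among the hypotheses of Theorem~\ref{thm:reprod}; the paper instead reruns the limit computation of Lemma~\ref{lemma:thelimitsoftheconstrainequations} directly, with $X_{A}$, $\bmu_{A}$, $\Phi_{a}$, $\Phi_{b}$ replaced by the rescaled quantities $\bX_{A}=e^{-\br_{A}}X_{A}$, $\chmu_{A}$, $\bPhi_{a}$, $\bPhi_{b}$.
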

\begin{remark}
  The theorem demonstrates the necessity of the conditions appearing in Definition~\ref{def:ndsfidonbbs}. Note also that we deduce the form
  (\ref{eq:rinansatzintrocrushing}) of the metric and the convergence of $\chh$ to $\bh$. 
\end{remark}
\begin{proof}
  The proof is to be found at the beginning of Subsection~\ref{ssection:reproducingcondonsing}. 
\end{proof}

Finally, let us return to the original setting: a Gaussian foliation and Einstein's vacuum equations in $3+1$-dimensions; cf.
Subsection~\ref{ssection:fal}.

\begin{thm}\label{thm:finaloriginalsetting}
  Assume that the $k$-standard assumptions hold with $n=3$ and a $2\leq k\in\nn{}$. Assume, moreover, that $\tr\msK^{2}=1$, that $N=1$ and
  that the scalar field vanishes. Then $\mK$ converges to its limit $\msK$ in $C^{k}$ and the eigenvalues $p_{A}$ of $\msK$ satisfy $p_{1}<p_{2}<p_{3}$,
  $\sum_{i}p_{i}=1$ and $\sum_{i}p_{i}^{2}=1$. Fix $\bx\in\bM$. Assuming $\msK$ to be $C^{\infty}$, there are local coordinates $(V,\bsfx)$ in a neighbourhood
  of $\bx$ such that
  \[
  g =-dt\otimes dt+\textstyle{\sum}_{i,j}d_{ij}t^{2p_{\max\{i,j\}}}d\bsfx^{i}\otimes d\bsfx^{j}
  \]
  on $V\times (0,t_{+})$. Fix a subset $W$ of $\bM$ such that the closure of $W$ is compact and contained in $V$. There are functions $\bar{d}_{ij}$ on
  $V$, symmetric in $i$ and $j$, constants $C_{k}$ and $0<\eta\in\rn{}$ and a $t_{0}\in (0,t_{+})$ such that
  \begin{equation}\label{eq:dijlimits}
    \textstyle{\sum}_{l\leq k}|\bnabla^{l}(d_{ij}-\bd_{ij})|_{\bge_{\refer}}\leq C_{k}t^{2\eta}
  \end{equation}
  on $W\times (0,t_{0}]$, where $\bnabla$ is the Levi-Civita connection associated with $\bge_{\refer}$, $\bd_{ii}>0$ (no summation), 
  \begin{equation}\label{eq:bdijrelations}
    \bd_{12}=\frac{\bd_{22}}{p_{2}-p_{1}}\msK^{2}_{1},\ \ \
    \bd_{23}=\frac{\bd_{33}}{p_{3}-p_{2}}\msK^{3}_{2},\ \ \
    \bd_{13}=\frac{\bd_{33}}{p_{3}-p_{1}}\left(\msK^{3}_{1}+\frac{1}{p_{3}-p_{2}}\msK^{2}_{1}\msK^{3}_{2}\right).
  \end{equation}
  Here $\msK^{i}_{j}=\msK(d\bsfx^{i},\d_{j})$, $\msK^{i}_{i}=p_{i}$ (no summation) and $\msK^{i}_{j}=0$ if $i<j$. Moreover,
  \begin{subequations}\label{eq:bKijasesthmlUbKijasest}
    \begin{align}
      t\textstyle{\sum}_{l\leq k}|\bnabla^{l}(\bK^{i}_{j}-t^{-1}\msK^{i}_{j})|_{\bge_{\refer}} \leq & C_{k}\ldr{\ln t}^{a_{k}}t^{2\e}\min\{1,t^{2(p_{j}-p_{i})}\},
      \label{eq:bKtinvmsKestfalset}\\
      t^{2}\textstyle{\sum}_{l\leq k}|\bnabla^{l}\d_{t}(\bK^{i}_{j}-t^{-1}\msK^{i}_{j})|_{\bge_{\refer}} \leq & C_{k}\ldr{\ln t}^{a_{k}}t^{2\e}\min\{1,t^{2(p_{j}-p_{i})}\}
    \end{align}
  \end{subequations}
  on $W\times (0,t_{0}]$. Finally, (\ref{eq:falciicond}) holds on $V$ with $c_{ij}$ replaced by $\bd_{ij}$ and $\kappa_{i}^{\phantom{i}j}=-\msK_{i}^{\phantom{i}j}$.
\end{thm}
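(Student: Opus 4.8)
The plan is to deduce every conclusion from results already in place: Theorem~\ref{thm:reprod} (for the conditions on the data and the leading-order convergence), the $N=1$ part of Theorem~\ref{thm:asymptoticformofmetrichod} together with Theorem~\ref{thm:improvingasymptoticsgeneralG} (for the form of $g$ and the refined estimates), Lemma~\ref{lemma:localcoordinates} (for the Fournodavlos--Luk coordinates), Lemma~\ref{lemma:momcondivmsKz} (for the momentum constraint), and Remark~\ref{remark:generalformulationlocalcoordinates} (for passing from eigenframe components to coordinate components); the one genuinely new ingredient is a one-line remark about the Kasner circle. First I would dispose of the hypotheses. Since $\phi\equiv 0$ we have $\hU\phi\equiv 0$ and $\hU^{2}\phi\equiv 0$, so (\ref{eq:bDlhUsqphiestimate}) holds trivially for every $k$ and the limits $\bPhi_{a},\bPhi_{b}$ produced by Theorem~\ref{thm:asymptoticformofmetrichod} both vanish; in particular $\bPhi_{a}^{2}+\tr\msK^{2}=\tr\msK^{2}=1$. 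Theorem~\ref{thm:reprod} (with the given $k\geq 2$) then applies and yields $\tr\msK=1$, $\msK$ symmetric with respect to $\bh$, $\tr\msK^{2}=1$, distinct eigenvalues, $\rodiv_{\bh}\msK=\bPhi_{a}d\bPhi_{b}=0$, and $\msO^{A}_{BC}=0$ near any $\bx$ with $1+p_{A}(\bx)-p_{B}(\bx)-p_{C}(\bx)\leq 0$, together with (\ref{eq:mKmsKconvergencehhbhconvergence}) and (\ref{eq:lnthetapvarrhoasymptoticsintro}); since $\theta\to\infty$ uniformly, $\mK\to\msK$ in $C^{k}$. Ordering the distinct eigenvalues gives $p_{1}<p_{2}<p_{3}$, and $\tr\msK=\tr\msK^{2}=1$ reads $\sum_{i}p_{i}=\sum_{i}p_{i}^{2}=1$. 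The elementary fact I would then record is that on $\{\sum_{i}p_{i}=\sum_{i}p_{i}^{2}=1,\ p_{1}<p_{2}<p_{3}\}$ one necessarily has $p_{1}<0$: the value $p_{1}=0$ forces $p_{2}p_{3}=0$ and hence, using $p_{3}\leq 1$, a coincidence of two exponents. Applied pointwise this gives $p_{1}<0$ on all of $\bM$, so $1+p_{1}-p_{2}-p_{3}=2p_{1}<0$ everywhere, and the last conclusion of Theorem~\ref{thm:reprod} upgrades to $\msO^{1}_{23}\equiv 0$ on $\bM$.

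Next I would build the coordinates and the ansatz for $g$. Since $\msO^{1}_{23}\equiv 0$ on $\bM$ and $\msK$ is $C^{\infty}$, Lemma~\ref{lemma:localcoordinates} provides, near $\bx$, coordinates $(V,\bsfx)$ and strictly positive $C^{\infty}$ functions $\alpha_{A}$ with $\bmsX_{A}=\alpha_{A}\msX_{A}$ (no summation), the $\bmsX_{A}$ of the triangular form (\ref{eq:preferredcoordinates}) and dual coframe $\bmsY^{A}=\alpha_{A}^{-1}\msY^{A}=\bmsY^{A}_{i}\,d\bsfx^{i}$ with $\bmsY^{A}_{i}=0$ for $A<i$ and $\bmsY^{A}_{i}=1$ for $A=i$. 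The $N=1$ part of Theorem~\ref{thm:asymptoticformofmetrichod} (its hypotheses are met: the $k$-standard assumptions hold, (\ref{eq:bDlhUsqphiestimate}) is trivial, $\Phi_{a}^{2}+\tr\msK^{2}=1$, $N=1$) gives $g=-dt\otimes dt+\sum_{A,B}c_{AB}t^{2p_{\max\{A,B\}}}\msY^{A}\otimes\msY^{B}$ with the estimate (\ref{eq:bABasymptoticsGauss}) on $c_{AB}$. Substituting $\msY^{A}=\alpha_{A}\bmsY^{A}_{i}\,d\bsfx^{i}$ and noting that $\bmsY^{A}_{i}\bmsY^{B}_{j}\neq 0$ forces $A\geq i$ and $B\geq j$, hence $p_{\max\{A,B\}}\geq p_{\max\{i,j\}}$, I reorganise $g$ into $-dt\otimes dt+\sum_{i,j}d_{ij}t^{2p_{\max\{i,j\}}}d\bsfx^{i}\otimes d\bsfx^{j}$ on $V\times(0,t_{+})$, with $d_{ij}=\sum_{A\geq i,\,B\geq j}c_{AB}\alpha_{A}\alpha_{B}\,t^{2(p_{\max\{A,B\}}-p_{\max\{i,j\}})}\bmsY^{A}_{i}\bmsY^{B}_{j}$; the triangular form of $\bmsX_{A}$ also gives at once $\msK^{i}_{i}=p_{i}$ and $\msK^{i}_{j}=0$ for $i<j$, where $\msK^{i}_{j}:=\msK(d\bsfx^{i},\d_{j})$, exactly as in the proof of Proposition~\ref{prop:geometrictofal}.

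I would then identify the limits $\bd_{ij}$ and estimate. Setting $\bd_{ij}:=e^{2\bar{r}_{m}}\alpha_{m}^{2}\bmsY^{m}_{i}\bmsY^{m}_{j}$ with $m=\max\{i,j\}$ (here $\bar{r}_{A}$ are the functions from Theorem~\ref{thm:asymptoticformofmetrichod}) gives a tensor symmetric in $i,j$ with $\bd_{ii}=e^{2\bar{r}_{i}}\alpha_{i}^{2}>0$. Fix $W$ with $\overline{W}\subset V$ compact, so that $\alpha_{A}$, $\bmsY^{A}_{i}$ and their derivatives are bounded on $\overline{W}$ and the gaps $p_{A}-p_{B}$ are bounded away from $0$ there. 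In $d_{ij}-\bd_{ij}$ the surviving diagonal term $A=B=m$ contributes $(c_{mm}-e^{2\bar{r}_{m}})\alpha_{m}^{2}\bmsY^{m}_{i}\bmsY^{m}_{j}=O(\ldr{\ln t}^{a_{k}}t^{2\e})$ by (\ref{eq:bABasymptoticsGauss}), every term with $A\neq B$ is $O(\ldr{\ln t}^{a_{k}}t^{2\e})$ again by (\ref{eq:bABasymptoticsGauss}), and every term with $A=B>m$ carries a factor $t^{2(p_{A}-p_{m})}$ that decays at a rate bounded below on $\overline{W}$; choosing $0<\eta$ below $\e$ and below half that rate absorbs the logarithm and yields (\ref{eq:dijlimits}) on $W\times(0,t_{0}]$. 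For $i<j$ one has $m=j$, $\bmsY^{j}_{j}=1$, so $\bd_{ij}=\bd_{jj}\bmsY^{j}_{i}$; expressing $\bmsY^{j}_{i}$ through the $\bmsX^{k}_{l}$ of (\ref{eq:preferredcoordinates}) via $\bmsX^{k}_{l}=\tfrac{1}{p_{l}-p_{k}}\msK^{k}_{l}$ and its third-index analogue — the computation behind (\ref{eq:cotetcdef}) — produces (\ref{eq:bdijrelations}).

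Finally I would handle the Weingarten map and the momentum constraint. With $N=1$ the future unit normal of the $\bM_{t}$ is $\d_{t}$, so $\bK=\theta\mK$ and $\hU=\theta^{-1}\d_{t}$; applying Theorem~\ref{thm:improvingasymptoticsgeneralG} (whose hypotheses are now verified) and the remark following it — equivalently, applying Remark~\ref{remark:generalformulationlocalcoordinates} to (\ref{eq:mKmsYAmsXBgenesttheta}) and (\ref{eq:hmlUfixedframeopttheta}) — converts those frame estimates into coordinate-component estimates for $\mK(d\bsfx^{i},\d_{j})-\msK^{i}_{j}$ and for $(\hml_{U}\mK)(d\bsfx^{i},\d_{j})=\theta^{-1}\d_{t}\mK^{i}_{j}$, each with factor $\min\{1,\theta^{-2(p_{j}-p_{i})}\}$. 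Writing $\bK^{i}_{j}-t^{-1}\msK^{i}_{j}=\theta(\mK^{i}_{j}-\msK^{i}_{j})+(\theta-t^{-1})\msK^{i}_{j}$, using $t\theta-1=O(\ldr{\ln t}^{a_{k}}t^{2\e})$ and $\d_{t}\theta^{-1}-1=O(\ldr{\ln t}^{a_{k}}t^{2\e})$ from (\ref{eq:thetaasymptoticsGaussianfoliation}), and observing that $\msK^{i}_{j}$ vanishes unless $p_{j}\leq p_{i}$ (so the relevant minimum equals $1$ whenever $\msK^{i}_{j}\neq 0$), I obtain (\ref{eq:bKtinvmsKestfalset}); the same manipulation of $\d_{t}\bK^{i}_{j}=(\d_{t}\theta)\mK^{i}_{j}+\theta\,\d_{t}\mK^{i}_{j}$ gives the second estimate in (\ref{eq:bKijasesthmlUbKijasest}). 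For the constraint, note that $\bh=\sum_{A}e^{2\bar{r}_{A}}\msY^{A}\otimes\msY^{A}=\sum_{A}\bd_{AA}\bmsY^{A}\otimes\bmsY^{A}$ is precisely the metric that (\ref{eq:bhdefintro}) associates to the data $(\bd_{ij},p_{i})$, so Lemma~\ref{lemma:momcondivmsKz} translates $\rodiv_{\bh}\msK=0$ (from the first step) into (\ref{eq:falciicond}) with $c_{ij}$ replaced by $\bd_{ij}$ and $\kappa^{j}_{i}=-\msK^{j}_{i}$; since $c_{ij}=d_{ij}\to\bd_{ij}$ by the previous step, this is the stated conclusion. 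The hard part of the whole argument is not any single estimate but the bookkeeping in the third paragraph — reorganising the powers $t^{2p_{\max\{A,B\}}}$ attached to the eigen-coframe into the powers $t^{2p_{\max\{i,j\}}}$ attached to the coordinate coframe, and checking simultaneously that every ``off-diagonal'' or ``higher-index'' contribution decays — which, however, is exactly the computation already carried out for Proposition~\ref{prop:geometrictofal} in Subsection~\ref{ssection:specasdata} and can be quoted; the only substantively new point is the Kasner-circle observation that distinct exponents force $p_{1}<0$ and hence $\msO^{1}_{23}\equiv 0$ on all of $\bM$, which is what makes the Fournodavlos--Luk coordinates available near the arbitrary point $\bx$.
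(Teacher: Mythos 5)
Your proposal is correct and follows essentially the same route as the paper's own proof: Theorems~\ref{thm:reprod} and \ref{thm:asymptoticformofmetrichod} to obtain the data conditions and the Gaussian representation, Lemma~\ref{lemma:localcoordinates} for the coordinates, the same change-of-frame bookkeeping (as in Subsection~\ref{ssection:specasdata}) to identify $d_{ij}$, $\bd_{ij}$ and to convert the eigenframe estimates into the coordinate estimates (\ref{eq:bKijasesthmlUbKijasest}), and Lemma~\ref{lemma:momcondivmsKz} for (\ref{eq:falciicond}). One cosmetic point: in your Kasner-circle observation you only exclude $p_{1}=0$; to conclude $p_{1}<0$ also note that $p_{1}>0$ is impossible, since then $1=\sum_{i}p_{i}^{2}<(\sum_{i}p_{i})^{2}=1$, after which $1+p_{1}-p_{2}-p_{3}=2p_{1}<0$ everywhere and $\msO^{1}_{23}\equiv 0$ on $\bM$ follows as you use it.
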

\begin{remark}
  It is of course not necessary to assume $\msK$ to be smooth. The reason we do so here is to avoid technicalities. The interested
  reader is encouraged to generalise Lemma~\ref{lemma:localcoordinates} below to the case that $\msK$ is $C^{k}$. This means that the coordinate
  system constructed in Lemma~\ref{lemma:localcoordinates} below has finite regularity, and one has to trace the consequences of this for
  the following estimates. 
\end{remark}
\begin{remark}
  The conclusions should be compared with the results of \cite{fal}, in particular, \cite[Theorem~1.1, pp.~1185--1186]{fal} and
  \cite[Theorem~1.7, p.~1187--1188]{fal}, see also Subsection~\ref{ssection:fal}. Note, in addition, that the assumption (\ref{eq:bDllnhNconditions}) in the
  current setting corresponds to the requirement that the relative spatial variation of $\theta$ does not grow faster than polynomially in $\varrho$.
  On the other hand, the conclusion (\ref{eq:bKtinvmsKestfalset}) implies that the relative spatial variation of $\theta$ decays exponentially in
  $\varrho$. 
\end{remark}
\begin{proof}
  The proof is to be found at the end of Subsection~\ref{ssection:reproducingcondonsing}. 
\end{proof}

\subsection{Conclusions and outlook}\label{ssection:concloutl}

Due to the results of this section, there are reasons to conjecture that initial data as in Definition~\ref{def:ndsfidonbbs}  parametrise the
stable manifolds of the corresponding subsets of the Kasner disc. The reason for this is that by assuming the solution to converge to a non-degenerate
subset of the Kasner disc, one reproduces the conditions of this definition; cf. Theorem~\ref{thm:reprod}. Moreover, at least in the $n=3$ dimensional
vacuum setting, the conditions on initial data on the singularity are such that one locally reproduces the conditions appearing in \cite{fal}. If one
could localise the existence and uniqueness results of \cite{fal}, one would also obtain existence and uniqueness of solutions corresponding to the data
on the singularity. Moreover, when condition 4 in Definition~\ref{def:ndsfidonbbs} is void and the data are real analytic, the results of
\cite{aarendall,damouretal} apply; cf. Subsection~\ref{ssection:stablequiescentregime}. 

Clearly, there are several open problems associated with the topics addressed in this article. First, existence and uniqueness of solutions corresponding
to initial data on the singularity should be demonstrated. This should be done using various gauges; cf. the
discussion in Subsection~\ref{ssection:oscibbsing} below. It would also be of interest to generalise the formulations of
this article to the case of a non-zero shift vector field. However, it might then be useful to specialise to gauges for which there
is a well-posed initial value problem. Moreover, in the context of such gauges, it would be of interest to deduce estimates for higher order
derivatives given information concerning a finite number of derivatives; in the context of the asymptotics of solutions to systems of non-linear
wave equations, control of a fixed finite number of derivatives usually yields control over all derivatives. The goal would of course be to
minimize the number of derivatives involved in Definitions~\ref{def:developmentsfLambda} and \ref{def:developmentsfLambdacrushing},
Theorems~\ref{thm:improvingasymptoticsgeneral} and \ref{thm:improvingasymptoticsgeneralG} etc. An additional issue is associated with more
general matter models. The matter models considered here are isotropic. However, it would be of interest to analyse what happens in the
case of anisotropic matter models such as Maxwell's equations and the Vlasov equation; cf., e.g., \cite{wea,lrt,cah10,cah11}. It is also
of interest to analyse how the subset of ordinary initial data corresponding to data on the singularity (i.e., conjecturally, the stable manifold
corresponding to the Kasner circle or the Kasner disc) sits in the set of all regular initial data: is this subset a submanifold? 

Finally, once a clearer picture of the above issues have been obtained, it is of interest to see if the dynamical systems arguments of, e.g.,
\cite{lea,beguin,lrt,du} can be upgraded to the absence of symmetries; cf. Subsection~\ref{ssection:oscibbsing} below for a further discussion. 

\subsection{Acknowledgements}
This research was funded by the Swedish Research Council, dnr. 2017-03863 and 2022-03053. 

\section{Applications}\label{section:applications}

In this section we provide two applications of the notion of initial data on a singularity introduced here. 

\subsection{A unified perspective on results in the quiescent setting}
One advantage of the ideas developed in this article is that they yield a unified perspective on many of the results in the literature. The
justification of this statement requires a fairly detailed discussion of a large number of different results, and is, for this reason, quite lengthy.
We therefore provide the relevant analysis elsewhere. Nevertheless, we here give one example: we calculate the initial data on the singularity
for Bianchi class A solutions to Einstein's vacuum equations (i.e., the maximal globally hyperbolic developments corresponding to left
invariant vacuum initial data on $3$-dimensional unimodular Lie groups). As discussed in greater detail in \cite{RinDoS}, the
relevant formulation of initial data on the singularity is then the following.
\begin{definition}\label{def:ndvacidonbbssh}
  Let $G$ be a $3$-dimensional unimodular Lie group, $\bh$ be a left invariant Riemannian metric on $G$ and $\msK$ be a left invariant
  $(1,1)$-tensor field on $G$. Then $(G,\bh,\msK)$ are \textit{non-degenerate quiescent Bianchi class A vacuum initial data on the
    singularity} if
  \begin{enumerate}
  \item $\tr\msK=1$ and $\msK$ is symmetric with respect to $\bh$,
  \item $\tr\msK^{2}=1$ and $\mathrm{div}_{\bh}\msK=0$,
  \item the eigenvalues of $\msK$ are distinct,
  \item $\msO^{1}_{23}\equiv 0$ on $\bM$, where (if $p_{1}<p_{2}<p_{3}$ are the eigenvalues of $\msK$, $\msX_{A}$ is an eigenvector field corresponding
    to $p_{A}$, normalised so that $|\msX_{A}|_{\bh}=1$, and $\g_{BC}^{A}$ is defined by $[\msX_{B},\msX_{C}]=\g^{A}_{BC}\msX_{A}$) the quantity $\msO^{A}_{BC}$
    is defined by $\msO^{A}_{BC}:=(\g^{A}_{BC})^{2}$.
  \end{enumerate}
\end{definition}
As argued in \cite{RinDoS}, the conditions of Definition~\ref{def:ndvacidonbbssh} yield the existence of an orthonormal (with respect to $\bh$) basis
$\{e_{i}\}$ of the Lie algebra $\mathfrak{g}$ of $G$, satisfying $\msK e_{i}=p_{i}e_{i}$ (no summation), with $p_{1}<p_{2}<p_{3}$, as well as 
$[e_{i},e_{j}]=\e_{ijk}n_{k}e_{k}$ (no summation), where $\e_{123}=0$ and $\e_{ijk}$ is antisymmetric under permutations of the indices. Here $p_{i}$ and
$n_{k}$ are constants. Moreover, the vanishing/non-vanishing and signs of the $n_{k}$ classify the Lie group under consideration: Bianchi type I
corresponds to $n_{k}=0$ for all $k$; Bianchi type II corresponds to all but one of the $n_{k}$ vanishing; Bianchi type VI${}_{0}$ corresponds to
one of the $n_{k}$ vanishing and the remaining ones having opposite sign; Bianchi type VII${}_{0}$ corresponds to one of the $n_{k}$ vanishing and
the remaining ones having the same sign; Bianchi type VIII corresponds to all the $n_{k}$ being different from zero and not having the same sign;
and Bianchi type IX corresponds to all the $n_{k}$ being different from zero and having the same sign. Since $\msO^{1}_{23}=0$ is, in this setting,
equivalent to $n_{1}=0$, it is clear that there are no Bianchi type VIII and no Bianchi type IX initial data in the sense of
Definition~\ref{def:ndvacidonbbssh}. This corresponds to the fact that, disregarding the locally rotationally symmetric solutions (which lead to
Cauchy horizons and extendibility of the maximal globally hyperbolic development, cf., e.g., \cite[Theorem~24.12, p.~258]{RinCauchy}), all
Bianchi type VIII and IX vacuum solutions have oscillatory singularities \cite{cbu}; i.e., they are not quiescent. However, for the remaining
Bianchi class A types, if the solution is not a quotient of a part of Minkowski space, it leads to data on the singularity in the sense of
Definition~\ref{def:ndvacidonbbssh} (with the caveat that in some exceptional cases, two $p_{i}$'s could coincide). Moreover, one obtains
asymptotics as in Definition~\ref{def:developmentvacuum}. We refer the interested reader to \cite{RinDoS} for a justification of these statements.
Finally, in \cite{RinDoS}, we demonstrate that given data as in Definition~\ref{def:ndvacidonbbssh}, there are unique corresponding developments.

To summarise, Definition~\ref{def:ndvacidonbbssh} can be used to parametrise quiescent Bianchi class A vacuum solutions. In the case of Bianchi
types VIII and IX, the absence of initial data in the sense of Definition~\ref{def:ndvacidonbbssh} leads to the expectation that solutions should
be oscillatory, an expectation which is borne out by \cite{cbu}. The above is only one very simple example of how previous results can be given a
unified understanding by the notion of initial data on the singularity introduced in this article. 

\subsection{Oscillatory big bang singularities}\label{ssection:oscibbsing}

A second, potential, application is to the understanding of oscillatory big bang singularities. The word ``oscillations'' here refers to the behaviour
of the eigenvalues, say $\ell_{A}$, of the expansion normalised Weingarten map, say $\mK$, along a foliation of a crushing singularity. A singularity
is crushing if there is a foliation of a neighbourhood of it such that the mean curvature $\theta$ of the leaves of the foliation
diverges to infinity uniformly in the direction of the singularity. In this setting, the expansion normalised Weingarten map is defined by
$\mK:=\bK/\theta$, where $\bK$ denotes the Weingarten map of the leaves of the foliation. In particular, $\tr\mK=1$. Note that $\mK$ is symmetric
with respect to the induced metric $\bge$, so that the $\ell_{A}$ are real. Since the sum of the $\ell_{A}$ equals $1$, it is, in $3+1$-dimensions,
convenient to introduce the notation
\begin{subequations}\label{eq:ellpmdef}
  \begin{align}
    \ell_{+} := & \frac{3}{2}\left(\ell_{2}+\ell_{3}-\frac{2}{3}\right)=\frac{3}{2}\left(\frac{1}{3}-\ell_{1}\right),\label{eq:ellplus}\\
    \ell_{-} := & \frac{\sqrt{3}}{2}(\ell_{2}-\ell_{3}).\label{eq:ellminus}
  \end{align}
\end{subequations}
Then the $\ell_{\pm}$ contain all the information concerning the eigenvalues of $\mK$, and, in $3+1$-dimensions the word ``oscillations'' refers to
oscillations in $\ell_{\pm}$. Moreover, ``oscillations'' should not here be understood as merely the absence of convergence, but rather a specific
type of oscillations illustrated in Figure~\ref{fig:TheKasnerMap} below.

The origin of the idea that big bang singularities should be oscillatory can be traced back to the physics literature, more particularly the
work of Belinski\v{\i}, Khalatnikov and Lifschitz (BKL); cf., e.g., \cite{LK,bkl1,bkl2}. However, the related ideas of Misner, published around the
same time, should also be mentioned, see \cite{misner}. The proposal by BKL, often referred to as the BKL conjecture, suggests that the generic
dynamics in the direction of a big bang singularity should localise in space, and, along a causal curve, be similar to that of an oscillatory
spatially homogeneous solution. In fact, Figure~\ref{fig:TheKasnerMap} should give a good description of the relevant dynamics. The interested
reader is referred to \cite{LK,bkl1,bkl2} for the original version of these ideas, and to \cite{dhn,dah,dadB,HUL,huar} for more recent refinements.
Even though the BKL proposal dates back more than 50 years, there are only results concerning oscillatory behaviour in the spatially homogeneous
setting; cf., e.g., \cite{cbu,wea,BianchiIXattr,HU,lea,beguin,lrt,brehm,du}. This is unsatisfactory, and can be contrasted with a growing number of
results concerning the future/past (or both) global non-linear stability of solutions to Einstein's equations; cf., e.g.,
\cite{heflambda,hefmink,hefYM,cak,zipser,aammilne,larcmp,rininv,rinpl,bieri,baz,larannals,aameflow,svedberg,speck,rintopstab,ial,lavk,rasJEMS,
speckradiation,haz,AAR,oliynyk,dal,havKdS,lat,havmink,aaf,kas,betal,fjs,fow,faw,dafetal}. In fact, there are even stability results in the direction
of the big bang singularity; cf. \cite{rasql,rasq,rsh,specks3,GIJ}. The latter results demonstrate, roughly speaking, that under the circumstances
discussed in Subsection~\ref{ssection:stablequiescentregime},
big bang formation is stable. The settings considered by the authors are the Einstein-scalar field equations, the Einstein-stiff fluid equations and
the Einstein vacuum equations in $n+1$-dimensions for $n\geq 10$. Moreover, the results concerning the asymptotics include the conclusion that
the singularities are quiescent; i.e., that the eigenvalues of $\mK$ converge. This may seem to contradict the BKL proposal. However, special
matter models and higher dimensions have long been expected to lead to quiescent behaviour. In the physics literature, this expectation appeared
in \cite{bkl15,Henneauxetal}. Two corresponding mathematical results are contained in \cite{aarendall,damouretal}; cf.
Subsection~\ref{ssection:stablequiescentregime}. Nevertheless, the stability results do not apply in
$3+1$-dimensions unless particular types of matter are present. For this reason, it is of interest to understand oscillatory singularities.

\begin{figure}
  \begin{center}
    \includegraphics{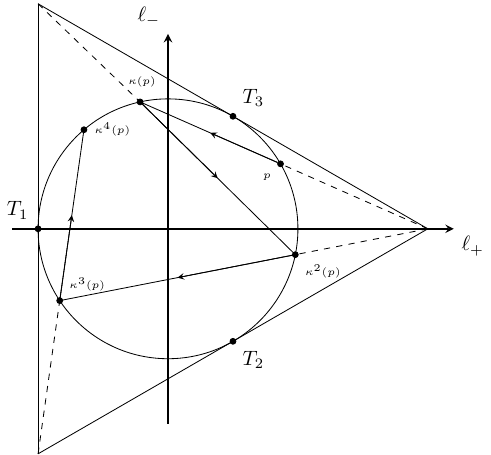}    
  \end{center}
  \caption{The Kasner circle and the BKL map, using the notation introduced in (\ref{eq:ellpmdef}). The circle
    represents the Kasner solutions; i.e., the geodesically incomplete maximal globally hyperbolic developments
    corresponding to left invariant vacuum initial data on $\rn{3}$. Note, in particular, that the solutions are parametrised
    by the eigenvalues of the expansion normalised Weingarten map. Since these solutions are invariant under permutations of the eigenvalues, there
    is a sixfold symmetry in this picture, meaning that only one sixth of the circle represents geometrically distinct solutions. In particular, it
    is, e.g., sufficient to focus on the segment between $Q_{1}$ (the point antipodal to $T_{1}$) and $T_{3}$. The points $T_{i}$ represent the flat
    Kasner solutions (the corresponding Riemann curvature tensor is identically zero) and the $T_{i}$ together with their antipodal points,
    denoted $Q_{i}$, represent the locally rotationally symmetric Kasner solutions. Given a point $p$ on the circle, the BKL map applied to $p$,
    denoted $\kappa(p)$, is obtained as follows: take the corner of the triangle closest to $p$; draw a straight line from this corner to $p$;
    continue the straight line to the next intersection with the circle; and define $\kappa(p)$ to be this second intersection. In the figure,
    we illustrate four iterations of $\kappa$ on a specific point. The chaoticity of the BKL map follows from, e.g.,
    \cite[Section~8, p.~22]{beguin}.}\label{fig:TheKasnerMap}
\end{figure}

Due to the above observations, it is clear that it is important to prove results in the spatially inhomogeneous and oscillatory setting. Given the
existing results, it is
therefore natural to ask if ideas similar to those used to prove stability, in particular the arguments of \cite{GIJ}, can be used. The answer
to this question is most probably no, and the main reason is that the background solutions considered in \cite{GIJ} are orbitally stable, whereas
the model dynamics in the oscillatory setting are chaotic; cf. Figure~\ref{fig:TheKasnerMap}. In particular, the model behaviour in the oscillatory
setting is by its very nature unstable. The arguments required to make progress on analysing this setting can thus be expected to be quite different.
Next, since there are, so far, only results in the spatially homogeneous setting, it is natural to ask if the corresponding arguments can be used.
Unfortunately, the ideas used in the first results in this setting are based on monotonicity principles and contradiction arguments; cf., e.g.,
\cite{cbu,wea,BianchiIXattr,HU,brehm}. In particular, global existence is guaranteed by soft methods. Moreover, contradiction arguments (combined
with, e.g., the monotonicity principle) are used to derive information concerning the asymptotics. In the spatially inhomogeneous setting, it is
unlikely to be possible to derive global existence without at the same time quantitatively controlling the asymptotic behaviour of solutions.
Moreover, to the extent that there are monotonic quantities, these quantities can not be expected to yield much information concerning the
asymptotics. To conclude, the ideas of \cite{cbu,wea,BianchiIXattr,HU,brehm} are unlikely to be useful when trying to obtain results in the
spatially inhomogeneous setting. On the other hand, more recently, results based on estimates have appeared; cf., e.g., \cite{lea,beguin,lrt,du}.
The idea of these articles is to take an orbit of the model system, cf. Figure~\ref{fig:TheKasnerMap}, and then to
prove that there is a stable manifold of solutions to the actual equations converging to this orbit. In \cite{lea,beguin,lrt}, the union of the
corresponding stable manifolds is conjectured to be non-generic. However, in \cite{du}, the author treats general enough orbits of the model
system that the union of the corresponding stable manifolds has positive Lebesgue measure in the relevant set of initial data.

In order to understand how the ideas of \cite{lea,beguin,lrt,du} could, potentially, be used in the spatially inhomogeneous setting, it is convenient
to focus on a specific situation. In what follows, we therefore discuss vacuum Bianchi type IX solutions, the goal being to give a rough idea of the
arguments in \cite{lea,du}. Note, to begin with, that generic vacuum Bianchi type IX solutions converge to the so-called Bianchi IX attractor, say
$\mA$; cf. \cite{BianchiIXattr}. The dynamics on the attractor are essentially described by Figure~\ref{fig:TheKasnerMap}. However, this figure
suppresses three dimensions. The attractor is, in fact, the union of three half ellipsoids in five dimensions that intersect in the Kasner circle.
In particular, an orbit that starts on the segment between $T_{i}$ and $T_{j}$ ($i\neq j$) travels along one of the ellipsoids (determined by $i$ and
$j$), and if $\{i,j\}\neq \{k,l\}$, where $k\neq l$, the ellipsoid corresponding to $\{i,j\}$ differs from the ellipsoid
corresponding to $\{k,l\}$. What is depicted in Figure~\ref{fig:TheKasnerMap} is the projection of the orbits in the ellipsoids to the
$\ell_{+}\ell_{-}$-plane. Fixing a point on the Kasner circle, say $p$, there are two orbits (along two separate ellipsoids) converging to
$p$. Moreover, there is one orbit starting at $p$ (along a third ellipsoid). In order to understand the dynamics, it is useful to pick one of the
orbits going into $p$, and then to take an appropriate hypersurface, say $S_{\roin,0}$, in the full state space (not only in the attractor), which is
transverse to this orbit (and, more generally, transverse to the vector field on the state space determined by Einstein's equations). Moreover,
one picks a hypersurface transverse to the orbit starting at $p$, say $S_{\roout,0}$. The model solution (on the attractor) travels from
$S_{\roin,0}\cap \mA$ to $S_{\roout,0}\cap \mA$ in infinite time. However, under suitable restrictions on the hypersurfaces, one can define a map from
$S_{\roin,0}$ to $S_{\roout,0}$, say $\Phi_{\roK,0}$. On the other hand, the orbit starting at $p$ turns into an orbit converging to $\kappa(p)$. As before,
one can define a suitable hypersurface $S_{\roin,1}$, transverse to the orbit starting at $p$ and converging to $\kappa(p)$, a corresponding
$S_{\roout,1}$ and so on. Next, one has to define a map from $S_{\roout,0}$ to $S_{\roin,1}$, say $\Psi_{\roK,0}$. The map $\Psi_{\roK,0}\circ\Phi_{\roK,0}$
then gives a map from $S_{\roin,0}$ to $S_{\roin,1}$. This can then be continued to $\Xi_{\roK,i}:=\Psi_{\roK,i}\circ\Phi_{\roK,i}$, $i=0,1,2,\dots$, etc. 
In the case of a periodic orbit of the model system, one can study a finite composition of maps of this type. In the non-periodic setting, one
has to study an infinite number of maps. However, when proving results, it is useful to group them into compositions such as the so-called
\textit{era return map} and the \textit{double era return map}; cf. \cite[Definitions~5.8 and 5.9, p.~61]{du}. The main point is that the double
era return map has nice hyperbolicity properties; it expands in the direction parallel to the Kasner circle and contracts in the directions
``perpendicular'' to the attractor (note that there is no canonical metric on the state space). This, in the end, allows one to prove the existence
of a stable manifold of solutions corresponding the the orbit one started with. The proof proceeds via a contraction mapping argument on, in the
non-periodic setting, an infinite collection of graphs (one for each point in the iteration of the double era return map). The graph at one of
these points roughly speaking maps the non-Kasner parameter variables in the relevant hypersurface, say $S_{\roin,i}$, to the Kasner parameter. 
However, to summarise, the main reason one can show that the mapping from the infinite collections of graphs to itself is a contraction is that 
the double era return map has nice hyperbolicity properties.

In the above description, we have omitted many of the complications. First, care has
to be taken so that the flow of the vector field maps $S_{\roin,0}$ into $S_{\roout,0}$; $S_{\roout,0}$ into $S_{\roin,1}$ etc. Then one has to derive detailed
estimates corresponding to, e.g., $\Xi_{\roK,0}$, extending to the attractor (on which the transition time is infinite) etc. It turns out that, for
an arbitrary orbit of the model solution, this will not be possible. Therefore one has to isolate the ``good'' orbits. It turns out that there
is a set of full measure on the Kasner circle corresponding to ``good'' orbits. Unfortunately, the complement is dense. We refer the interested
reader to \cite{lea,du} for more details. However, it is of interest to note that ``good'' orbits do not include degenerate points on the Kasner
circle (i.e., they do not include points for which two of the $p_{i}$ coincide). The reason for this is that the only degenerate points on the Kasner
circle are the $T_{i}$'s and their antipodal points; cf. Figure~\ref{fig:TheKasnerMap}. On the other hand, orbits of the model system which include
$T_{i}$ (or its antipodal point) terminate at $T_{i}$. In the case of Bianchi type VIII and IX, it can be demonstrated that there are no solutions
to the actual equations that terminate at the point $T_{i}$. However, there are solutions whose $(\ell_{+},\ell_{-})$-coordinates terminate on a $T_{i}$.
On the other hand, those solutions are locally rotationally symmetric (cf. \cite[Proposition~22.10, p.~237]{RinCauchy}) and have a Cauchy horizon
through which the solution can be extended (cf. \cite[Theorem~24.12, p.~258]{RinCauchy}). In particular, these solutions are non-generic and do not
reflect the behaviour of generic solutions. For these reasons, the stable manifold associated with degenerate points on the Kasner circle is not
expected to be of any greater interest. 

The proofs in \cite{lea,du} are highly non-trivial. Nevertheless, the arguments are substantially simplified not only by the fact that the authors
consider spatially homogeneous solutions, but by the fact that they consider Bianchi class A solutions (maximal globally hyperbolic developments
corresponding to left invariant initial data on $3$-dimensional unimodular Lie groups). The reason for the simplification is that in the unimodular
setting, the stable and unstable manifolds associated with the Kasner circle correspond to the vanishing/sign of basic variables that, at the
same time, classify the Lie group under consideration; cf. \cite[Section~3.4]{RinGeo} for a more detailed discussion. Already the Bianchi class
B (i.e., the non-unimodular) setting is more complicated, but it is reasonable to expect the
spatially inhomogeneous setting to represent a different level of difficulty. This is partly due to the problem of gauge invariance;
there is a large number of choices of gauge (including choices of foliation). Moreover, instead of the simple dichotomy of being on the stable
manifold or off (as in the Bianchi class A setting), spatial variations can be expected to give rise to complications such as spikes; cf. \cite{raw}
for a discussion of this topic in the context of $\tn{3}$-Gowdy vacuum spacetimes. Causal localisation might also be necessary in order to deal with
the substantial spatial variations that are to be expected. However, such a localisation would limit the allowed gauge choices. Since the causal structure
is not known a priori, causal localisation also makes it necessary to keep track of the causal structure ``along an orbit''. And so on.
Finally, it is not clear that it would be possible to carry out ideas analogous to \cite{lea,du} in the general spatially inhomogeneous setting.
Nevertheless, at present, this approach seems to be the most promising. 

If one wants to generalise the ideas of \cite{lea,du} to the spatially inhomogeneous setting, the natural first step is to identify the stable
manifold corresponding to the Kasner circle in the absence of symmetries. However, as pointed out in Subsection~\ref{ssection:concloutl}, it is
reasonable to conjecture that Definition~\ref{def:vacuumidos} parametrises this stable manifold. 

\section{Relating different notions of initial data, the $3$-dimensional vacuum setting}

The purpose of the present section is to relate the two notions of initial data on the singularity given by Definitions~\ref{def:falid} and
\ref{def:vacuumidos}. We start by, given initial data in the sense of Definition~\ref{def:vacuumidos}, constructing appropriate local coordinates.

\begin{lemma}\label{lemma:localcoordinates}
  Let $\bM$ be a $3$-dimensional manifold and $\msK$ be a smooth $(1,1)$-tensor field on $\bM$. Assume $\msK$ to have distinct real eigenvalues
  $p_{1}<p_{2}<p_{3}$. Let $\bmsX_{A}$ be an eigenvector field of $\msK$ corresponding to the eigenvalue $p_{A}$ and let $\g^{A}_{BC}$ be defined
  by $[\bmsX_{B},\bmsX_{C}]=\g_{BC}^{A}\bmsX_{A}$. Assume that $\g^{1}_{23}=0$. Then, if $p\in\bM$, there are local coordinates $(V,\bsfx)$ with $p\in V$,
  such that, after a renormalisation of the $\bmsX_{A}$, if necessary,
  \begin{equation}\label{eq:preferredcoordinates}
    \bmsX_{3}=\d_{3},\ \ \
    \bmsX_{2}=\d_{2}+\bmsX_{2}^{3}\d_{3},\ \ \
    \bmsX_{1}=\d_{1}+\bmsX_{1}^{2}\d_{2}+\bmsX_{1}^{3}\d_{3}
  \end{equation}
  for a suitable choice of smooth functions $\bmsX_{1}^{2}$, $\bmsX_{2}^{3}$, $\bmsX_{1}^{3}$. Moreover, if $\{\bmsY^{A}\}$ is the frame dual to
  $\{\bmsX_{A}\}$, then
  \begin{equation}\label{eq:msYAformulae}
    \bmsY^{1}=d\bsfx^{1},\ \ \
    \bmsY^{2}=d\bsfx^{2}-\bmsX_{1}^{2}d\bsfx^{1},\ \ \
    \bmsY^{3}=d\bsfx^{3}-\bmsX_{2}^{3}d\bsfx^{2}+(\bmsX_{1}^{2}\bmsX_{2}^{3}-\bmsX_{1}^{3})d\bsfx^{1}.
  \end{equation}
  Define $\msK_{i}^{j}$ by the condition that $\msK\d_{i}=\msK_{i}^{j}\d_{j}$. Then $\msK_{i}^{j}=0$ if $j<i$; $\msK_{i}^{i}=p_{i}$ (no summation); and
  \begin{subequations}\label{eq:bmsXijdef}
    \begin{align}
      \bmsX_{1}^{2} = & \frac{1}{p_{1}-p_{2}}\msK_{1}^{2},\label{eq:Xot}\\
      \bmsX_{2}^{3} = & \frac{1}{p_{2}-p_{3}}\msK_{2}^{3},\label{eq:Xtth}\\
      \bmsX_{1}^{3} = & \frac{1}{p_{1}-p_{3}}\msK_{1}^{3}+\frac{1}{(p_{1}-p_{3})(p_{1}-p_{2})}\msK_{1}^{2}\msK_{2}^{3}.\label{eq:Xoth}
    \end{align}
  \end{subequations}  
\end{lemma}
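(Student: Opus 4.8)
The plan is to build the coordinates in three stages, peeling off one eigendirection at a time, using the Frobenius theorem on nested involutive distributions. First I would observe that $\bmsX_3$ alone spans a rank-one distribution, which is trivially involutive, so by the flow-box theorem there is a coordinate $\bsfx^3$ near $p$ with $\bmsX_3 = \partial_3$ after renormalising $\bmsX_3$ (divide by the appropriate component). Second, consider the rank-two distribution $D_2 := \mathrm{span}\{\bmsX_2,\bmsX_3\}$. The hypothesis $\g^1_{23}=0$ says precisely that $[\bmsX_2,\bmsX_3]$ has no $\bmsX_1$-component, i.e. $[\bmsX_2,\bmsX_3]\in D_2$, so $D_2$ is involutive; Frobenius then yields a coordinate $\bsfx^1$ (constant on the leaves of $D_2$) such that $D_2 = \ker d\bsfx^1$. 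Third, the full tangent bundle is trivially integrable, which (together with $\bsfx^3$ already in hand) lets me complete to a coordinate system $(\bsfx^1,\bsfx^2,\bsfx^3)$ near $p$ with $\bmsX_3=\partial_3$ and with $\bmsX_2,\bmsX_3$ both annihilating $d\bsfx^1$. The latter forces $\bmsX_2$ to have no $\partial_1$-component, and since it can be normalised to have $\partial_2$-coefficient $1$ (the $\bmsX_A$ are linearly independent, so this coefficient is nonzero near $p$), we get $\bmsX_2 = \partial_2 + \bmsX_2^3\,\partial_3$. Finally $\bmsX_1$, after normalising its $\partial_1$-coefficient to $1$, automatically has the form $\partial_1 + \bmsX_1^2\partial_2 + \bmsX_1^3\partial_3$, since those are the only remaining coordinate directions.

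For the dual frame formulae (\ref{eq:msYAformulae}), I would simply invert the $3\times 3$ lower-triangular (in an appropriate ordering) matrix expressing $\{\bmsX_A\}$ in terms of $\{\partial_i\}$; this is a routine linear-algebra computation and the upper-triangular structure of the inverse gives exactly the stated expressions, e.g. $\bmsY^1 = d\bsfx^1$ follows because $\bsfx^1$ is constant on $D_2 = \mathrm{span}\{\bmsX_2,\bmsX_3\}$ and $\bmsX_1$ has $\partial_1$-coefficient $1$.

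For the components of $\msK$: the relations $\msK_i^j=0$ for $j<i$ and $\msK_i^i=p_i$ come from reading off $\msK$ in the eigenbasis and translating back to the coordinate frame via (\ref{eq:preferredcoordinates}) and (\ref{eq:msYAformulae}). Concretely, $\msK\bmsX_A = p_A\bmsX_A$; writing $\bmsX_A$ in terms of $\partial_i$ and matching the triangular structure gives the vanishing below the diagonal and the eigenvalues on it. Then (\ref{eq:bmsXijdef}) is obtained by expanding $\msK\partial_i = \msK(\bmsX_A - (\text{lower terms}))$ and solving the resulting triangular linear system for $\bmsX_1^2, \bmsX_2^3, \bmsX_1^3$ in terms of the $\msK_i^j$; the denominators $p_i - p_j$ appear because one is effectively solving $(\msK - p_j\,\mathrm{Id})$ against an eigenvector direction, and the cross term $\msK_1^2\msK_2^3/[(p_1-p_3)(p_1-p_2)]$ in (\ref{eq:Xoth}) arises from the two-step substitution (first expressing $\partial_3$, then $\partial_2$, in terms of the eigenframe).

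I expect the main obstacle to be purely bookkeeping: ensuring that the successive renormalisations of $\bmsX_1,\bmsX_2,\bmsX_3$ are mutually consistent (each is a nonvanishing scalar multiple, so this is fine locally, but one must order the normalisations correctly — $\bmsX_3$ first, then $\bmsX_2$, then $\bmsX_1$) and that the Frobenius coordinate $\bsfx^1$ for $D_2$ can be chosen compatibly with the flow-box coordinate $\bsfx^3$ for $\bmsX_3$. The cleanest way to handle the latter is to note $\bmsX_3\in D_2$, so $\bsfx^1$ is automatically constant along $\bmsX_3 = \partial_3$, hence $\partial_3\bsfx^1 = 0$, which is consistent; one then picks $\bsfx^2$ to complete the system. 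Beyond that, deriving (\ref{eq:bmsXijdef}) is a direct triangular computation with no genuine difficulty.
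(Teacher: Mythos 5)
Your proposal is correct, and it reaches the desired coordinates by a somewhat different route than the paper. The paper does not invoke the Frobenius theorem as a black box: after straightening $\bmsX_{3}$, it uses $\g^{1}_{23}=0$ to solve two ODEs along $\d_{3}$, producing a vector field $X_{2}=\alpha\bmsX_{2}+\beta\d_{3}$ (with $\alpha>0$) that commutes with $\d_{3}$, and then applies the canonical form theorem for commuting frames to obtain coordinates with $\bmsX_{3}=\d_{3}$ and $X_{2}=\d_{2}$; the triangular form of $\bmsX_{2}$ and $\bmsX_{1}$ then follows from linear independence plus renormalisation, exactly as in your final step. Your version replaces this explicit construction by an application of Frobenius to the involutive distribution $\mathrm{span}\{\bmsX_{2},\bmsX_{3}\}$ (involutivity being precisely $\g^{1}_{23}=0$), which is legitimate and somewhat shorter; the paper's argument is in effect a by-hand proof of the instance of Frobenius you need, so it is more self-contained, while yours delegates that work to a standard theorem. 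The one point you should make explicit when writing it up is the choice of the completing coordinate: for $\bmsX_{3}=\d_{3}$ to hold you need $\bmsX_{3}(\bsfx^{2})=0$ as well as $\bmsX_{3}(\bsfx^{1})=0$, which you can arrange by working in a flow-box chart for $\bmsX_{3}$ and taking both $\bsfx^{1}$ and $\bsfx^{2}$ to be functions of the transverse variables (merely completing $d\bsfx^{1},d\bsfx^{3}$ to a coordinate system would not suffice). Your concluding computations, inverting the triangular frame matrix to get (\ref{eq:msYAformulae}) and solving the triangular system $\msK\bmsX_{A}=p_{A}\bmsX_{A}$ to get (\ref{eq:bmsXijdef}), are exactly the ``straightforward calculations'' the paper alludes to, and they do yield the stated formulas.
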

\begin{remark}
  In \cite[Theorem~1.1, pp.~1185--1186]{fal}, a quantity $\kappa_{i}^{\phantom{i}j}$ appears. Due to the differences in convention,
  $\msK_{i}^{j}=-\kappa_{i}^{\phantom{i}j}$.
\end{remark}
\begin{proof}
  To begin with, there are local coordinates around $p$ such that $\bmsX_{3}=\d_{3}$; cf., e.g., \cite[Theorem~9.22, p.~220]{Lee}. By assumption
  \[
  [\d_{3},\bmsX_{2}]=A\d_{3}+B\bmsX_{2}.
  \]
  In order to find a vector field which commutes with $\d_{3}$, and which is, at the same time, in the span of $\d_{3}$ and $\bmsX_{2}$, let
  $X_{2}=\alpha\bmsX_{2}+\beta\d_{3}$. Then
  \[
  [\d_{3},X_{2}]=(\d_{3}\alpha)\bmsX_{2}+\alpha [\d_{3},\bmsX_{2}]+(\d_{3}\beta)\d_{3}=(\d_{3}\alpha+B\alpha)\bmsX_{2}+(\d_{3}\beta+A\alpha)\d_{3}.
  \]
  Assume that $p$ corresponds to the origin in the local coordinates, and fix $\alpha=1$ on the plane defined by $x^{3}=0$. Then we can calculate
  $\alpha$ by solving the equation $\d_{3}\alpha+B\alpha=0$. Moreover, $\alpha$ will be smooth and strictly positive in a neighbourhood of the origin.
  Next, we solve for $\beta$ by choosing $\b=0$ on the plane $x^{3}=0$ and integrating $\d_{3}\beta+A\alpha=0$. Doing so yields a vector field $X_{2}$
  with the property that the span
  of $X_{2}$ and $\d_{3}$ equals the span of $\bmsX_{2}$ and $\bmsX_{3}$ in a neighbourhood of $p$. Moreover, $\d_{3}$ and $X_{2}$ commute. Due
  to \cite[Theorem~9.46, p.~234]{Lee}, we conclude that there are local coordinates in a neighbourhood of $p$ such that $\bmsX_{3}=\d_{3}$
  and $X_{2}=\d_{2}$. Since $\bmsX_{2}$ is in the span of $\bmsX_{3}$ and $X_{2}$, it is clear that $\bmsX_{2}=a\d_{2}+b\d_{3}$. Since $\bmsX_{2}$
  and $\bmsX_{3}$ are linearly independent, it is clear that $a$ is never allowed to vanish. In particular, after renormalising the
  eigenvector field $\bmsX_{2}$, if neccessary, we can assume that $\bmsX_{2}=\d_{2}+\bmsX_{2}^{3}\d_{3}$. That $\bmsX_{1}$ can be assumed to
  take the form $\bmsX_{1}=\d_{1}+\bmsX_{1}^{2}\d_{2}+\bmsX_{1}^{3}\d_{3}$ follows by a similar argument. 

  Finally (\ref{eq:msYAformulae}) follows from straightforward calculations and (\ref{eq:bmsXijdef}) and the statements concerning $\msK_{i}^{j}$
  follow from the fact that $\bmsX_{A}$ is an eigenvector field of $\msK$ with eigenvalue $p_{A}$.
\end{proof}

\subsection{Relating the limits}
In Proposition~\ref{prop:geometrictofal}, the goal is to, locally, associate data in the sense of Definition~\ref{def:falid} with initial data
in the sense of Definition~\ref{def:vacuumidos}. In the present section we provide the intuition behind the definition (\ref{eq:cotetcdef}). We
do so by assuming that we have a development in the sense of Definition~\ref{def:developmentvacuum}. We then prove that this development can locally
be written in the form (\ref{eq:gfalasformofmetric}). Finally, we calculate the limit (\ref{eq:aijlimcijfal}).

\begin{lemma}
  Let $(\bM,\bh,\msK)$ be non-degenerate quiescent vacuum initial data on the singularity in the sense of Definition~\ref{def:vacuumidos}. Assume
  that there is a locally Gaussian development $(M,g)$ of the initial data in the sense of Definition~\ref{def:developmentvacuum}. Fix a
  $p\in\bM$ and let $(V,\bsfx)$ be local coordinates with the properties stated in Lemma~\ref{lemma:localcoordinates}. Then, on $V\times (0,t_{+})$,
  the development (\ref{eq:rinansatzintroPsi}) can be written
  \[
  \Psi^{*}g=-dt\otimes dt+\textstyle{\sum}_{i,j=1}^{3}a_{ij}t^{2p_{\max\{i,j\}}}d\bsfx^{i}\otimes d\bsfx^{j},
  \]
  using the notation of Definition~\ref{def:developmentvacuum}.
  Moreover, (\ref{eq:aijlimcijfal}) is satisfied, where $c_{ii}=\bh(\bmsX_{i},\bmsX_{i})$ (no summation) and the $\bmsX_{i}$ are given in the
  statement of Lemma~\ref{lemma:localcoordinates}; the $c_{ij}$, $i<j$, are given by (\ref{eq:cotetcdef}), where the sub- and superscripts
  refer to the coordinate frame associated with $(V,\bsfx)$; and $c_{ij}=c_{ji}$. Moreover, if $W$ has compact closure contained in $V$, there
  is an $\eta>0$, a $t_{0}\in(0,t_{+})$ and, for every $l\in\nn{}$, a $C_{l}$ such that
  \begin{equation}\label{eq:aijtocijatrate}
    \|a_{ij}(\cdot,t)-c_{ij}\|_{C^{l}(W)}\leq C_{l}t^{2\eta}
  \end{equation}
  on $(0,t_{0}]$. 

  Define the $p_{i}$ and $c_{ij}$ as above and $\kappa_{i}^{\phantom{i}j}=-\msK_{i}^{\phantom{i}j}$.
  Then the $p_{i}$ and $c_{ij}$ are smooth real valued functions on $V$ and the $p_{i}$ satisfy (\ref{eq:Kasnerrelationsfal}). Moreover,
  $\kappa_{i}^{\phantom{i}i}=-p_{i}$ (no summation); $\kappa_{i}^{\phantom{i}l}=0$ if $l<i$; and (\ref{eq:kappacijrelfal}) is satisfied. 
\end{lemma}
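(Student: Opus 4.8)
The plan is to reduce everything, via Lemma~\ref{lemma:localcoordinates}, to a routine expansion of the representation (\ref{eq:rinansatzintroPsi}) of the development in the coordinate co-frame $\{d\bsfx^{i}\}$. Since the right hand side of (\ref{eq:rinansatzintroPsi}) is unchanged when the local eigenframe of $\msK$ is rescaled by smooth positive functions, on $V\times(0,t_{+})$ I would take $\{\msX_{A}\}=\{\bmsX_{A}\}$ and $\{\msY^{A}\}=\{\bmsY^{A}\}$ to be the frame and co-frame of Lemma~\ref{lemma:localcoordinates}; its hypothesis $\g^{1}_{23}=0$ holds because $\msO^{1}_{23}=(\g^{1}_{23})^{2}\equiv0$ by condition~(4) of Definition~\ref{def:vacuumidos}. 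Thus
\[
\Psi^{*}g=-dt\otimes dt+\textstyle{\sum}_{A,B}b_{AB}t^{2p_{\max\{A,B\}}}\bmsY^{A}\otimes\bmsY^{B}
\]
on $V\times(0,t_{+})$ for smooth $b_{AB}=b_{BA}$. Writing $\bmsY^{A}=\bmsY^{A}_{i}\,d\bsfx^{i}$ with coefficients read off from (\ref{eq:msYAformulae}) — so $\bmsY^{A}_{i}=0$ unless $i\leq A$ and $\bmsY^{A}_{A}=1$ — one gets $a_{ij}=\sum_{A,B}b_{AB}\bmsY^{A}_{i}\bmsY^{B}_{j}t^{2p_{\max\{A,B\}}-2p_{\max\{i,j\}}}$, which is symmetric in $i,j$ since $b_{AB}=b_{BA}$. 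The key structural point is that a summand is nonzero only when $i\leq A$ and $j\leq B$, hence $\max\{i,j\}\leq\max\{A,B\}$, and since $p_{1}<p_{2}<p_{3}$ the exponent is then $\geq0$; so each $a_{ij}$ is smooth on $V\times(0,t_{+})$ and extends continuously to $t=0$.

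To identify the limit I would use that $b_{AB}=\chh(\bmsX_{A},\bmsX_{B})$, that $\chh\to\bh$ in every $C^{l}$ by (\ref{eq:limitbABtobhchh}), and that the $\bh$-symmetric $\msK$ with distinct eigenvalues has $\bh$-orthogonal eigenvector fields; hence $b_{AB}(\cdot,t)\to\bh(\bmsX_{A},\bmsX_{B})=c_{AA}\delta_{AB}$, and in fact $\|b_{AB}(\cdot,t)-c_{AA}\delta_{AB}\|_{C^{l}(W)}\leq C_{l}t^{\vare}$ for any $W$ with $\bar W$ compact and contained in $V$ (the $\bmsX_{A}$ and $\bmsY^{A}_{i}$ are smooth with bounded derivatives on $\bar W$). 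Letting $t\to0+$ in the formula for $a_{ij}$, only summands with $\max\{A,B\}=\max\{i,j\}$ survive; for $i=j$ this is just $(A,B)=(i,i)$, giving $\lim a_{ii}=c_{ii}$, while for $i<j$ the constraints force $B=j$ and $i\leq A\leq j$, and the vanishing of $\lim b_{AB}$ for $A\neq B$ leaves only $A=j$, giving $\lim a_{ij}=c_{jj}\bmsY^{j}_{i}$. Substituting (\ref{eq:bmsXijdef}) into (\ref{eq:msYAformulae}) then shows $c_{22}\bmsY^{2}_{1}$, $c_{33}\bmsY^{3}_{2}$, $c_{33}\bmsY^{3}_{1}$ equal the right hand sides of (\ref{eq:cotetcdef}) for $(i,j)=(1,2),(2,3),(1,3)$; the last case uses $\bmsX_{1}^{2}\bmsX_{2}^{3}-\bmsX_{1}^{3}=(p_{3}-p_{1})^{-1}(\msK_{1}^{3}+(p_{3}-p_{2})^{-1}\msK_{1}^{2}\msK_{2}^{3})$, obtained by a direct computation from (\ref{eq:bmsXijdef}). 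This yields (\ref{eq:aijlimcijfal}) with the stated $c_{ij}$. For the rate (\ref{eq:aijtocijatrate}) I would split $a_{ij}-c_{ij}$ into the exponent-zero terms, where $b_{AB}$ is replaced by $b_{AB}-c_{AA}\delta_{AB}=O(t^{\vare})$ in $C^{l}(W)$, and the exponent-positive terms, where $b_{AB}$ is $C^{l}(W)$-bounded uniformly in $t$ and the power of $t$ is at least some $2\eta>0$; multiplying by the $\bar W$-bounded $\bmsY^{A}_{i}\bmsY^{B}_{j}$ and summing gives the bound for any $\eta$ below $\vare/2$ and below the minimum over $\bar W$ of the spectral gaps $p_{l}-p_{k}$, $k<l$.

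For the final paragraph of conclusions: the $p_{A}$ are smooth and real because $\msK$ is $\bh$-symmetric with everywhere distinct eigenvalues, and the $c_{ij}$ are smooth since $c_{ii}=\bh(\bmsX_{i},\bmsX_{i})$ with $\bmsX_{i}$ smooth (and $c_{ii}>0$ as $\bh$ is Riemannian) while the remaining $c_{ij}$ are the smooth expressions (\ref{eq:cotetcdef}). Relation (\ref{eq:Kasnerrelationsfal}) is just $\tr\msK=\sum_{A}p_{A}=1$ and $\tr\msK^{2}=\sum_{A}p_{A}^{2}=1$, i.e.\ conditions~(1)--(2) of Definition~\ref{def:vacuumidos}. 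With $\kappa_{i}^{\phantom{i}j}:=-\msK_{i}^{\phantom{i}j}$, the assertions $\kappa_{i}^{\phantom{i}i}=-p_{i}$ and $\kappa_{i}^{\phantom{i}l}=0$ for $l<i$ are immediate from the properties of $\msK_{i}^{\phantom{i}j}$ recorded in Lemma~\ref{lemma:localcoordinates}, and (\ref{eq:kappacijrelfal}) follows by inserting (\ref{eq:cotetcdef}) into its right hand sides and simplifying, once more via the $(1,3)$-identity above. I expect the only real obstacle to be bookkeeping: keeping the constraints $i\leq A$, $j\leq B$ straight through the limit, checking that the $C^{l}(W)$ estimates genuinely descend from the global $C^{l}(\bM)$ estimates of Definition~\ref{def:developmentvacuum} after contracting with the locally-defined but $\bar W$-bounded frame, and executing the $(1,3)$ algebra correctly; none of this is conceptually difficult.
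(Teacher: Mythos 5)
Your proposal is correct and follows essentially the same route as the paper: expand the frame representation (\ref{eq:rinansatzintroPsi}) with the coordinate co-frame of Lemma~\ref{lemma:localcoordinates}, use the $\bh$-orthogonality of the eigenframe together with (\ref{eq:limitbABtobhchh}) to get $b_{AB}\to c_{AA}\delta_{AB}$, identify the surviving terms in the limit of $a_{ij}$, and convert via (\ref{eq:bmsXijdef})--(\ref{eq:msYAformulae}) to the expressions (\ref{eq:cotetcdef}), with the rate on compact $W$ and the final $\kappa$-statements handled exactly as in the paper. Your explicit $(1,3)$ algebra is just the paper's computation (\ref{eq:msKothcijformula}) read in the opposite direction, so there is no substantive difference.
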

\begin{remark}\label{remark:geotofalexcmom}
  Arguments similar to the proof of the lemma provide a proof of all the statements of Proposition~\ref{prop:geometrictofal} except for the
  statement that (\ref{eq:falciicond}) holds. They also demonstrate that $\kappa_{i}^{\phantom{i}j}=-\msK_{i}^{\phantom{i}j}$. 
\end{remark}
\begin{proof}
  Let $\bmsX_{A}$ be the vector fields defined by (\ref{eq:preferredcoordinates}). Note that $\msK\bmsX_{A}=p_{A}\bmsX_{A}$ (no summation). Since
  $\msK$ is symmetric with respect to $\bh$, the $\msX_{A}$ are orthogonal with respect to $\bh$. In particular, there are thus smooth functions
  $\b_{A}>0$ such that
  \begin{equation}\label{eq:bhwrtframe}
    \bh=\textstyle{\sum}_{A}\b_{A}^{2}\bmsY^{A}\otimes \bmsY^{A}.
  \end{equation}
  Note that $\b_{A}^{2}=\bh(\bmsX_{A},\bmsX_{A})$ (no summation). Combining this observation with (\ref{eq:limitbABtobhchh}) yields the conclusion that,
  for $A\neq B$, $b_{AB}\rightarrow 0$ as $t\rightarrow 0+$, where the convergence is uniform on compact subsets of $V$. Moreover,
  $b_{AA}\rightarrow \b_{A}^{2}$ as $t\rightarrow 0+$ (no summation), where the convergence is uniform on compact subsets of $V$.

  Next, we wish to rewrite the metric in terms of the coordinate basis $d\bsfx^{i}\otimes d\bsfx^{j}$, where $(V,\bsfx)$ are the coordinates constructed in
  Lemma~\ref{lemma:localcoordinates}. Note, to this end, that
  \[
  \textstyle{\sum}_{A,B}b_{AB}t^{2p_{\max\{A,B\}}}\bmsY^{A}\otimes \bmsY^{B}
  =\textstyle{\sum}_{i,j}a_{ij}t^{2p_{\max\{i,j\}}}d\bsfx^{i}\otimes d\bsfx^{j},
  \]
  where
  \begin{equation}\label{eq:aijintermsofbAB}
    a_{ij}=\textstyle{\sum}_{A,B}b_{AB}t^{2p_{\max\{A,B\}}-2p_{\max\{i,j\}}}\bmsY^{A}_{i}\bmsY^{B}_{j}.
  \end{equation}
  Here, due to (\ref{eq:msYAformulae}), $\bmsY_{i}^{A}=0$ if $A<i$, $\bmsY_{i}^{A}=1$ if $A=i$ and
  \[
  \bmsY^{2}_{1}=-\bmsX_{1}^{2},\ \ \
  \bmsY^{3}_{2}=-\bmsX_{2}^{3},\ \ \
  \bmsY^{3}_{1}=\bmsX_{1}^{2}\bmsX_{2}^{3}-\bmsX_{1}^{3}.
  \]
  Consider $a_{ii}$ (no summation). If $A>i$ or $B>i$, then the corresponding term in (\ref{eq:aijintermsofbAB}) converges to zero, since
  the $b_{AB}$ converge and $2p_{\max\{A,B\}}-2p_{\max\{i,j\}}>0$ in that case. The only term we need to take into consideration is thus the one
  with $i=j$ and $A=B=i$. This term reads $b_{ii}$ (no summation) and, by the above, it converges to $\b_{i}^{2}$. To conclude, $a_{ii}$ converges
  to $\b_{i}^{2}$, and in order for the notation to be consistent with that of \cite{fal}, we introduce $c_{ii}=\b_{i}^{2}$ (no summation).
  In particular $c_{ii}$ is thus determined by $\bh$ and our choice of frame; cf. (\ref{eq:bhwrtframe}). 

  Next, consider $a_{12}$. If $A$ or $B$ equals $3$, then the corresponding term in (\ref{eq:aijintermsofbAB}) converges to zero. If $A=B=1$,
  the corresponding term equals zero, since $\bmsY_{2}^{1}=0$. The only possibilities for $(A,B)$ that remain to be considered are $(1,2)$, $(2,1)$
  and $(2,2)$. However, since $b_{12}$ converges to zero, it is sufficient to consider
  \[
  b_{22}\bmsY^{2}_{2}\bmsY^{2}_{1}\rightarrow -\bmsX_{1}^{2}c_{22},
  \]
  where we used the fact that $b_{22}$ converges to $c_{22}$, the fact that $\bmsY^{2}_{2}=1$ and the fact that $\bmsY^{2}_{1}=-\bmsX_{1}^{2}$. Again, to be
  consistent with \cite{fal}, we use the notation $c_{12}:=-\bmsX_{1}^{2}c_{22}$ (and note that $a_{12}$ converges to $c_{12}$). Combining this relation with
  (\ref{eq:Xot}) yields
  \begin{equation}\label{eq:msKotcijformula}
    \msK_{1}^{2}=(p_{2}-p_{1})\frac{c_{12}}{c_{22}}.
  \end{equation}
  Letting $\kappa_{i}^{\phantom{i}j}=-\msK_{i}^{\phantom{i}j}$, (\ref{eq:msKotcijformula}) contains the same information as the first equality in
  (\ref{eq:kappacijrelfal}). Next, consider $a_{23}$. Then $B$ has to equal $3$. Moreover, if $A<3$, the corresponding term converges to zero since
  $b_{AB}\rightarrow 0$ for $A\neq B$. The only term we need to consider is thus
  \[
  b_{33}\bmsY_{3}^{3}\bmsY_{2}^{3}=-\bmsX_{2}^{3}b_{33}\rightarrow -\bmsX_{2}^{3}c_{33}.
  \]
  In particular, $a_{23}$ converges to $c_{23}:=-\bmsX_{2}^{3}c_{33}$. Keeping (\ref{eq:Xtth}) in mind, we obtain
  \begin{equation}\label{eq:msKtthcijformula}
    \msK_{2}^{3}=(p_{3}-p_{2})\frac{c_{23}}{c_{33}};
  \end{equation}
  cf. the second equality in (\ref{eq:kappacijrelfal}). Finally, consider $a_{13}$. For the same reasons as above, the only term we have to take into
  consideration is
  \[
  b_{33}\bmsY_{1}^{3}\bmsY_{3}^{3}=b_{33}(\bmsX_{1}^{2}\bmsX_{2}^{3}-\bmsX_{1}^{3})\rightarrow c_{33}(\bmsX_{1}^{2}\bmsX_{2}^{3}-\bmsX_{1}^{3})=:c_{13}.
  \]
  Appealing to (\ref{eq:bmsXijdef}) as well as (\ref{eq:msKotcijformula}) and (\ref{eq:msKtthcijformula}) yields the conclusion that
  \begin{equation}\label{eq:msKothcijformula}
    \msK_{1}^{3}=(p_{3}-p_{1})\frac{c_{13}}{c_{33}}-(p_{2}-p_{1})\frac{c_{12}c_{23}}{c_{22}c_{33}};
  \end{equation}
  cf. the third equality in (\ref{eq:kappacijrelfal}). Due to $c_{ii}=\b_{i}^{2}$, (\ref{eq:msKotcijformula}), (\ref{eq:msKtthcijformula}), and
  (\ref{eq:msKothcijformula}), it is clear that if the $c_{ij}$ are defined by (\ref{eq:cotetcdef}) for $i<j$, then (\ref{eq:aijlimcijfal}) holds. In
  fact, if $W$ has compact closure contained in $V$, (\ref{eq:aijtocijatrate}) holds.

  The final statements of the lemma follow from the fact that $\tr\msK=\tr\msK^{2}=1$, the conclusions of Lemma~\ref{lemma:localcoordinates}
  and the above observations. 
\end{proof}

\subsection{The asymptotic momentum constraint}\label{ssection:specasdata}
Due to Remark~\ref{remark:geotofalexcmom}, what remains (in order to prove Proposition~\ref{prop:geometrictofal}) is to prove that
(\ref{eq:falciicond}) holds. 

\begin{lemma}\label{lemma:momcondivmsKz}  
  Let $(\bM,\bh)$ be a smooth $3$-dimensional Riemannian manifold and $\msK$ be a smooth $(1,1)$-tensor field on $\bM$ which is symmetric with respect to
  $\bh$. Assume $\msK$ to have distinct eigenvalues $p_{1}<p_{2}<p_{3}$. Assume, moreover, that there are local coordinates $(V,\bsfx)$ such that for
  $A=1,2,3$, there are eigenvector fields $\bmsX_{A}$, $A=1,2,3$, of $\msK$ of the form (\ref{eq:preferredcoordinates}) for a suitable choice of smooth
  functions $\bmsX_{1}^{2}$, $\bmsX_{2}^{3}$, $\bmsX_{1}^{3}$. Assume also that $\msK\bmsX_{A}=p_{A}\bmsX_{A}$ (no summation) for $A=1,2,3$. Define $c_{ij}$
  as in the statement of Proposition~\ref{prop:geometrictofal}. Then (\ref{eq:falciicond}) (where $\kappa_{i}^{\phantom{i}j}$ is defined in terms of
  $c_{ij}$ and $p_{i}$ as described in Definition~\ref{def:falid}) is equivalent to $\mathrm{div}_{\bh}\msK=0$. 
\end{lemma}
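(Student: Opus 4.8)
The plan is to fix an index $i$, expand both the Fournodavlos--Luk relation (\ref{eq:falciicond}) and the equation $\rodiv_{\bh}\msK=0$ in the coordinate frame $\{\d_{i}\}$ of $(V,\bsfx)$, and show the two scalar identities coincide. Three preliminary observations set everything up. First, since $\msK$ is $\bh$-symmetric with distinct eigenvalues, the $\bmsX_{A}$ are mutually $\bh$-orthogonal, so $\bh=\sum_{A}c_{AA}\bmsY^{A}\otimes\bmsY^{A}$ with $c_{AA}=\bh(\bmsX_{A},\bmsX_{A})>0$; combined with (\ref{eq:msYAformulae}), which exhibits $(\bmsY^{A}_{i})$ as a lower triangular matrix with unit diagonal, this gives $\det\bh=c_{11}c_{22}c_{33}$, hence $\d_{l}(c_{11}c_{22}c_{33})/(c_{11}c_{22}c_{33})=2\Gamma^{j}_{jl}$, twice the contracted Christoffel symbol of $\bh$. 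Second, directly from (\ref{eq:cotetcdef}) and Definition~\ref{def:falid} (equivalently, by Remark~\ref{remark:geotofalexcmom}) one has $\kappa_{i}^{\phantom{i}l}=-\msK_{i}^{\phantom{i}l}$, $\kappa_{i}^{\phantom{i}i}=-p_{i}$ and $\kappa_{i}^{\phantom{i}l}=0$ for $l<i$. Third --- the only genuinely geometric input --- $\bh(\bmsX_{A},\d_{i}\bmsX_{A})=0$ (no sum): indeed $\d_{i}\bmsX_{3}=0$, while $\d_{i}\bmsX_{2}\in\mathrm{span}\{\d_{3}\}=\mathrm{span}\{\bmsX_{3}\}\perp\bmsX_{2}$ and $\d_{i}\bmsX_{1}\in\mathrm{span}\{\d_{2},\d_{3}\}=\mathrm{span}\{\bmsX_{2},\bmsX_{3}\}\perp\bmsX_{1}$, the span identities being precisely (\ref{eq:preferredcoordinates}).

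Next I would compute $(\rodiv_{\bh}\msK)_{i}=\d_{l}\msK_{i}^{\phantom{i}l}+\Gamma^{l}_{lm}\msK_{i}^{\phantom{i}m}-\Gamma^{m}_{li}\msK_{m}^{\phantom{m}l}$. The middle term is $\Gamma^{j}_{jl}\msK_{i}^{\phantom{i}l}$ after relabelling. For the last term, using that $\msK^{ln}:=\bh^{nm}\msK^{l}_{\phantom{l}m}$ is symmetric, the standard cancellation in the Christoffel symbols gives $\Gamma^{m}_{li}\msK_{m}^{\phantom{m}l}=\tfrac12\msK^{ln}\d_{i}\bh_{ln}$; writing $\msK^{ln}=\sum_{A}p_{A}c_{AA}^{-1}\bmsX_{A}^{l}\bmsX_{A}^{n}$ and differentiating $\bh(\bmsX_{A},\bmsX_{A})=c_{AA}$, the orthogonality identity $\bh(\bmsX_{A},\d_{i}\bmsX_{A})=0$ kills the cross terms and leaves $\bmsX_{A}^{l}\bmsX_{A}^{n}\d_{i}\bh_{ln}=\d_{i}c_{AA}$, so $\Gamma^{m}_{li}\msK_{m}^{\phantom{m}l}=\tfrac12\sum_{A}p_{A}\,\d_{i}c_{AA}/c_{AA}$. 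This yields the closed form
\[
(\rodiv_{\bh}\msK)_{i}=\textstyle{\sum}_{l}\d_{l}\msK_{i}^{\phantom{i}l}+\textstyle{\sum}_{l}\Gamma^{j}_{jl}\msK_{i}^{\phantom{i}l}-\tfrac12\textstyle{\sum}_{A}p_{A}\tfrac{\d_{i}c_{AA}}{c_{AA}}.
\]

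Finally I would rewrite (\ref{eq:falciicond}): substituting $\kappa=-\msK$ together with the $\det\bh$ identity turns it into $\sum_{l}\big[\tfrac12(p_{l}-p_{i})\d_{i}c_{ll}/c_{ll}-\d_{l}\msK_{i}^{\phantom{i}l}-\mathbbm{1}_{\{l>i\}}\Gamma^{j}_{jl}\msK_{i}^{\phantom{i}l}\big]=0$. Since $\msK_{i}^{\phantom{i}l}=0$ for $l<i$ and $\msK_{i}^{\phantom{i}i}=p_{i}$, one may drop the indicator at the cost of $-p_{i}\Gamma^{j}_{ji}=-\tfrac12 p_{i}\sum_{A}\d_{i}c_{AA}/c_{AA}$, which cancels exactly against the $-\tfrac12 p_{i}\sum_{l}\d_{i}c_{ll}/c_{ll}$ produced by the factor $(p_{l}-p_{i})$; what survives is $\sum_{l}\d_{l}\msK_{i}^{\phantom{i}l}+\sum_{l}\Gamma^{j}_{jl}\msK_{i}^{\phantom{i}l}-\tfrac12\sum_{l}p_{l}\,\d_{i}c_{ll}/c_{ll}=0$, i.e. $(\rodiv_{\bh}\msK)_{i}=0$. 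Running this for $i=1,2,3$ gives the claimed equivalence. I expect the main obstacle to be purely the bookkeeping around the indicator $\mathbbm{1}_{\{l>i\}}$ and the $p_{i}$-weighted logarithmic derivatives of the $c_{ll}$ (getting all the signs and the single-omitted $l=i$ term right); the only substantive step is the collapse of the off-diagonal Christoffel contraction to $\tfrac12\sum_{A}p_{A}\d_{i}\log c_{AA}$, which is exactly where orthogonality of the eigenframe and the triangular normal form (\ref{eq:preferredcoordinates}) enter.
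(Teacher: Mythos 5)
Your argument is correct, and it takes a genuinely different route from the one in the paper. The paper works in the eigenframe $\{\bmsX_{A}\}$: it writes $\bD_{A}\msK^{A}_{\phantom{A}B}=\bmsX_{B}(p_{B})+\sum_{A}(p_{B}-p_{A})\Gamma^{A}_{AB}$, computes $\Gamma^{A}_{AB}=\tfrac12\bmsX_{B}(\ln c_{AA})+\g^{A}_{AB}$ together with the explicit non-vanishing structure functions $\g^{2}_{21}$, $\g^{3}_{31}$, $\g^{3}_{32}$ from (\ref{eq:preferredcoordinates}), and then matches the left hand side of (\ref{eq:falciicond}) case by case ($i=3$, then $i=2$, then $i=1$) against the triangular combinations $\bD_{A}\msK^{A}_{3}$, $\bD_{A}\msK^{A}_{2}-\bmsX_{2}^{3}\bD_{A}\msK^{A}_{3}$, etc., the invertibility of that triangular change of components giving the equivalence. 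You instead stay entirely in the coordinate frame and prove the single closed identity that the left hand side of (\ref{eq:falciicond}) equals $-2(\rodiv_{\bh}\msK)(\d_{i})$, for all $i$ at once, with no case distinction and no computation of structure constants. What makes this work are your two preliminary observations, both of which check out: $\det(\bh_{ij})=c_{11}c_{22}c_{33}$ because $(\bmsY^{A}_{i})$ is unit lower triangular, so the indicator term in (\ref{eq:falciicond}) is exactly a contracted Christoffel symbol $2\Gamma^{j}_{jl}$ (the indicator itself then disappears because $\msK_{i}^{\phantom{i}l}=0$ for $l<i$ and the $l=i$ term is absorbed by the $(p_{l}-p_{i})$ cancellation); and $\bh(\bmsX_{A},\d_{i}\bmsX_{A})=0$, which collapses $\tfrac12\msK^{ln}\d_{i}\bh_{ln}$ to $\tfrac12\sum_{A}p_{A}\d_{i}c_{AA}/c_{AA}$. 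Your approach buys a shorter, uniform computation and makes transparent that (\ref{eq:falciicond}) is literally the coordinate expression of $\rodiv_{\bh}\msK=0$ (up to the factor $-2$), whereas the paper's eigenframe computation produces intermediate formulas ($\Gamma^{A}_{AB}$, $\g^{A}_{AB}$) that it reuses later, e.g. in the proofs of Propositions~\ref{prop:idtovds} and \ref{prop:vdstoid} and in Lemma~\ref{lemma:thelimitsoftheconstrainequations}. One small wording quibble: when you remove the indicator, the correction you must add to the full sum is $+p_{i}\Gamma^{j}_{ji}$ (the newly included $l=i$ term enters the bracket with a minus sign), and it is this positive term that cancels the $-\tfrac12 p_{i}\sum_{l}\d_{i}c_{ll}/c_{ll}$ coming from $(p_{l}-p_{i})$; your displayed final identity is nevertheless the right one, so this is a matter of phrasing, not a gap.
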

\begin{proof}
  Define $\{\bmsY^{A}\}$ via (\ref{eq:msYAformulae}). Since $\msK$ is symmetric with respect to $\bh$, it follows that $\{\bmsX_{A}\}$ is an orthogonal
  frame with respect to $\bh$. Thus
  \[
  \bh=\textstyle{\sum}_{A}c_{AA}\bmsY^{A}\otimes\bmsY^{A},
  \]
  where $c_{AA}:=\bh(\bmsX_{A},\bmsX_{A})$ (no summation); cf. Proposition~\ref{prop:geometrictofal}. Next, note that
  \begin{equation*}
    \begin{split}
      \bD_{A}\msK^{A}_{B} = & \bmsX_{A}(\msK^{A}_{B})-(\bD_{\bmsX_{A}}\bmsY^{A})(\msK\bmsX_{B})-\bmsY^{A}(\msK\bD_{\bmsX_{A}}\bmsX_{B})\\
      = & \bmsX_{A}(p_{B}\delta^{A}_{B})+\bmsY^{A}(\bD_{\bmsX_{A}}\bmsX_{C})\bmsY^{C}(\msK\bmsX_{B})-\bmsY^{A}(\msK\bD_{\bmsX_{A}}\bmsX_{B})\\
      = & \bmsX_{B}(p_{B})+p_{B}\bmsY^{A}(\bD_{\bmsX_{A}}\bmsX_{B})-\bmsY^{A}(\msK\bD_{\bmsX_{A}}\bmsX_{B})
    \end{split}
  \end{equation*}
  (no summation on $B$), where $\bD$ denotes the Levi-Civita connection induced by $\bh$. Introducing the notation $\Gamma^{A}_{BC}$ by the
  requirement that
  \[
  \bD_{\bmsX_{A}}\bmsX_{B}=\Gamma_{AB}^{C}\bmsX_{C},
  \]
  this equality can be written (again, no summation on $B$)
  \[
  \bD_{A}\msK^{A}_{B}=\bmsX_{B}(p_{B})+\textstyle{\sum}_{A}(p_{B}-p_{A})\Gamma_{AB}^{A}.
  \]
  What remains is thus to compute (no summation on $A$)
  \[
  \Gamma_{AB}^{A}=c_{AA}^{-1}\bh(\bD_{\bmsX_{A}}\bmsX_{B},\bmsX_{A})=
  c_{AA}^{-1}\bh([\bmsX_{A},\bmsX_{B}]+\bD_{\bmsX_{B}}\bmsX_{A},\bmsX_{A})=\frac{1}{2}\bmsX_{B}(\ln c_{AA})+\g^{A}_{AB},
  \]
  where $\g^{A}_{BC}$ is defined by (\ref{eq:gABCdef}). Next, we need to calculate $\g^{A}_{AB}$ (no summation). Keeping (\ref{eq:preferredcoordinates})
  and (\ref{eq:msYAformulae}) in mind, it can be computed that the only non-zero $\g^{A}_{AB}$ (no summation) are
  \begin{align*}
    \g^{2}_{21} = & \d_{2}\bmsX_{1}^{2}+\bmsX_{2}^{3}\d_{3}\bmsX_{1}^{2},\\
    \g^{3}_{31} = & \d_{3}\bmsX_{1}^{3}-\bmsX_{2}^{3}\d_{3}\bmsX_{1}^{2},\\
    \g^{3}_{32} = & \d_{3}\bmsX_{2}^{3}.
  \end{align*}
  Combining the above observations, we can compute that
  \[
  2\bD_{A}\msK^{A}_{3}=2\d_{3}p_{3}+\textstyle{\sum}_{A}(p_{3}-p_{A})\d_{3}\ln c_{AA}.
  \]
  In particular $-2\bD_{A}\msK^{A}_{3}$ equals the left hand side of (\ref{eq:falciicond}) in case $i=3$. In particular, $\bD_{A}\msK^{A}_{3}=0$
  is equivalent to (\ref{eq:falciicond}) in case $i=3$. Next, compute
  \begin{equation*}
    \begin{split}
      2\bD_{A}\msK^{A}_{2} = & 2\bmsX_{2}(p_{2})+\textstyle{\sum}_{A}(p_{2}-p_{A})\bmsX_{2}(\ln c_{AA})+2(p_{2}-p_{3})\d_{3}\bmsX_{2}^{3}\\
      = & 2\d_{2}p_{2}+2\bmsX_{2}^{3}\d_{3}p_{2}+\textstyle{\sum}_{A}(p_{2}-p_{A})\d_{2}(\ln c_{AA})\\
      & +\bmsX_{2}^{3}\textstyle{\sum}_{A}(p_{2}-p_{A})\d_{3}(\ln c_{AA})+2(p_{2}-p_{3})\d_{3}\bmsX_{2}^{3}.
    \end{split}
  \end{equation*}
  At this stage, it is of interest to note that
  \begin{equation*}
    \begin{split}
      & 2\bmsX_{2}^{3}\d_{3}p_{2}+\bmsX_{2}^{3}\textstyle{\sum}_{A}(p_{2}-p_{A})\d_{3}(\ln c_{AA})+2(p_{2}-p_{3})\d_{3}\bmsX_{2}^{3}\\
      = & \bmsX_{2}^{3}\left[2\d_{3}p_{3}+\textstyle{\sum}_{A}(p_{3}-p_{A})\d_{3}(\ln c_{AA})\right]+2\bmsX_{2}^{3}\d_{3}(p_{2}-p_{3})\\
      & +\bmsX_{2}^{3}(p_{2}-p_{3})\textstyle{\sum}_{A}\d_{3}(\ln c_{AA})+2(p_{2}-p_{3})\d_{3}\bmsX_{2}^{3}\\
      = & 2\bmsX_{2}^{3}\bD_{A}\msK^{A}_{3}+\msK_{2}^{3}\textstyle{\sum}_{A}\d_{3}(\ln c_{AA})+2\d_{3}\msK_{2}^{3},
    \end{split}
  \end{equation*}
  where we appealed to (\ref{eq:bmsXijdef}) in the last step. Summing up,
  \begin{equation*}
    \begin{split}
      & 2\bD_{A}\msK^{A}_{2}-2\bmsX_{2}^{3}\bD_{A}\msK^{A}_{3}\\
      = & \textstyle{\sum}_{A}(p_{2}-p_{A})\d_{2}(\ln c_{AA})+2\d_{2}p_{2}+2\d_{3}\msK_{2}^{3}+\msK_{2}^{3}\textstyle{\sum}_{A}\d_{3}(\ln c_{AA}).
    \end{split}
  \end{equation*}
  In particular, minus the right hand side of this equation equals the left hand side of (\ref{eq:falciicond}) in case $i=2$. 

  Finally, let us consider
  \begin{equation*}
    \begin{split}
      & 2\bD_{A}\msK^{A}_{1}-2\bmsX_{1}^{2}\bD_{A}\msK^{A}_{2}-2\bmsX_{1}^{3}\bD_{A}\msK^{A}_{3}+2\bmsX_{1}^{2}\bmsX_{2}^{3}\bD_{A}\msK^{A}_{3}\\
      = & 2\bmsX_{1}(p_{1})+\textstyle{\sum}_{A}(p_{1}-p_{A})\bmsX_{1}(\ln c_{AA})+2(p_{1}-p_{2})\d_{2}\bmsX_{1}^{2}\\
      & +2(p_{1}-p_{3})\d_{3}\bmsX_{1}^{3}-2\msK_{2}^{3}\d_{3}\bmsX_{1}^{2}\\
      & -2\bmsX_{1}^{2}\bmsX_{2}(p_{2})-\textstyle{\sum}_{A}(p_{2}-p_{A})\bmsX_{1}^{2}\bmsX_{2}(\ln c_{AA})-2(p_{2}-p_{3})\bmsX_{1}^{2}\d_{3}\bmsX_{2}^{3}\\
      & -2(\bmsX_{1}^{3}-\bmsX_{1}^{2}\bmsX_{2}^{3})\d_{3}p_{3}-\textstyle{\sum}_{A}(p_{3}-p_{A})(\bmsX_{1}^{3}-\bmsX_{1}^{2}\bmsX_{2}^{3})\d_{3}\ln c_{AA}.
    \end{split}
  \end{equation*}
  Collecting the terms on the right hand side involving sums over $A$ yields
  \begin{equation*}
    \begin{split}
      & \textstyle{\sum}_{A}(p_{1}-p_{A})\d_{1}(\ln c_{AA})+\msK_{1}^{2}\textstyle{\sum}_{A}\d_{2}(\ln c_{AA})+\msK_{1}^{3}\textstyle{\sum}_{A}\d_{3}(\ln c_{AA}).
    \end{split}
  \end{equation*}
  The remaining terms sum up to
  \[
  2\d_{1}p_{1}+2\d_{2}\msK_{1}^{2}+2\d_{3}\msK_{1}^{3}.
  \]
  To conclude
  \begin{equation*}
    \begin{split}
      & 2\bD_{A}\msK^{A}_{1}-2\bmsX_{1}^{2}\bD_{A}\msK^{A}_{2}-2\bmsX_{1}^{3}\bD_{A}\msK^{A}_{3}+2\bmsX_{1}^{2}\bmsX_{2}^{3}\bD_{A}\msK^{A}_{3}\\
      = & \textstyle{\sum}_{A}(p_{1}-p_{A})\d_{1}(\ln c_{AA})+2\d_{A}\msK^{A}_{1}
      +\msK_{1}^{2}\textstyle{\sum}_{A}\d_{2}(\ln c_{AA})+\msK_{1}^{3}\textstyle{\sum}_{A}\d_{3}(\ln c_{AA}).
    \end{split}
  \end{equation*}
  Moreover, the right hand side equals minus the left hand side of (\ref{eq:falciicond}) in case $i=1$. The statement follows.
\end{proof}

Finally, we prove Proposition~\ref{prop:geometrictofal}.

\begin{proof}[Proposition~\ref{prop:geometrictofal}]
  The conclusion of Proposition~\ref{prop:geometrictofal} is an immediate consequence of Remark~\ref{remark:geotofalexcmom} and
  Lemma~\ref{lemma:momcondivmsKz}. 
\end{proof}

\subsection{Deriving geometric asymptotics}
In Subsection~\ref{ssection:specasdata}, we verified that, locally, initial data in the sense of Definition~\ref{def:vacuumidos} are initial
data in the sense of Definition~\ref{def:falid}. Next, we wish to verify that if we have a development in the sense of \cite{fal}, we obtain
a development in the sense of Definition~\ref{def:developmentvacuum}. 

\begin{prop}\label{prop:dergeoas}
  Let $(\bM,\bh)$ be a smooth $3$-dimensional Riemannian manifold and $\msK$ be a smooth $(1,1)$-tensor field on $\bM$ which is symmetric with respect to
  $\bh$. Assume $\msK$ to have distinct eigenvalues $p_{1}<p_{2}<p_{3}$. Assume that there are local coordinates $(V,\bsfx)$ such that for
  $A=1,2,3$, there are eigenvector fields $\bmsX_{A}$, $A=1,2,3$, of $\msK$ of the form (\ref{eq:preferredcoordinates}) for a suitable choice of smooth
  functions $\bmsX_{1}^{2}$, $\bmsX_{2}^{3}$, $\bmsX_{1}^{3}$. Assume also that $\msK\bmsX_{A}=p_{A}\bmsX_{A}$ (no summation) for $A=1,2,3$. Define $c_{ij}$
  as in the statement of Proposition~\ref{prop:geometrictofal} and $\kappa_{i}^{\phantom{i}j}$ as in Definition~\ref{def:falid}. Assume, finally, that
  there is a solution to Einstein's vacuum equations of the form (\ref{eq:gfalasformofmetric}) on $V\times (0,t_{0}]$ for some $t_{0}>0$, where
  $dx^{i}$ is replaced by $d\bsfx^{i}$, and an $\vare>0$ such that
  \begin{align}
    \|a_{ij}(\cdot,t)-c_{ij}\|_{C^{l}(W)} \leq & C_{l}t^{\vare},\label{eq:aijtocijClatrate}\\
    \|t\bk^{\phantom{j}i}_{j}(\cdot,t)+\kappa^{\phantom{j}i}_{j}\|_{C^{l}(W)} \leq & C_{l}t^{\vare}\label{eq:tbkijtominuskappaclrate}
  \end{align}
  on $(0,t_{0}]$ for some constant $C_{l}$, where $l\in\nn{}$, $C_{l}\in\rn{}$ and $W\subset\bM$ is such that it has compact closure contained in $V$.
  Then the development (\ref{eq:gfalasformofmetric}) can be represented by the right hand side of (\ref{eq:rinansatzintroPsi}) on $V\times (0,t_{0}]$,
    where $\msY^{A}$ is replaced by $\bmsY^{A}$ (and $\{\bmsY^A\}$ is the dual frame of $\{\bmsX_A\}$). Moreover, defining $\chh$ by (\ref{eq:chhdef}),
    where $\msY^{A}$ is replaced by $\bmsY^{A}$, there is an $\eta>0$ and a constant $C_{l}$ such that
  \begin{align}
    \|\mK(\cdot,t)-\msK\|_{C^{l}(W)} \leq & C_{l}t^{\eta},\label{eq:mKmsKfaltoge}\\
    \|\chh(\cdot,t)-\bh\|_{C^{l}(W)} \leq & C_{l}t^{\eta},\label{eq:chhbhfaltoge}
  \end{align}
  hold on $(0,t_{0}]$. 
\end{prop}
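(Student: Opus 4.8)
The plan is to obtain the representation (\ref{eq:rinansatzintroPsi}) by a purely algebraic change of frame, to read off $\chh\to\bh$ from the hypothesis $a_{ij}\to c_{ij}$, and to reduce $\mK\to\msK$ to the behaviour of the mean curvature. First I would rewrite the metric. Since $\{\bmsX_A\}$ and $\{\bmsY^A\}$ are dual frames one has $d\bsfx^i\otimes d\bsfx^j=\bmsX_A^i\bmsX_B^j\,\bmsY^A\otimes\bmsY^B$, so setting
\[
b_{AB}:=\textstyle{\sum}_{i,j}a_{ij}\,t^{2p_{\max\{i,j\}}-2p_{\max\{A,B\}}}\bmsX_A^i\bmsX_B^j
\]
(the inverse of the change of frame (\ref{eq:aijintermsofbAB})) gives at once that the solution (\ref{eq:gfalasformofmetric}) equals $-dt\otimes dt+\textstyle{\sum}_{A,B}b_{AB}t^{2p_{\max\{A,B\}}}\bmsY^A\otimes\bmsY^B$ on $V\times(0,t_0]$, with the $b_{AB}$ smooth; this is the asserted representation with $\msY^A$ replaced by $\bmsY^A$. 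By (\ref{eq:preferredcoordinates}) we have $\bmsX_A^i=0$ for $i<A$, so this sum runs effectively over $i\ge A$ and $j\ge B$; hence every exponent $e_{AB,ij}:=2(p_{\max\{i,j\}}-p_{\max\{A,B\}})$ is $\ge0$ (using $p_1<p_2<p_3$), it vanishes precisely when $\max\{i,j\}=\max\{A,B\}$, and otherwise $e_{AB,ij}\ge2\delta_0$ with $\delta_0:=\min\{p_2-p_1,\,p_3-p_2\}>0$.

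Next I would prove $\chh\to\bh$. Splitting $a_{ij}=c_{ij}+(a_{ij}-c_{ij})$, the part of $b_{AB}$ coming from $a_{ij}-c_{ij}$ is $O(t^{\vare})$ in $C^l(W)$ by (\ref{eq:aijtocijClatrate}), since all the exponents $e_{AB,ij}$ are nonnegative. In the $c_{ij}$-part, the terms with $e_{AB,ij}>0$ are $O(t^{2\delta_0})$ in $C^l(W)$ because the $c_{ij}$ and $\bmsX_A^i$ are bounded on $W$, while the terms with $e_{AB,ij}=0$ sum to the finite expression $\textstyle{\sum}c_{ij}\bmsX_A^i\bmsX_B^j$ over those $i\ge A$, $j\ge B$ with $\max\{i,j\}=\max\{A,B\}$. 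Substituting (\ref{eq:cotetcdef}) for the $c_{ij}$ with $i<j$ and (\ref{eq:bmsXijdef}) for the $\bmsX_A^i$, a short computation --- which mirrors, in reverse, the passage from $b_{AB}$ to $a_{ij}$ carried out in (\ref{eq:msKotcijformula})--(\ref{eq:msKothcijformula}) --- shows this sum equals $c_{AA}\delta_{AB}$ (no summation on $A$). Hence $\|b_{AB}(\cdot,t)-c_{AA}\delta_{AB}\|_{C^l(W)}\le C_lt^{\eta}$ with $\eta:=\min\{\vare,2\delta_0\}$. Since $\msK$ is symmetric with respect to $\bh$ and has distinct eigenvalues, the $\bmsX_A$ are $\bh$-orthogonal and $\bh=\textstyle{\sum}_A c_{AA}\bmsY^A\otimes\bmsY^A$ with $c_{AA}=\bh(\bmsX_A,\bmsX_A)$; therefore $\chh-\bh=\textstyle{\sum}_{A,B}(b_{AB}-c_{AA}\delta_{AB})\bmsY^A\otimes\bmsY^B$ and (\ref{eq:chhbhfaltoge}) follows, the $\bmsY^A$ being bounded on $W$.

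For $\mK\to\msK$ I would argue as follows. Here $\bK^i_j=\bk^i_j$, $\theta=\bK^i_i$, and $\kappa^i_j=-\msK^i_j$ (cf. Remark~\ref{remark:geotofalexcmom}), so (\ref{eq:tbkijtominuskappaclrate}) reads $\|t\bK^i_j-\msK^i_j\|_{C^l(W)}\le C_lt^{\vare}$; summing the diagonal entries and using $\tr\msK=\textstyle{\sum}_i p_i=1$ (cf. (\ref{eq:Kasnerrelationsfal})) gives $\|t\theta-1\|_{C^l(W)}\le C_lt^{\vare}$. After possibly shrinking $t_0$ so that $t\theta\ge1/2$ on $W$ --- which in particular makes $\theta>0$, so that $\mK$ is defined there --- a standard composition estimate for $C^l$-norms yields $\|(t\theta)^{-1}-1\|_{C^l(W)}\le C_lt^{\vare}$, and then from $\mK^i_j=(t\bK^i_j)\,(t\theta)^{-1}$ and the $C^l$ product estimate one gets $\|\mK^i_j-\msK^i_j\|_{C^l(W)}\le C_lt^{\vare}$, which in particular gives (\ref{eq:mKmsKfaltoge}) for any $\eta\le\vare$. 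Taking the common $\eta=\min\{\vare,2\delta_0\}$ then gives both conclusions.

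I expect the only delicate point to be the identity in the second paragraph, that the $e_{AB,ij}=0$ part of $b_{AB}$ collapses to $c_{AA}\delta_{AB}$; but this is exactly the cancellation already performed, in the opposite direction, when $a_{ij}$ was computed from $b_{AB}$ in the earlier lemma, so the relations (\ref{eq:cotetcdef}) have been set up precisely so that it holds, and the only bookkeeping is which pairs $(i,j)$ survive the $t\to0$ limit for each $(A,B)$. The remaining ingredients --- the change of frame, the boundedness on $W$ of the $t$-independent quantities $c_{ij}$, $\bmsX_A^i$, $\bmsY^A_i$, $\msK^i_j$, and the Moser-type product and composition estimates for $C^l$-norms --- are routine. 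Everything goes through verbatim with $\msY^A$ replaced by $\bmsY^A$ throughout, as in the statement.
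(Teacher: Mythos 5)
Your proof is correct and follows essentially the same route as the paper: the change of frame giving $b_{AB}=\sum_{i,j}a_{ij}t^{2p_{\max\{i,j\}}-2p_{\max\{A,B\}}}\bmsX^{i}_{A}\bmsX^{j}_{B}$, the term-by-term identification of the surviving (zero-exponent) contributions cancelling to $c_{AA}\delta_{AB}$ by the definitions of $c_{12},c_{23},c_{13}$, and the deduction of (\ref{eq:mKmsKfaltoge}) from $t\theta\to1$ (via the trace of (\ref{eq:tbkijtominuskappaclrate})) followed by division. Your bookkeeping of the exponents and the explicit $C^{l}$ product/composition estimates just make precise what the paper's proof asserts more briefly.
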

\begin{remark}
  The conclusion that the development (\ref{eq:gfalasformofmetric}) can be represented by the right hand side of (\ref{eq:rinansatzintroPsi}) on
  $V\times (0,t_{0}]$, where $\msY^{A}$ is replaced by $\bmsY^{A}$, follows from the form of $g$ in (\ref{eq:gfalasformofmetric}) (that there is a
  Gaussian foliation so that all the information concerning the spacetime metric is contained in the induced Riemannian metrics
  on the leaves of the foliation) and the fact that $\{\bmsY^A\}$ is a co-frame. In particular, the estimates (\ref{eq:aijtocijClatrate}) and
  (\ref{eq:tbkijtominuskappaclrate}) are not necessary to obtain this conclusion. 
\end{remark}
\begin{proof}
  First note that the definitions ensure that $\kappa_{i}^{\phantom{i}j}=-\msK_{i}^{\phantom{i}j}$. 
  Due to (\ref{eq:tbkijtominuskappaclrate}), it is clear that $t\theta$ converges to $1$ in $C^{l}$ at the rate $t^{\vare}$. Combining this information
  with (\ref{eq:tbkijtominuskappaclrate}) again yields the conclusion that (\ref{eq:mKmsKfaltoge}) holds. It remains to relate $\chh$ to $\bh$. Note,
  to this end, that
  \[
  \bh=\textstyle{\sum}_{A}\bh(\bmsX_{A},\bmsX_{A})\bmsY^{A}\otimes \bmsY^{A}.
  \]
  Note, moreover, that $d\bsfx^{i}=\bmsX_{A}^{i}\bmsY^{A}$, where the $\bmsX_{A}^{i}$ are defined by
  (\ref{eq:preferredcoordinates}); $\bmsX_{A}^{i}=0$ if $i<A$; and $\bmsX_{A}^{i}=1$ if $A=i$. Thus the metric can be represented by the right hand side
  of (\ref{eq:rinansatzintroPsi}), with $\msY^{A}$ is replaced by $\bmsY^{A}$, where
  \begin{equation}\label{eq:bABitoaijetc}
    b_{AB}=\textstyle{\sum}_{i,j}a_{ij}t^{2p_{\max\{i,j\}}-2p_{\max\{A,B\}}}\bmsX^{i}_{A}\bmsX^{j}_{B}.
  \end{equation}
  Consider, to begin with, $b_{AA}$ (no summation). If $i$ or $j$ in (\ref{eq:bABitoaijetc}) is strictly larger than $A$, then the corresponding
  term in (\ref{eq:bABitoaijetc}) converges to zero. Moreover, we have to have $i\geq A$ and $j\geq A$ since $\bmsX_{A}^{i}=0$ if $i<A$. To conclude,
  the only term in (\ref{eq:bABitoaijetc}) which contributes to the limit is the one with $i=j=A$. Since $\bmsX_{A}^{i}=1$ if $A=i$, this term equals
  $a_{AA}\rightarrow c_{AA}=\bh(\bmsX_{A},\bmsX_{A})$ (no summation). In other words, $b_{AA}\rightarrow \bh(\bmsX_{A},\bmsX_{A})$ (no summation), as desired.
  Next, note that
  \[
  b_{23}=a_{33}\bmsX_{2}^{3}\bmsX_{3}^{3}+a_{23}\bmsX_{2}^{2}\bmsX_{3}^{3}=a_{33}\bmsX_{2}^{3}+a_{23}\rightarrow c_{33}\bmsX_{2}^{3}+c_{23}=0,
  \]
  by the definition of $c_{23}$. Next consider $b_{12}$. In this case the only terms in (\ref{eq:bABitoaijetc}) that contribute to the limit are
  the ones with $j=2$ and $i\in\{1,2\}$:
  \[
  a_{12}\bmsX_{1}^{1}\bmsX_{2}^{2}+a_{22}\bmsX_{1}^{2}\bmsX_{2}^{2}=a_{12}+a_{22}\bmsX_{1}^{2}\rightarrow c_{12}+c_{22}\bmsX_{1}^{2}=0,
  \]
  by the definition of $c_{12}$. Finally, consider $b_{13}$. In this case, the only terms in (\ref{eq:bABitoaijetc}) that contribute to the limit are
  the ones with $j=3$ and $i\in\{1,2,3\}$:
  \[
  a_{13}+a_{23}\bmsX_{1}^{2}+a_{33}\bmsX_{1}^{3}\rightarrow c_{13}+c_{23}\bmsX_{1}^{2}+c_{33}\bmsX_{1}^{3}=0,
  \]
  by the definition of $c_{13}$ and $c_{23}$. The lemma follows. 
\end{proof}

\section{Existence of developments}\label{section:existenceofdev}

The purpose of the present section is to prove Propositions~\ref{prop:idtovds} and \ref{prop:vdstoid}, as well as Theorem~\ref{thm:aardetal}.

\begin{proof}[Proposition~\ref{prop:idtovds}]
  In what follows, we refer to \cite{aarendall}, which covers the case $n=3$. However, the equations for the velocity dominated solutions are the
  same in higher dimensions; cf. \cite{damouretal}. Nevertheless, in all the verifications to follow, Remark~\ref{remark:eightpi} should be kept in mind.
  Define $({}^{0}g,{}^{0}k,{}^{0}\phi)$ by (\ref{eq:AVTDsolution}). Then
  ${}^{0}k^{A}_{\phantom{A}B}=-p_{A}t^{-1}\delta^{A}_{B}$ (no summation), where the indices are raised with ${}^{0}g$, and the indices refer to the frame
  $\{\msX_{A}\}$ and coframe $\{\msY^{A}\}$ appearing in the statement of the proposition. One particular consequence of this equality is that
  $\tr{}^{0}k=-t^{-1}$. Note also that $\d_{t}{}^{0}g_{AB}=-2{}^{0}k_{AB}$; i.e., \cite[(11a), p.~483]{aarendall} is satisfied. Next we define, in analogy
  with \cite[(12), p.~483]{aarendall},
  \begin{subequations}
    \begin{align}
      {}^{0}\rho := & \frac{1}{2}(\d_{t}{}^{0}\phi)^{2},\\
      {}^{0}j_{A} := & -(\d_{t}{}^{0}\phi)\cdot \msX_{A}({}^{0}\phi),\\
      {}^{0}S_{AB} := & \frac{1}{2}(\d_{t}{}^{0}\phi)^{2}\cdot ({}^{0}g_{AB}).
    \end{align}
  \end{subequations}
  With these definitions, the matter terms appearing on the right hand side of \cite[(11b), p.~483]{aarendall} vanish. Due to this observation, it can
  be verified that \cite[(11b), p.~483]{aarendall} is satisfied. It can also be verified that ${}^{0}\phi$ satisfies the relevant scalar field equation
  appearing on \cite[p.~483]{aarendall}. Since $\tr\msK^{2}+\Phia^{2}=1$, \cite[(10a), p.~483]{aarendall} holds. What remains is to demonstrate that
  \cite[(10b), p.~483]{aarendall} is fulfilled. Note that this equality can be rewritten as
  \[
  \nabla_{A}{}^{0}K^{A}_{\phantom{A}B}={}^{0}j_{B},
  \]
  where ${}^{0}K^{A}_{\phantom{A}B}={}^{0}k^{A}_{\phantom{A}B}$ (we use the capital letter $K$ to distinguish the Weingarten map from the second fundamental
  form) and $\nabla$ is the Levi-Civita connection induced by ${}^{0}g$. By arguments similar to the proof of Lemma~\ref{lemma:momcondivmsKz}, it can
  be calculated that
  \begin{align*}
    \nabla_{A}{}^{0}K^{A}_{\phantom{A}B} = & -t^{-1}\msX_{B}(p_{B})+\textstyle{\sum}_{A}t^{-1}(p_{A}-p_{B})\msY^{A}(\nabla_{\msX_{A}}\msX_{B}),\\
    \msY^{A}(\nabla_{\msX_{A}}\msX_{B}) = & \msY^{A}(\bD_{\msX_{A}}\msX_{B})+\msX_{B}(p_{A})\ln t,
  \end{align*}
  where there is no summation on $B$ in the first equality and no summation on any index in the second equality. Moreover, $\bD$ here denotes the
  Levi-Civita connection induced by $\bh$; note that $\{\msX_{A}\}$ is an orthonormal frame with respect to $\bh$. Due to these observations, it
  follows that
  \begin{equation}\label{eq:nablazKAB}
    \begin{split}
      \nabla_{A}{}^{0}K^{A}_{\phantom{A}B} = & -t^{-1}\bD_{A}\msK^{A}_{\phantom{A}B}+\textstyle{\sum}_{A}t^{-1}(p_{A}-p_{B})\msX_{B}(p_{A})\ln t.
    \end{split}
  \end{equation}
  The second term on the right hand side can be calculated to be
  \[
  \textstyle{\sum}_{A}t^{-1}(p_{A}-p_{B})\msX_{B}(p_{A})\ln t=\frac{1}{2}t^{-1}\msX_{B}\left(\textstyle{\sum}_{A}p_{A}^{2}\right)\ln t
  -t^{-1}p_{B}\msX_{B}\left(\textstyle{\sum}_{A}p_{A}\right)\ln t.
  \]
  However, $\sum_{A}p_{A}=1$, so that the second term on the right hand side vanishes. Combining this observation with $\sum_{A}p_{A}^{2}+\Phia^{2}=1$
  yields
  \begin{equation}\label{eq:secondtermnablazKid}
    \textstyle{\sum}_{A}t^{-1}(p_{A}-p_{B})\msX_{B}(p_{A})\ln t=-t^{-1}\ln t \cdot \Phia\msX_{B}(\Phia).
  \end{equation}
  Combining this equality with (\ref{eq:nablazKAB}) and the fact that $\rodiv_{\bh}\msK=\Phia d\Phib$ yields
  \[
  \nabla_{A}{}^{0}K^{A}_{\phantom{A}B} = -t^{-1}\Phia\msX_{B}(\Phib)-t^{-1}\ln t \cdot \Phia\msX_{B}(\Phia).
  \]
  On the other hand,
  \begin{equation}\label{eq:zjBeq}
    {}^{0}j_{B}=-t^{-1}\ln t \cdot \Phia\msX_{B}(\Phia)-t^{-1}\Phia\msX_{B}(\Phib).
  \end{equation}
  To conclude, \cite[(10b), p.~483]{aarendall} holds. In short, $({}^{0}g,{}^{0}k,{}^{0}\phi)$ is a solution to the velocity dominated
  Einstein-scalar field equations on $\bM\times (0,\infty)$ in the sense of \cite{aarendall}.
\end{proof}

Next, we prove Proposition~\ref{prop:vdstoid}.

\begin{proof}[Proposition~\ref{prop:vdstoid}]
  That $\tr{}^{0}k$ can be assumed to equal $-t^{-1}$ and that $-t\cdot {}^{0}K$ is independent of time follows from the discussion in
  \cite[Subsection~2.2, pp.~482--484]{aarendall}. Next, since $\msK$ is symmetric with respect to ${}^{0}g$, the eigenvalues of $\msK$ are
  real and the local frame $\{\bmsX_{A}\}$ is orthogonal with respect to ${}^{0}g$. Moreover, it can be verified that $t^{-2p_{A}}{}^{0}g_{AB}$
  (no summation) is independent of $t$. We can therefore define $\alpha_{A}$ to be the strictly positive functions such that
  $\alpha_{A}^{2}:={}^{0}g_{AA}|_{t=1}$ (no summation). This means that ${}^{0}g$ takes the form (\ref{eq:gzformulavd}). Define $\msX_{A}$
  by $\msX_{A}:=\alpha_{A}^{-1}\msX_{A}$ and let $\{\msY^{A}\}$ be the frame dual to $\{\msX_{A°}\}$. Then ${}^{0}g$ and ${}^{0}k$ take the
  form (\ref{eq:zgdef}) and (\ref{eq:zkdef}) respectively. Finally, since $\d_{t}(t{}^{0}\phi)=0$, cf. \cite[(15), p.~484]{aarendall},
  ${}^{0}\phi$ takes the form (\ref{eq:zphidef}).

  Next, let us verify that the conditions of Definition~\ref{def:ndsfidonbbs} are satisfied. That $\tr\msK=1$ follows from
  the fact that $\tr{}^{0}k=-t^{-1}$. That $\msK$ is symmetric with respect to $\bh$ follows from the fact that both are
  diagonal with respect to the frame $\{\msX_{A}\}$. That the eigenvalues of $\msK$ are distinct is an assumption. Since
  $1+p_{1}-p_{n-1}-p_{n}>0$, the final condition of Definition~\ref{def:ndsfidonbbs} is satisfied. Combining
  \cite[(10a) and (12a), p.~483]{aarendall}, keeping Remark~\ref{remark:eightpi} in mind, it follows that
  \[
  -t^{-2}\tr\msK^{2}+t^{-2}=t^{-2}\Phia^{2}.
  \]
  In particular, $\tr\msK^{2}+\Phia^{2}=1$. What remains is to verify that $\rodiv_{\bh}\msK=\Phia d\Phib$. Note, to this end, that
  (\ref{eq:nablazKAB}) and (\ref{eq:secondtermnablazKid}) are satisfied in the current situation. This means that
  \[
  \nabla_{A}{}^{0}K^{A}_{\phantom{A}B} = -t^{-1}\bD_{A}\msK^{A}_{\phantom{A}B}-t^{-1}\ln t \cdot \Phia\msX_{B}(\Phia).
  \]
  On the other hand, the equality (\ref{eq:zjBeq}) is also satisfied in the current situation. Since
  $\nabla_{A}{}^{0}K^{A}_{\phantom{A}B}={}^{0}j_{B}$, cf. \cite[(10b), p.~483]{aarendall}, it follows that $\rodiv_{\bh}\msK=\Phia d\Phib$.
  The proposition follows. 
\end{proof}

Finally, we prove Theorem~\ref{thm:aardetal}.

\begin{proof}[Theorem~\ref{thm:aardetal}]
  Due to Proposition~\ref{prop:idtovds}, $({}^{0}g,{}^{0}k,{}^{0}\phi)$ defined by (\ref{eq:AVTDsolution}) is a solution to the velocity dominated
  Einstein-scalar field equations on $\bM\times (0,\infty)$ in the sense of \cite{aarendall}, keeping Remark~\ref{remark:eightpi} in mind (as in
  the proof of Proposition~\ref{prop:idtovds}, we focus on the case $n=3$ and therefore refer to \cite{aarendall}; the general case follows by
  appealing to \cite{damouretal}, in particular \cite[Theorem~10.2, p.~1104]{damouretal} and \cite[Remark~1, p.~1105]{damouretal}). Assuming that
  the data $(\bM,\bh,\msK,\Phia,\Phib)$ are real analytic and that the eigenvalues
  satisfy $1+p_{1}-p_{n-1}-p_{n}>0$ (i.e., $p_{A}>0$, $A=1,2,3$, in case $n=3$), we can apply \cite[Theorem~1, p.~484]{aarendall}.
  This yields a real analytic solution to the Einstein-scalar field equations on a set of the form $M$ introduced in the statement of the
  theorem. Moreover, on a fixed compact subset of $\bM$, say $K$, this solution has asymptotics of the form
  \cite[1-6, Theorem~1, p.~484]{aarendall}, where the estimates are uniform on $K$. Finally, the solution is unique with these properties.

  In order to relate the results of \cite{aarendall} with the conclusions we wish to draw here, note that \cite[Theorem~1, p.~484]{aarendall}
  yields
  \[
  {}^{0}g^{AC}g_{CB}=\delta^{A}_{B}+o(t^{\alpha^{A}_{\phantom{A}B}}),
  \]
  where $\alpha^{A}_{\phantom{A}B}$ are strictly positive numbers and the indices refer to the frame $\{\msX_{A}\}$ and co-frame $\{\msY^{A}\}$. On the
  other hand, representing the metric as in (\ref{eq:rinansatzintroPsi}) yields
  \[
  {}^{0}g^{AC}g_{CB}=t^{-2p_{A}}g_{AB}=t^{2(p_{\max\{A,B\}}-p_{A})}b_{AB}.
  \]
  Combining these two equalities, it follows that the diagonal components of $b_{AB}$ converge to $1$ at a rate. Moreover, if $A>B$, we conclude that
  $b_{AB}$ converges to $0$ at a rate. Since $b$ is symmetric, the same holds for $b_{BA}$. This means that $\chh$, defined by (\ref{eq:chhdef}),
  converges to $\bh$ at a rate and uniformly on compact subsets; i.e., (\ref{eq:limitbABtobhchhsfaar}) holds. Combining this observation with
  arguments similar to the ones presented in connection
  with (\ref{eq:varrhothetadeterminant}) below yields the conclusion that $\varrho$ and $\ln t$ are interchangable; cf. (\ref{eq:varrholntCl}) below.
  Next, note that \cite[Theorem~1, p.~484]{aarendall} yields the conclusion that $t\theta$ converges to $1$ at a rate, and that $\msK-\mK$ converges
  to zero at a rate; i.e., (\ref{eq:mKlimsfaar}) holds. The conclusions concerning the scalar field, i.e. (\ref{eq:Phialimaar}) and
  (\ref{eq:Phiblimaar}), follow from \cite[Theorem~1, p.~484]{aarendall}, the fact that $\varrho$ and $\ln t$ are interchangable, the fact that
  $t\theta$ converges to $1$ at a rate and the fact that $\hU=\theta^{-1}\d_{t}$.
\end{proof}

\section{Improving the asymptotics in the generalised setting}

The purpose of the present section is to provide a proof of Theorem~\ref{thm:improvingasymptoticsgeneral}. We proceed gradually, starting with
estimates of the relative spatial variation of the mean curvature. We continue with estimates of the eigenvector fields of the expansion normalised
Weingarten map as well as of the components of the expansion normalised Weingarten map with respect to the eigenvector fields of $\msK$. Given
this information, we are in a position to estimate $\theta^{-2}\bS$, where $\bS$ is the spatial scalar curvature, and $\theta^{-2}\bR$, where
$\bR$ is the spatial Ricci tensor, viewed as a $(1,1)$-tensor field. We also obtain estimates for similar quantities involving the lapse function.
Combining this information, we are in a position to estimate the deceleration parameter $q$ as well as the expansion normalised normal derivative
of $\mK$, $\hml_{U}\mK$. We also obtain more detailed estimates of the components of $\hml_{U}\mK$ with respect to the eigenvector frame associated
with $\msK$. Given the above control of the geometry, it is possible to appeal to the results of \cite{RinWave} in order to control the asymptotics
of the scalar field. In the final step, we then combine all of the above in order to conclude that the equations corresponding to the limits of
the Hamiltonian and momentum constraints hold. 

\subsection{Estimates for the relative spatial variation of the mean curvature}\label{ssection:estrelspvariationmeancurvature}
Consider a development which is locally (close to the singularity) either of the form (\ref{eq:rinansatzintroPsi}) (in general dimensions) or
of the form (\ref{eq:rinansatzintrocrushing}). For such a development, define the logarithmic volume density, $\varrho$, by the relation
$e^{\varrho}\mu_{\bh}=\mu_{\bg}$, where $\bh$ is the metric appearing in the definition of the initial data and $\bg$ is the metric induced on
hypersurfaces of the form $\bM_{t}:=\bM\times\{t\}$. This means that
\[
e^{\varrho}=e^{\varrho}|\mu_{\bh}(\msX_{1},\dots,\msX_{n})|=|\mu_{\bg}(\msX_{1},\dots,\msX_{n})|,
\]
assuming $|\msX_{A}|_{\bh}=1$, $A=1,\dots,n$. In the case of (\ref{eq:rinansatzintrocrushing}), taking the square of this equality yields
\begin{equation}\label{eq:varrhothetadeterminant}
  e^{2\varrho}=\theta^{-2}\left|\begin{array}{cccc} b_{11} & b_{12} & \cdots & b_{1n} \\ \theta^{-2(p_{2}-p_{1})}b_{21} & b_{22} & \cdots & b_{2n} \\
  \vdots & \vdots & \cdots & \vdots \\
  \theta^{-2(p_{n}-p_{1})}b_{n1} & \theta^{-2(p_{n}-p_{2})}b_{n2} & \cdots & b_{nn}\end{array}\right|;
\end{equation}
note that the $p_{i}$ should here be thought of as given smooth functions satisfying $p_{1}<\cdots<p_{n}$. 
In the case of (\ref{eq:rinansatzintroPsi}), this equality holds with $\theta$ replaced by $t^{-1}$. In what follows, we assume $\chh$ to converge
to $\bh$ in $C^{l}$ for any $0\leq l\in\nn{}$ at a rate $\theta^{-\vare}$ in the case of (\ref{eq:rinansatzintrocrushing}) and at a rate $t^{\vare}$
in the case of (\ref{eq:rinansatzintroPsi}). This means that $b_{AB}$ converges to $\delta_{AB}$ at this rate. In particular, in the case of
(\ref{eq:rinansatzintrocrushing}),
\begin{equation}\label{eq:varrholnthetaCz}
  |\varrho+\ln\theta|\leq C_{\vare}\theta^{-\vare}
\end{equation}
on $M_{0}:=\bM\times (0,t_{0}]$, where $t_{0}\in (0,t_{+})$. Similarly, in the case of (\ref{eq:rinansatzintroPsi}), there is, for
each $0\leq l\in\nn{}$, a constant $C_{l}$ such that 
\begin{equation}\label{eq:varrholntCl}
  \|\varrho(\cdot,t)-\ln t\|_{C^{l}(\bM)}\leq C_{l}t^{\vare}
\end{equation}
for all $t\leq t_{0}$. The reason we obtain $C^{l}$ bounds in (\ref{eq:varrholntCl}) and only $C^{0}$ bounds in (\ref{eq:varrholnthetaCz}) is that we,
of course, control the spatial derivatives of $t$, but not, a priori, the spatial derivatives of $\theta$. In order to remedy this situation, we first
derive bounds on the spatial derivatives of $\varrho$, given suitable assumptions concerning $\hN$. These bounds can then be used to derive estimates
for $\ln\theta$. 

\begin{lemma}\label{lemma:bDkvarrholnthetaest}
  Consider a metric of the form (\ref{eq:asmetricgeneralcase}) and assume that there, for every $1\leq l\in\nn{}$, are constants $C_{l}$ and $a_{l}$
  such that for $1\leq k\leq l$,
  \begin{equation}\label{eq:EbfIlnhN}
    |\bD^{k}\ln\hN|_{\bh}\leq C_{l}\ldr{\ln\theta}^{a_{l}}
  \end{equation}
  on $M_{0}$, where $\hN=\theta N$. Assume, moreover, that (\ref{eq:limitbABtobhchhsfN}) holds and that $\theta$ diverges uniformly to $\infty$ as
  $t\rightarrow 0+$. Then there are, for every $1\leq l\in\nn{}$, constants $K_{l}$ and $b_{l}$ such that
  \begin{align}
    |\bD^{k}\varrho|_{\bh} \leq & K_{l}\ldr{\ln\theta}^{b_{l}},\ \ \
    |\bD^{k}\varrho|_{\bh}\leq K_{l}\ldr{\varrho}^{b_{l}},\label{eq:EbfIvarrhoest}\\
    |\bD^{k}\ln\theta|_{\bh} \leq & K_{l}\ldr{\ln\theta}^{b_{l}},\ \ \
    |\bD^{k}\ln\theta|_{\bh}\leq K_{l}\ldr{\varrho}^{b_{l}}\label{eq:EbfIlnthetaest}
  \end{align}
  on $M_{0}$ for $k\leq l$.
\end{lemma}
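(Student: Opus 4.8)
The plan is to prove the two families of estimates in Lemma~\ref{lemma:bDkvarrholnthetaest} in the order $\varrho$ first, then $\ln\theta$, by induction on the order $k$ of differentiation, exploiting the determinant formula (\ref{eq:varrhothetadeterminant}) together with the constraint equations to trade spatial derivatives of $\ln\theta$ for spatial derivatives of $\varrho$ and of the lapse. First I would take the logarithm of (\ref{eq:varrhothetadeterminant}) to write
\[
2\varrho=-2\ln\theta+\ln\det\bigl(\theta^{-2(p_{\max\{A,B\}}-p_{A})}b_{AB}\bigr),
\]
so that $\varrho+\ln\theta$ equals one half the log of a matrix whose entries are $b_{AB}$ (on and above the ``diagonal'' in the ordering) times nonpositive powers of $\theta$ on the strictly sub-diagonal part. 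Under (\ref{eq:limitbABtobhchhsfN}) and the uniform divergence of $\theta$, this matrix converges to the identity; applying $\bD^{k}$ and expanding by the chain/Leibniz rule, each term is a product of factors of the form $\bD^{j}(b_{AB}-\delta_{AB})$ (controlled by $K_{l}\langle\ln\theta\rangle^{a_l}$ from the hypothesis (\ref{eq:limitbABtobhchhsfN}) read in $C^l$), $\bD^{j}$ of the $\theta$-powers (which produce factors $\ln\theta$ times $\bD^{j}\ln\theta$ times $\bD^{j}p_A$, the latter smooth and bounded), and $\bD^{j}\ln\theta$ itself. So the issue is closing the loop on $\bD^{k}\ln\theta$.

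For that I would use the evolution of the logarithmic volume density: since $\partial_t\varrho=-N\theta$ (the mean-curvature equation, with vanishing shift), equivalently $\hat U\varrho=-1$ where $\hat U=\theta^{-1}N^{-1}\partial_t$, one gets $\ln\hat N=\ln\theta+\ln N$ and a pointwise relation allowing $\ln\theta$ to be recovered from $\varrho$ and $\ln\hat N$ up to lower-order geometric quantities controlled by (\ref{eq:mKlimsfN}). More directly, differentiating the identity $\varrho+\ln\theta=\tfrac12\ln\det(\cdots)$ spatially and isolating the principal term gives a scalar equation of the schematic form
\[
\bD^{k}(\varrho+\ln\theta)=(\text{lower order in }\ln\theta)+(\text{terms with }\bD^{k}b_{AB})+(\text{terms with }\bD^{k}\ln\hat N),
\]
where $\bD^{k}\ln\hat N$ is bounded by (\ref{eq:EbfIlnhN}) and the $\bD^{k}b_{AB}$ terms by the assumed $C^l$-convergence (\ref{eq:limitbABtobhchhsfN}); I then feed in the inductive hypothesis that $|\bD^{j}\varrho|_{\bh},|\bD^{j}\ln\theta|_{\bh}\leq K_{j}\langle\ln\theta\rangle^{b_j}$ for $j<k$ to conclude the $k$-th order bound, with the exponent $b_l$ growing by a controlled amount at each step (it absorbs the extra logarithmic factor $\langle\ln\theta\rangle$ produced each time the chain rule hits a $\theta$-power). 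Once $|\bD^{k}\ln\theta|_{\bh}\leq K_l\langle\ln\theta\rangle^{b_l}$ is in hand, the companion estimate $|\bD^{k}\varrho|_{\bh}\leq K_l\langle\ln\theta\rangle^{b_l}$ follows by combining it with the bound on $\bD^{k}(\varrho+\ln\theta)$; and the alternative form with $\langle\varrho\rangle$ in place of $\langle\ln\theta\rangle$ follows from (\ref{eq:varrholnthetaCz}), which gives $|\varrho+\ln\theta|\leq C_\vare\theta^{-\vare}$ and hence $\langle\varrho\rangle$ and $\langle\ln\theta\rangle$ are comparable on $M_0$ after possibly shrinking $t_0$.

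The main obstacle I anticipate is the bookkeeping of the mixed terms in the Leibniz expansion of $\bD^k\ln\det(\cdots)$: the off-diagonal entries carry genuinely $t$-dependent weights $\theta^{-2(p_B-p_A)}$ with $p_B>p_A$, and although these weights decay (helping convergence), their spatial derivatives reintroduce $\bD\ln\theta$ and $\bD p_A$ factors, so one must check that the induction really does close — i.e. that no term on the right carries more than $k$ derivatives falling on $\ln\theta$ after the principal term has been isolated. Concretely, the worry is a term where all $k$ derivatives hit a single weight $\theta^{-2(p_B-p_A)}$, producing $\theta^{-2(p_B-p_A)}(\ln\theta)^k(\bD\ln\theta)^{k}+\cdots$ plus a genuine $\bD^k\ln\theta$ piece; the latter must be moved to the left-hand side and absorbed into the $\bD^k(\varrho+\ln\theta)$ being estimated, using that its coefficient is $O(\theta^{-2(p_B-p_A)})\to 0$ so that after shrinking $t_0$ it is a small perturbation of the identity coefficient. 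This is routine but is where care is needed; everything else is a standard interpolation-free induction using the stated hypotheses (\ref{eq:limitbABtobhchhsfN}), (\ref{eq:EbfIlnhN}), and (\ref{eq:varrholnthetaCz}).
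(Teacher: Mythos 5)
Your second step (differentiating the determinant identity, noting that the only top-order occurrence of $\bD^{k}\ln\theta$ on the right carries a coefficient of size $\theta^{-2(p_{A}-p_{B})}\rightarrow 0$ and absorbing it) matches the second half of the paper's argument. But the proposal has a genuine gap in the first half: the determinant identity, however carefully you expand it, produces exactly one scalar relation between the two top-order unknowns $\bD^{k}\varrho$ and $\bD^{k}\ln\theta$ (after absorption it controls $\bD^{k}\varrho+(1-\epsilon)\bD^{k}\ln\theta$ in terms of lower-order and decaying quantities), and it cannot bound either quantity separately. Your induction is therefore circular as written: you say ``once $|\bD^{k}\ln\theta|\leq K_{l}\ldr{\ln\theta}^{b_{l}}$ is in hand, the companion estimate for $\bD^{k}\varrho$ follows,'' but no mechanism ever puts the $k$-th order bound on $\ln\theta$ (or on $\varrho$) in hand. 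The ``pointwise relation allowing $\ln\theta$ to be recovered from $\varrho$ and $\ln\hN$'' does not exist: $\ln\hN=\ln\theta+\ln N$ would require control of $\bD^{k}\ln N$, which is not among the hypotheses (only $\bD^{k}\ln\hN$ is assumed, precisely because $\ln\theta$ is a conclusion, not an input), and \eqref{eq:mKlimsfN} gives no such recovery. Relatedly, the term ``$\bD^{k}\ln\hN$'' you place on the right-hand side of the differentiated determinant identity is spurious -- $\hN$ does not appear in that identity at all -- and your sign $\d_{t}\varrho=-N\theta$ is off ($\d_{t}\varrho=\hN$ with these conventions, so $\hU\varrho=1$).

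The missing ingredient is an independent, a priori bound on the spatial derivatives of $\varrho$, and this is where the lapse hypothesis \eqref{eq:EbfIlnhN} actually enters. From $\d_{t}\varrho=\hN$ one gets that $\hU(E_{\bfI}\varrho)$ is a linear combination of products $E_{\bfI_{1}}\ln\hN\cdots E_{\bfI_{m}}\ln\hN$ with $|\bfI_{1}|+\dots+|\bfI_{m}|=|\bfI|$, hence is bounded by $C\ldr{\ln\theta}^{a}$, which via \eqref{eq:varrholnthetaCz} is $C\ldr{\varrho}^{a}$. Integrating this along integral curves of $\hU$, parametrised so that $\varrho\circ\g(s)=s$ (possible since $\hU\varrho=1$), yields the polynomial bound \eqref{eq:EbfIvarrhoest} on $\bD^{k}\varrho$ with no reference to $\bD^{k}\ln\theta$. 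Only then does your determinant-identity bookkeeping close: the absorbed relation expresses $\bD^{k}\ln\theta$ in terms of $\bD^{k}\varrho$ (now controlled), decaying terms, and lower-order $\ln\theta$-derivatives handled by induction, giving \eqref{eq:EbfIlnthetaest}. Without this step the proof does not go through.
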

\begin{remark}
  Given (\ref{eq:EbfIlnthetaest}), one can return to (\ref{eq:varrhothetadeterminant}) in order to deduce that, for every $0\leq l\in\nn{}$,
  there is a constant $C_{l}$ such that
  \begin{equation}\label{eq:varrholnthetaCl}
    |\bD^{k}(\varrho+\ln\theta)|_{\bh}\leq C_{l}\theta^{-\vare}
  \end{equation}
  on $M_{0}$ for all $k\leq l$. 
\end{remark}
\begin{remark}\label{remark:hNcontrolyieldslnthetacontrol}
  If the conditions of Theorem~\ref{thm:improvingasymptoticsgeneral} are satisfied, then the conditions of the lemma are satisfied. 
\end{remark}
\begin{proof}
  Let $\{E_{i}\}$ be a local orthonormal frame with respect to $\bh$. In what follows, we then use the notation $\bfI$,
  $E_{\bfI}=E_{I_{1}}\cdots E_{I_{m}}$ and $|\bfI|=m$ for vectors of the form $\bfI=(I_{1},\dots,I_{m})$, where $I_{j}\in\{1,\dots,n\}$. 
  
  Due to \cite[(7.9), p.~74]{RinWave}, $\d_{t}\varrho=\hN$, so that $\d_{t}E_{\bfI}\varrho=E_{\bfI}\hN$. Due to this equality, it follows
  that $\hU(E_{\bfI}\varrho)$ can be written as a linear combination of terms of the form $E_{\bfI_{1}}\ln\hN\cdots E_{\bfI_{k}}\ln\hN$, where
  $|\bfI_{1}|+\dots+|\bfI_{k}|=|\bfI|$. Letting $\g$ be an integral curve of $\hU$, note that
  \begin{equation}\label{eq:varrhocges}
    \frac{d}{ds}\varrho\circ\g=(\hU\varrho)\circ\g=1;
  \end{equation}
  cf. \cite[(7.9), p.~74]{RinWave}. We can thus choose the parametrisation of $\g$ in such a way that $\varrho\circ\g(s)=s$. Then
  \[
  \left|\frac{d}{ds}[(E_{\bfI}\varrho)\circ\g](s)\right|\leq C_{l}\ldr{\ln\theta\circ\g(s)}^{c_{l}}\leq C_{l}\ldr{s}^{c_{l}},
  \]
  where we appealed to (\ref{eq:varrholnthetaCz}), (\ref{eq:EbfIlnhN}) and the fact that $\varrho\circ\g(s)=s$. This estimate can be integrated in
  order to deduce that (\ref{eq:EbfIvarrhoest}) holds. In order to obtain these estimates, we, again, appealed to (\ref{eq:varrholnthetaCz}) and
  the fact that $\varrho\circ\g(s)=s$.

  In order to estimate the derivatives of $\ln\theta$, note that (\ref{eq:varrhothetadeterminant}) yields
  \[
  \exp(2\varrho+2\ln\theta)=b_{11}\cdots b_{nn}+R,  
  \]
  where $R$, up to signs, consists of terms containing $k\geq 1$ factors of the form $\theta^{-2(p_{A}-p_{B})}b_{AB}$, $A>B$, and $n-k$ factors of the
  form $b_{AB}$, $A\leq B$. Differentiating this equality once yields
  \begin{equation}\label{eq:Eilnthetabaeq}
    2\exp(2\varrho+2\ln\theta)E_{i}\varrho+2\exp(2\varrho+2\ln\theta)E_{i}\ln\theta=E_{i}(b_{11}\cdots b_{nn})+E_{i}R.  
  \end{equation}
  Note that the factors $\exp(2\varrho+2\ln\theta)$ converge to $1$ due to (\ref{eq:varrholnthetaCz}). Moreover, the first term on the right hand
  side converges to zero as $\theta^{-\vare}$. Consider $E_{i}R$. It contains terms with two types of factors. Either $E_{i}$ hits
  $\theta^{-2(p_{A}-p_{B})}b_{AB}$, $A>B$, in which case the result is
  \[
  -2E_{i}(p_{A}-p_{B})\ln\theta\cdot \theta^{-2(p_{A}-p_{B})}b_{AB}-2(p_{A}-p_{B})E_{i}(\ln\theta)\theta^{-2(p_{A}-p_{B})}b_{AB}+\theta^{-2(p_{A}-p_{B})}E_{i}b_{AB}.
  \]
  The first term converges to zero faster than $\theta^{-\vare}$, the second term can be written as $E_{i}(\ln\theta)$ times a factor that
  converges to zero faster than $\theta^{-\vare}$, and the last term converges to zero faster than $\theta^{-\vare}$. Second, if $E_{i}$ hits
  a factor of the form $b_{AB}$, $A\leq B$, then the result converges to zero at least at the rate $\theta^{-\vare}$. To conclude, the right hand
  side of the equality (\ref{eq:Eilnthetabaeq}) contains a term including the factor $E_{i}\ln\theta$. However, in this term, the factor multiplying
  $E_{i}\ln\theta$ converges to zero faster than $\theta^{-\vare}$. The term on the right hand side including the factor $E_{i}\ln\theta$ can thus be
  combined with the second term on the left hand side. This yields a bound of $E_{i}\ln\theta$ in terms of $E_{i}\varrho$ and decaying terms.
  In order to estimate higher order derivatives, one can proceed similarly, differentiating (\ref{eq:Eilnthetabaeq}) further. By an inductive
  argument, all but the highest order derivatives appearing on the right hand side can be estimated. The highest order derivative of $\ln\theta$
  appearing on the right hand side will, in analogy with the above, be multiplied by a factor converging to zero. This can then be moved to the
  left hand side, as in the first step. At the end, we then obtain the estimate (\ref{eq:EbfIlnthetaest}).
\end{proof}

\subsection{Components of the eigenvector fields}

Next, introduce the matrices $X_{A}^{B}$ and $Y_{A}^{B}$ by the requirement that $X_{A}=X_{A}^{B}\msX_{B}$ and $\msX_{A}=Y_{A}^{B}X_{B}$, where
$\{\msX_{A}\}$ and $\{X_{A}\}$ are defined in the statement of Theorem~\ref{thm:improvingasymptoticsgeneral}. Then
\begin{equation}\label{eq:XABdefetc}
  X_{A}^{B}=\msY^{B}(X_{A}),\ \ \
  Y_{A}^{B}=Y^{B}(\msX_{A}),\ \ \
  X_{A}^{B}Y_{B}^{C}=Y_{A}^{B}X_{B}^{C}=\delta^{C}_{A}.
\end{equation}
These matrices satisfy the following relations. 

\begin{lemma}\label{lemma:XABest}
  Given the assumptions and notation of Theorem~\ref{thm:improvingasymptoticsgeneral}, there is an $\eta>0$ and, for all $k\in\nn{}$, a constant $C_{k}$,
  such that
  \begin{align}
    |\bD^{k}(X^{A}_{B}-\delta^{A}_{B})|_{\bh} \leq & C_{k}\theta^{-2\eta}\min\{1,\theta^{-2(p_{B}-p_{A})}\},\label{eq:bmsYXB}\\
    |\bD^{k}(Y^{A}_{B}-\delta^{A}_{B})|_{\bh} \leq & C_{k}\theta^{-2\eta}\min\{1,\theta^{-2(p_{B}-p_{A})}\},\label{eq:YAbmsXB}\\
    |\bD^{k}(\bmu_{A}+p_{A}\ln\theta)|_{\bh} \leq & C_{k}\theta^{-2\eta}\label{eq:bmuApAlntheta}
  \end{align}
  on $M_{0}$ for all $A$ and $B$. 
\end{lemma}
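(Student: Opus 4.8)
The plan is to obtain the three estimates in two stages, crude then refined, the refined ones being forced by the fact that $\mK$ is symmetric not with respect to $\bh$ but with respect to the degenerating metric $\bge$ (it is the Weingarten map of the leaves divided by $\theta$, hence $\bge$-self-adjoint).

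In the first stage one derives the \emph{crude} bounds $|\bD^{k}(X^{A}_{B}-\delta^{A}_{B})|_{\bh}+|\bD^{k}(Y^{A}_{B}-\delta^{A}_{B})|_{\bh}\leq C_{k}\theta^{-\vare}$ for $A\neq B$, together with $|\bD^{k}(\ell_{A}-p_{A})|_{\bh}\leq C_{k}\theta^{-\vare}$ and $|\bD^{k}(b_{AB}-\delta_{AB})|_{\bh}\leq C_{k}\theta^{-\vare}$. The last is (\ref{eq:limitbABtobhchhsfN}) read off in the frame $\{\msX_{A}\}$. For the others, set $E^{A}_{B}:=\mK(\msY^{A},\msX_{B})-p_{B}\delta^{A}_{B}=(\mK-\msK)(\msY^{A},\msX_{B})$, so that $|\bD^{k}E^{A}_{B}|_{\bh}\leq C_{k}\theta^{-\vare}$ by (\ref{eq:mKlimsfN}); since $\msK$ has distinct eigenvalues, so does $\mK$ for $t\leq t_{0}$, and the eigenvector relation $\mK X_{B}=\ell_{B}X_{B}$ becomes $(p_{A}-\ell_{B})X^{A}_{B}=-\sum_{C}E^{A}_{C}X^{C}_{B}$ for $A\neq B$, which is solved by a fixed-point argument using the spectral gap of $\msK$ (the diagonal entries being pinned by $|X_{B}|_{\bh}=1$); $Y$ is the matrix inverse of $X$. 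No control of $\bD^{k}\ln\theta$ is needed here, as these quantities are built algebraically out of $\mK$ and $\msK$.

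The second stage is the heart of the proof: upgrading the off-diagonal estimates in the configuration picked out by the factor $\theta^{-2(p_{B}-p_{A})}$, where the ordering $p_{1}<\cdots<p_{n}$ is essential. From $\bge(\msX_{A},\mK\msX_{B})=\bge(\mK\msX_{A},\msX_{B})$ one gets
\begin{equation*}
  \sum_{C}b_{AC}\theta^{-2p_{\max\{A,C\}}}\mK(\msY^{C},\msX_{B})=\sum_{C}b_{BC}\theta^{-2p_{\max\{B,C\}}}\mK(\msY^{C},\msX_{A}).
\end{equation*}
Fix $A<B$. As $\theta\to\infty$ the left-hand side is dominated by the power $\theta^{-2p_{A}}$, with coefficient $b_{AA}\mK(\msY^{A},\msX_{B})+\sum_{C<A}b_{AC}\mK(\msY^{C},\msX_{B})$, whereas every term on the right-hand side carries a power $\theta^{-2p_{e}}$ with $p_{e}\geq p_{B}>p_{A}$; matching the two sides forces this coefficient to be of size $\theta^{-2(p_{B}-p_{A})}$ times the size of the remaining coefficients, all of which are $O(\theta^{-\vare})$ by the first stage. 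To make this self-improving I would introduce
\begin{equation*}
  M:=\sup_{t\leq t_{0}}\ \sup_{A<B}\ \theta^{\vare+2(p_{B}-p_{A})}\,|\mK(\msY^{A},\msX_{B})|_{\bh}
\end{equation*}
and deduce from the identity above that $M\leq C\theta_{0}^{-\vare}M+C$, hence $M<\infty$ once $t_{0}$ is small; the point is that in the residual sum $\sum_{C<A}b_{AC}\mK(\msY^{C},\msX_{B})$ the weight $\theta^{2(p_{B}-p_{A})}$ is absorbed by $\theta^{2(p_{C}-p_{A})}\leq 1$, leaving the small factor $|b_{AC}|=O(\theta^{-\vare})$ to supply the contraction. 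This yields $|\mK(\msY^{A},\msX_{B})-p_{B}\delta^{A}_{B}|_{\bh}\leq C\theta^{-2\eta}\min\{1,\theta^{-2(p_{B}-p_{A})}\}$ (the case $p_{A}>p_{B}$ being the crude bound), and feeding it back into the eigenvector relation of the first stage gives (\ref{eq:bmsYXB}) and (\ref{eq:YAbmsXB}); the quadratic correction terms appearing there are themselves refined by the same power-hierarchy argument, so no circularity arises.

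For the $C^{k}$ versions one differentiates everything above, the only new ingredient being that $\bD^{k}$ applied to the explicit factors $\theta^{-2p_{A}}$ produces powers of $\bD^{j}\ln\theta$; these are controlled by (\ref{eq:EbfIlnthetaest}) of Lemma~\ref{lemma:bDkvarrholnthetaest} (applicable by Remark~\ref{remark:hNcontrolyieldslnthetacontrol}), at the cost of harmless factors $\ldr{\ln\theta}^{a}$ that are absorbed by taking $\eta$ slightly smaller. Finally, (\ref{eq:bmuApAlntheta}) follows from $e^{2\bmu_{A}}=\bge(X_{A},X_{A})=\sum_{C,D}X^{C}_{A}X^{D}_{A}b_{CD}\theta^{-2p_{\max\{C,D\}}}$: using (\ref{eq:bmsYXB}) and the crude bound on $b_{CD}$ one checks that every term other than $b_{AA}\theta^{-2p_{A}}$ is smaller by a factor $\theta^{-2\eta}$, so $\theta^{2p_{A}}e^{2\bmu_{A}}=b_{AA}(1+O(\theta^{-2\eta}))$ in $C^{k}$, and taking logarithms (with $\ln b_{AA}=O(\theta^{-\vare})$) gives $|\bD^{k}(\bmu_{A}+p_{A}\ln\theta)|_{\bh}\leq C_{k}\theta^{-2\eta}$, once more using Lemma~\ref{lemma:bDkvarrholnthetaest} for the $\bD^{k}\ln\theta$ terms. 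The main obstacle is the bookkeeping in the second stage --- checking that in the symmetry identity only the term carrying $\mK(\msY^{A},\msX_{B})$ itself occupies the top power $\theta^{-2p_{A}}$ with an $O(1)$ coefficient, and that $M$ feeds back with a genuine contraction; the $C^{k}$ induction adds nothing conceptual.
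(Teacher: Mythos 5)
Your argument is correct in its essentials, but it takes a genuinely different route from the paper's. The paper's proof never touches the $\bge$-symmetry of $\mK$: it starts from the crude bounds, then exploits the identity between the two representations of the induced metric, $\sum_{A,B}b_{AB}\theta^{-2p_{\max\{A,B\}}}\msY^{A}\otimes\msY^{B}=\sum_{C}e^{2\bmu_{C}}Y^{C}\otimes Y^{C}$, evaluated on \emph{mixed} pairs: on $(\msX_{A},X_{A})$ the right-hand side collapses to the single term $e^{2\bmu_{A}}Y^{A}(\msX_{A})$ and the left-hand side carries only powers $\theta^{-2p_{\max\{A,B\}}}\leq\theta^{-2p_{A}}$, so (\ref{eq:bmuApAlntheta}) drops out from the crude bounds alone (plus (\ref{eq:EbfIlnthetaest})); evaluating on $(\msX_{A},X_{C})$, $C<A$, then gives the refined bound (\ref{eq:YAbmsXB}) with no self-improvement, and (\ref{eq:bmsYXB}) is obtained from $X=Y^{-1}$ by a triangular decomposition $Y=U+L$ and a Neumann-series bookkeeping showing the decay structure is preserved under powers and multiplication by $U^{-1}$. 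You instead refine the components $\mK(\msY^{A},\msX_{B})$ first -- which is exactly the content of the paper's subsequent Lemma~\ref{lemma:mKmsYAmsXBestimate}, there deduced \emph{from} Lemma~\ref{lemma:XABest}, so you reverse the logical order of the two statements -- via the $\bge$-symmetry identity and a sup-norm absorption, then bootstrap the eigenvector relation for $X^{A}_{B}$, and only at the end read off $\bmu_{A}$ from $e^{2\bmu_{A}}=\bge(X_{A},X_{A})$ (where, as you correctly note, the refined $X$-bounds are indispensable because both slots bring in the low powers $\theta^{-2p_{\max\{C,D\}}}$ with $C,D<A$). What your route buys is that the refined $\mK$-component estimate comes for free along the way; what the paper's route buys is the complete absence of absorption arguments. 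Two points you should still tidy up if you write this out: the deduction ``$M\leq C\theta_{0}^{-\vare}M+C$, hence $M<\infty$'' is vacuous unless $M$ is known finite a priori (the crude bound only gives $M\lesssim\sup\theta^{2(p_{B}-p_{A})}$), so you need to define the sup on truncated regions $\bM\times[t_{1},t_{0}]$, obtain a bound uniform in $t_{1}$ and let $t_{1}\rightarrow 0$ (and similarly for the second bootstrap on $X^{A}_{B}$); and ``$Y$ is the matrix inverse of $X$'' does not by itself yield (\ref{eq:YAbmsXB}) -- you must either prove that the band/decay structure survives inversion (the paper's Neumann-series step) or run the dual bootstrap on the relation $\sum_{D}Y^{A}_{D}\,\mK(\msY^{D},\msX_{C})=\ell_{A}Y^{A}_{C}$.
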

\begin{proof}
  Note, to begin with, that (\ref{eq:mKlimsfN}), (\ref{eq:limitbABtobhchhsfN}) and the definition of $X_{A}$, $Y^{A}$, $\msX_{A}$ and
  $\msY^{A}$ yield the conclusion that for all $k\in\nn{}$, there are constants $C_{k}$ and $\eta>0$ such that 
  \begin{equation}\label{eq:bmsYAXBetcbasicest}
    |\bD^{k}[\msY^{A}(X_{B})-\delta^{A}_{B}]|_{\bh}+|\bD^{k}[Y^{A}(\msX_{B})-\delta^{A}_{B}]|_{\bh}
    +|\bD^{k}(b_{AB}-\delta_{AB})|_{\bh}\leq C_{k}\theta^{-2\eta}
  \end{equation}
  on $M_{0}$ for all $A$, $B$. In particular, (\ref{eq:bmsYXB}) and (\ref{eq:YAbmsXB}) hold in case $B\leq A$. On the other hand, 
  \begin{equation}\label{eq:tworepresentations}
    \textstyle{\sum}_{A,B}b_{AB}\theta^{-2p_{\max\{A,B\}}}\msY^{A}\otimes \msY^{B}=\textstyle{\sum}_{C}e^{2\bmu_{C}}Y^{C}\otimes Y^{C}. 
  \end{equation}
  Evaluating this equality on $(\msX_{A},X_{A})$ yields
  \[
  \textstyle{\sum}_{B}b_{AB}\theta^{-2p_{\max\{A,B\}}}\msY^{B}(X_{A})=e^{2\bmu_{A}}Y^{A}(\msX_{A})
  \]
  (no summation on $A$). Combining this equality with (\ref{eq:EbfIlnthetaest}) and (\ref{eq:bmsYAXBetcbasicest}) yields
  (\ref{eq:bmuApAlntheta}). Next, evaluating (\ref{eq:tworepresentations}) on $(\msX_{A},X_{C})$ with $C<A$ yields
  \[
  \textstyle{\sum}_{B}b_{AB}\theta^{-2p_{\max\{A,B\}}}\msY^{B}(X_{C})=e^{2\bmu_{C}}Y^{C}(\msX_{A}).
  \]
  Combining this equality with (\ref{eq:bmuApAlntheta}) and (\ref{eq:bmsYAXBetcbasicest}) yields (\ref{eq:YAbmsXB}). 

  In order to prove (\ref{eq:bmsYXB}), note that $YX=\mathrm{Id}$, where $X$ and $Y$ are the matrices with components
  $X^{A}_{B}$ and $Y^{A}_{B}$ respectively. Next, note that $Y=U+L$, where $U$ is upper triangular, $L$ is lower triangular and
  the diagonal components of $L$ vanish. Moreover, for $A>B$, $L_{A}^{B}=O(\theta^{-2(p_{A}-p_{B}+\eta)})$ in any $C^{m}$-norm. Moreover,
  $U-\mathrm{Id}=O(\theta^{-2\eta})$ in any $C^{m}$-norm. In particular,
  \[
  X=Y^{-1}=(\mathrm{Id}+U^{-1}L)^{-1}U^{-1};
  \]
  note that $\|U^{-1}L\|\rightarrow 0$ in the direction of the singularity, so that we can assume $\|U^{-1}L\|\leq 1/2$ in the arguments
  to follow. Let $R:=-U^{-1}L$. Then
  \[
  X=\textstyle{\sum}_{k=0}^{\infty}R^{k}U^{-1}.
  \]
  Note that $R^{k}=O(\theta^{-2k\eta})$ in any $C^{m}$ norm. Moreover, for $k$ large enough, say $k\geq K$, $2k\eta\geq 2(p_{n}-p_{1}+\eta)$
  on $\bM$. It is therefore straightforward to prove the desired estimate for the terms with $k\geq K$, and we only need to focus on a
  finite number of terms of the form $R^{k}U^{-1}$. However, it can be verified that the components of $R$ satisfy $R_{A}^{n}=0$, $A=1,\dots,n$,
  and, for $B<n$,
  \[
  R_{A}^{B}=\left\{\begin{array}{ll} O(\theta^{-2(p_{B+1}-p_{B}+\eta)}) & B\geq A, \\ O(\theta^{-2(p_{A}-p_{B}+\eta)}) & B<A.\end{array}\right.
  \]
  Moreover, this holds in any $C^{m}$ norm. It can be verified that this structure is preserved when taking positive integer powers. Finally,
  when multiplying a matrix with this structure on the right with $U^{-1}$ (which is upper triangular), the lower triangular part of the resulting
  matrix satisfies the same conditions as the lower triangular part of $R$ (i.e., the estimates satisfied by $R_{A}^{B}$, $B<A$, are also satisfied
  by $(R^{m}U^{-1})_{A}^{B}$ for $m\geq 1$ and $B<A$). It follows that (\ref{eq:bmsYXB}) holds for $B>A$. The lemma follows.   
\end{proof}

Using this information, we are in a position to prove that (\ref{eq:mKestrefined}) holds. 

\begin{lemma}\label{lemma:mKmsYAmsXBestimate}
  Given the assumptions and notation of Theorem~\ref{thm:improvingasymptoticsgeneral}, there is an $\eta>0$ and, for all $k\in\nn{}$,
  a constant $C_{k}$, such that
  \begin{equation}\label{eq:mKmsYAmsXBestimate}
    |D^{k}[\mK(\msY^{A},\msX_{B})-p_{B}\delta^{A}_{B}]|_{\bh} \leq C_{k}e^{-2\eta}\min\{1,\theta^{-2(p_{B}-p_{A})}\}
  \end{equation}
  (no summation on $B$) on $M_{0}$.
\end{lemma}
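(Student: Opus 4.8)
The plan is to compute $\mK(\msY^{A},\msX_{B})$ explicitly in terms of the change-of-basis matrices $X_{C}^{D}$, $Y_{C}^{D}$ from the paragraph preceding Lemma~\ref{lemma:XABest} and the eigenvalues of $\mK$, and then to feed in the estimates of Lemma~\ref{lemma:XABest}. Write $\mK X_{C}=\ell_{C}X_{C}$ (no summation). Since $\mK\rightarrow\msK$ in $C^{k}$ at rate $\theta^{-2\eta}$ and $\msK$ has distinct eigenvalues bounded away from one another, the $\ell_{C}$ are (for $t_{0}$ small) distinct, bounded apart, globally well defined and smooth, and $|\bD^{k}(\ell_{C}-p_{C})|_{\bh}\leq C_{k}\theta^{-2\eta}$ on $M_{0}$ for every $k$ (the spectral projections of $\mK$ converge to those of $\msK$ in $C^{k}$ because the spectrum is simple). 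Using $\msX_{B}=Y_{B}^{C}X_{C}$ and $X_{C}=X_{C}^{D}\msX_{D}$ we obtain
\begin{equation}\label{eq:mKmsXmsYdecomp}
  \mK(\msY^{A},\msX_{B})=\msY^{A}(\mK\msX_{B})=\textstyle{\sum}_{C}Y_{B}^{C}\ell_{C}X_{C}^{A}
\end{equation}
(no summation on $A$ or $B$), so the object to estimate is the right hand side minus $p_{B}\delta^{A}_{B}$.

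The rough estimate, i.e.\ \eqref{eq:mKmsYAmsXBestimate} with the factor $1$ in the minimum, follows by substituting $Y_{B}^{C}=\delta^{C}_{B}+(Y_{B}^{C}-\delta^{C}_{B})$, $X_{C}^{A}=\delta^{A}_{C}+(X_{C}^{A}-\delta^{A}_{C})$ and $\ell_{C}=p_{C}+(\ell_{C}-p_{C})$ into \eqref{eq:mKmsXmsYdecomp} and expanding. The term with no correction factor is $\delta^{C}_{B}p_{C}\delta^{A}_{C}=p_{B}\delta^{A}_{B}$ and cancels; every remaining term contains one of $Y_{B}^{C}-\delta^{C}_{B}$, $X_{C}^{A}-\delta^{A}_{C}$, $\ell_{C}-p_{C}$, each $O(\theta^{-2\eta})$ in $C^{k}$ by Lemma~\ref{lemma:XABest} and the above, the other factors being bounded in $C^{k}$. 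A Leibniz expansion of $\bD^{k}$ then gives $|\bD^{k}[\mK(\msY^{A},\msX_{B})-p_{B}\delta^{A}_{B}]|_{\bh}\leq C_{k}\theta^{-2\eta}$.

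For the refinement, fix $A<B$, so $p_{A}<p_{B}$ (when $p_{B}\leq p_{A}$ nothing further is needed). The input I would use is the subdiagonal structure hidden in the proof of Lemma~\ref{lemma:XABest}: in the ordering $p_{1}<\cdots<p_{n}$, for $C>D$ one has $|\bD^{k}X_{C}^{D}|_{\bh}+|\bD^{k}Y_{C}^{D}|_{\bh}\leq C_{k}\theta^{-2\eta}\theta^{-2(p_{C}-p_{D})}$, while for $C\leq D$ only $|\bD^{k}(X_{C}^{D}-\delta^{D}_{C})|_{\bh}+|\bD^{k}(Y_{C}^{D}-\delta^{D}_{C})|_{\bh}\leq C_{k}\theta^{-2\eta}$ is available. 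Split the sum in \eqref{eq:mKmsXmsYdecomp} as $C=B$, $C=A$, $C\notin\{A,B\}$. The $C=B$ term $Y_{B}^{B}\ell_{B}X_{B}^{A}$ has $X_{B}^{A}$ subdiagonal (as $B>A$), hence is $O(\theta^{-2\eta}\theta^{-2(p_{B}-p_{A})})$; the $C=A$ term $Y_{B}^{A}\ell_{A}X_{A}^{A}$ has $Y_{B}^{A}$ subdiagonal, giving the same bound. For $C\notin\{A,B\}$ the term $Y_{B}^{C}\ell_{C}X_{C}^{A}$ carries two correction factors; treating $C>B$, $A<C<B$, $C<A$ separately and using $p_{C}-p_{A}\geq p_{B}-p_{A}$, $(p_{B}-p_{C})+(p_{C}-p_{A})=p_{B}-p_{A}$, and $p_{B}-p_{C}\geq p_{B}-p_{A}$ respectively (with $\theta\geq\theta_{0}$, $\theta\rightarrow\infty$), each such term is $O(\theta^{-4\eta}\theta^{-2(p_{B}-p_{A})})$. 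The $\ell_{C}-p_{C}$ corrections are handled identically. Summing, applying Leibniz, and absorbing $\theta^{-4\eta}$ into $\theta^{-2\eta}$ (after shrinking $\eta$) gives the bound $C_{k}\theta^{-2\eta}\theta^{-2(p_{B}-p_{A})}$ for $A<B$, which together with the rough estimate is \eqref{eq:mKmsYAmsXBestimate}.

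The main obstacle is the triangular bookkeeping in the last step: one has to verify that every off-diagonal contribution to $\sum_{C}Y_{B}^{C}\ell_{C}X_{C}^{A}$ accumulates exponents summing to at least $p_{B}-p_{A}$, which depends on the ordering of the $p_{C}$ and on the precise subdiagonal decay of $X$ and $Y$. In practice I would extract that subdiagonal structure directly from the proof of Lemma~\ref{lemma:XABest} in the form stated above, rather than using only the symmetric bounds \eqref{eq:bmsYXB}--\eqref{eq:YAbmsXB}. The eigenvalue convergence and the Leibniz/telescoping manipulations are routine.
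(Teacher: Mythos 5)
Your argument is correct and is essentially the paper's own proof: the same decomposition $\mK(\msY^{A},\msX_{B})=\sum_{C}\ell_{C}X_{C}^{A}Y_{B}^{C}$, the $C^{k}$-convergence $\ell_{C}\to p_{C}$, and Lemma~\ref{lemma:XABest}, with the triangular bookkeeping you spell out being precisely what the paper leaves implicit in ``the desired conclusion again follows by appealing to Lemma~\ref{lemma:XABest}''. One small correction: the ``subdiagonal'' decay you propose to extract from the proof of Lemma~\ref{lemma:XABest} is already its statement --- (\ref{eq:bmsYXB})--(\ref{eq:YAbmsXB}) are not symmetric but contain the factor $\min\{1,\theta^{-2(p_{B}-p_{A})}\}$ --- so no detour through that proof is needed.
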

\begin{proof}
  Note, to begin with, that
  \[
  \mK^{A}_{B}:=\mK(\msY^{A},\msX_{B})=\textstyle{\sum}_{C}\ell_{C}\msY^{A}(X_{C})Y^{C}(\msX_{B})
  =\textstyle{\sum}_{C}\ell_{C}X^{A}_{C}Y^{C}_{B},
  \]
  where $\ell_{A}$ is the eigenvalue of $\mK$ corresponding to $X_{A}$. 
  Note also that $\ell_{A}-p_{A}=O(\theta^{-2\eta})$ in any $C^{k}$-norm due to (\ref{eq:mKlimsfN}). Combining these observations with
  Lemma~\ref{lemma:XABest} yields $\mK^{A}_{A}=p_{A}+O(\theta^{-2\eta})$ (no summation on $A$) in any $C^{k}$-norm. Similarly,
  $\mK_{B}^{A}=O(\theta^{-2\eta})$ for $B<A$ in any $C^{k}$-norm. Finally, for $A<B$, the desired conclusion again follows by appealing to
  Lemma~\ref{lemma:XABest}. The desired estimate follows. 
\end{proof}

Next, define $\lambda^{A}_{BC}$ by $[X_{B},X_{C}]=\lambda_{BC}^{A}X_{A}$. We wish to estimate $\lambda^{A}_{BC}$ in case $1+p_{A}-p_{B}-p_{C}\leq 0$; cf. the
definition of quiescent initial data on the singularity. 
\begin{lemma}\label{lemma:gammaonetwothree}
  Given the assumptions and notation of Theorem~\ref{thm:improvingasymptoticsgeneral}, fix $\bx\in\bM$ and assume that
  $1+p_{A}(\bx)-p_{B}(\bx)-p_{C}(\bx)\leq 0$ for some $A$ and $B\neq C$. Then there is a neighbourhood $U$ of $\bx$ in $\bM$, an $\eta>0$ and,
  for every $k\in\nn{}$, a constant $C_{k}$ such that
  \begin{equation}\label{eq:bgammaABCdec}
    |D^{k}(\theta^{-2(1+p_{A}-p_{B}-p_{C})}\lambda^{A}_{BC})|_{\bh} \leq C_{k}\theta^{-2\eta}
  \end{equation}
  on $U\times (0,t_{0}]$. 
\end{lemma}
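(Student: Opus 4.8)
The plan is to expand the Lie bracket $[X_B,X_C]$ over the fixed limiting eigenframe $\{\msX_D\}$ and reduce everything to Lemmas~\ref{lemma:XABest} and \ref{lemma:bDkvarrholnthetaest}. Writing $X_B=X_B^D\msX_D$, $X_C=X_C^E\msX_E$ and using $[\msX_D,\msX_E]=\g^F_{DE}\msX_F$ one obtains
\begin{equation*}
  \lambda^A_{BC}=\textstyle\sum_F Y^A_F\Big(\textstyle\sum_{D,E}X_B^D X_C^E\g^F_{DE}+\textstyle\sum_D X_B^D\,\msX_D(X_C^F)-\textstyle\sum_D X_C^D\,\msX_D(X_B^F)\Big).
\end{equation*}
Here $\g^F_{DE}$ and the $p_A$ are fixed smooth functions on $\bM$; by Lemma~\ref{lemma:XABest}, $|\bD^k(X_B^D-\delta_B^D)|_\bh\le C_k\theta^{-2\eta}\min\{1,\theta^{-2(p_B-p_D)}\}$ and similarly for $Y^A_F$ and for the derivatives $\msX_D(X_C^F)=\msX_D(X_C^F-\delta_C^F)$; and the weight $\theta^{-2w}$, $w:=1+p_A-p_B-p_C$, together with all $\theta$-powers that appear, is differentiated using $|\bD^k\ln\theta|_\bh\le K_k\ldr{\ln\theta}^{b_k}$ from Lemma~\ref{lemma:bDkvarrholnthetaest}. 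Thus the whole statement, including the $C^k$-part, reduces to showing that after multiplication by $\theta^{-2w}$ every summand above is $\le C_k\ldr{\ln\theta}^{a_k}\theta^{-2\eta'}$ for some fixed $\eta'>0$ (the polynomial $\ldr{\ln\theta}$-factors produced by $\bD^k$ are absorbed by shrinking $\eta'$).

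The key input is Condition~4 of Theorem~\ref{thm:improvingasymptoticsgeneral} together with the ordering $p_1<\cdots<p_n$: since $1+p_A(\bx)-p_B(\bx)-p_C(\bx)\le 0$, for \emph{every} triple $(F,D,E)$ with $1+p_F(\bx)\le p_D(\bx)+p_E(\bx)$ the quantity $\g^F_{DE}$ vanishes on a neighbourhood of $\bx$. There are finitely many triples, so I fix $U\ni\bx$ inside the intersection of these neighbourhoods, small enough that in addition $1+p_F-p_D-p_E\ge\delta>0$ on $U$ for each remaining (``surviving'') triple, and small enough that $|p_D|\le p_0<1$ on $U$ for all $D$ (possible by Remark~\ref{remark:pAsltone}). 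In particular the principal term $(F,D,E)=(A,B,C)$, which carries $\g^A_{BC}$, vanishes identically on $U$.

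It then remains to count decay exponents. For a $\g$-term $Y^A_F X_B^D X_C^E\g^F_{DE}$, Lemma~\ref{lemma:XABest} gives the bound $C\,\theta^{-2\eta m}\theta^{-2\Sigma}$, where $m$ is the number of indices off the diagonal value (each contributing a factor $\theta^{-2\eta}$) and $\Sigma:=(p_F-p_A)_++(p_B-p_D)_++(p_C-p_E)_+$; a surviving term has $(F,D,E)\ne(A,B,C)$, hence $m\ge 1$, and the elementary identity $\Sigma+w\ge(p_F-p_A)+(p_B-p_D)+(p_C-p_E)+w=1+p_F-p_D-p_E\ge\delta$ on $U$ yields $\theta^{-2w}\cdot(\text{term})\le C\theta^{-2\eta}\theta^{-2\delta}\le C\theta^{-2\eta}$. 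For a derivative term $Y^A_F X_B^D\,\msX_D(X_C^F)$ the last factor always carries a $\theta^{-2\eta}$ (even when $F=C$, since $X_C^C-1=O(\theta^{-2\eta})$), and the analogous inequality $\Sigma'+w\ge 1-p_D\ge 1-p_0>0$, with $\Sigma':=(p_F-p_A)_++(p_B-p_D)_++(p_C-p_F)_+$, gives the bound without using Condition~4 at all. The same exponent count, with extra $\ldr{\ln\theta}$-factors from $\bD^k$ and lower-order products of the same type, handles all spatial derivatives.

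The one point that needs care — and is essentially the only obstacle — is the borderline case $1+p_A(\bx)-p_B(\bx)-p_C(\bx)=0$ (for instance $n=3$, $A=1$, $\{B,C\}=\{2,3\}$, $p_1(\bx)=0$): there the weight $\theta^{-2w}$ may grow slowly over $U$, so one cannot afford a term that only decays at rate $\eta$ without a compensating positive spatial gap. What makes the argument go through is precisely that the two elementary inequalities $\Sigma+w\ge 1+p_F-p_D-p_E$ and $\Sigma'+w\ge 1-p_D$ show that every surviving term carries such a gap, strictly bounded below on $U$ by continuity of the $p_i$ and the choice of $U$; carrying out this uniform bookkeeping — and the routine conversion of the $\bD$-covariant estimates of Lemmas~\ref{lemma:XABest} and \ref{lemma:bDkvarrholnthetaest} into the product-rule estimates above — is where the work lies.
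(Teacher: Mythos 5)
Your proof is correct and takes essentially the same route as the paper: expand $Y^{A}([X_{B},X_{C}])$ over the limiting eigenframe $\{\msX_{D}\}$, bound the frame components with Lemma~\ref{lemma:XABest} (and control $\theta$-weights/derivatives via Lemma~\ref{lemma:bDkvarrholnthetaest}), invoke condition~4 together with continuity of the $p_{A}$ to handle the structure-constant terms, and use an exponent count with $p_{D}<1$ for the derivative terms. Your explicit $\Sigma+w$ bookkeeping is just a more detailed version of the paper's factorisation $(\theta^{2(p_{B}-p_{D})}X^{D}_{B})(\theta^{2(p_{C}-p_{E})}X^{E}_{C})(\theta^{2(p_{F}-p_{A})}Y^{A}_{F})(\theta^{-2(1+p_{F}-p_{D}-p_{E})}\g^{F}_{DE})$.
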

\begin{remark}\label{remark:gammaonetwothree}
  Assuming, in addition to the conditions of the lemma, that $n=3$ and $\msO^{1}_{23}=0$ on $\bM$, there is an $\eta>0$ and, for all $k\in\nn{}$, a
  constant $C_{k}$, such that
  \begin{equation}\label{eq:gammaonetwothree}
    |D^{k}(\lambda^{1}_{23})|_{\bh} \leq C_{k}\theta^{-2(p_{2}-p_{1}+\eta)}
  \end{equation}
  on $M_{0}$.
\end{remark}
\begin{proof}
  Compute
  \begin{equation}\label{eq:YAXBXCexpanded}
    Y^{A}([X_{B},X_{C}])=X_{B}^{D}\msX_{D}(X^{E}_{C})Y^{A}_{E}-X_{C}^{E}\msX_{E}(X^{D}_{B})Y^{A}_{D}
    +X^{D}_{B}X^{E}_{C}Y^{A}_{F}\g^{F}_{DE},
  \end{equation}
  where $\g^{F}_{DE}=\msY^{F}([\msX_{D},\msX_{E}])$. Assume that $1+p_{A}(\bx)-p_{B}(\bx)-p_{C}(\bx)\leq 0$. In order for this to hold, we have to
  have $B>A$ and $C>A$; cf. Remark~\ref{remark:pAsltone}. In what follows, we also assume, without loss of generality, that $B<C$. Combining these
  observations with
  Lemma~\ref{lemma:XABest}, it is clear that the first two terms on the right hand side of (\ref{eq:YAXBXCexpanded}) are
  $O(\theta^{-2(p_{B}-p_{A}+\eta)})$ in any $C^{k}$ norm. In particular, multiplying the first two terms on the right hand side of (\ref{eq:YAXBXCexpanded})
  with $\theta^{-2(1+p_{A}-p_{B}-p_{C})}$, it is clear that the resulting expression is $O(\theta^{-2(1-p_{C}+\eta)})$ in any $C^{k}$-norm. Next, consider
  \begin{equation*}
    \begin{split}
      & \theta^{-2(1+p_{A}-p_{B}-p_{C})}X^{D}_{B}X^{E}_{C}Y^{A}_{F}\g^{F}_{DE}\\
      = &
      (\theta^{2(p_{B}-p_{D})}X^{D}_{B})\cdot (\theta^{2(p_{C}-p_{E})}X^{E}_{C})\cdot (\theta^{2(p_{F}-p_{A})}Y^{A}_{F})\cdot
      (\theta^{-2(1+p_{F}-p_{D}-p_{E})}\g^{F}_{DE}).
    \end{split}
  \end{equation*}
  Due to Lemma~\ref{lemma:XABest}, the first three factors on the right hand side are bounded in any $C^{k}$ norm. In order for the last term to
  be non-zero, $1+p_{F}-p_{D}-p_{E}$ has to be strictly positive in a neighbourhood of $\bx$. The estimate (\ref{eq:bgammaABCdec}) follows. The
  proof of (\ref{eq:gammaonetwothree}) is similar. 
\end{proof}

\begin{cor}\label{cor:decayofproblematicterms}
  Given the assumptions and notation of Theorem~\ref{thm:improvingasymptoticsgeneral}, there is an $\eta>0$ and, for every $k\in\nn{}$, a constant
  $C_{k}$ such that
  \begin{equation}\label{eq:bgammaABCdecgeneralcase}
    |D^{k}(\theta^{-2(1+p_{A}-p_{B}-p_{C})}\lambda^{A}_{BC})|_{\bh} \leq C_{k}\theta^{-2\eta}
  \end{equation}
  on $M_{0}$ for all $A$, $B$ and $C$.
\end{cor}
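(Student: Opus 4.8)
The plan is to reduce the general statement to Lemma~\ref{lemma:gammaonetwothree} together with a compactness argument, since all the genuine content is already contained in that lemma. First I would dispose of the trivial case: if $B=C$ then $[X_{B},X_{C}]=0$, hence $\lambda^{A}_{BC}\equiv 0$ and there is nothing to prove. So fix, once and for all, a triple $(A,B,C)$ with $B\neq C$, and write $f:=1+p_{A}-p_{B}-p_{C}$, a smooth function on the compact manifold $\bM$.

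Next I would record two uniform bounds on $M_{0}:=\bM\times(0,t_{0}]$. On the one hand, expanding $\lambda^{A}_{BC}=Y^{A}([X_{B},X_{C}])$ as in (\ref{eq:YAXBXCexpanded}), using Lemma~\ref{lemma:XABest} to see that $X^{D}_{B}$, $Y^{A}_{E}$ and $\msX_{D}(X^{E}_{C})=\msX_{D}(X^{E}_{C}-\delta^{E}_{C})$ are all bounded together with their $\bh$-derivatives, and using that the structure functions $\g^{F}_{DE}=\msY^{F}([\msX_{D},\msX_{E}])$ are smooth and $t$-independent (hence bounded with all their derivatives on $\bM$), I get constants $C_{k}$ with $|D^{k}\lambda^{A}_{BC}|_{\bh}\le C_{k}$ on $M_{0}$. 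On the other hand, since the hypotheses of Theorem~\ref{thm:improvingasymptoticsgeneral} hold, so do those of Lemma~\ref{lemma:bDkvarrholnthetaest} (Remark~\ref{remark:hNcontrolyieldslnthetacontrol}), giving $|\bD^{k}\ln\theta|_{\bh}\le K_{k}\ldr{\ln\theta}^{b_{k}}$; combining this with the boundedness of $f$ and its derivatives and applying the chain and product rules yields, for each $k$, constants $C_{k},a_{k}$ with $|D^{k}(\theta^{-2f})|_{\bh}\le C_{k}\ldr{\ln\theta}^{a_{k}}\theta^{-2f}$ on $M_{0}$.

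With these in hand I would argue locally around an arbitrary $\bx\in\bM$. If $f(\bx)\le 0$, Lemma~\ref{lemma:gammaonetwothree} directly furnishes a neighbourhood $U$ of $\bx$, an $\eta>0$ and constants $C_{k}$ such that $|D^{k}(\theta^{-2f}\lambda^{A}_{BC})|_{\bh}\le C_{k}\theta^{-2\eta}$ on $U\times(0,t_{0}]$. If instead $f(\bx)>0$, I choose a neighbourhood $U$ on which $f\ge f(\bx)/2=:\delta>0$; shrinking $t_{0}$ so that $\theta\ge 1$, one has $\theta^{-2f}\le\theta^{-2\delta}$ there, and combining the two uniform bounds above via the Leibniz rule gives $|D^{k}(\theta^{-2f}\lambda^{A}_{BC})|_{\bh}\le C_{k}\ldr{\ln\theta}^{a_{k}}\theta^{-2\delta}\le C_{k}\theta^{-2\eta}$ on $U\times(0,t_{0}]$ for any $0<\eta<\delta$, the logarithmic factor being absorbed. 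Thus the desired estimate holds in a neighbourhood of every point, with a possibly point-dependent $\eta$ and constants.

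Finally, by compactness of $\bM$ I would extract a finite subcover, set $\eta$ equal to the minimum of the finitely many exponents, $C_{k}$ to the (sum of the) finitely many constants, and then take the worst of these over the finitely many triples $(A,B,C)$; this yields (\ref{eq:bgammaABCdecgeneralcase}) on all of $M_{0}$. I do not expect a real obstacle here: the analytic heart is Lemma~\ref{lemma:gammaonetwothree}, and the only step requiring a little care is the bookkeeping in the second uniform bound, namely checking that every derivative falling on $\theta^{-2f}$ produces only powers of $\ln\theta$ controlled by Lemma~\ref{lemma:bDkvarrholnthetaest} multiplied by bounded derivatives of $f$, so that the whole thing stays of size $\ldr{\ln\theta}^{a_{k}}\theta^{-2f}$.
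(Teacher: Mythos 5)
Your proof is correct and follows essentially the same route as the paper: the pointwise dichotomy on the sign of $1+p_{A}-p_{B}-p_{C}$, with Lemma~\ref{lemma:gammaonetwothree} handling the nonpositive case and a direct estimate handling the positive case, followed by a finite-subcover/minimum-of-exponents argument on the compact manifold $\bM$. The only difference is that you spell out the positive case (boundedness of $\lambda^{A}_{BC}$ in every $C^{k}$-norm via (\ref{eq:YAXBXCexpanded}) and Lemma~\ref{lemma:XABest}, plus polynomial control of derivatives of $\ln\theta$ from Lemma~\ref{lemma:bDkvarrholnthetaest}), which the paper dismisses as trivial.
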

\begin{remark}
  Since $\lambda_{BC}^{A}$ is well defined up to a sign, the estimate (\ref{eq:bgammaABCdecgeneralcase}) is globally meaningful.
\end{remark}
\begin{proof}
  If $B=C$, there is nothing to prove. In what follows, we therefore assume $B\neq C$. Fix an $\bx\in\bM$.  Given $A$, $B$, $C$ such that $B\neq C$,
  there are then two possibilities. Either $1+p_{A}(\bx)-p_{B}(\bx)-p_{C}(\bx)\leq 0$
  or $1+p_{A}(\bx)-p_{B}(\bx)-p_{C}(\bx)>0$. In the first case, we can appeal to Lemma~\ref{lemma:gammaonetwothree} in order to conclude
  that there is an open neighbourhood $U_{ABC}$ and an $\eta_{ABC}>0$ such that (\ref{eq:bgammaABCdec}), with $\eta$ replaced by $\eta_{ABC}$,
  holds in $U_{ABC}\times (0,t_{0}]$. In case $1+p_{A}(\bx)-p_{B}(\bx)-p_{C}(\bx)>0$, there is trivially such an open neighbourhood
  $U_{ABC}$ and $\eta_{ABC}>0$. Taking the intersection of all these $U_{ABC}$ yields an open neighbourhood $U_{\bx}$ and taking the minimum
  of all the $\eta_{ABC}$ yields an $\eta_{\bx}>0$. Since $\bM$ is compact, there is a finite cover by sets of the form $U_{\bx}$. Taking the
  minimum of the associated $\eta_{\bx}$ yields an $\eta>0$ such that the statement of the corollary holds. 
\end{proof}

Using this information, we estimate the conformally rescaled spatial Ricci curvature.

\begin{prop}\label{prop:bmRestimate}
  Assume that the conditions of Theorem~\ref{thm:improvingasymptoticsgeneral} are satisfied. Let, using the notation of
  Theorem~\ref{thm:improvingasymptoticsgeneral}, $\bR$ denote the spatial Ricci
  tensor, viewed as a family of $(1,1)$-tensor fields, $\bS$ denote the spatial scalar curvature, $\bmR:=\theta^{-2}\bR$ and
  $\bmS:=\theta^{-2}\bS$. Then there is an $\eta>0$ and, for each $k\in\nn{}$, a $C_{k}$ such that 
  \begin{equation}\label{eq:bmSbmRexpdecay}
    |\bD^{k}\bmS|_{\bh}+|\bD^{k}\bmR|_{\bh}\leq C_{k}\theta^{-2\eta}
  \end{equation}
  on $M_{0}$.
\end{prop}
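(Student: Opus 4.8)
The plan rests on the fact that the metric $\bg$ induced on the leaves $\bM_{t}$ is Kasner-like -- quantitatively so, by Lemmas~\ref{lemma:bDkvarrholnthetaest}--\ref{lemma:mKmsYAmsXBestimate} and Corollary~\ref{cor:decayofproblematicterms} -- and that for such metrics the spatial curvature is subdominant relative to $\theta^{2}$. Concretely, I would work in the eigenframe $\{X_{A}\}$ of $\mK$ from Theorem~\ref{thm:improvingasymptoticsgeneral}. Since $\mK$ is symmetric with respect to $\bg$ (it is the expansion normalised Weingarten map) and, for $t_{0}$ small, has distinct eigenvalues, $\{X_{A}\}$ is $\bg$-orthogonal, so that, with $\{Y^{A}\}$ its dual frame,
\[
\bg=\textstyle{\sum}_{A}e^{2\bmu_{A}}Y^{A}\otimes Y^{A};
\]
moreover $\{X_{A}\}$ is $\bh$-orthonormal up to errors of order $\theta^{-2\eta}$ (since $X_{A}\to\msX_{A}$ and $\{\msX_{A}\}$ is a fixed $\bh$-orthonormal frame, by Lemma~\ref{lemma:XABest}), so it suffices to estimate the components of $\bmR$ and $\bmS$ in $\{X_{A}\}$, together with their $\bh$-covariant derivatives up to order $k$. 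For the diagonal metric above, the Koszul formula expresses the Christoffel symbols of $\bg$ in the frame $\{X_{A}\}$ -- and hence, after one more differentiation and one more product, $\bR$ and $\bS$ -- as universal expressions in the metric coefficients $e^{\pm 2\bmu_{A}}$, the frame derivatives $X_{C}(\bmu_{A})$, the structure functions $\lambda^{A}_{BC}$ defined before Lemma~\ref{lemma:gammaonetwothree}, and $X_{C}(\lambda^{A}_{BC})$; this is the Milnor-type formula for left-invariant metrics, with the structure constants allowed to vary and with the additional inhomogeneity terms. (Such a formula may also be quoted from \cite{RinWave}.)

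The substitutions are: $e^{\pm 2\bmu_{A}}=\theta^{\mp 2p_{A}}(1+O(\theta^{-2\eta}))$, with $C^{k}$-control, by Lemma~\ref{lemma:XABest} together with $|\bD^{k}\ln\theta|_{\bh}\le C_{k}\ldr{\ln\theta}^{b_{k}}$ from Lemma~\ref{lemma:bDkvarrholnthetaest} (see Remark~\ref{remark:hNcontrolyieldslnthetacontrol}); $X_{C}(\bmu_{A})$ and its higher $\bh$-frame derivatives are $O(\ldr{\ln\theta}^{c_{k}})$ and carry no power of $\theta$, by the same two lemmas; $\lambda^{A}_{BC}=O(1)$ in $C^{k}$ unconditionally (as $\lambda^{A}_{BC}\to\g^{A}_{BC}$), and $\lambda^{A}_{BC}=\theta^{2(1+p_{A}-p_{B}-p_{C})}\,O(\theta^{-2\eta})$ in $C^{k}$ by Corollary~\ref{cor:decayofproblematicterms}; and, finally, $|p_{A}|<1$ by Remark~\ref{remark:pAsltone}.

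Substituting, the terms of $\theta^{-2}\bR$ and $\theta^{-2}\bS$ organise into a few types. First, the ``curvature-squared'' terms carry a factor $e^{2\bmu_{A}-2\bmu_{B}-2\bmu_{C}}$ times a product of two structure functions, so $\theta^{-2}$ times such a term is, up to lower order, $\theta^{-2(1+p_{A}-p_{B}-p_{C})}\lambda^{A}_{BC}\lambda^{A}_{B'C'}$ (with the index pattern forced by the curvature formula): when $1+p_{A}-p_{B}-p_{C}\le 0$ at the point in question, $\g^{A}_{BC}$ vanishes nearby and Corollary~\ref{cor:decayofproblematicterms} makes this decay like $\theta^{-2\eta}$ times a non-negative power of $\theta$, hence decay; when $1+p_{A}-p_{B}-p_{C}>0$ the negative power $\theta^{-2(1+p_{A}-p_{B}-p_{C})}$ together with boundedness of the $\lambda$'s gives decay. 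Second, the ``cross'' curvature terms carry only a single metric factor $e^{-2\bmu_{C}}=\theta^{2p_{C}}$ times two structure functions, so $\theta^{-2}$ times such a term is $\theta^{2p_{C}-2}\lambda\lambda=O(\theta^{2p_{C}-2})$, which decays since $p_{C}<1$. Third, the ``geometric'' terms, coming from the spatial inhomogeneity of the $\bmu_{A}$, carry a metric factor $\theta^{2p_{A}}$ times at most two $X(\bmu)$'s or one $X^{2}(\bmu)$, so $\theta^{-2}$ times such a term is $O(\theta^{2p_{A}-2}\ldr{\ln\theta}^{c})$, which decays since $p_{A}<1$. The remaining mixed terms (one structure function times one $X(\bmu)$, or $X(\lambda)$ terms) are products of the above ingredients and decay by the same counting; likewise the off-diagonal components of $\bmR$. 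In every case one obtains a fixed positive exponent, and absorbing the $\ldr{\ln\theta}$ factors at the cost of shrinking $\eta$ yields $|\bmS|_{\bh}+|\bmR|_{\bh}\le C\theta^{-2\eta}$. The $\bh$-covariant derivatives are handled by differentiating the curvature formula and using that $\bD^{k}$ of each of the substituted quantities obeys the stated bounds with at most $\ldr{\ln\theta}$-polynomial loss, which is again absorbed; this gives (\ref{eq:bmSbmRexpdecay}).

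The hard part will be setting up the curvature formula in a form in which the above bookkeeping goes through -- in particular, seeing that the potentially growing metric ratios $e^{2(\bmu_{B}-\bmu_{A})}$ appearing in the individual Christoffel symbols (whose growth, when $p_{A}>p_{B}$, is precisely the mechanism that produces non-quiescent bounces) only ever enter $\bR$ in the combination $e^{2\bmu_{A}-2\bmu_{B}-2\bmu_{C}}\lambda^{A}_{BC}$, which after the conformal rescaling becomes $\theta^{-2(1+p_{A}-p_{B}-p_{C})}\lambda^{A}_{BC}$ -- exactly the quantity that Corollary~\ref{cor:decayofproblematicterms} (and, behind it, condition~4 of the hypotheses) controls. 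The rest, including the passage to general $k$, is routine.
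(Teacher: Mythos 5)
Your proposal is correct and follows essentially the same route as the paper: the paper also expands $\bmR$ and $\bmS$ in the eigenframe of $\mK$ (quoting the explicit formulas \cite[Lemma~93 and Corollary~96]{RinGeo} rather than rederiving a Koszul/Milnor-type expression), substitutes $\mu_{A}=\bmu_{A}+\ln\theta\approx(1-p_{A})\ln\theta$, uses $p_{A}<1$ for the generic terms, and handles the dangerous weights $e^{2\mu_{A}-2\mu_{B}-2\mu_{C}}$ by noting that they only occur multiplied by $\lambda^{A}_{BC}$ (or a spatial derivative thereof), which is exactly what Corollary~\ref{cor:decayofproblematicterms} controls, with the $\ldr{\ln\theta}$ losses absorbed by shrinking $\eta$. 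Your identification of the ``hard part'' is precisely the observation the paper makes, so no gap remains.
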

\begin{remark}
  If, in addition to the assumptions of the proposition, $n=3$ and $\msO^{1}_{23}=0$ on $\bM$, then there are, for each $k\in\nn{}$, constants
  $\alpha_{k}$ and $C_{k}$ such that
  \begin{equation}\label{eq:bmSbmRexpdecayneqthree}
    |\bD^{k}\bmS|_{\bh}+|\bD^{k}\bmR|_{\bh}\leq C_{k}\ldr{\ln\theta}^{\alpha_{k}}\theta^{-2(1-p_{3})}
  \end{equation}
  on $M_{0}$.
\end{remark}
\begin{proof}
  To estimate $\bmR$, it is convenient to appeal to \cite[Corollary~125, p.~73]{RinGeo}. The $\mu_{A}$ appearing in this corollary
  are given by $\mu_{A}=\bmu_{A}+\ln\theta$, so that $\bD^{k}[\mu_{A}-(1-p_{A})\ln\theta]=O(\theta^{-2\eta})$
  due to (\ref{eq:bmuApAlntheta}).  Note that $\lambda^{A}_{BC}$ (the $\g^{A}_{BC}$ appearing in \cite[Corollary~125, p.~73]{RinGeo} should be
  replaced by $\lambda^{A}_{BC}$ here) and $a_{A}$ are bounded in any $C^{k}$-norm and that $\check{a}_{A}$, $\bmu_{A}$
  and $\mu_{A}$ do not grow faster than polynomially in $\ln\theta$ in any $C^{k}$-norm. In particular, any number of $X_{A}$ derivatives of these
  objects grows at worst polynomially. Note also that $p_{A}<1$ due to Remark~\ref{remark:pAsltone}. Combining these observations
  with \cite[(301)--(306), p.~73]{RinGeo}, it is clear that for
  $k\in\nn{}$, there is a constant $\alpha_{k}$ such that most of the terms on the right hand side of \cite[(301), p.~73]{RinGeo}
  decay to zero as $O(\ldr{\ln\theta}^{\alpha_{k}}\theta^{-2(1-p_{n})})$ in $C^{k}$. However, there are some terms containing factors of
  the form $e^{2\mu_{A}-2\mu_{B}-2\mu_{C}}$. On the other hand, terms that contain a factor $e^{2\mu_{A}-2\mu_{B}-2\mu_{C}}$ also contain a factor
  $\lambda^{A}_{BC}$ (or a spatial derivative thereof). In order to estimate the corresponding terms, we can thus appeal to
  Corollary~\ref{cor:decayofproblematicterms}. In order to illustrate the arguments required, consider, e.g., 
  \begin{equation*}
    \begin{split}
      & e^{2\mu_{A}-2\mu_{B}-2\mu_{C}}X_{D}(\lambda^{A}_{BC})\\
      = & e^{2\mu_{A}-2\mu_{B}-2\mu_{C}}\theta^{2(1+p_{A}-p_{B}-p_{C})}X_{D}(\theta^{-2(1+p_{A}-p_{B}-p_{C})}\lambda^{A}_{BC})\\
      & +e^{2\mu_{A}-2\mu_{B}-2\mu_{C}}\theta^{2(1+p_{A}-p_{B}-p_{C})}X_{D}[2(1+p_{A}-p_{B}-p_{C})\ln\theta](\theta^{-2(1+p_{A}-p_{B}-p_{C})}\lambda^{A}_{BC}).
    \end{split}
  \end{equation*}
  Due to Corollary~\ref{cor:decayofproblematicterms}, terms of this form decay as desired in any $C^{k}$-norm; since
  $\mu_{A}-(1-p_{A})\ln\theta=O(\theta^{-2\eta})$, the product of the first two factors is, in each term, bounded in any $C^{k}$-norm;
  the third factor in the second term does not grow faster than polynomially in any $C^{k}$-norm; and the last factor decays as
  $\theta^{-2\eta}$ in any $C^{k}$-norm in each of the terms.

  In case $n=3$, a more detailed argument can be carried out. In fact, then, if $1\in \{B,C\}$, $e^{2\mu_{A}-2\mu_{B}-2\mu_{C}}$ decays to zero as
  $O(\ldr{\ln\theta}^{\alpha_{k}}\theta^{-2(1-p_{3})})$ in $C^{k}$. Assume, therefore,
  that $\{B,C\}=\{2,3\}$. If $A\in\{2,3\}$, we again obtain the same decay. The only interesting terms are thus the ones containing one
  factor of the form $e^{2\mu_{1}-2\mu_{2}-2\mu_{3}}$ and a factor of the form $\lambda^{1}_{23}$ or a spatial derivative thereof. However, we know
  that $\lambda^{1}_{23}=O(\theta^{-2(p_{2}-p_{1}+\eta)})$ (and similarly for every $C^{k}$-norm); cf. Remark~\ref{remark:gammaonetwothree}.
  Thus the potentially problematic terms are $O(\theta^{-2(1-p_{3})})$ (and similarly for every $C^{k}$-norm).

  Writing
  \[
  \bmR=\bmR^{A}_{B}X_{A}\otimes Y^{B},
  \]
  the desired estimate for $\bmR$ follows. Due to \cite[Lemma~122, p.~69]{RinGeo}, the proof of the estimate for $\bmS$ is essentially
  identical.
\end{proof}

Next, we wish to estimate the contribution from the lapse function to $\hml_{U}\mK$. Introduce, to this end, the $(1,1)$-tensor field
$\bmN$, given by
\begin{equation}\label{eq:bmNdef}
  \bmN^{A}_{B}:=\theta^{-2}N^{-1}\bnabla^{A}\bnabla_{B}N.
\end{equation}
Here $\bnabla$ is the Levi-Civita connection induced on the hypersurfaces of constant $t$.

\begin{lemma}
  Assume that the conditions of Theorem~\ref{thm:improvingasymptoticsgeneral} are satisfied and let $\bmN$ be defined by
  (\ref{eq:bmNdef}). Then, using the notation of Theorem~\ref{thm:improvingasymptoticsgeneral}, there is an $\eta>0$ and,
  for each $k\in\nn{}$, a $C_{k}$ such that 
  \begin{equation}\label{eq:bmNexpdecay}
    |\bD^{k}\bmN|_{\bh}\leq C_{k}\theta^{-2\eta}
  \end{equation}
  on $M_{0}$.
\end{lemma}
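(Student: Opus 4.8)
The plan is to compute the components of $\bmN$ in the $\bh$-orthonormal eigenframe $\{X_A\}$ of $\mK$ used in Theorem~\ref{thm:improvingasymptoticsgeneral} and to show that each such component, together with all of its $\bD$-derivatives, carries a net negative power of $\theta$. Since $\mK=\bK/\theta$ is symmetric with respect to the metric $\bg$ induced on the leaves, the $X_A$ are $\bg$-orthogonal, so $\bg(X_A,X_B)=e^{2\bmu_A}\delta_{AB}$ and the components of $\bmN$ in this frame are
\[
\bmN^A_{\ B}=\theta^{-2}e^{-2\bmu_A}N^{-1}\big[X_AX_BN-\Gamma^C_{AB}X_CN\big],
\]
where $\Gamma^C_{AB}$ are the connection coefficients of $\bnabla$ in the frame $\{X_A\}$, i.e. $\bnabla_{X_A}X_B=\Gamma^C_{AB}X_C$. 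Writing $N^{-1}X_AX_BN=X_A(X_B\ln N)+(X_A\ln N)(X_B\ln N)$ and $\ln N=\ln\hN-\ln\theta$, the hypothesis (\ref{eq:lnthetalnNphipolbd}) together with the bounds on the $\bD$-derivatives of $\ln\theta$ from Lemma~\ref{lemma:bDkvarrholnthetaest} (applicable by Remark~\ref{remark:hNcontrolyieldslnthetacontrol}) show that $X_C\ln N$ and $X_AX_B\ln N$, and more generally any $\bD^k$ thereof, are bounded by a polynomial in $\ldr{\ln\theta}$.

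The substance of the argument is the control of $\theta^{-2}e^{-2\bmu_A}\Gamma^C_{AB}$. I would insert the Koszul formula in the frame $\{X_A\}$, which (using diagonality of $\bg$, and writing $[X_B,X_C]=\lambda^A_{BC}X_A$) expresses $2e^{2\bmu_C}\Gamma^C_{AB}$ as a sum of terms of the form $X_D(e^{2\bmu_E})$ (each forcing an index identity among $A,B,C$), $\lambda^C_{AB}e^{2\bmu_C}$, $\lambda^A_{BC}e^{2\bmu_A}$ and $\lambda^B_{CA}e^{2\bmu_B}$. Dividing by $e^{2\bmu_C}$ and multiplying by the prefactor $\theta^{-2}e^{-2\bmu_A}$, and invoking (\ref{eq:bmuApAlntheta}) in the form $e^{2\bmu_D-2\bmu_E}=\theta^{-2(p_D-p_E)}(1+O(\theta^{-2\eta}))$ and $\theta^{-2}e^{-2\bmu_A}=\theta^{-2(1-p_A)}(1+O(\theta^{-2\eta}))$ in every $C^k$-norm, each resulting contribution to $\bmN^A_{\ B}$ is, up to such a correction, one of the following: $\theta^{-2(1-p_A)}$ or $\theta^{-2(1-p_C)}$ times a quantity bounded in every $C^k$-norm (a derivative $X_D\bmu_E$, or $\lambda^C_{AB}$, or $\lambda^A_{BC}$ — all bounded in every $C^k$-norm by Lemma~\ref{lemma:XABest} and the convergence $X_A\to\msX_A$); or else exactly $\theta^{-2(1+p_B-p_C-p_A)}\lambda^B_{CA}$, which is $O(\theta^{-2\eta})$ in every $C^k$-norm by Corollary~\ref{cor:decayofproblematicterms}. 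By Remark~\ref{remark:pAsltone}, $p_A<1$ on the compact manifold $\bM$, so $1-p_A\ge\delta$ for a uniform $\delta>0$; hence every contribution, and every $\bD^k$ of it, decays at least as $\ldr{\ln\theta}^{a_k}\theta^{-2\delta}$ or $\ldr{\ln\theta}^{a_k}\theta^{-2\eta}$ in $C^k$ — here using that $\bD^k$ of any of the decaying scalar factors equals that factor times a polynomial in $\ldr{\ln\theta}$ (Lemma~\ref{lemma:bDkvarrholnthetaest} and smoothness of the $p_A$), and $\bD^k$ of any bounded factor is bounded by the hypotheses of the theorem and Lemma~\ref{lemma:XABest}. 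Absorbing the logarithmic factors into a slightly smaller exponent yields (\ref{eq:bmNexpdecay}) with a suitable positive $\eta$.

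The only real difficulty is the bookkeeping for the Koszul terms: one must verify that among the six terms the single combination whose naive $\theta$-power $-2(1-p_A+p_B-p_C)$ can fail to be negative is precisely the one carrying $\lambda^B_{CA}$, so that it is governed by Corollary~\ref{cor:decayofproblematicterms} rather than merely by boundedness of the structure functions, while every remaining term is handled by the crude bound $\lambda^A_{BC}=O(1)$ in $C^k$ together with $p_A<1$. As an alternative to writing out Koszul, one could instead quote the rescaled connection-coefficient estimates of \cite{RinGeo} that are used, implicitly, in the proof of Proposition~\ref{prop:bmRestimate}, and repeat the same bookkeeping.
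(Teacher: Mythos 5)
Your proof is correct and follows essentially the paper's route: the paper bounds the derivatives of $\ln N$ via (\ref{eq:lnthetalnNphipolbd}) and (\ref{eq:EbfIlnthetaest}), quotes the frame formula for $\theta^{-2}N^{-1}\bnabla^{A}\bnabla_{B}N$ from \cite[Lemma~98, pp.~63--64]{RinGeo} in place of your Koszul computation, and then performs exactly the bookkeeping you describe, as in Proposition~\ref{prop:bmRestimate} (the bound $p_{A}<1$ from Remark~\ref{remark:pAsltone} for the harmless terms, Corollary~\ref{cor:decayofproblematicterms} for the single term carrying $e^{2\bmu_{B}-2\bmu_{A}-2\bmu_{C}}\lambda^{B}_{AC}$). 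One small imprecision: $X_{D}\bmu_{E}$ is only polynomially bounded in $\ldr{\ln\theta}$, not bounded, but your final estimate already allows the $\ldr{\ln\theta}^{a_{k}}$ losses that this entails, so nothing breaks.
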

\begin{remark}
  If, in addition to the assumptions of the proposition, $n=3$ and $\msO^{1}_{23}=0$ on $\bM$, then there are, for each $k\in\nn{}$, constants
  $\alpha_{k}$ and $C_{k}$ such that
  \[
  |\bD^{k}\bmN|_{\bh}\leq C_{k}\ldr{\ln\theta}^{\alpha_{k}}\theta^{-2(1-p_{3})}
  \]
  on $M_{0}$.
\end{remark}
\begin{proof}
  Note, first of all, that the derivatives of $\ln N$ can be bounded by appealing to (\ref{eq:lnthetalnNphipolbd}) and
  (\ref{eq:EbfIlnthetaest}). Combining this observation with \cite[Lemma~127, p.~74]{RinGeo} and the assumptions, the proof
  of the statement is similar to, but simpler than, the proof of Proposition~\ref{prop:bmRestimate}. 
\end{proof}

Next, we estimate $\hml_{U}\mK$.

\begin{prop}\label{prop:hmlUmKest}
  Assume that the conditions of Theorem~\ref{thm:improvingasymptoticsgeneral} are satisfied. Then, using the notation of
  Theorem~\ref{thm:improvingasymptoticsgeneral}, there is an $\eta>0$ and, for each $k\in\nn{}$, a $C_{k}$ such that
  \begin{equation}\label{eq:hmlUmKandqest}
    |\bD^{k}[q-(n-1)]|_{\bh}+|\bD^{k}\hml_{U}\mK|_{\bh}\leq C_{k}\theta^{-2\eta}
  \end{equation}
  on $M_{0}$.
\end{prop}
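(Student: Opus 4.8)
The plan is to read off both estimates from a single evolution equation for $\mK$, written in the conformally rescaled variables of \cite{RinWave}, once one has in hand the decay of the rescaled spatial Ricci curvature (Proposition~\ref{prop:bmRestimate}), of the rescaled lapse Hessian $\bmN$ (the estimate (\ref{eq:bmNexpdecay})), and of the conformally rescaled matter stress.

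First I would record the evolution equation. By \cite{RinWave}, for a vanishing shift vector field $\hml_{U}\mK$ satisfies an equation of the schematic form
\[
\hml_{U}\mK = -\bigl(1+\hU(\ln\theta)\bigr)\mK + \bmR - \bmN + \mathcal{T},
\]
where $\mathcal{T}$ is $\theta^{-2}$ times the spatial--mixed part of the spacetime Ricci tensor and the coefficient of $\mK$ vanishes precisely when $\hU(\ln\theta)=-1$, i.e. when $q=n-1$. Using Einstein's equations for a scalar field with a cosmological constant one checks that the spatial components of the spacetime Ricci tensor equal $\d_{i}\phi\,\d_{j}\phi+\tfrac{2\Lambda}{n-1}\bg_{ij}$ --- the kinetic energy $(\hU\phi)^{2}$ of the scalar field cancels out of this particular combination --- so that $\mathcal{T}^{i}_{j}=\theta^{-2}\bg^{ik}\d_{k}\phi\,\d_{j}\phi+\tfrac{2\Lambda}{(n-1)\theta^{2}}\delta^{i}_{j}$. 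Since $|p_{A}|<1$ (Remark~\ref{remark:pAsltone}), Lemma~\ref{lemma:XABest} and (\ref{eq:EbfIlnthetaest}) control $\theta^{-2}\bg^{ik}$ by a negative power of $\theta$ times a power of $\ldr{\ln\theta}$, while (\ref{eq:lnthetalnNphipolbd}) controls the derivatives of $\phi$ polynomially in $\ldr{\ln\theta}$; the $\Lambda$-term decays like $\theta^{-2}$ by (\ref{eq:EbfIlnthetaest}). Hence $|\bD^{k}\mathcal{T}|_{\bh}\leq C_{k}\theta^{-2\eta}$ for all $k$, after decreasing $\eta$ so as to absorb the harmless $\ldr{\ln\theta}$-powers.

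Next I would take the trace of the evolution equation. Since $\tr\mK=1$ identically and $(\hml_{U}\mK)^{i}_{j}=\hU(\mK^{i}_{j})$ in a $t$-independent frame (Remark~\ref{remark:hmlUmKdef}), we have $\tr\hml_{U}\mK=\hU(1)=0$; moreover $\tr\bmR=\bmS$. Therefore
\[
0 = -\bigl(1+\hU(\ln\theta)\bigr) + \bmS - \tr\bmN + \tr\mathcal{T},
\]
and Proposition~\ref{prop:bmRestimate}, (\ref{eq:bmNexpdecay}) and the bound on $\mathcal{T}$ just obtained yield $|\bD^{k}(1+\hU(\ln\theta))|_{\bh}\leq C_{k}\theta^{-2\eta}$. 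Because $q$ is defined by $\hU(n\ln\theta)=-1-q$, i.e. $q-(n-1)=-n\bigl(1+\hU(\ln\theta)\bigr)$, this is exactly the claimed estimate for $q$. Feeding $1+\hU(\ln\theta)=O(\theta^{-2\eta})$ (in every $C^{k}$-norm) back into the evolution equation, and using that $\mK$ converges to $\msK$ and is thus bounded in every $C^{k}$-norm, together with Proposition~\ref{prop:bmRestimate}, (\ref{eq:bmNexpdecay}) and the bound on $\mathcal{T}$, gives $|\bD^{k}\hml_{U}\mK|_{\bh}\leq C_{k}\theta^{-2\eta}$. Taking the smaller of the two values of $\eta$ produced yields (\ref{eq:hmlUmKandqest}).

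The main obstacle is verifying, from Einstein's equations, that $\mathcal{T}$ decays --- equivalently, that the scalar-field kinetic energy, which is a priori of size $O(1)$, does not obstruct the estimate. This rests on the fact that $(\hU\phi)^{2}$ cancels out of the spatial--mixed component of the spacetime Ricci tensor of a massless scalar field, leaving only the spatial gradient of $\phi$ (which is damped by the factor $\theta^{-2}\bg^{ik}$ because $p_{A}<1$) and the $\Lambda$-term (which is $O(\theta^{-2})$). If one preferred to retain the kinetic term, one could instead derive the $q$-estimate by combining the Raychaudhuri equation with the Hamiltonian constraint, so that the $\tr\mK^{2}$-contributions cancel between the two relations; but the route via the traced evolution equation for $\mK$ is more economical and handles $\hml_{U}\mK$ simultaneously. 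A routine but slightly tedious point is that several of the estimates quoted carry polynomial weights $\ldr{\ln\theta}^{a}$; these are harmless since $\theta^{-2\eta}\ldr{\ln\theta}^{a}\leq C\theta^{-2\eta'}$ for any $\eta'<\eta$.
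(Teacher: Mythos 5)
Your proposal is correct and follows essentially the same route as the paper: it rests on the evolution equation for $\mK$ (the paper's (\ref{eq:mlUmKwithEinstein}), taken from \cite[(65), p.~28]{RinGeo}), on the previously established decay of $\bmR$, $\bmS$ and $\bmN$, and on the observation that the scalar-field contributions entering this equation involve only spatial gradients of $\phi$ (the paper's appeal to \cite[Remark~84, p.~48]{RinGeo}), damped by $\theta^{-2}\bg^{ik}$ since $p_{A}<1$, together with (\ref{eq:lnthetalnNphipolbd}). Obtaining the $q$-estimate by tracing the evolution equation is just a rephrasing of the paper's use of \cite[(61), p.~27]{RinGeo}, so the two arguments coincide.
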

\begin{remark}
  If, in addition to the assumptions of the proposition, $n=3$ and $\msO^{1}_{23}=0$ on $\bM$, then there are, for each $k\in\nn{}$, constants
  $\alpha_{k}$ and $C_{k}$ such that
  \begin{equation}\label{eq:hmlUmKandqestnequalsthree}
    |\bD^{k}(q-2)|_{\bh}+|\bD^{k}\hml_{U}\mK|_{\bh}\leq C_{k}\ldr{\ln\theta}^{\alpha_{k}}\theta^{-2(1-p_{3})}
  \end{equation}
  on $M_{0}$.
\end{remark}
\begin{remark}\label{remark:chKsilentupperbound}
  When appealing to the results of \cite{RinWave}, we need upper bounds on the eigenvalues of $\chK$, the Weingarten map of the conformally
  rescaled metric $\hg:=\theta^{2}g$. Combining \cite[(3.3), p.~26]{RinWave} with \cite[(3.4), p.~26]{RinWave}, we conclude that
  $\chK=\mK-(q+1)\mathrm{Id}/n$. Since $q$ converges to $n-1$, cf. (\ref{eq:hmlUmKandqest}), we conclude that the eigenvalues of $\chK$
  converge to $p_{A}-1$. In particular, they are all negative, cf. Remark~\ref{remark:pAsltone}, so that $\chK$ has a silent upper bound
  asymptotically; cf. \cite[Definition~3.10, p.~27]{RinWave}.
\end{remark}
\begin{proof}
  The basis for the estimate of $\hml_{U}\mK$ is the formula for $\hml_{U}\mK$ given by \cite[(270), p.~65]{RinGeo}:
  \begin{equation}\label{eq:mlUmKwithEinstein}
    \begin{split}
      (\hml_{U}\mK)^{i}_{\phantom{i}j} =  & -\left(\frac{\Delta_{\bg}N}{\theta^{2}N}+\frac{n}{n-1}\frac{\rho-\bp+2\Lambda}{\theta^{2}}
      -\frac{\bS}{\theta^{2}}\right)\mK^{i}_{\phantom{i}j}\\
      & +\frac{1}{(n-1)\theta^{2}}\left(\rho-\bp+2\Lambda\right)\delta^{i}_{j}
      +\frac{1}{N\theta^{2}}\bnabla^{i}\bnabla_{j}N+\theta^{-2}\mP^{i}_{\phantom{i}j}-\frac{\bR^{i}_{\phantom{i}j}}{\theta^{2}}.
    \end{split}
  \end{equation}  
  Several of the terms can be estimated by appealing to (\ref{eq:bmSbmRexpdecay}), (\ref{eq:bmNexpdecay}) and the convergence of $\mK$. However, there
  are three types of exceptions: terms including a factor of
  the form $\theta^{-2}(\rho-\bp)$, a factor of the form $\Lambda/\theta^{2}$ or a factor of the form $\theta^{-2}\mP$. Estimates for terms including
  a factor of the form $\Lambda/\theta^{2}$ follow immediately from (\ref{eq:EbfIlnthetaest}). To estimate the remaining terms, it is sufficient
  to appeal to \cite[Remark~81, p.~49]{RinGeo}; given the corresponding formulae and (\ref{eq:lnthetalnNphipolbd}), it is clear that we can proceed
  as in the previous proofs in order to derive the desired estimate for $\hml_{U}\mK$.

  The estimate for $q-(n-1)$ can be derived by a similar argument due to \cite[(266), p.~65]{RinGeo}.
\end{proof}

In order to refine this conclusion, it is of interest to consider the components of $\hml_{U}\mK$ with respect to the frame $\{\msX_{A}\}$.
\begin{lemma}
  Assume that the conditions of Theorem~\ref{thm:improvingasymptoticsgeneral} are satisfied. Then, using the notation of
  Theorem~\ref{thm:improvingasymptoticsgeneral}, there is an $\eta>0$ and, for each $k\in\nn{}$, a constant $C_{k}$ such that
  \begin{equation}\label{eq:hmlUcompbettest}
    |\bD^{k}[(\hml_{U}\mK)(\msY^{A},\msX_{B})]|_{\bh}\leq C_{k}\theta^{-2\eta}\min\{1,\theta^{-2(p_{B}-p_{A})}\}
  \end{equation}
  on $M_{0}$. 
\end{lemma}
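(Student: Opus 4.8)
The plan is to reduce the statement, via the expression (\ref{eq:mlUmKwithEinstein}) for $\hml_U\mK$, to a purely linear-algebraic fact about $\bg$-symmetric $(1,1)$-tensor fields. First observe that the estimate is non-trivial only when $A<B$. Indeed, shrink $t_0$ so that $\theta$ is large on $M_0$ (possible since $\theta$ diverges uniformly to $\infty$); then, if $A\geq B$, one has $p_B\leq p_A$, hence $-2(p_B-p_A)\geq 0$ and $\min\{1,\theta^{-2(p_B-p_A)}\}=1$, so (\ref{eq:hmlUcompbettest}) follows at once from (\ref{eq:hmlUmKandqest}) of Proposition~\ref{prop:hmlUmKest}: since $\{\msX_A\}$ is $\bh$-orthonormal (as $\msK$ is $\bh$-symmetric with distinct eigenvalues) and $\{\msY^A\}$ is its dual, and these (co)frames have all $\bD$-derivatives bounded on the closed manifold $\bM$, we get $|\bD^k[(\hml_U\mK)(\msY^A,\msX_B)]|_\bh\leq C_k\sum_{j\leq k}|\bD^j\hml_U\mK|_\bh\leq C_k\theta^{-2\eta}$. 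So from now on fix $A<B$, for which $\min\{1,\theta^{-2(p_B-p_A)}\}=\theta^{-2(p_B-p_A)}$.

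The crucial observation is the following structural lemma. Written in the frame $\{\msX_A\}$, the induced metric $\bg$ is, by (\ref{eq:asmetricgeneralcase}), $\bg_{AB}=b_{AB}\theta^{-2p_{\max\{A,B\}}}$, where, by (\ref{eq:limitbABtobhchhsfN}) together with $\bh=\sum_C\msY^C\otimes\msY^C$, $|\bD^k(b_{AB}-\delta_{AB})|_\bh\leq C_k\theta^{-2\eta}$ for every $k$. Factoring $\theta^{-2p_{\max\{A,B\}}}=\theta^{-p_A}\theta^{-p_B}\theta^{-|p_A-p_B|}$ and inverting the resulting matrix (this is elementary since $b_{AB}\to\delta_{AB}$; cf. the triangular-matrix manipulations in the proof of Lemma~\ref{lemma:XABest}, and use (\ref{eq:EbfIlnthetaest}) to differentiate powers of $\theta$), one obtains $\bg^{AB}=\beta^{AB}\theta^{2p_{\min\{A,B\}}}$, where $\beta^{AB}$ and all its $\bD$-derivatives are bounded by a power of $\ldr{\ln\theta}$ on $M_0$. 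Consequently, if $T$ is any $(1,1)$-tensor field on the leaves which is symmetric with respect to $\bg$ and satisfies $|\bD^kT|_\bh\leq C_k\theta^{-2\eta}$ for every $k$ and some $\eta>0$, then, writing $T^A_B:=T(\msY^A,\msX_B)$, $\bg$-symmetry gives $T^A_B=\bg^{CA}\bg_{DB}T^D_C$, and since $p_{\min\{C,A\}}\leq p_A$ and $p_{\max\{D,B\}}\geq p_B$ for all $C,D$ when $A<B$, every summand carries a factor $\theta^{2p_{\min\{C,A\}}-2p_{\max\{D,B\}}}\leq\theta^{-2(p_B-p_A)}$, so that
\begin{equation*}
  |\bD^kT^A_B|_\bh\leq C_k\ldr{\ln\theta}^{a_k}\theta^{-2\eta}\theta^{-2(p_B-p_A)}\leq C_k\theta^{-\eta}\theta^{-2(p_B-p_A)}
\end{equation*}
on $M_0$ (shrinking $t_0$ once more). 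In other words, a $\bg$-symmetric $(1,1)$-tensor with a uniform $C^k$-bound automatically has its ``upper triangular'' components, relative to the $\msK$-eigenframe ordered by increasing eigenvalue, decaying by the extra factor $\theta^{-2(p_B-p_A)}$.

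It then remains to evaluate (\ref{eq:mlUmKwithEinstein}) on $(\msY^A,\msX_B)$, $A<B$, and treat the five terms separately. The coefficient of $\mK^i_{\phantom{i}j}$ — namely $\Delta_\bg N/(\theta^2N)=\tr\bmN$, $(\rho-\bp+2\Lambda)/\theta^2$, and $\bmS$ — is $O(\theta^{-2\eta})$ in every $C^k$-norm, by (\ref{eq:bmNexpdecay}), by \cite[Remark~84, p.~48]{RinGeo} combined with (\ref{eq:lnthetalnNphipolbd}) and (\ref{eq:EbfIlnthetaest}), and by (\ref{eq:bmSbmRexpdecay}), respectively; multiplying by $\mK(\msY^A,\msX_B)$ and invoking (\ref{eq:mKmsYAmsXBestimate}) yields a contribution that is $O(\theta^{-4\eta}\theta^{-2(p_B-p_A)})$ in every $C^k$-norm. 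The term $\frac{1}{(n-1)\theta^2}(\rho-\bp+2\Lambda)\delta^i_{\phantom{i}j}$ vanishes on $(\msY^A,\msX_B)$ since $A\neq B$. The remaining three terms, $\frac{1}{N\theta^2}\bnabla^i\bnabla_jN=\bmN^i_{\phantom{i}j}$, $\theta^{-2}\mP^i_{\phantom{i}j}$, and $-\bmR^i_{\phantom{i}j}$, are $(1,1)$-tensor fields symmetric with respect to $\bg$ (each is obtained by raising an index of a symmetric $(0,2)$-tensor: the Hessian of $N$, the matter term $\mP$, and the spatial Ricci tensor) and each is $O(\theta^{-2\eta})$ in every $C^k$-norm by (\ref{eq:bmNexpdecay}), \cite[Remark~84, p.~48]{RinGeo} with (\ref{eq:lnthetalnNphipolbd}), and (\ref{eq:bmSbmRexpdecay}), respectively; hence the displayed bound of the previous paragraph applies verbatim to each of them. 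Summing the five contributions, taking the minimum of the finitely many exponents that appear, and relabelling, one obtains (\ref{eq:hmlUcompbettest}).

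The substance of the argument is the linear-algebraic observation of the second paragraph: Einstein's equations, through (\ref{eq:mlUmKwithEinstein}), enter only to display $\hml_U\mK$ as a sum of $\bg$-symmetric tensors satisfying a uniform $C^k$-bound (plus the $\delta$-term, which vanishes off the diagonal, and the $\mK$-term, which is handled directly and with room to spare). The point requiring the most care — besides the routine bookkeeping of $\ldr{\ln\theta}$-factors and the choice of a single $\eta>0$ valid for all $k$ — is the factorisation $\bg^{AB}=\beta^{AB}\theta^{2p_{\min\{A,B\}}}$ with $\beta^{AB}$ controlled in every $C^k$-norm, for which one uses the uniform convergence $b_{AB}\to\delta_{AB}$ from (\ref{eq:limitbABtobhchhsfN}) and the polynomial-in-$\ln\theta$ control of the spatial derivatives of $\ln\theta$ from (\ref{eq:EbfIlnthetaest}).
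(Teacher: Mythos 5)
Your argument is correct, but it follows a genuinely different route from the paper's. The paper introduces $\mW^{A}_{B}=\frac{1}{\ell_{B}-\ell_{A}}(\hml_{U}\mK)(Y^{A},X_{B})$ and gets the extra decay for $A<B$ from the antisymmetry of $e^{\bmu_{A}-\bmu_{B}}\mW^{A}_{B}$ (quoted from \cite[(6.19), p.~70]{RinWave}) combined with the estimate (\ref{eq:bmuApAlntheta}) for $\bmu_{A}+p_{A}\ln\theta$, i.e.\ it works in the eigenframe of $\mK$ and then transfers to $\{\msX_{A}\},\{\msY^{A}\}$ via Lemma~\ref{lemma:XABest}; you instead decompose $\hml_{U}\mK$ through the Einstein-equation formula (\ref{eq:mlUmKwithEinstein}) and prove a reusable linear-algebraic fact: any $\bg$-symmetric $(1,1)$-tensor with a uniform $C^{k}$ bound $\theta^{-2\eta}$ automatically has the weighted decay $\theta^{-2(p_{B}-p_{A})}$ in its $(\msY^{A},\msX_{B})$-components, because of the weight structure of $\bg$ and $\bg^{-1}$ in the $\msK$-eigenframe. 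Conceptually the two proofs exploit the same mechanism — the antisymmetry in \cite[(6.19)]{RinWave} is precisely $\bg$-symmetry of $\hml_{U}\mK$ written in the $\bg$-orthogonal eigenframe of $\mK$, where $\bg(X_{A},X_{A})=e^{2\bmu_{A}}$ — and in fact your structural lemma applied directly to $T=\hml_{U}\mK$ (which is $\bg$-symmetric, since lowering an index gives $\hml_{U}$ of the symmetric tensor $\mK_{ij}$ minus a multiple of the symmetric tensor $(\mK^{2})_{ij}$) together with Proposition~\ref{prop:hmlUmKest} would shortcut the five-term bookkeeping entirely. What each approach buys: the paper's is shorter given the $\mW$-formalism of \cite{RinWave} and never needs the inverse spatial metric, while yours stays within (\ref{eq:mlUmKwithEinstein}) and the estimates already proved in the paper and isolates a statement that is useful elsewhere. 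Two points in your write-up deserve to be spelled out rather than gestured at: the factorisation $\bg^{AB}=\beta^{AB}\theta^{2p_{\min\{A,B\}}}$ with $\beta^{AB}$ controlled in every $C^{k}$-norm, which follows from the Neumann series for the inverse of $b_{AB}\theta^{-|p_{A}-p_{B}|}$ once one notes $|p_{A}-p_{C}|+|p_{C}-p_{B}|\geq|p_{A}-p_{B}|$ (so the weight structure survives matrix products), together with (\ref{eq:EbfIlnthetaest}) for the derivatives of the $\theta$-powers; and the $\bg$-symmetry of the matter term $\theta^{-2}\mP$, which does hold for the scalar-field stress tensor appearing in \cite[Remark~84, p.~48]{RinGeo} but is an input from that reference rather than something visible from (\ref{eq:mlUmKwithEinstein}) alone. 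With those two checks made explicit, the proof is complete.
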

\begin{remark}
  If, in addition to the assumptions of the lemma, $n=3$ and $\msO^{1}_{23}=0$ on $\bM$, then there are, for each $k\in\nn{}$, constants
  $\alpha_{k}$ and $C_{k}$ such that
  \[
  |\bD^{k}[(\hml_{U}\mK)(\msY^{A},\msX_{B})]|_{\bh}\leq C_{k}\ldr{\ln\theta}^{\alpha_{k}}\theta^{-2(1-p_{3})}\min\{1,\theta^{-2(p_{B}-p_{A})}\}
  \]
  on $M_{0}$.
\end{remark}
\begin{proof}
  For $A\geq B$, the estimates are immediate consequences of (\ref{eq:hmlUmKandqest}) and (\ref{eq:hmlUmKandqestnequalsthree}). We therefore need
  to focus on the case $B>A$. Define, for $A\neq B$,
  \begin{equation}\label{eq:mWABdef}
    \mW^{A}_{B}:=\frac{1}{\ell_{B}-\ell_{A}}(\hml_{U}\mK)(Y^{A},X_{B});
  \end{equation}
  cf. \cite[(6.8), p.~68]{RinWave}. Due to Proposition~\ref{prop:hmlUmKest}, we know that the $C^{k}$-norm of $\mW^{A}_{B}$ is bounded by the
  right hand side of (\ref{eq:hmlUmKandqest}). On the other hand, \cite[(6.19), p.~70]{RinWave} implies that $e^{\bmu_{A}-\bmu_{B}}\mW^{A}_{B}$ is
  antisymmetric for $A\neq B$. In particular, if $A<B$, then
  \begin{equation}\label{eq:crucialantisymmetry}
    \mW^{A}_{B}=-e^{2(\bmu_{B}-\bmu_{A})}\mW^{B}_{A}.
  \end{equation}
  There is thus an $\eta>0$ and, for each $k\in\nn{}$, a constant $C_{k}$ such that
  \begin{equation}\label{eq:mWABhmlUmKXAYBprelest}
    |\bD^{k}\mW^{A}_{B}|_{\bh}+|\bD^{k}[(\hml_{U}\mK)(Y^{A},X_{B})]|_{\bh}\leq C_{k}\theta^{-2(p_{B}-p_{A})}\theta^{-2\eta}
  \end{equation}
  on $M_{0}$ if $A<B$, where we appealed to (\ref{eq:bmuApAlntheta}) and to (\ref{eq:mWABdef}) in order to estimate the second term on the
  left hand side of (\ref{eq:mWABhmlUmKXAYBprelest}). In case $n=3$ and $\msO^{1}_{23}=0$ on $\bM$, this estimate can be improved to
  \begin{equation}\label{eq:mWABhmlUmKXAYBprelestneqthree}
    |\bD^{k}\mW^{A}_{B}|_{\bh}+|\bD^{k}[(\hml_{U}\mK)(Y^{A},X_{B})]|_{\bh}\leq C_{k}\ldr{\ln\theta}^{\alpha_{k}}\theta^{-2(p_{B}-p_{A})}\theta^{-2(1-p_{3})}
  \end{equation}
  on $M_{0}$ if $A<B$, where $\alpha_{k}$ and $C_{k}$ are constants. 

  Next, note that
  \[
  (\hml_{U}\mK)(\msY^{A},\msX_{B})=X_{C}^{A}Y^{D}_{B}(\hml_{U}\mK)(Y^{C},X_{D}).
  \]
  Combining this observation with Lemma~\ref{lemma:XABest}, (\ref{eq:mWABhmlUmKXAYBprelest}) and (\ref{eq:mWABhmlUmKXAYBprelestneqthree}) yields
  the conclusion of the lemma. 
\end{proof}

\subsection{Scalar field}

Next, we wish to analyse the asymptotic behaviour of the scalar field. We do so by appealing to the results of \cite{RinWave}.

\begin{lemma}\label{lemma:scalarfieldfromgeometry}
  Assume that the conditions of Theorem~\ref{thm:improvingasymptoticsgeneral} are satisfied. Then, using the notation of
  Theorem~\ref{thm:improvingasymptoticsgeneral}, there are $\Phia,\Phib\in C^{\infty}(\bM)$,
  an $\eta>0$ and, for each $k\in\nn{}$, a constant $C_{k}$ such that
  \begin{align}
    |\bD^{k}(\hU\phi-\Phia)|_{\bh} \leq & C_{k}\theta^{-2\eta},\label{eq:hUphiPhiaCkest}\\
    |\bD^{k}(\phi+\Phia\ln\theta-\Phib)|_{\bh} \leq & C_{k}\theta^{-2\eta}\label{eq:phiPhialnthetaPhibCkest}
  \end{align}
  on $M_{0}$. 
\end{lemma}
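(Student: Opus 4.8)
The plan is to use that $\phi$ solves the wave equation $\Box_{g}\phi=0$ — the scalar field equation contained in the Einstein-scalar field system with cosmological constant — and to apply the asymptotic results of \cite{RinWave} for solutions of wave equations in silent, asymptotically quiescent big bang regimes. The real content is that, at this point, every piece of geometric input those results require has already been established in this section: by Remark~\ref{remark:chKsilentupperbound}, the Weingarten map $\chK$ of the conformally rescaled metric $\hg=\theta^{2}g$ has a silent upper bound asymptotically; by Proposition~\ref{prop:bmRestimate} and (\ref{eq:bmNexpdecay}), the rescaled spatial curvatures $\bmR$, $\bmS$ and the lapse contribution $\bmN$ decay like $\theta^{-2\eta}$ in every $C^{k}$-norm on $\bM$; by (\ref{eq:mKlimsfN}), (\ref{eq:limitbABtobhchhsfN}) and Proposition~\ref{prop:hmlUmKest}, the quantities $\mK$, the deceleration parameter $q$ and $\hml_{U}\mK$ converge at a rate; and by (\ref{eq:lnthetalnNphipolbd}) together with Lemma~\ref{lemma:bDkvarrholnthetaest}, the spatial derivatives of $\ln\hN$, of $\ln\theta$ and $\varrho$, and of $\phi$ itself grow at worst polynomially in $\ln\theta$. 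It should be stressed that we are not assuming the asymptotic Hamiltonian or momentum constraints, nor any decay of $\hU^{2}\phi$, so the convergence of $\hU\phi$ is something the argument must \emph{produce}; the feature that makes this possible from the wave equation alone is precisely the a priori polynomial bound on $\phi$ and its spatial derivatives in (\ref{eq:lnthetalnNphipolbd}).

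Concretely, I would first cast $\Box_{g}\phi=0$ in the normal form of \cite{RinWave}, namely an evolution equation for $\hU\phi$ whose ``spatial'' terms carry a factor $\theta^{-2}$ multiplying quantities controlled as above, and then invoke the corresponding asymptotic theorem of \cite{RinWave}. This yields a smooth function $\Phi_{a}$ on $\bM$ with (\ref{eq:hUphiPhiaCkest}) (after possibly shrinking $\eta$), together with a companion conclusion describing the leading logarithmic growth of $\phi$. The combination singled out in (\ref{eq:phiPhialnthetaPhibCkest}) is the natural one: since $\hU(n\ln\theta)=-1-q$, one has
\[
\hU\bigl(\phi+\Phi_{a}\ln\theta\bigr)=(\hU\phi-\Phi_{a})+\tfrac{1}{n}\Phi_{a}\bigl((n-1)-q\bigr),
\]
and the right-hand side is $O(\theta^{-2\eta})$ in every $C^{k}$-norm by (\ref{eq:hUphiPhiaCkest}), Proposition~\ref{prop:hmlUmKest} and the boundedness of $\Phi_{a}$, so $\hU$ of this combination decays at a rate. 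If the \cite{RinWave} conclusion is instead stated in terms of $\varrho$, i.e.\ $\phi-\Phi_{a}\varrho\to\Phi_{b}$ at a rate in $C^{k}$, one converts to (\ref{eq:phiPhialnthetaPhibCkest}) by writing $\phi+\Phi_{a}\ln\theta-\Phi_{b}=(\phi-\Phi_{a}\varrho-\Phi_{b})-\Phi_{a}(\varrho+\ln\theta)$ and using $|\bD^{k}(\varrho+\ln\theta)|_{\bh}=O(\theta^{-\vare})$ from (\ref{eq:varrholnthetaCl}), at the cost of a harmless reduction of $\eta$. That $\Phi_{a},\Phi_{b}\in C^{\infty}(\bM)$ is then automatic, the convergence being at a rate in every $C^{k}$-norm.

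The step I expect to be the main obstacle is the first one: putting the wave equation into the precise form required by \cite{RinWave} and checking that the hypotheses of the relevant theorem there are met using \emph{only} the conclusions already obtained above — and, bound up with this, the conceptual point that the convergence of $\hU\phi$ has to be extracted from the equation rather than postulated, which is where the polynomial a priori control of $\phi$ in (\ref{eq:lnthetalnNphipolbd}) enters decisively. Everything after that — the conformal bookkeeping and the interchange of $\varrho$ with $-\ln\theta$ — is routine.
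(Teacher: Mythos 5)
Your proposal follows essentially the same route as the paper: one verifies the hypotheses of the wave-equation machinery of \cite{RinWave} using the geometric estimates already established in this section (the silent upper bound on $\chK$, the decay of $\hml_{U}\mK$ and of $q-(n-1)$, the curvature and lapse estimates, and the polynomial control of $\ln\hN$, $\varrho$, $\ln\theta$), uses it together with the equation to obtain convergence of $\hU\phi$ at the rate $\theta^{-2\eta}$ in every $C^{k}$-norm, and then integrates $\hU(\phi-\Phi_{a}\varrho)=\hU\phi-\Phi_{a}$ (equivalently your $\ln\theta$ formulation, the two being interchangeable by $|\bD^{k}(\varrho+\ln\theta)|_{\bh}=O(\theta^{-\vare})$) to produce $\Phi_{b}$. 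The only slight inaccuracy is in the attribution: the a priori polynomial control of $\hU\phi$ that is needed before the decay of $\hU^{2}\phi$ can be read off the equation comes from the energy estimates of \cite{RinWave} (Proposition~14.24 there, combined with Corollary~13.8), not from (\ref{eq:lnthetalnNphipolbd}), which only bounds the spatial derivatives of $\phi$.
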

\begin{remark}
  Due to (\ref{eq:varrholnthetaCl}), (\ref{eq:phiPhialnthetaPhibCkest}) can be written
  \begin{equation}\label{eq:phiPhiavarrhoPhibCkest}
    |\bD^{k}(\phi-\Phia\varrho-\Phib)|_{\bh} \leq  C_{k}\theta^{-2\eta}
  \end{equation}
  on $M_{0}$.
\end{remark}
\begin{proof}
  The statement essentially follows by appealing to the results of \cite{RinWave}. For this reason, the proof mainly consists of a verification
  that the corresponding conditions are satisfied. Let us begin by verifying that the basic assumptions, cf. \cite[Definition~3.27, p.~34]{RinWave}
  are satisfied.

  \textit{The basic assumptions.} It is clear that by restricting $(M,g)$ close enough to the big bang, it is time oriented, it has an expanding
  partial pointed foliation and $\mK$ is non-degenerate. In what follows, it is, however, important to keep in mind that in \cite{RinWave},
  $\varrho$, $\bD$ etc. are defined with respect to the reference metric $\bge_{\refer}$ introduced in \cite{RinWave} (and $\bge_{\refer}$ is typically
  different from $\bh$). We therefore here
  write $\varrho_{\refer}$ to indicate the logarithmic volume density defined with respect to $\bge_{\refer}$. On the other hand, the difference
  $\varrho-\varrho_{\refer}$ is independent of $t$. Due to \cite[Lemma~A.1, p.~201]{RinWave}, we can assume $\mK$ to have a global frame;
  this is achieved by considering a finite covering space, if necessary, an operation which does not affect the local conclusions concerning the
  asymptotics. That $\chK$ has a silent upper bound follows from Remark~\ref{remark:chKsilentupperbound}. Due to (\ref{eq:varrholnthetaCl}) and
  Proposition~\ref{prop:hmlUmKest},
  it is clear that $\hml_{U}\mK$ satisfies a weak off diagonal exponential bound; cf. \cite[Definition~3.19, p.~29]{RinWave} (in this definition,
  the requirement that the constants $C_{\mK,\rood}$, $G_{\mK,\rood}$ and $M_{\mK,\rood}$ be strictly positive is not necessary; it is sufficient if they
  are $\geq 0$). Since $\mK$ converges exponentially in any $C^{k}$-norm and
  since $\chi=0$ in our setting, it is clear that \cite[(3.28), p.~34]{RinWave} and \cite[(3.29), p.~34]{RinWave} hold. Moreover, by assumption,
  \cite[(3.18), p.~32]{RinWave} holds. Thus the basic assumptions are satisfied; cf. \cite[Definition~3.27, p.~34]{RinWave}. 

  \textit{Higher order Sobolev assumptions.} Next, let us verify that the higher order Sobolev assumptions are satisfied; cf.
  \cite[Definition~3.28, p.~34]{RinWave}. Due to the assumptions concerning the derivatives of $\ln\hN$ and the fact that $\ldr{\varrho_{\refer}}$
  and $\ldr{\ln\theta}$ are equivalent, cf. (\ref{eq:varrholnthetaCl}), it is clear that $\ln\hN$ satisfies the required bound (in what follows,
  the $\cweight$ appearing in the definition of the higher order Sobolev assumptions will be allowed to depend on the number of derivatives,
  say $l$). Since $\hU\ln N$ does not grow faster than polynomially in any $C^{k}$-norm by assumption,
  and the same is true of $\hU(\ln\theta)=-(1+q)/n$ (due to (\ref{eq:hmlUmKandqest})), it is clear that $\hU\ln\hN$ does not grow faster than
  polynomially in any $C^{k}$-norm. Combining these observations with the fact that $\chi=0$ and the fact that (\ref{eq:mKlimsfN}),
  (\ref{eq:EbfIlnthetaest}) and (\ref{eq:hmlUmKandqest}) hold, it follows that all the conditions of
  \cite[Definition~3.28, p.~34]{RinWave} are satisfied. In other words, for any choice of $l$, there is a $0\leq \cweight_{l}\in\rn{}$ such that
  the $(\cweight_{l},l)$-Sobolev assumptions are satisfied.

  \textit{Higher order $C^{k}$-assumptions.} Due to the arguments of the previous paragraph, there is, for any choice of $l$, a
  $0\leq \cweight_{l}\in\rn{}$ such that the $(\cweight_{l},l)$-supremum assumptions are satisfied; cf. \cite[Definition~3.31, p.~35]{RinWave}.

  \textit{The standard assumptions.} Since $\chi=0$, it is clear that the conditions of \cite[Lemma~3.33, p.~36]{RinWave} are satisfied.
  In particular, the standard assumptions are thus satisfied; cf. \cite[Definition~3.36, p.~36]{RinWave}. 

  \textit{Basic energy estimate.} Next, we wish to appeal to \cite[Proposition~14.24, p.~158]{RinWave}. Note, to this end, that, due to the
  above observations and the assumptions, the conditions of \cite[Lemma~7.5, p.~76]{RinWave}, \cite[Lemma~7.12, p.~81]{RinWave} and
  \cite[Lemma~7.13, p.~83]{RinWave}
  are all satisfied. Since we are here interested in the wave equation, the estimates \cite[(3.32), p.~35]{RinWave} and \cite[(3.34), p.~35]{RinWave}
  are trivially satisfied. Next, note that if $\kappa_{1}$ is the smallest integer strictly larger than $n/2+1$, then, for a suitable choice of
  $\cweight$, the $(\cweight,\kappa_{1})$-supremum assumptions are satisfied. Given $1\leq l\in\zn{}$ and a suitable choice of $\cweight_{l}$, the
  $(\cweight_{l},l)$-Sobolev assumptions are also satisfied. Since $\phi$ satisfies the wave equation, it satisfies \cite[(12.32), p.~124]{RinWave}
  with vanishing right hand side, $\hat{\mathcal{X}}^{0}=0$, $\hat{\mathcal{X}}^{A}=0$ and $\hat{\alpha}=0$.
  Due to Proposition~\ref{prop:hmlUmKest}, it is also clear that \cite[(7.78), p.~84]{RinWave} holds. Since
  $\alpha$ (using the notation of \cite[(1.1), p.~7]{RinWave}) vanishes, it is clear that \cite[(11.7), p.~110]{RinWave} holds. That
  \cite[(11.39), p.~115]{RinWave} holds is trivial due to the fact that the left hand side of this estimate vanishes identically. Summing up,
  the conditions of \cite[Proposition~14.24, p.~158]{RinWave} are satisfied. In particular, for any $1\leq l\in\zn{}$, there are constants
  $a_{l}$ and $C_{l}$ such that
  \begin{equation}\label{eq:hGlbd}
    \hG_{l}(\tau)\leq C_{l}\ldr{\tau}^{a_{l}}
  \end{equation}
  for $\tau\leq 0$. Here $\tau(t):=\varrho_{\refer}(\bx_{0},t)$ for some reference point $\bx_{0}\in\bM$. Moreover, 
  \begin{align}
    \me_{k} := & \frac{1}{2}\sum_{|\bfI|\leq k}\left(|\hU(E_{\bfI}\phi)|^{2}+\textstyle{\sum}_{A}e^{-2\mu_{A}}|X_{A}(E_{\bfI}\phi)|^{2}
    +\ldr{\tau}^{-3}|E_{\bfI}\phi|^{2}\right)\label{eq:mekdef}\\
    \hG_{k}(\tau) := & \int_{\bM_{\tau}}\me_{k}\mu_{\bge_{\refer}}.\label{eq:hGkdef}
  \end{align}
  Here $\{E_{i}\}$ is a global frame on the tangent space of $\bM$ and if $\bfI=(i_{1},\dots,i_{k})$ is such that $i_{j}\in\{1,\dots,n\}$, then
  $E_{\bfI}:=E_{i_{1}}\cdots E_{i_{k}}$ and $|\bfI|=k$. One immediate consequence of (\ref{eq:hGlbd}) is that $\phi$ does not grow faster than
  polynomially in any $C^{k}$-norm. However, this is already part of the assumptions. More interestingly, combining this estimate with
  \cite[Corollary~13.8, p.~130]{RinWave} and \cite[Remark~13.9, p.~130]{RinWave} yields the conclusion that $\hU\phi$ does not grow faster than
  polynomially in any $C^{k}$-norm. Combining this estimate with \cite[(165), p.~51]{RinGeo} and arguments similar to those of previous
  lemmas, it follows that $\hU^{2}\phi$ decays as $\theta^{-2\eta}$ in any $C^{k}$-norm. 

  Next, note that
  \[
  \d_{\tau}[\hU(\phi)]=\xi\hU^{2}(\phi),
  \]
  where $\xi:=\hN/\d_{t}\tau$. Since $\xi$ is bounded due to \cite[(7.86), p.~85]{RinWave}, and since the spatial derivatives of $\ln\hN$ do not grow
  faster than polynomially, it is clear that $\xi$ does not grow faster than polynomially in any $C^{k}$-norm. Thus $\d_{\tau}[\hU(\phi)]$
  decays exponentially in any $C^{k}$-norm. In particular, $\hU(\phi)$ converges exponentially in any $C^{k}$-norm. This guarantees the existence
  of a $\Phia\in C^{\infty}(\bM)$ such that (\ref{eq:hUphiPhiaCkest}) holds. Next, consider
  \begin{equation}\label{eq:hUphiminusPhiavarrho}
    \hU(\phi-\Phia\varrho)=\hU(\phi)-\Phia,
  \end{equation}
  where we appealed to \cite[(7.9), p.~74]{RinWave}. Since the right hand side converges to zero exponentially in every $C^{k}$-norm, we conclude
  the existence of a $\Phib\in C^{\infty}(\bM)$ such that (\ref{eq:phiPhiavarrhoPhibCkest}), and thereby (\ref{eq:phiPhialnthetaPhibCkest}), holds. 
\end{proof}

\subsection{The limits of the constraint equations in the case of a scalar field}

Next, we calculate the limits of the constraint equations in the presence of a scalar field.

\begin{lemma}\label{lemma:thelimitsoftheconstrainequations}
  Assume that the conditions of Theorem~\ref{thm:improvingasymptoticsgeneral} are satisfied. Then, using the notation of
  Theorem~\ref{thm:improvingasymptoticsgeneral}, the limits of the renormalised Hamiltonian and momentum constraints can be written
  \begin{equation}\label{eq:limitofconstraints}
    \Phia^{2}+\textstyle{\sum}_{A}p_{A}^{2}=1,\ \ \
    \mathrm{div}_{\bh}\msK=\Phia d\Phib
  \end{equation}
  respectively, where $\Phia$ and $\Phib$ are the functions whose existence is guaranteed by Lemma~\ref{lemma:scalarfieldfromgeometry}. 
\end{lemma}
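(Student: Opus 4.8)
The plan is to take the renormalised Hamiltonian and momentum constraints of the Einstein--scalar field system, written in terms of the conformally rescaled metric $\hg=\theta^{2}g$ and the associated variables (as in \cite{RinWave,RinGeo}), and to pass to the limit $t\rightarrow 0+$ using the decay estimates accumulated in the previous subsections. First I would recall the precise form of the constraints in the expansion normalised formalism; the Hamiltonian constraint schematically reads $\tr\mK^{2}-1+\theta^{-2}(\text{spatial curvature})+\theta^{-2}(\text{matter energy density})=0$, and the momentum constraint reads $\rodiv_{\bge}\mK-\theta^{-2}(\text{matter momentum density})=(\text{terms involving }q\text{ and }\theta)$, where the matter terms for a scalar field are built from $\hU\phi$ and the spatial gradient of $\phi$. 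I would locate the exact equations in \cite{RinGeo} (e.g.\ the renormalised constraints appearing there) and quote them.

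Next I would evaluate each term in the limit. For the Hamiltonian constraint: $\tr\mK^{2}\rightarrow\tr\msK^{2}=\sum_{A}p_{A}^{2}$ by (\ref{eq:mKlimsfN}); the conformally rescaled spatial scalar curvature $\bmS=\theta^{-2}\bS\rightarrow 0$ by Proposition~\ref{prop:bmRestimate}; the energy density term $\theta^{-2}\rho$, which for a scalar field is (up to the conformal factor) $\tfrac12(\hU\phi)^{2}+\tfrac12\theta^{-2}|\bnabla\phi|^{2}_{\bge}$, converges to $\tfrac12\Phi_{a}^{2}$ by Lemma~\ref{lemma:scalarfieldfromgeometry} (the gradient term carries an extra $\theta^{-2}$ and $\phi$ grows at most polynomially in $\ln\theta$, so it decays). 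Collecting these gives $\sum_{A}p_{A}^{2}-1+\tfrac{n}{n-1}\cdot\tfrac12\Phi_{a}^{2}+\dots=0$; here I would need to be careful about the exact numerical normalisation of the matter term appearing in the renormalised Hamiltonian constraint of \cite{RinGeo}, but with the conventions fixed there it should reduce precisely to $\Phi_{a}^{2}+\sum_{A}p_{A}^{2}=1$. For the momentum constraint: the right-hand side involves $q-(n-1)$ and derivatives of $\ln\theta$, which decay by Proposition~\ref{prop:hmlUmKest} and Lemma~\ref{lemma:bDkvarrholnthetaest}; the matter momentum term is $-(\hU\phi)\,d\phi$, or rather its spatial part, and I would use that $\hU\phi\rightarrow\Phi_{a}$ while $d\phi\rightarrow d\Phi_{b}$ (the latter because $\phi-\Phi_{a}\varrho-\Phi_{b}\rightarrow 0$ in $C^{k}$ by (\ref{eq:phiPhiavarrhoPhibCkest}) and $d\varrho$ behaves well), so $\theta^{-2}j_{A}\rightarrow \Phi_{a}\,d\Phi_{b}$ after accounting for the conformal weight; meanwhile $\rodiv_{\bge}\mK\rightarrow\rodiv_{\bh}\msK$ because $\mK\rightarrow\msK$ in $C^{1}$ and the induced metric $\bge$, suitably rescaled, converges to $\bh$ together with its Christoffel symbols (this is where one invokes the form (\ref{eq:asmetricgeneralcase}) of $g$, the convergence $b_{AB}\rightarrow\delta_{AB}$, and the estimates of Lemma~\ref{lemma:XABest} for the eigenframe components, so that the connection coefficients of the rescaled spatial metric converge). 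Taking the limit then yields $\rodiv_{\bh}\msK=\Phi_{a}d\Phi_{b}$.

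The main obstacle I anticipate is twofold. First, bookkeeping the conformal rescaling: all the quantities in \cite{RinGeo,RinWave} are written with respect to $\hg=\theta^{2}g$, and one must track how $\theta^{-2}$ factors distribute over $\rho$, $\bp$, $\bR$, $\bS$ and the divergence, so that the limiting equations come out with the correct coefficients; getting the factor of $\tfrac{n}{n-1}$ and the normalisation $8\pi$ (cf.\ Remark~\ref{remark:eightpi}) to collapse to the clean statement $\Phi_{a}^{2}+\tr\msK^{2}=1$ requires care. Second, the momentum constraint limit requires that $\rodiv_{\bge}\mK$ genuinely converges to $\rodiv_{\bh}\msK$, which is not purely algebraic: one must show the Levi-Civita connection of the (rescaled) spatial metric converges to that of $\bh$ in $C^{0}$, using the higher-order convergence $\bD^{k}(b_{AB}-\delta_{AB})=O(\theta^{-2\eta})$ together with $\bD^{k}(X^{A}_{B}-\delta^{A}_{B})=O(\theta^{-2\eta}\min\{1,\theta^{-2(p_{B}-p_{A})}\})$ from Lemma~\ref{lemma:XABest}, and then that $\mK$ with its one derivative converges; the off-diagonal decay rates in Lemma~\ref{lemma:XABest} are exactly what is needed to control the terms that would otherwise be dangerous. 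Once these two points are handled, the remaining computation is a routine passage to the limit term by term.
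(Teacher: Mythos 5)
Your treatment of the Hamiltonian constraint is essentially the paper's: one takes the renormalised Hamiltonian constraint, notes $\tr\mK^{2}\rightarrow\sum_{A}p_{A}^{2}$, $\theta^{-2}\bS\rightarrow 0$ (Proposition~\ref{prop:bmRestimate}), $\Lambda/\theta^{2}\rightarrow 0$, and that the energy density term $2\Omega$ tends to $\Phi_{a}^{2}$ (the spatial-gradient part of $\rho$ carries the factor $e^{-2\mu_{A}}$ and decays), which gives the first identity in (\ref{eq:limitofconstraints}). That half is fine.

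The momentum constraint is where the proposal has a genuine gap. You assert that $d\phi\rightarrow d\Phi_{b}$ and that $\rodiv_{\bge}\mK\rightarrow\rodiv_{\bh}\msK$, and then propose a term-by-term passage to the limit. Neither convergence holds. Since $\phi-\Phi_{a}\varrho-\Phi_{b}\rightarrow 0$ with $\varrho\sim-\ln\theta\rightarrow-\infty$ (cf.\ (\ref{eq:phiPhialnthetaPhibCkest})), one has $X_{C}(\phi)=\Phi_{a}X_{C}(\varrho)+\varrho\, X_{C}(\Phi_{a})+X_{C}(\Phi_{b})+o(1)$, and neither $X_{C}(\varrho)\approx -X_{C}(\ln\theta)$ nor $\varrho\,X_{C}(\Phi_{a})$ converges: the spatial derivatives of $\ln\theta$ are only polynomially bounded in $\ln\theta$ by (\ref{eq:EbfIlnthetaest}), they do not decay, and they are not assumed to converge. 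Likewise, the divergence in the momentum constraint is taken with respect to the induced metric $\bge$, whose connection coefficients in the eigenframe contain $X_{C}(\bmu_{A})=-p_{A}X_{C}(\ln\theta)+O(\theta^{-2\eta})$ (cf.\ (\ref{eq:bmuApAlntheta})), again non-convergent; and the constraint itself carries the explicitly divergent terms $\mK^{m}_{\phantom{m}j}\bnabla_{m}\ln\theta-\bnabla_{j}\ln\theta$. Replacing $\bge$ by a rescaled metric such as $\chh$ does not remove these terms, it merely regenerates them with $p_{A}$-weights. What actually makes the limit exist is a cancellation among all these individually non-convergent pieces: in the eigenframe the constraint contains $\ell_{C}X_{C}(\sum_{A}\bmu_{A}+\ln\theta)$, which decays because $\sum_{A}p_{A}=1$, and the remaining combination $\sum_{A}\ell_{A}X_{C}(\bmu_{A})+\hU(\phi)X_{C}(\phi)+X_{C}(\ln\theta)$ reduces to $\Phi_{a}X_{C}(\Phi_{b})$ up to decaying terms precisely because $\sum_{A}p_{A}^{2}+\Phi_{a}^{2}=1$; that is, the already-established limit of the Hamiltonian constraint must be fed back in to cancel both the $\ln\theta$-weighted and the $X_{C}(\ln\theta)$-weighted contributions. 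Only after this cancellation does the finite limit $\msX_{C}(p_{C})+p_{C}\g^{A}_{AC}-\sum_{A}p_{A}\g^{A}_{AC}=\Phi_{a}\msX_{C}(\Phi_{b})$, i.e.\ $\rodiv_{\bh}\msK=\Phi_{a}d\Phi_{b}$, emerge. Your proposal neither identifies this cancellation nor the dependence of the momentum limit on the Hamiltonian limit, so as written the second identity in (\ref{eq:limitofconstraints}) does not follow.
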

\begin{proof}
  Due to \cite[(8), p.~7]{RinGeo}, the Hamiltonian constraint implies that
  \begin{equation}\label{eq:Hamiltonianconstraint}
  1=2\Omega+2\Omega_{\Lambda}+\mathrm{tr}\mK^{2}-\theta^{-2}\bS,
  \end{equation}
  where $\Omega=\rho/\theta^{2}$, $\rho$ denotes the energy density associated with the scalar field, $\Omega_{\Lambda}=\Lambda/\theta^{2}$ and $\bS$
  denotes the spatial scalar curvature. It is clear that $\Omega_{\Lambda}$ converges to zero. Moreover, due to
  \cite[(151), p.~49]{RinGeo} and the conclusions of Lemma~\ref{lemma:scalarfieldfromgeometry}, it is clear that  $2\Omega$ converges
  to $\Phia^{2}$. Due to (\ref{eq:bmSbmRexpdecay}), it is also clear that $\theta^{-2}\bS$ converges to zero. Finally, $\tr \mK^{2}$ equals
  the sum of the squares of the $\ell_{A}$. This converges to the sum of the squares of the $p_{A}$. To conclude, the limit of the expansion
  normalised Hamiltonian constraint is the first equality in (\ref{eq:limitofconstraints}). 
    
  Next, note that the momentum constraint can be written
  \[
  \bnabla_{m}\mK^{m}_{\phantom{m}j}+\mK^{m}_{\phantom{m}j}\bnabla_{m}\ln \theta-\bnabla_{j}\ln\theta=\hU(\phi)\bnabla_{j}\phi.
  \]
  Compute
  \[
  \bnabla_{A}\mK^{B}_{\phantom{B}C}=X_{A}(\mK^{B}_{\phantom{B}C})-\mK(\bnabla_{X_{A}}Y^{B},X_{C})-\mK(Y^{B},\bnabla_{X_{A}}X_{C}).
  \]
  Choosing $B=A$ and summing over $A$, the first term on the right hand side becomes $X_{C}(\ell_{C})$ (no summation). Next,
  \begin{equation*}
    \begin{split}
      -\mK(\bnabla_{X_{A}}Y^{A},X_{C}) = & -\ell_{C}(\bnabla_{X_{A}}Y^{A})(X_{C})=\ell_{C}Y^{A}(\bnabla_{X_{A}}X_{C})\\
      = & \textstyle{\sum}_{A}\ell_{C}e^{-2\bmu_{A}}\ldr{\bnabla_{X_{A}}X_{C},X_{A}}
    \end{split}
  \end{equation*}
  (no summation on $C$). However,
  \[
  \textstyle{\sum}_{A}e^{-2\bmu_{A}}\ldr{\bnabla_{X_{A}}X_{C},X_{A}}=\textstyle{\sum}_{A}e^{-2\bmu_{A}}\ldr{\lambda_{AC}^{D}X_{D}
    +\bnabla_{X_{C}}X_{A},X_{A}}=\textstyle{\sum}_{A}X_{C}(\bmu_{A})+\lambda^{A}_{AC},
  \]
  where $[X_{B},X_{C}]=\lambda_{BC}^{A}X_{A}$. To conclude
  \[
  -\mK(\bnabla_{X_{A}}Y^{A},X_{C})=\ell_{C}X_{C}\left(\textstyle{\sum}_{A}\bmu_{A}\right)+\ell_{C}\lambda^{A}_{AC}
  \]
  (no summation on $C$). Next,
  \begin{equation*}
    \begin{split}
      -\mK(Y^{A},\bnabla_{X_{A}}X_{C}) = & -\textstyle{\sum}_{A}\ell_{A}Y^{A}(\bnabla_{X_{A}}X_{C})
      = -\textstyle{\sum}_{A}\ell_{A}e^{-2\bmu_{A}}\ldr{\bnabla_{X_{A}}X_{C},X_{A}}\\
      = & -\textstyle{\sum}_{A}\ell_{A}\lambda^{A}_{AC}-\textstyle{\sum}_{A}\ell_{A}X_{C}(\bmu_{A}).
    \end{split}
  \end{equation*}
  The momentum constraint can thus be rewritten
  \begin{equation*}
    \begin{split}
      & X_{C}(\ell_{C})+\ell_{C}\lambda^{A}_{AC}-\textstyle{\sum}_{A}\ell_{A}\lambda^{A}_{AC}+(\ell_{C}-1)X_{C}(\ln\theta)\\
      & +\ell_{C}X_{C}\left(\textstyle{\sum}_{A}\bmu_{A}\right)-\textstyle{\sum}_{A}\ell_{A}X_{C}(\bmu_{A})=\hU(\phi)X_{C}(\phi)
    \end{split}
  \end{equation*}
  (no summation on $C$). The first three terms on the left hand side have finite limits. However, the same is not in general true of the last
  three terms on the left hand side, nor is it in general true of the right hand side. However, it is of interest to consider
  \begin{equation}\label{eq:FirstSingularTerm}
    \ell_{C}X_{C}\left(\textstyle{\sum}_{A}\bmu_{A}+\ln\theta\right).
  \end{equation}
  Due to (\ref{eq:bmuApAlntheta}) and the fact that the sum of the $p_{A}$ equals $1$, it is clear that the expression (\ref{eq:FirstSingularTerm})
  converges to zero exponentially. Next, consider
  \[
  \textstyle{\sum}_{A}\ell_{A}X_{C}(\bmu_{A})+\hU(\phi)X_{C}(\phi)+X_{C}(\ln\theta).
  \]
  Since $\ell_{A}$ converges to $p_{A}$ exponentially, since (\ref{eq:bmuApAlntheta}) holds, since $\hU(\phi)$ converges to $\Phia$ exponentially,
  and since $\phi+\Phia\ln\theta-\Phib$ converges to zero exponentially in any $C^{k}$-norm, this expression equals
  \begin{equation*}
    \begin{split}
      & \textstyle{\sum}_{A}p_{A}X_{C}(-p_{A}\ln\theta)+\Phia X_{C}(-\Phia\ln\theta+\Phib)+X_{C}(\ln\theta)\\
      = & \left[-\textstyle{\sum}_{A}p_{A}X_{C}(p_{A})-\Phia X_{C}(\Phia)\right]\ln\theta
      -\left[\textstyle{\sum}_{A}p_{A}^{2}+\Phia^{2}\right]X_{C}(\ln\theta)\\
      & +\Phia X_{C}(\Phib)+X_{C}(\ln\theta)=\Phia X_{C}(\Phib)
    \end{split}
  \end{equation*}
  up to terms that decay exponentially in any $C^{k}$-norm. Here we used the fact that $\textstyle{\sum}_{A}p_{A}^{2}+\Phia^{2}=1$, so that
  \[
  \textstyle{\sum}_{A}p_{A}X_{C}(p_{A})+\Phia X_{C}(\Phia)=0.
  \]
  Summing up, the expansion normalised limit of the momentum constraint is
  \[
  \msX_{C}(p_{C})+p_{C}\g^{A}_{AC}-\textstyle{\sum}_{A}p_{A}\g^{A}_{AC}=\Phia\msX_{C}(\Phib)
  \]
  (no summation on $C$), where $\g^{A}_{BC}$ are the structure constants associated with the frame $\{\msX_{A}\}$. On the other hand, an argument
  similar to the above yields
  \[
  (\mathrm{div}_{\bh}\msK)(\msX_{C})=\msX_{C}(p_{C})+p_{C}\g^{A}_{AC}-\textstyle{\sum}_{A}p_{A}\g^{A}_{AC}
  \]
  (no summation on $C$); cf. the beginning of the proof of Lemma~\ref{lemma:momcondivmsKz}. 
  The limit of the momentum constraint can thus be written $\mathrm{div}_{\bh}\msK=\Phia d\Phib$. 
\end{proof}

\subsection{Proof of Theorem~\ref{thm:improvingasymptoticsgeneral}}\label{ssection:proofpropimprovegen}
Finally, we are in a position to prove Theorem~\ref{thm:improvingasymptoticsgeneral}.

\begin{proof}[Theorem~\ref{thm:improvingasymptoticsgeneral}]
  The estimate (\ref{eq:lnthetavarrhobmuApAlntheta}) follows from (\ref{eq:varrholnthetaCl}) and (\ref{eq:bmuApAlntheta}). The estimate
  (\ref{eq:mKestrefined}) follows from (\ref{eq:mKmsYAmsXBestimate}). The estimates (\ref{eq:hmlUmKrough}) and (\ref{eq:qminusnminusoneimprprop})
  follow from (\ref{eq:hmlUmKandqest}). The estimate (\ref{eq:hmlUmKestrefined}) follows from (\ref{eq:hmlUcompbettest}). The existence of
  $\Phia$ and $\Phib$ as well as the conclusion that the asymptotic versions of the constraint equations hold is an immediate consequence
  of Lemmas~\ref{lemma:scalarfieldfromgeometry} and \ref{lemma:thelimitsoftheconstrainequations}.
\end{proof}

\section{Proof of the improvements in the case of Gaussian foliations}\label{section:impGaussianfol}

The first step in the proof of Theorem~\ref{thm:improvingasymptoticsgeneralG} is to relate $\theta$ and $t$. 

\begin{lemma}\label{lemma:lntthetatthetaminusoneestimate}
  Assume the conditions of Theorem~\ref{thm:improvingasymptoticsgeneralG} to be fulfilled. Then, using the notation of
  Theorem~\ref{thm:improvingasymptoticsgeneralG}, there is, for each $l\in\nn{}$, a constant $C_{l}$ such that 
  \begin{equation}\label{eq:tthetalimhd}
    \|\ln\theta(\cdot,t)+\ln t\|_{C^{l}(\bM)}+\|t\theta(\cdot,t)-1\|_{C^{l}(\bM)}\leq C_{l}t^{\vare}
  \end{equation}
  for all $t\leq t_{0}$. In particular, (\ref{eq:asmetricgeneralGaussiancase}), (\ref{eq:chhdef}) and (\ref{eq:limitbABtobhchhsf}) can be
  reformulated in terms of the mean curvature: there are smooth functions $\hat{b}_{AB}$, $A,B=1,\dots,n$, and, for each $l\in\nn{}$, a constant
  $C_{l}$ such that 
  \begin{align}
    g = & -dt\otimes dt+\textstyle{\sum}_{A,B}\hat{b}_{AB}\theta^{-2p_{\max\{A,B\}}}\msY^{A}\otimes \msY^{B},\label{eq:Psistargmeancurvhd}\\
    \hat{h} := & \textstyle{\sum}_{A,B}\hat{b}_{AB}\msY^{A}\otimes \msY^{B},\label{eq:hathmeancurvhd}\\
    |\bD^{l}(\hat{h}-\bh)|_{\bh} \leq & C_{l}\theta^{-\vare},\label{eq:hathlimithd}
  \end{align}
  where the last estimate holds on $M_{0}$.
\end{lemma}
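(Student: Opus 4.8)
The plan is to reduce everything to a statement about the mean curvature $\theta$, so I focus on establishing \eqref{eq:tthetalimhd}; the reformulation \eqref{eq:Psistargmeancurvhd}--\eqref{eq:hathlimithd} then follows by a purely algebraic substitution. First I would write down the evolution equation for $\theta$ along the Gaussian foliation. In the metric \eqref{eq:asmetricgeneralGaussiancase} the lapse is $1$ and the shift vanishes, so $\d_t = U$ and the mean curvature satisfies the Raychaudhuri-type equation $\d_t\theta = |\bk|^2 + \mathrm{Ric}(U,U)$ (with the sign conventions of the paper, where $\theta>0$ near the singularity). Using $\bk = \theta\mK - \text{(trace-adjustments)}$ and the Einstein-scalar field equations to rewrite $\mathrm{Ric}(U,U)$ in terms of $\rho$, $\Lambda$ and $\tr\mK^2$, one gets $\d_t\theta = \theta^2(\tr\mK^2 + \text{matter} + \Lambda\text{-terms})/(\text{const})$, i.e. schematically $\d_t(\theta^{-1}) = -(\tr\mK^2 + \dots)$. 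Since \eqref{eq:mKlimsf} gives $\mK\to\msK$ and \eqref{eq:Phialim} gives $\hU\phi\to\Phi_a$ (hence control of $\rho/\theta^2$), and $\tr\msK^2 + \Phi_a^2 = 1$ by hypothesis (2), the right-hand side of $\d_t(\theta^{-1})$ converges to $-1$ at rate $t^\vare$, uniformly and in every $C^l$. Integrating from $t$ to $t_0$ and using that $\theta\to\infty$ (so $\theta^{-1}\to 0$) yields $\theta^{-1} = t(1+O(t^\vare))$, which is exactly $t\theta - 1 = O(t^\vare)$ and, after taking logarithms, $\ln\theta + \ln t = O(t^\vare)$.

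The slightly delicate point is getting the $C^l$ estimates, not just $C^0$. For this I would differentiate the equation $\d_t(\theta^{-1}) = -F$, where $F := \tr\mK^2 + (\text{scalar field terms}) + (\Lambda\text{ terms})$, in the spatial directions. The key input is that \eqref{eq:mKlimsf}--\eqref{eq:Phialim} are assumed in $C^l(\bM)$ for every $l$ with rate $t^\vare$, so $\|\bD^l(F - 1)\|_{C^0(\bM)} \le C_l t^\vare$ once one also controls spatial derivatives of the curvature and scalar-field contributions to $F$; those are built from $\mK$, $\bh$ (hence the $b_{AB}$), $\phi$ and $\theta$ itself, all of which are either assumed convergent in $C^l$ or already under inductive control. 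Then $\d_t \bD^l(\theta^{-1}) = -\bD^l F$, and integrating in $t$ gives $\|\bD^l(\theta^{-1} - t)\|_{C^0} \le C_l t^{1+\vare}$. Rewriting this as an estimate for $t\theta - 1$ and for $\ln\theta + \ln t$ — using that $\theta^{-1}$ is comparable to $t$ so dividing by $\theta^{-1}$ or $t$ only costs a constant — produces \eqref{eq:tthetalimhd}. A minor care point: $\vare\le 2$ is assumed in the theorem, which keeps $t^\vare$ genuinely decaying and avoids having to track whether the remainder beats the leading term; I would use this to absorb lower-order corrections without fuss.

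For the reformulation: starting from \eqref{eq:asmetricgeneralGaussiancase}, $g = -dt\otimes dt + \sum_{A,B} b_{AB} t^{2p_{\max\{A,B\}}}\msY^A\otimes\msY^B$, I define
\[
\hat b_{AB} := b_{AB}\,(t\theta)^{2p_{\max\{A,B\}}},
\]
so that $b_{AB}t^{2p_{\max\{A,B\}}} = \hat b_{AB}\theta^{-2p_{\max\{A,B\}}}$, which is exactly \eqref{eq:Psistargmeancurvhd}, and \eqref{eq:hathmeancurvhd} is just the definition of $\hat h$. Smoothness of $\hat b_{AB}$ follows from smoothness of $b_{AB}$ and of $t\theta$ (note $t\theta$ is bounded away from $0$ near the singularity by \eqref{eq:tthetalimhd}). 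Finally, $\hat h - \bh = (\hat h - \chh) + (\chh - \bh)$; the second term is $O(\theta^{-\vare})$ in $C^l$ by \eqref{eq:limitbABtobhchhsf} combined with \eqref{eq:tthetalimhd} (which lets one swap $t^{\vare}$ for $\theta^{-\vare}$), and the first term has components $\hat b_{AB} - b_{AB} = b_{AB}\big((t\theta)^{2p_{\max\{A,B\}}} - 1\big)$, which by \eqref{eq:tthetalimhd} and the Leibniz rule is $O(t^\vare) = O(\theta^{-\vare})$ in $C^l(\bM)$ since the $b_{AB}$ are bounded in $C^l$ (being convergent). This gives \eqref{eq:hathlimithd}.

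The main obstacle I expect is organizing the bootstrap for the $C^l$ bounds on $\theta$: one must verify that every term entering $F$ (the conformally-rescaled spatial curvature, the rescaled energy density, the cosmological term) is controlled in $C^l(\bM)$ with the stated rate, and this requires knowing that spatial derivatives of $b_{AB}$, of $\phi$, and of $\theta$ do not grow too fast — the hypotheses supply this for $b_{AB}$ and $\phi$ directly via \eqref{eq:mKlimsf}--\eqref{eq:Phialim}, and for $\theta$ it is precisely what one is proving, so the argument must be set up as a genuine induction on $l$ (establish the $C^0$ bound, feed it into the $\bD^1$ equation, etc.). Everything else is bookkeeping.
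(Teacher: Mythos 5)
Your reduction to the scalar ODE $\d_{t}\theta^{-1}=(q+1)/n$ with right-hand side converging to $1$ at rate $t^{\vare}$ is indeed the engine of the paper's proof (the paper gets it from combining the Hamiltonian constraint with the formula for $q$, which yields exactly $q-(n-1)=-\tfrac{2n}{n-1}\Lambda\theta^{-2}+n[(\hU\phi)^{2}+\tr\mK^{2}-1]$, cf.\ (\ref{eq:qmnmone})). But there is a genuine gap at the start: you integrate this ODE from $t=0$ ``using that $\theta\rightarrow\infty$ (so $\theta^{-1}\rightarrow 0$)''. That is not among the hypotheses of Theorem~\ref{thm:improvingasymptoticsgeneralG}: in the Gaussian case only the lower bound $\theta\geq\theta_{0}>0$ is assumed (with $\theta_{0}>[2\Lambda/(n-1)]^{1/2}$ if $\Lambda>0$), and the divergence of $\theta$ is part of what the lemma must establish. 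Without it, integrating from a regular time $t_{0}$ only gives $\theta^{-1}(t)=c(\bx)+t+O(t^{1+\vare})$ with some limit $c\geq 0$, and if $c>0$ then $t\theta\rightarrow 0$, not $1$, so the conclusion fails; nothing in your argument excludes this. The paper closes precisely this hole by first proving the lower bound $\ln\theta\geq-\ln t-C$: it rewrites the identity as $\hU(\varrho+\ln\theta)=\tfrac{2}{n-1}\Lambda\theta^{-2}+1-(\hU\phi)^{2}-\tr\mK^{2}$ (cf.\ (\ref{eq:hUvarrhopluslntheta})), integrates along integral curves of $\hU$ parametrised by $\varrho$ from $t_{0}$ \emph{towards} the singularity, using (\ref{eq:varrholntCl}) and the sign of the $\Lambda$-term when $\Lambda\leq 0$, and the extra assumption $\theta_{0}>[2\Lambda/(n-1)]^{1/2}$ when $\Lambda>0$. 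Only after this is the boundary condition $\theta^{-1}(\cdot,0^{+})=0$ available and your integration from $0$ legitimate. Your proposal does not address the $\Lambda>0$ case at all, where this is the only place the hypothesis on $\theta_{0}$ enters.

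Two smaller points. First, your $F$ should contain no spatial curvature and no spatial gradients of $\phi$: the scalar-field stress-energy gives $\mathrm{Ric}(U,U)=(\d_{t}\phi)^{2}-\tfrac{2\Lambda}{n-1}$ exactly, so the ``conformally-rescaled spatial curvature'' and ``rescaled energy density'' you list among the terms to be controlled in the bootstrap simply do not appear; this matters, because $\theta^{-2}\bS$ is \emph{not} controlled by the hypotheses of this lemma, and if it did appear your induction would not close. Second, for the higher derivatives the circularity you flag ($\bD^{l}$ of the $\Lambda\theta^{-2}$ term involves $\bD^{l}\ln\theta$) is real and is not resolved by ``induction on $l$'' alone if you insist on integrating from $t=0$, since you would also need $\bD^{l}\theta^{-1}\rightarrow 0$ there; the paper instead first proves $C^{l}$ \emph{boundedness} of $\ln\theta+\varrho$ by a Gronwall argument along integral curves propagating data from $t_{0}$ (differentiating (\ref{eq:hUvarrhopluslntheta}) spatially), and only then differentiates the integrated Raychaudhuri identity. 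With the lower bound on $\theta$ and this ordering of steps supplied, the rest of your argument, including the algebraic substitution $\hat{b}_{AB}=b_{AB}(t\theta)^{2p_{\max\{A,B\}}}$ for (\ref{eq:Psistargmeancurvhd})--(\ref{eq:hathlimithd}), is in line with the paper.
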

\begin{remark}
  Given (\ref{eq:tthetalimhd}), the last statement of the lemma can be reversed. In other words, given (\ref{eq:tthetalimhd}),
  (\ref{eq:Psistargmeancurvhd}), (\ref{eq:hathmeancurvhd}) and (\ref{eq:hathlimithd}), it is possible to deduce that
  (\ref{eq:asmetricgeneralGaussiancase}), (\ref{eq:chhdef}) and (\ref{eq:limitbABtobhchhsf}) hold. 
\end{remark}
\begin{proof}
  Due to the Hamiltonian constraint, \cite[(8), p.~7]{RinGeo}, the equality (\ref{eq:Hamiltonianconstraint}) holds. On the other hand, the
  deceleration parameter $q$ is given by
  \begin{equation}\label{eq:qGaussianfol}
    q=n-1-\frac{n^{2}}{n-1}\frac{\rho-\bp}{\theta^{2}}-\frac{2n^{2}}{n-1}\frac{\Lambda}{\theta^{2}}+n\frac{\bS}{\theta^{2}},
  \end{equation}
  where we appealed to \cite[(266), p.~65]{RinGeo} and $\bp:=\bge^{ij}T_{ij}/n$, where $T$ is the stress energy tensor associated with the scalar
  field matter. Combining (\ref{eq:Hamiltonianconstraint}), (\ref{eq:qGaussianfol}), \cite[(151), p.~49]{RinGeo} and \cite[Remark~81, p.~49]{RinGeo}
  yields
  \begin{equation}\label{eq:qmnmone}
    q-(n-1)=-\frac{2n}{n-1}\frac{\Lambda}{\theta^{2}}+n[(\hU\phi)^{2}+\tr\mK^{2}-1].
  \end{equation}
  Next, note that since $\Phia^{2}+\tr\msK^{2}=1$, 
  \begin{equation}\label{eq:assumptoHamcon}
    (\hU\phi)^{2}+\tr\mK^{2}-1=(\hU\phi-\Phia)^{2}+2\Phia(\hU\phi-\Phia)+\tr(\mK-\msK)^{2}+2\tr[\msK(\mK-\msK)].
  \end{equation}
  In particular, for each $l\in\nn{}$, there is a constant $C_{l}$ such that 
  \[
  \|[(\hU\phi)^{2}+\tr\mK^{2}-1](\cdot,t)\|_{C^{l}(\bM)}\leq C_{l}t^{\vare}
  \]
  for $t\leq t_{0}$, due to the assumptions. Next, note that
  \begin{equation}\label{eq:hUvarrhopluslntheta}
    \hU(\varrho+\ln\theta)=-\frac{1}{n}[q-(n-1)]=\frac{2}{n-1}\frac{\Lambda}{\theta^{2}}+1-(\hU\phi)^{2}-\tr\mK^{2},
  \end{equation}
  where we appealed to \cite[(3.4), p.~26]{RinWave}, \cite[(7.9), p.~74]{RinWave} and (\ref{eq:qmnmone}). Let $\g$ be an integral curve of $\hU$
  such that $\varrho\circ\g(s)=s$; cf. (\ref{eq:varrhocges}) and the adjacent text. Then, if $\Lambda\leq 0$,
  \[
  \frac{d}{ds}[(\varrho+\ln\theta)\circ\g](s)\leq Ce^{\vare s},
  \]
  where we appealed to (\ref{eq:varrholntCl}) and the fact that $\varrho\circ\g(s)=s$. Integrating this estimate from $s_{a}$ to $s_{b}\leq 0$,
  where $s_{a}\leq s_{b}$ and $t[\g(s_{b})]\leq t_{0}$, yields
  \[
  s_{b}+\ln\theta\circ\g(s_{b})-s_{a}-\ln\theta\circ\g(s_{a})\leq Ce^{\vare s_{b}}.
  \]
  Assume now that $\g(s_{b})=(\bx,t_{0})$ and $\g(s_{a})=(\bx,t)$ with $0<t\leq t_{0}$. Then this inequality yields
  \[
  \ln\theta(\bx,t)\geq -\varrho(\bx,t)+\varrho(\bx,t_{0})+\ln\theta(\bx,t_{0})-Ce^{\vare\varrho(\bx,t_{0})}.
  \]
  Since this is true for any $\bx\in\bM$ and since (\ref{eq:varrholntCl}) holds,
  \[
  \ln\theta(\bx,t)\geq -\ln t+\ln\theta_{-}-C,
  \]
  where $C$ is allowed to depend on $t_{0}$, but not on $\bx$. Moreover, $\theta_{-}:=\inf_{\bx\in\bM}\theta(\bx,t_{0})$. Since $\theta_{-}>0$ by
  assumption, we conclude that $\ln\theta\geq -\ln t-C$ on $M_{0}$. Combining this observation with (\ref{eq:hUvarrhopluslntheta}) yields the
  conclusion that $\varrho+\ln\theta$ is bounded, and thereby the conclusion that $\ln\theta+\ln t$ is bounded. In other words, there is a
  constant $C$ such that
  \begin{equation}\label{eq:lnthpvarrholnthplnt}
    \|[\ln\theta+\varrho](\cdot,t)\|_{C^{0}(\bM)}+\|\ln\theta(\cdot,t)+\ln t\|_{C^{0}(\bM)}\leq C
  \end{equation}
  for $t\leq t_{0}$. 

  If $\Lambda>0$, we assume that $\theta\geq \theta_{0}$ on $M_{0}$, where $\theta_{0}\in\rn{}$ satisfies $\theta_{0}>[2\Lambda/(n-1)]^{1/2}$.
  Then (\ref{eq:hUvarrhopluslntheta}) can be used to deduce that $\ln\theta$ tends to infinity
  as $-\sigma\ln t$ for some $\sigma>0$. Inserting this information into (\ref{eq:hUvarrhopluslntheta}) again leads to the conclusion
  that (\ref{eq:lnthpvarrholnthplnt}) holds.

  Next, we wish to estimate $E_{\bfI}\ln\theta$; cf. the notation introduced at the beginning of the proof of Lemma~\ref{lemma:bDkvarrholnthetaest}.
  Apply, to this end, $E_{i}$ to (\ref{eq:hUvarrhopluslntheta}). This yields, recalling that $\hU=\theta^{-1}\d_{t}$ in the present setting, 
  \begin{equation}\label{eq:Eilnthetafirststep}
    \begin{split}
      & -E_{i}(\ln\theta+\varrho)\hU(\varrho+\ln\theta)+E_{i}(\varrho)\hU(\varrho+\ln\theta)+\hU[E_{i}(\varrho+\ln\theta)]\\
      = & -\frac{4}{n-1}\frac{\Lambda}{\theta^{2}}E_{i}(\varrho+\ln\theta)+\frac{4}{n-1}\frac{\Lambda}{\theta^{2}}E_{i}(\varrho)
      +E_{i}[1-(\hU\phi)^{2}-\tr\mK^{2}].
    \end{split}
  \end{equation}
  Introducing the notation $f:=E_{i}(\ln\theta+\varrho)$, this equality can schematically be written
  \[
  \hU(f)=g_{1}f+g_{2}.
  \]
  Here $\|g_{i}(\cdot,t)\|_{C^{0}(\bM)}\leq Ct^{\vare}$, where we appealed to (\ref{eq:varrholntCl}), (\ref{eq:assumptoHamcon}) and the fact that the
  right hand side of (\ref{eq:hUvarrhopluslntheta}) is $O(t^{\vare})$. Evaluating this equality on an integral curve $\g$ as above yields
  \[
  \frac{d}{ds}(f^{2}\circ\g+1)\geq -2(|g_{1}\circ\g|+|g_{2}\circ\g|)(f^{2}\circ\g+1).
  \]
  Since $g_{i}\circ\g(s)=O(e^{\vare s})$, this estimate can be integrated to yield the conclusion that $f$ is bounded on $M_{0}$. This means that
  (\ref{eq:lnthpvarrholnthplnt}) can be improved in that the $C^{0}$-norm can be replaced by the $C^{1}$-norm. Proceeding inductively, and applying
  successively higher order derivatives to (\ref{eq:Eilnthetafirststep}), similar arguments yield the conclusion that for every $l\in\nn{}$, there
  is a constant $C_{l}$ such that 
  \begin{equation}\label{eq:lnthpvarrholnthplntho}
    \|[\ln\theta+\varrho](\cdot,t)\|_{C^{l}(\bM)}+\|\ln\theta(\cdot,t)+\ln t\|_{C^{l}(\bM)}\leq C
  \end{equation}
  for all $t\leq t_{0}$.

  Next, consider $\d_{t}\theta^{-1}=-\theta^{-2}\d_{t}\theta=(q+1)/n$. Integrating this equality from $0$ to $t$, keeping in mind that $\theta^{-1}$
  vanishes at $t=0$, yields
  \[
  [\theta(\bx,t)]^{-1}=t+\int_{0}^{t}[q(\bx,s)-(n-1)]/n\ ds.
  \]
  Due to (\ref{eq:qmnmone}), (\ref{eq:lnthpvarrholnthplntho}) and the assumptions, this equality can be differentiated $l$ times in order
  to deduce that  
  \[
  \|\theta^{-1}(\cdot,t)-t\|_{C^{l}(\bM)}\leq Ct^{1+\vare}
  \]
  for $t\leq t_{0}$. The lemma follows. 
\end{proof}

\subsection{Proof of Theorem~\ref{thm:improvingasymptoticsgeneralG}}\label{ssection:proofpropimprovegenG}
Finally, we are in a position to prove Theorem~\ref{thm:improvingasymptoticsgeneralG}.

\begin{proof}[Theorem~\ref{thm:improvingasymptoticsgeneralG}]
  The idea is to verify that the conditions of Theorem~\ref{thm:improvingasymptoticsgeneral} are satisfied and to combine the
  conclusions of this theorem with an estimate relating $t$ and $\theta$ in order to obtain the desired conclusions. Due to
  Lemma~\ref{lemma:lntthetatthetaminusoneestimate}, we know that (\ref{eq:tthetalimhd}) holds and that the metric can be represented by
  (\ref{eq:Psistargmeancurvhd}). This means, in particular, that the metric can be represented as in (\ref{eq:asmetricgeneralcase})
  with $N=1$. Due to (\ref{eq:tthetalimhd}), $\theta$ diverges to $\infty$ uniformly. Moreover, due to (\ref{eq:mKlimsf}), (\ref{eq:hathlimithd})
  and (\ref{eq:tthetalimhd}), it follows that (\ref{eq:mKlimsfN}) and (\ref{eq:limitbABtobhchhsfN}) hold. Next, due to
  (\ref{eq:tthetalimhd}) and the fact that $\hN=\theta$, it follows that (\ref{eq:Crorelbd}) holds and that the first term on the
  left hand side of (\ref{eq:lnthetalnNphipolbd}) is bounded by a third times the right hand side. Since the third term on the left hand
  side vanishes identically (since $N=1$), it remains to estimate the middle term. On the other hand,
  \[
  t\d_{t}(\phi-\Phia\ln t)=t\theta\hU(\phi)-\Phia=t\theta[\hU(\phi)-\Phia]+(t\theta-1)\Phia.
  \]
  The first term on the right hand side is bounded by $C_{k}t^{\vare}$ in $C^{k}$ and the second term on the right hand side is bounded by
  $C_{k}t^{\vare}$ in $C^{k}$ due to (\ref{eq:tthetalimhd}). Integrating the corresponding estimate results in the existence of a
  $\Phib\in C^{\infty}(\bM)$ such that
  \[
  \|\phi(\cdot,t)-\Phia\ln t-\Phib\|_{C^{k}(\bM)}\leq C_{k}t^{\vare}
  \]
  for all $t\leq t_{0}$. 
  This estimate implies that the second term on the right hand side of (\ref{eq:lnthetalnNphipolbd}) satisfies the desired bound. To conclude,
  all the conditions of Theorem~\ref{thm:improvingasymptoticsgeneral} are satisfied. This means that all the conclusions hold, and due to
  (\ref{eq:tthetalimhd}), the mean curvature $\theta$ can be replaced by $t^{-1}$ in the conclusions. This yields the conclusions of
  Theorem~\ref{thm:improvingasymptoticsgeneralG}. 
\end{proof}

\section{Obtaining data on the singularity from convergent solutions}

For the remainder of the article, we assume that we have a solution which converges in the sense described at the beginning of
Subsection~\ref{ssection:contodataonsing}. The goal is then to conclude that such solutions yield data on the singularity. We
begin by deducing information concerning the asymptotics of the expansion normalised Weingarten map in the $C^{0}$-setting. 

\subsection{Asymptotics in $C^{0}$}\label{ssection:Czlim}
Let us begin by deriving limits for $\mK$ and $\hU\phi$. 

\begin{lemma}\label{lemma:mKconvtomsK}
  Let $(M,g,\phi)$ be a solution to the Einstein-scalar field equations with a cosmological constant $\Lambda$ and a uniform volume singularity
  at $t=0$; cf. Definition~\ref{def:volsing}. If (\ref{eq:CkexpdechmlUmK}) holds with $k=0$, there is a continuous $(1,1)$-tensor field $\msK$
  on $\bM$ and a constant $C$, depending only on $C_{0}$, $a_{0}$, $n$ and $\e_{0}$, such that
  \begin{equation}\label{eq:msKmKexpunifconv}
    |\mK-\msK|_{\bg_{\refer}}\leq C\ldr{\varrho}^{a_{0}}e^{2\e\varrho}
  \end{equation}
  on $M$. If, in addition, (\ref{eq:bDlhUsqphiestimate}) holds, there is, analogously, a continuous function $\Phia$ on $\bM$ such that
  \begin{equation}\label{eq:hUphiPhiaunifconv}
    |\hU\phi-\Phia|\leq C\ldr{\varrho}^{a_{0}}e^{2\e\varrho}
  \end{equation}
  on $M$.  
\end{lemma}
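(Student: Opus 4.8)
The plan is to integrate the evolution equation for $\mK$ (respectively $\hU\phi$) along the integral curves of $U$, using the hypothesis that $\hml_U\mK$ (respectively $\hU^2\phi$) decays like $\langle\varrho\rangle^{a_0}e^{2\e\varrho}$, and the hypothesis that $\varrho\to-\infty$ uniformly, which provides a convenient time function. First I would recall that in the present gauge ($\chi=0$, lapse $N$, $\hU=\theta^{-1}\d_t$) one has $\hU\varrho=1$; this is the key fact that lets us reparametrise each integral curve $\g$ of $\hU$ so that $\varrho\circ\g(s)=s$, with $s$ ranging over $(-\infty,\varrho_0]$ as $t$ decreases to $0$. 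Along such a curve, $\tfrac{d}{ds}(\mK\circ\g)=(\hml_U\mK)\circ\g$ when $\mK$ is expressed in a fixed $t$-independent frame $\{E_i\}$ on $\bM$ (cf. Remark~\ref{remark:hmlUmKdef}), and the right-hand side is bounded in $|\cdot|_{\bg_{\refer}}$ by $C_0\langle s\rangle^{a_0}e^{2\e s}$ using (\ref{eq:CkexpdechmlUmK}) with $k=0$. Since $\e\geq\e_0>0$, the function $s\mapsto\langle s\rangle^{a_0}e^{2\e s}$ is integrable on $(-\infty,\varrho_0]$, so $\mK\circ\g(s)$ is Cauchy as $s\to-\infty$ and converges to a limit; moreover integrating the bound from $-\infty$ to $s$ gives $|\mK\circ\g(s)-\lim|_{\bg_{\refer}}\leq C\langle s\rangle^{a_0}e^{2\e s}$ with $C$ depending only on $C_0$, $a_0$, $\e_0$, $n$ (the $n$-dependence coming from passing between the componentwise bound on $\hml_U\mK$ and the tensor norm). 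Rewriting $s=\varrho$ yields (\ref{eq:msKmKexpunifconv}), provided one checks that the limit is independent of the frame and defines a continuous tensor field $\msK$ on $\bM$.

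The point that needs a little care is that the leaves $\bM_t$ are diffeomorphic copies of $\bM$, so ``the limit as $t\to0$'' must be interpreted as a limit of tensor fields on the fixed manifold $\bM$; since each integral curve of $\hU$ projects to a fixed point $\bx\in\bM$ (because $\chi=0$ means $\d_t$ is proportional to $U$), the pointwise limits $\msK(\bx):=\lim_{t\to0}\mK(\bx,t)$ are well defined for every $\bx$, and the uniform bound (\ref{eq:msKmKexpunifconv}) — which is uniform in $\bx$ because $\varrho\to-\infty$ uniformly — shows the convergence is uniform, hence $\msK$ is continuous. One should also note that $\msK$ is automatically symmetric with respect to the limiting metric (this is not needed for the statement but is worth recording); here it suffices to observe $\tr\msK=1$ passes to the limit. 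The bound is stated on all of $M$ (not just near $t=0$): this is fine because on $\bM\times[t_0,t_+)$ the quantity $\ldr{\varrho}^{a_0}e^{2\e\varrho}$ is bounded below by a positive constant while $|\mK-\msK|_{\bg_{\refer}}$ is bounded above on that compact-in-$t$ region, so enlarging $C$ if necessary makes (\ref{eq:msKmKexpunifconv}) hold everywhere.

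For the scalar field statement the argument is identical in structure: along $\g$ with $\varrho\circ\g(s)=s$ one has $\tfrac{d}{ds}(\hU\phi\circ\g)=\hU^2\phi\circ\g$ (again using $\hU\varrho=1$), and (\ref{eq:bDlhUsqphiestimate}) with $k=0$ gives $|\hU^2\phi|\leq C_0\langle\varrho\rangle^{a_0}e^{2\e\varrho}$, which is integrable in $s$; integrating produces the limit $\Phi_a$ and the estimate (\ref{eq:hUphiPhiaunifconv}), with the same extension-to-all-of-$M$ remark. I do not expect a genuine obstacle here; the only mildly delicate points are (i) justifying that $\d_t$-derivatives and $\hU$-derivatives of the components of $\mK$ in a fixed spatial frame agree up to the factor $\theta^{-1}$, i.e. that the $t$-independence of the frame $\{E_i\}$ is what makes $\hml_U\mK$ computed as $\hU(\mK^i_{\phantom i j})$ legitimate (this is exactly Remark~\ref{remark:hmlUmKdef}), and (ii) tracking the dependence of the final constant $C$ on $C_0,a_0,n,\e_0$ through the elementary integral $\int_{-\infty}^{0}\langle s\rangle^{a_0}e^{2\e s}\,ds$, which is bounded by a constant depending only on $a_0$ and $\e_0$.
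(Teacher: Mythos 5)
Your argument is essentially the paper's own proof: express the components of $\mK$ (resp.\ $\hU\phi$) in a fixed spatial frame, use $\hU\varrho=1$ to parametrise the integral curves of $\hU$ by $\varrho$, and integrate the decaying bounds on $\hml_{U}\mK$ (resp.\ $\hU^{2}\phi$) from $-\infty$ to $s$ to obtain the limit together with the stated rate, continuity of $\msK$ and $\Phi_{a}$ following from the uniformity of $\varrho\rightarrow-\infty$. The only nitpick is that with a general lapse one has $\hU=(\theta N)^{-1}\d_{t}$ rather than $\theta^{-1}\d_{t}$, but this is immaterial since the argument only uses $\hU\varrho=1$; also, your closing remark about extending the estimate away from $t=0$ is unnecessary, as the integrated bound already holds at every point of $M$.
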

\begin{proof}
  Let, to begin with, $\{E_{i}\}$ be a local orthonormal frame on $T\bM$ with respect to $\bg_{\refer}$, and let $\{\omega^{i}\}$ be the dual frame. Let,
  moreover, $\mK^{i}_{j}:=\mK(\omega^{i},E_{j})$. Due to \cite[(A.4), p.~204]{RinWave} and (\ref{eq:CkexpdechmlUmK}), 
  \begin{equation}\label{eq:hUmKij}
    |\hU(\mK^{i}_{j})|=|(\hml_{U}\mK)(\omega^{i},E_{j})|\leq C_{0}\ldr{\varrho}^{a_{0}}e^{2\e\varrho}
  \end{equation}
  on $M$. Let $\g$ be an integral curve with respect to $\hU$. Then, due to \cite[(7.9), p.~74]{RinWave},
  \[
  \frac{d}{ds}(\varrho\circ\g)(s)=[\hU(\varrho)]\circ\g(s)=1.
  \]
  In particular, we can parametrise $\g$ so that $\varrho\circ\g(s)=s$. Combining this observation with (\ref{eq:hUmKij}) yields
  \[
  \left|\frac{d}{ds}(\mK^{i}_{j}\circ\g)(s)\right|\leq C_{0}\ldr{s}^{a_{0}}e^{2\e s}
  \]
  for all $s$ in the domain of definition of $\g$; note that the $\bM$-component of $\g$ is independent of $s$, so that $\e$ is evaluated at a fixed
  point of $\bM$. In particular, $\mK\circ\g$ converges to a limit and there is a $\msK$ and a constant
  $C$, depending only on $C_{0}$, $a_{0}$, $n$ and $\e_{0}$, such that (\ref{eq:msKmKexpunifconv}) holds. Since $\varrho$ converges to $-\infty$ uniformly,
  (\ref{eq:msKmKexpunifconv}) implies that $\mK(\cdot,t)$ converges uniformly to $\msK$, so that $\msK$ has to be continuous.

  The proof of (\ref{eq:hUphiPhiaunifconv}) is similar. 
\end{proof}

Let $p_{A}$ denote the eigenvalues of $\msK$. Since the eigenvalues of $\mK$ are real, and the $p_{A}$ are limits of the eigenvalues of $\mK$,
the $p_{A}$ are real. In what follows, we assume the $p_{A}$ to be distinct and order them so that $p_{1}<\cdots<p_{n}$. Then
\begin{equation}\label{eq:ellAminuspAestimateCz}
  |\ell_{A}-p_{A}|\leq C\ldr{\varrho}^{a_{0}}e^{2\e\varrho}
\end{equation}
on $M$, where $\ell_{A}$ denotes the eigenvalues of $\mK$. Since the convergence is uniform, we can restrict the interval $I$ in such a way
that the $\ell_{A}$ are distinct on $M$. In fact, the following holds. 
\begin{lemma}\label{lemma:nondegenerate}
  Let $(M,g,\phi)$ be a solution to the Einstein-scalar field equations with a cosmological constant $\Lambda$ and a
  uniform volume singularity at $t=0$. Assume that (\ref{eq:CkexpdechmlUmK}) holds with $k=0$. Assume, moreover,
  that the eigenvalues of $\msK$ (whose existence is ensured by Lemma~\ref{lemma:mKconvtomsK}) are distinct. Then, by restricting the
  interval $I$, if necessary, it can be assumed that the eigenvalues $\ell_{A}$ of $\mK$ are distinct on $M$ and that there is a constant 
  $\e_{\rond}>0$ such that the difference between the eigenvalues of $\mK$ is bounded from below by $\e_{\rond}$ on $M$. Moreover, by taking
  a finite covering space of $\bM$, if necessary, there is, for each $\ell_{A}$, a corresponding global eigenvector field $X_{A}$ of $\mK$
  with $|X_{A}|_{\bg_{\refer}}=1$. Let $\{Y^{A}\}$ be the frame dual to $\{X_{A}\}$. Then there is a frame $\{\msX_{A}\}$ of continuous eigenvector
  fields of $\msK$, with dual frame $\{\msY^{A}\}$ and a constant $C$ such that
  \begin{equation}\label{eq:XAasYAas}
    |X_{A}-\msX_{A}|_{\bg_{\refer}}+|Y^{A}-\msY^{A}|_{\bg_{\refer}}\leq C\ldr{\varrho}^{a_{0}}e^{2\e\varrho}
  \end{equation}
  on $M$.  
\end{lemma}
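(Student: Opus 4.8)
The plan is to first pin down the eigenvalue gap, then trivialise the eigenline bundles of the limit by passing to a finite covering of $\bM$, and finally read off the eigenvector fields of $\mK$ as normalised spectral projections of the fixed limit frame, tracking the convergence rate throughout. Everything below takes place after a single shrinking of the time interval $I$ to some $(0,t_{1})$, chosen small enough to validate the finitely many smallness conditions that occur; this is legitimate because, exactly as in the proof of Lemma~\ref{lemma:mKconvtomsK}, the fact that $\varrho$ diverges uniformly to $-\infty$ together with $\e\ge\e_{0}>0$ forces $\ldr{\varrho}^{a_{0}}e^{2\e\varrho}\to 0$ uniformly as $t\to 0$. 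In particular I may assume $|\mK-\msK|_{\bg_{\refer}}\le 1$ on $M$, so that $\mK$, and hence its eigenvalues $\ell_{A}$, is uniformly bounded on $M$, $\msK$ being bounded on the compact $\bM$.

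\emph{Step 1: the spectral gap.} Since $\bM$ is compact and the $p_{A}$ are distinct, $\delta_{0}:=\tfrac12\min_{A\neq B}\min_{\bM}|p_{A}-p_{B}|>0$. Combining this with (\ref{eq:ellAminuspAestimateCz}) and the uniform smallness above, one obtains $\min_{A\neq B}|\ell_{A}-\ell_{B}|\ge\delta_{0}$ on $M$ after the restriction of $I$; we set $\e_{\rond}:=\delta_{0}$. In particular the $\ell_{A}$ are distinct on $M$.

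\emph{Step 2: trivialising the eigenline bundles.} For each $A$ the limit $\msK$ has a rank-one eigendistribution $L_{A}\subset T\bM$, a real line bundle, with first Stiefel--Whitney class $w_{A}:=w_{1}(L_{A})\in H^{1}(\bM;\mathbb{Z}/2\mathbb{Z})$. Let $\pi:\bM'\to\bM$ be the covering associated with $\bigcap_{A}\ker\big(w_{A}:\pi_{1}(\bM)\to\mathbb{Z}/2\mathbb{Z}\big)$; it is finite, of degree dividing $2^{n}$, and equals the identity when all $w_{A}$ vanish. On $\bM'$ each $\pi^{*}L_{A}$ has vanishing first Stiefel--Whitney class, hence is trivial, hence carries a nowhere-zero continuous section; normalising with respect to $\pi^{*}\bg_{\refer}$ yields a continuous unit eigenvector field $\msX_{A}$ of $\pi^{*}\msK$. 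Because $\pi$ is a finite covering, the pullbacks of $g,\phi,\bg_{\refer},\mK,\msK$ again form a solution with a uniform volume singularity for which (\ref{eq:CkexpdechmlUmK}) holds with $k=0$ and $\varrho\circ\pi$ still diverges uniformly to $-\infty$; moreover the conclusion of the lemma on $\bM'$ implies the asserted conclusion. I therefore replace $\bM$ by $\bM'$ and suppress $\pi$ from the notation. The $\{\msX_{A}\}$ then form a continuous global frame of $T\bM$, since eigenvectors for distinct eigenvalues are linearly independent; let $\{\msY^{A}\}$ be its dual coframe, also continuous.

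\emph{Step 3: the eigenvector fields, the estimate, and the main difficulty.} For $A=1,\dots,n$ let $\Pi_{A}$ be the spectral projection of $\mK$ onto its $\ell_{A}$-eigenline; as the $\ell_{B}$ are distinct, Lagrange interpolation gives $\Pi_{A}=\prod_{B\neq A}(\ell_{A}-\ell_{B})^{-1}(\mK-\ell_{B}\,\mathrm{Id})$, and likewise $\Pi_{A}^{\msK}=\prod_{B\neq A}(p_{A}-p_{B})^{-1}(\msK-p_{B}\,\mathrm{Id})$. Using that $\mK$ is bounded, that $\min_{A\neq B}|\ell_{A}-\ell_{B}|\ge\e_{\rond}$, and that $|\mK-\msK|_{\bg_{\refer}}$ and $|\ell_{A}-p_{A}|$ are both $\le C\ldr{\varrho}^{a_{0}}e^{2\e\varrho}$ by (\ref{eq:msKmKexpunifconv}) and (\ref{eq:ellAminuspAestimateCz}), an elementary estimate for the product of $n-1$ factors gives $|\Pi_{A}-\Pi_{A}^{\msK}|_{\bg_{\refer}}\le C\ldr{\varrho}^{a_{0}}e^{2\e\varrho}$ on $M$. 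Since $\Pi_{A}^{\msK}\msX_{A}=\msX_{A}$ and $|\msX_{A}|_{\bg_{\refer}}=1$, this yields $|\Pi_{A}\msX_{A}-\msX_{A}|_{\bg_{\refer}}\le C\ldr{\varrho}^{a_{0}}e^{2\e\varrho}$, so $|\Pi_{A}\msX_{A}|_{\bg_{\refer}}\ge 1/2$ on $M$ after the restriction of Step 1, and
\[
  X_{A}:=\frac{\Pi_{A}\msX_{A}}{|\Pi_{A}\msX_{A}|_{\bg_{\refer}}}
\]
is a well-defined continuous unit eigenvector field of $\mK$ with eigenvalue $\ell_{A}$; as $v\mapsto v/|v|_{\bg_{\refer}}$ is Lipschitz on $\{|v|_{\bg_{\refer}}\ge 1/2\}$, it follows that $|X_{A}-\msX_{A}|_{\bg_{\refer}}\le C\ldr{\varrho}^{a_{0}}e^{2\e\varrho}$. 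The $\{X_{A}\}$ are then a global frame of $T\bM$; since it converges uniformly to the frame $\{\msX_{A}\}$ over the compact base, its dual coframe $\{Y^{A}\}$ is uniformly bounded and converges uniformly to $\{\msY^{A}\}$ at the same rate, the passage to the dual coframe being smooth away from the degenerate locus, which gives $|Y^{A}-\msY^{A}|_{\bg_{\refer}}\le C\ldr{\varrho}^{a_{0}}e^{2\e\varrho}$ on $M$, i.e. (\ref{eq:XAasYAas}). The step I expect to require the most care is Step 2: globally defined eigenvector fields need not exist on $\bM$ itself, because the eigenline bundles of $\msK$ may be non-orientable, which is exactly why one passes to the finite cover; and one must verify, as above, that this passage genuinely preserves all hypotheses, in particular the uniform divergence of $\varrho$ and the bound (\ref{eq:CkexpdechmlUmK}).
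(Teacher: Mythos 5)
Your proof is correct, and it reaches the same three conclusions (uniform spectral gap, global unit eigenframe after a finite cover, and the rate (\ref{eq:XAasYAas})) as the paper, but by a somewhat different route in the second and third steps. The paper handles the gap exactly as you do, then simply invokes \cite[Lemma~A.1, p.~201]{RinWave} to produce a global eigenframe $\{X_{A}\}$ of $\mK$ on a finite covering space, and declares the existence of $\{\msX_{A}\}$ together with (\ref{eq:XAasYAas}) an immediate consequence of (\ref{eq:msKmKexpunifconv}) and (\ref{eq:ellAminuspAestimateCz}) (in effect, $\msX_{A}$ is obtained as the limit of $X_{A}$). You reverse the order of construction: you trivialise the eigenline bundles of the \emph{limit} $\msK$ by passing to the cover killing their first Stiefel--Whitney classes, obtain continuous unit eigenvector fields $\msX_{A}$ there, and then produce $X_{A}$ by applying the Lagrange-interpolation spectral projections of $\mK$ to $\msX_{A}$ and normalising, which yields the estimate (\ref{eq:XAasYAas}) quantitatively, including the dual coframe bound. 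What your version buys is self-containedness (no appeal to the external covering-space lemma) and an explicit, stable formula for the eigenvectors with the convergence rate built in; what the paper's version buys is brevity and the fact that the cover is tailored directly to $\mK$, with $\msX_{A}$ then arising canonically as the limit of the chosen $X_{A}$ (unique up to a global sign, as noted in the remark following the lemma). One small point worth making explicit in your write-up: your $X_{A}$ is a priori only continuous in the spatial variables, but since it is a nowhere-vanishing continuous unit section of the smooth eigenline bundle of $\mK$, it coincides, up to a locally constant sign, with a smooth unit section, so no regularity is lost relative to the paper's construction.
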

\begin{remark}
  In what follows, we assume the conditions of this lemma to be satisfied. Moreover, we also restrict $I$ as in the lemma, and replace $\bM$
  with an appropriate finite cover. We also order the $\ell_{A}$ so that $\ell_{1}<\cdots<\ell_{n}$. Note also that the $X_{A}$ are uniquely defined
  up to a sign (we here assume $\bM$ to be connected).
\end{remark}
\begin{proof}
  Since $\bM$ is compact and the $p_{A}$ are continuous functions on $\bM$, there is an $\e_{\rond}>0$ such that
  $\min_{A\neq B}|p_{A}-p_{B}|\geq 2\e_{\rond}$ on $\bM$. Since (\ref{eq:ellAminuspAestimateCz}) holds, where the convergence is uniform, we can
  ensure that $|\ell_{A}(\cdot,t)-p_{A}|<\e_{\rond}/2$ for $t\in I$ by restricting $I$, if necessary. Then $\min_{A\neq B}|\ell_{A}-\ell_{B}|\geq \e_{\rond}$
  on $M$. Appealing to \cite[Lemma~A.1, p.~201]{RinWave}, we can, by taking a finite covering space of $\bM$, if necessary, assume that there is a
  global frame, say $\{X_{A}\}$, of eigenvector fields of $\mK$, with $\mK X_{A}=\ell_{A}X_{A}$ (no summation), and $|X_{A}|_{\bg_{\refer}}=1$. The remaining
  statement of the lemma is an immediate consequence of the previous observations and the fact that (\ref{eq:msKmKexpunifconv}) holds. 
\end{proof}

This lemma justifies the introduction of the standard assumptions; cf. Definition~\ref{def:standardassumptions}. Next, define $\mW^{A}_{B}$ by the
condition that
\begin{equation}\label{eq:mWdef}
  \hml_{U}X_{A}=\mW^{B}_{A}X_{B}+\overline{\mW}^{0}_{A}U;
\end{equation}
cf. \cite[(6.5), p.~68]{RinWave}. Here $\hml_{U}=\theta^{-1}\ml_{U}$.

\begin{lemma}\label{lemma:rAexist}
  Assume the standard assumptions to hold, cf. Definition~\ref{def:standardassumptions}. Define $\bmu_{A}$ by the condition that
  $|X_{A}|_{\bg}=e^{\bmu_{A}}$. Then there is a constant $C$ and a continuous function $r_{A}$ on $\bM$ such that
 \begin{equation}\label{eq:rAasymptotics}
    |\bmu_{A}-p_{A}\varrho-r_{A}|\leq C\ldr{\varrho}^{a_{0}}e^{2\e\varrho}
  \end{equation}
  on $M$.  
\end{lemma}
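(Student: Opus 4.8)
The plan is to show that along each fibre $\{\bx\}\times I$ the function $\bmu_{A}-p_{A}\varrho$ converges as $t\to 0+$ to a limit depending only on the point of $\bM$, at precisely the claimed rate, and to take $r_{A}$ to be this limit. The starting point is an evolution identity. Since $\chi=0$, the vector field $\hU$ is proportional to $\d_{t}$, so $\hU f=(\hU\varrho)\cdot\d_{t}f/\d_{t}\varrho$ for every function $f$; by \cite[(7.9), p.~74]{RinWave} one has $\hU(\varrho)=1$. A direct computation with the evolution equation for the induced metric $\bg$, using that $X_{A}$ is $t$-independent and that $\bK X_{A}=\theta\ell_{A}X_{A}$ (which holds since $\mK X_{A}=\ell_{A}X_{A}$ under the standard assumptions), gives $\d_{t}\bmu_{A}/\d_{t}\varrho=\ell_{A}$ — the lapse factor carried by $\d_{t}\bmu_{A}$ and $\d_{t}\varrho$ cancels in the ratio — and hence, using that $p_{A}$ is a function on $\bM$ only,
\[
\hU(\bmu_{A})=\ell_{A},\qquad \hU(\bmu_{A}-p_{A}\varrho)=\ell_{A}-p_{A}.
\]
(This identity can alternatively be read off directly from the framework of \cite{RinWave,RinGeo}.)

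Next I would integrate along $\hU$. By (\ref{eq:ellAminuspAestimateCz}), which follows from Lemma~\ref{lemma:mKconvtomsK}, the right-hand side satisfies $|\ell_{A}-p_{A}|\leq C\ldr{\varrho}^{a_{0}}e^{2\e\varrho}$ on $M$. Parametrising each integral curve $\g$ of $\hU$ so that $\varrho\circ\g(s)=s$ (possible since $\hU(\varrho)=1$), and noting that along $\g$ the point of $\bM$, and hence the value of $\e$, is fixed with $\e\geq\e_{0}$, one obtains
\[
\Bigl|\tfrac{d}{ds}\bigl[(\bmu_{A}-p_{A}\varrho)\circ\g\bigr](s)\Bigr|\leq C\ldr{s}^{a_{0}}e^{2\e s},
\]
which is integrable at $s=-\infty$. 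Hence $(\bmu_{A}-p_{A}\varrho)\circ\g$ has a limit as $s\to-\infty$; since $\varrho$ diverges uniformly to $-\infty$ (Definition~\ref{def:volsing}) and the integral curves of $\hU$ are exactly the fibres $\{\bx\}\times I$, this limit depends only on $\bx$ and defines a function $r_{A}$ on $\bM$. Integrating the derivative bound from $-\infty$ to $s$, invoking the elementary estimate $\int_{-\infty}^{s}\ldr{\sigma}^{a_{0}}e^{2\e\sigma}\,d\sigma\leq C\ldr{s}^{a_{0}}e^{2\e s}$ with $C$ uniform for $\e\in[\e_{0},1)$, and substituting $s=\varrho$, then yields (\ref{eq:rAasymptotics}). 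Finally, (\ref{eq:rAasymptotics}) together with the uniform divergence of $\varrho$ shows that $\bmu_{A}(\cdot,t)-p_{A}\varrho(\cdot,t)\to r_{A}$ uniformly on $\bM$, so $r_{A}$ is continuous.

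The two points needing care are both minor: pinning down the conventions so that the evolution identity comes out as the clean $\hU(\bmu_{A})=\ell_{A}$ with no residual lapse term (which works precisely because of the cancellation noted above), and checking that the integration preserves the power $\ldr{\varrho}^{a_{0}}$ rather than producing a larger power, with a constant uniform over $\bM$ — the latter is the standard fact that antidifferentiating $\ldr{\sigma}^{a_{0}}e^{2\e\sigma}$ from $-\infty$ only costs a harmless constant because the exponential dominates the polynomial as $\sigma\to-\infty$.
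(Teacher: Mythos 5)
The decisive step in your argument is the identity $\hU(\bmu_{A})=\ell_{A}$, which you justify by asserting that ``$X_{A}$ is $t$-independent''. That assertion is false, and this is a genuine gap. Under the standard assumptions, $X_{A}$ is an eigenvector field of the \emph{time-dependent} tensor $\mK$, normalised by $|X_{A}|_{\bg_{\refer}}=1$; it varies with $t$ (indeed, the whole point of estimates like (\ref{eq:XABYABoptimalestimates}) is that $X_{A}$ only \emph{converges} to $\msX_{A}$, and its evolution is governed by $\hml_{U}X_{A}=\mW^{B}_{A}X_{B}+\overline{\mW}^{0}_{A}U$, cf. (\ref{eq:mWdef})). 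Differentiating $|X_{A}|_{\bge}^{2}=e^{2\bmu_{A}}$ therefore picks up, besides the term $\ell_{A}$ coming from $\hml_{U}\bge$, a contribution from the motion of the frame, and the correct identity is $\hU(\bmu_{A})=\ell_{A}+\mW^{A}_{A}$ (this is \cite[(7.8), p.~74]{RinWave}, which is what the paper uses). Note that the normalisation $|X_{A}|_{\bg_{\refer}}=1$ does not kill $\mW^{A}_{A}$, since the frame $\{X_{A}\}$ is not $\bg_{\refer}$-orthogonal; it only forces $\mW^{A}_{A}=-\sum_{B\neq A}\mW^{B}_{A}\,\bg_{\refer}(X_{B},X_{A})$. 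The point you flagged as delicate (``no residual lapse term'') is not the issue; the frame evolution is.

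The conclusion survives, but only after you supply the missing estimate $|\mW^{A}_{B}|\leq C\ldr{\varrho}^{a_{0}}e^{2\e\varrho}$, which is exactly the content the paper imports from \cite[(6.7)--(6.8), p.~68]{RinWave}: for $A\neq B$ one has $\mW^{A}_{B}=(\ell_{B}-\ell_{A})^{-1}(\hml_{U}\mK)(Y^{A},X_{B})$, so the bound follows from (\ref{eq:CkexpdechmlUmK}) with $k=0$, the spectral gap $\min_{A\neq B}|\ell_{A}-\ell_{B}|\geq\e_{\rond}$ in the standard assumptions, $|X_{A}|_{\bg_{\refer}}=1$ and the boundedness of $|Y^{A}|_{\bg_{\refer}}$; the diagonal terms are then controlled by the relation above. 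With $\hU(\bmu_{A}-p_{A}\varrho)=\ell_{A}+\mW^{A}_{A}-p_{A}=O(\ldr{\varrho}^{a_{0}}e^{2\e\varrho})$ in hand, the rest of your argument — parametrising integral curves of $\hU$ by $\varrho$, integrating the bound $\ldr{s}^{a_{0}}e^{2\e s}$ from $-\infty$ with a constant uniform in $\e\geq\e_{0}$, and deducing uniform convergence and hence continuity of $r_{A}$ — is correct and coincides with the paper's proof.
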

\begin{proof}
  Combining (\ref{eq:CkexpdechmlUmK}) for $k=0$ with \cite[(6.7)--(6.8), p.~68]{RinWave}, the fact that $\min_{A\neq B}|\ell_{A}-\ell_{B}|\geq\e_{\rond}$
  on $M$, the fact that $|X_{A}|_{\bg_{\refer}}=1$ and the fact that $|Y^{A}|_{\bg_{\refer}}$ is bounded (this follows from \cite[Lemma~5.5, p.~60]{RinWave})
  implies that $|\mW^{A}_{B}|\leq C\ldr{\varrho}^{a_{0}}e^{2\e\varrho}$ on $M$. Combining this observation with
  \cite[(7.8)--(7.9), p.~74]{RinWave} yields
  \begin{equation}\label{eq:bmuAminuspAvarrho}
    \hU(\bmu_{A}-p_{A}\varrho)=\ell_{A}+\mW^{A}_{A}-p_{A}=O(\ldr{\varrho}^{a_{0}}e^{2\e\varrho})
  \end{equation}
  on $M$, where we appealed to (\ref{eq:ellAminuspAestimateCz}). At this point, we can argue as in the proof of Lemma~\ref{lemma:mKconvtomsK}
  in order to conclude that $\bmu_{A}-p_{A}\varrho$ converges exponentially to a continuous function $r_{A}$ on $\bM$. The lemma follows. 
\end{proof}

Next, we estimate $\mW^{A}_{B}$.

\begin{lemma}\label{lemma:mWABestoptimal}
  Assume the standard assumptions to be satisfied and let $\mW^{A}_{B}$ be defined by (\ref{eq:mWdef}). Then there is a constant $C$ such that
  \begin{equation}\label{eq:mWABestoptimal}
    |\mW^{A}_{B}|\leq C\ldr{\varrho}^{a_{0}}e^{2\e\varrho}\min\{1,e^{2(p_{B}-p_{A})\varrho}\}
  \end{equation}
  on $M$.
\end{lemma}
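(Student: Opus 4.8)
The plan is to reduce everything to the crude bound $|\mW^A_B| \le C\ldr{\varrho}^{a_0}e^{2\e\varrho}$, which is already available, and then to upgrade it in the ``off-diagonal'' range $A<B$ by exploiting an exact antisymmetry relation. First I would split into the cases $A \ge B$ and $A < B$. Since the eigenvalues are ordered $p_1 < \cdots < p_n$ and (after the restriction of the interval $I$ carried out in Lemma~\ref{lemma:nondegenerate}) $\varrho \le 0$ on $M$, in the case $A \ge B$ one has $2(p_B - p_A)\varrho \ge 0$, so $\min\{1, e^{2(p_B - p_A)\varrho}\} = 1$ and it suffices to show $|\mW^A_B| \le C\ldr{\varrho}^{a_0}e^{2\e\varrho}$. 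This inequality was already established within the proof of Lemma~\ref{lemma:rAexist} (from (\ref{eq:CkexpdechmlUmK}) with $k = 0$, \cite[(6.7)--(6.8), p.~68]{RinWave}, the uniform spectral gap $\min_{A\neq B}|\ell_A - \ell_B| \ge \e_{\rond}$ on $M$, and the boundedness of $|X_A|_{\bg_{\refer}}$ and $|Y^A|_{\bg_{\refer}}$), so this range requires nothing new.

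The substance is the case $A < B$. Here I would use that, by \cite[(6.5)--(6.8), p.~68]{RinWave}, the quantity $\mW^A_B$ defined by (\ref{eq:mWdef}) agrees, for $A \neq B$, with $(\ell_B - \ell_A)^{-1}(\hml_U\mK)(Y^A, X_B)$, and that \cite[(6.19), p.~70]{RinWave} shows $e^{\bmu_A - \bmu_B}\mW^A_B$ is antisymmetric under interchange of $A$ and $B$; hence for $A < B$
\[
\mW^A_B = -e^{2(\bmu_B - \bmu_A)}\mW^B_A .
\]
Next I would feed in the asymptotics of $\bmu_A$ from Lemma~\ref{lemma:rAexist}: $\bmu_A = p_A\varrho + r_A + O(\ldr{\varrho}^{a_0}e^{2\e\varrho})$ with $r_A$ continuous on the compact manifold $\bM$, hence bounded, and the error term uniformly bounded; therefore $e^{2(\bmu_B - \bmu_A)} \le C e^{2(p_B - p_A)\varrho}$. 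Since the index pair $(B, A)$ has $B > A$ and thus falls into the case already treated, $|\mW^B_A| \le C\ldr{\varrho}^{a_0}e^{2\e\varrho}$, and combining the two bounds gives $|\mW^A_B| \le C\ldr{\varrho}^{a_0}e^{2\e\varrho}e^{2(p_B - p_A)\varrho}$. Because $\varrho \le 0$ and $p_A < p_B$ for $A < B$, we have $e^{2(p_B - p_A)\varrho} \le 1$, so the right-hand side equals $C\ldr{\varrho}^{a_0}e^{2\e\varrho}\min\{1, e^{2(p_B - p_A)\varrho}\}$, which is the asserted estimate.

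I do not expect a serious obstacle: the estimate is a one-step bootstrap of the crude bound through an exact identity. The only points requiring care are (i) matching the two parametrisations of $\mW$ --- the one given by (\ref{eq:mWdef}) via $\hml_U X_A$ and the one via $\hml_U\mK$ used in \cite[Section~6]{RinWave} --- so that the antisymmetry of \cite[(6.19), p.~70]{RinWave} can legitimately be applied to the object in the statement, and (ii) keeping track of the signs: because $\varrho \to -\infty$ and $p_A < p_B$ for $A < B$, the weight $e^{2(\bmu_B - \bmu_A)}$ decays at precisely the rate $e^{2(p_B - p_A)\varrho}$, so that the antisymmetry genuinely improves the bound in the range $A < B$ rather than weakening it.
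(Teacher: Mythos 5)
Your proof is correct and follows essentially the same route as the paper: the crude bound $|\mW^{A}_{B}|\leq C\ldr{\varrho}^{a_{0}}e^{2\e\varrho}$ from the argument at the start of the proof of Lemma~\ref{lemma:rAexist}, upgraded for $A<B$ via the antisymmetry of \cite[(6.19), p.~70]{RinWave} together with the asymptotics (\ref{eq:rAasymptotics}) for $\bmu_{A}$. The only difference is presentational (your explicit case split and the remark that the two bounds combine into the stated minimum), which the paper leaves implicit.
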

\begin{proof}
  By the argument presented at the beginning of the proof of Lemma~\ref{lemma:rAexist}, $|\mW^{A}_{B}|\leq C\ldr{\varrho}^{a_{0}}e^{2\e\varrho}$ on $M$.
  Next, for $A\neq B$, the right hand side of \cite[(6.19), p.~70]{RinWave} is antisymmetric. Thus
  \begin{equation}\label{eq:hmWABimprovedestimate}
    |\mW^{A}_{B}|\leq e^{-2(\bmu_{A}-\bmu_{B})}|\mW^{B}_{A}|\leq C\ldr{\varrho}^{a_{0}}e^{2(p_{B}-p_{A}+\e)\varrho}
  \end{equation}
  on $M$, where we appealed to (\ref{eq:rAasymptotics}). The lemma follows. 
\end{proof}

Next, let $X_{A}^{B}$ and $Y_{A}^{B}$ be defined as in Definition~\ref{def:XABYABbmuAmuA}. It is of interest to estimate the asymptotic behaviour of
$X_{A}^{B}$ and $Y^{A}_{B}$. Due to (\ref{eq:XAasYAas}), we already know that there is a constant $C$ such that
\begin{equation}\label{eq:XABmdeABYABmdeAB}
  |X_{A}^{B}-\delta_{A}^{B}|+|Y_{A}^{B}-\delta_{A}^{B}|\leq C\ldr{\varrho}^{a_{0}}e^{2\e\varrho}
\end{equation}
on $M$. However, these estimates can be improved.

\begin{lemma}\label{eq:optimalestimatesforXABYAB}
  Assume the standard assumptions to be satisfied. Then there are constants $a$ and $C$ such that
  \begin{equation}\label{eq:XABYABoptimalestimates}
    |X_{B}^{A}-\delta_{B}^{A}|+|Y_{B}^{A}-\delta_{B}^{A}|\leq C\ldr{\varrho}^{a}e^{2\e\varrho}\min\{1,e^{2(p_{B}-p_{A})\varrho}\}
  \end{equation}
  on $M$. 
\end{lemma}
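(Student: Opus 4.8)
The plan is to derive an evolution equation for $X_B^A$ along integral curves of $\hU$, parametrised so that $\varrho$ equals the parameter, and then to integrate this equation, exploiting the off-diagonal decay already recorded in Lemmas~\ref{lemma:rAexist} and \ref{lemma:mWABestoptimal}. First I would express $\hU(X_B^A) = \hU(\msY^A(X_B))$ in terms of the known quantities. Since $\hml_U X_B = \mW^C_B X_C + \overline{\mW}^0_B U$ (cf.~(\ref{eq:mWdef})) and $\msY^A$ is $t$-independent, one gets $\hU(X_B^A) = \msY^A(\hml_U X_B) = \mW^C_B X^A_C$ (the $U$-component drops out since $\msY^A$ annihilates $U$, assuming the shift vanishes, which it does here). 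Thus $\hU(X^A_B) = \sum_C \mW^C_B X^A_C$, a linear ODE system in the $X^A_B$ with coefficients $\mW^C_B$ controlled by (\ref{eq:mWABestoptimal}). The diagonal coefficients $\mW^A_A$ are $O(\ldr{\varrho}^{a_0}e^{2\e\varrho})$, hence integrable in $\varrho$, so they only contribute bounded multiplicative factors; the genuinely useful content is the off-diagonal decay.

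The key step is then a triangular/bootstrap argument exactly parallel to the one in the proof of Lemma~\ref{lemma:XABest} (the higher-order analogue in the Gaussian-foliation section), but now in $C^0$. Order the indices by the $p_A$. For $B \leq A$ the estimate $|X^A_B - \delta^A_B| \leq C\ldr{\varrho}^a e^{2\e\varrho}$ is already (\ref{eq:XABmdeABYABmdeAB}), so nothing new is needed. For $B > A$ one writes $\hU(X^A_B) = \mW^B_B X^A_B + \sum_{C \neq B}\mW^C_B X^A_C$; moving the diagonal term via an integrating factor $\exp(-\int \mW^B_B\,ds)$ (which is bounded above and below since $\mW^B_B$ is integrable), and using that for $C \neq B$ one has $|\mW^C_B| \leq C\ldr{\varrho}^{a_0}e^{2\e\varrho}\min\{1,e^{2(p_B - p_C)\varrho}\}$ together with the inductively-known bound on $X^A_C$, one estimates $|\frac{d}{ds}(\text{integrating factor}\cdot X^A_B)| \leq C\ldr{s}^a e^{2(p_B - p_A + \e)s}$ for the relevant source terms. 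Here one must be slightly careful: the source term with $C = A$ contributes $\mW^A_B X^A_A \approx \mW^A_B$ which is $O(\ldr{\varrho}^{a_0}e^{2(p_B-p_A+\e)\varrho})$ — precisely the claimed rate; the source terms with $C \neq A, B$ contribute $\mW^C_B X^A_C$, and by the induction hypothesis (doing the indices $C$ in decreasing distance from $A$, or just noting $p_B - p_C + p_C - p_A = p_B - p_A$) these are also $O(\ldr{\varrho}^{a}e^{2(p_B-p_A+\e)\varrho})$. Integrating from $s = -\infty$ (the limit $X^A_B \to 0$ for $B \neq A$ is guaranteed by (\ref{eq:XAasYAas})) gives $|X^A_B| \leq C\ldr{\varrho}^a e^{2(p_B - p_A)\varrho}e^{2\e\varrho}$, which combined with the crude bound yields the $\min\{1,\cdot\}$ form. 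The bound for $Y^A_B$ follows either by the same argument applied to $\hU(Y^A_B)$ (using the dual relation, with a sign change on $\mW$), or algebraically from $Y = X^{-1}$ and the Neumann-series manipulation used in Lemma~\ref{lemma:XABest}: write $X = \mathrm{Id} + (\text{upper}) + (\text{lower})$ with the lower part already estimated, invert, and check the triangular structure is preserved under the relevant products.

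The main obstacle I anticipate is organising the induction on the index pair $(A,B)$ so that the bootstrap closes cleanly — specifically making sure that when estimating $X^A_B$ with $p_B - p_A$ large, every source term $\mW^C_B X^A_C$ on the right-hand side involves either an already-estimated $X^A_C$ with $p_C$ strictly between $p_A$ and $p_B$ (so the product of rates telescopes to exactly $p_B - p_A$), or the base cases $C \in \{A, B\}$. A convenient bookkeeping device is to induct on the quantity $p_B - p_A$ (equivalently on $B - A$ after relabelling), proving simultaneously for all pairs at a given "level"; at the base level $B = A$ the statement is (\ref{eq:XABmdeABYABmdeAB}). One also has to track the polynomial weights $\ldr{\varrho}^a$ carefully, since $a$ increases at each induction step, but since the induction terminates after finitely many steps ($n$ indices), the final $a$ is a fixed finite constant depending only on $n$ and $a_0$. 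A minor point to check is that $\mW^A_A$ really is integrable against $ds = d\varrho$, which follows from (\ref{eq:mWABestoptimal}) with $A = B$: the right-hand side is $C\ldr{\varrho}^{a_0}e^{2\e\varrho}$ and $\e \geq \e_0 > 0$, so $\int_{-\infty}^0 \ldr{s}^{a_0}e^{2\e_0 s}\,ds < \infty$.
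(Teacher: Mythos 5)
Your overall strategy is the paper's: the same evolution equation $\hU(X^A_B)=\mW^C_B X^A_C$ along integral curves of $\hU$ parametrised by $\varrho$, the crude base estimate (\ref{eq:XABmdeABYABmdeAB}), the off-diagonal decay of $\mW$ from Lemma~\ref{lemma:mWABestoptimal}, integration from $\varrho=-\infty$, and the algebraic inversion for $Y=X^{-1}$. However, there is a genuine gap in the bootstrap as you organise it. In the step where you estimate $X^A_B$ ($B>A$) by treating all off-diagonal couplings as source terms, the contributions $\mW^C_B X^A_C$ with $p_C>p_B$ are not controlled at the claimed rate: for such $C$ the factor $\min\{1,e^{2(p_B-p_C)\varrho}\}$ in (\ref{eq:mWABestoptimal}) equals $1$, and the ``inductively-known bound on $X^A_C$'' is not available at your induction level, since $C-A>B-A$. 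The only available bound is the crude one, giving a product of size $\ldr{\varrho}^{2a_0}e^{4\e\varrho}$, which is not $O(\ldr{\varrho}^{a}e^{2(p_B-p_A+\e)\varrho})$ whenever $p_B-p_A>\e$ (and $\e$ may be small while $p_B-p_A$ can be close to $2$). So the induction on $B-A$, with all pairs at a given level handled simultaneously and everything off-diagonal dumped into the inhomogeneity, does not close; the telescoping identity $p_B-p_C+p_C-p_A=p_B-p_A$ only helps for $A<C<B$, and the $C<A$ terms are harmless, but the $C>B$ terms are exactly the problematic ones and your scheme has no handle on them.

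The paper closes this by never estimating the ``upward'' couplings as sources with crude bounds. For a fixed upper index it treats the components with large lower index as a coupled linear system $\dot{\xi}=M\xi+\zeta$ (cf.~(\ref{eq:hXoneAequation}) and (\ref{eq:firstindstepmtomplusone})), where $M$ collects precisely the couplings you cannot yet estimate; since $\|M(s)\|\leq C\ldr{s}^{a_0}e^{2\e s}$ is integrable, the fundamental solution is uniformly bounded, and Duhamel gives control of $\xi$ by the integral of the inhomogeneity alone. One then raises a uniform ``floor'' decay exponent one index at a time, via inductive hypotheses of the form $\max\{e^{2(p_A-p_1)\varrho},e^{2(p_m-p_1)\varrho}\}$ (see (\ref{eq:XoneAindass}) and (\ref{eq:indhypmkstep})): at each stage the already-improved components go into $\zeta$ at the rate you computed, while the not-yet-improved ones stay in $M\xi$ and cost nothing beyond the bounded fundamental solution. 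If you restructure your bootstrap this way (or, equivalently, iterate the Duhamel inequality so that every pass improves all components' exponents by a definite amount until saturation), the rest of your argument, including the treatment of the diagonal term and the inversion step for $Y^A_B$, goes through as written.
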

\begin{proof}
  Note, to begin with, that (\ref{eq:mWdef}) yields
  \begin{equation}\label{eq:evolutionXAB}
    \hU(X_{A}^{B})\msX_{B}+X_{A}(\ln N)\hU=\mW_{A}^{B}X_{B}^{C}\msX_{C}+\overline{\mW}_{A}^{0}U.
  \end{equation}
  This equality requires a comment, since the vector fields $\msX_{A}$ can, at this stage, only be assumed to be continuous. This means that
  $X_{A}^{B}$ can only be assumed to be continuous. On the other hand, the $X_{A}^{B}$ are smooth considered as functions of $t$. For this reason,
  the time derivative of $X_{A}^{B}$ is well defined. 
  
  One particular consequence of (\ref{eq:evolutionXAB}) is that $\hU(X_{A}^{C})=\mW_{A}^{B}X_{B}^{C}$. The goal is to use this equation in order to
  improve our knowledge concerning the asymptotic behaviour of the $X_{A}$. As a first step, note that (\ref{eq:XABmdeABYABmdeAB}) holds. Let $\g$ be
  an integral curve of $\hU$ with $\varrho\circ\g(s)=s$, cf. the proof of Lemma~\ref{lemma:mKconvtomsK}. Then, if $A\geq 2$, 
  \begin{equation}\label{eq:hXoneAequation}
    \frac{d}{ds}X_{A}^{1}\circ\g = \mW_{A}^{1}\circ\g\cdot X_{1}^{1}\circ\g+\textstyle{\sum}_{B\geq 2}\mW_{A}^{B}\circ\g\cdot X_{B}^{1}\circ\g.
  \end{equation}
  The corresponding $n-1$ equations can be written
  \[
  \dot{\xi}=M\xi+\zeta,
  \]
  where $\xi$ is the vector consisting of the $n-1$ elements $X_{A}^{1}\circ\g$, $A=2,\dots,n$; $\zeta$ is the vector consisting of the $n-1$
  elements $\mW_{A}^{1}\circ\g\cdot X_{1}^{1}\circ\g$, $A=2,\dots,n$; $\|M(s)\|\leq C\ldr{s}^{a_{0}}e^{2\e s}$;
  $|\xi(s)|\leq C\ldr{s}^{a_{0}}e^{2\e s}$; and $|\zeta(s)|\leq C\ldr{s}^{a_{0}}e^{2(p_{2}-p_{1}+\e)s}$; these estimates are immediate consequences of
  (\ref{eq:mWABestoptimal}) and (\ref{eq:XABmdeABYABmdeAB}), and the constants (in these estimates and the ones below) are independent of $\g$.
  Then, for $-\infty<s_{a}\leq s\leq 0$ in the domain of definition of $\g$,
  \[
  \xi(s)=\Phi(s;s_{a})\xi(s_{a})+\int_{s_{a}}^{s}\Phi(s;\tau)\zeta(\tau)d\tau,
  \]
  where $\Phi(s;s_{a})$ is defined by $\dot{\Phi}(s;s_{a})=M(s)\Phi(s;s_{a})$ and $\Phi(s_{a};s_{a})=\mathrm{Id}$. Due to the bounds on $\|M(s)\|$,
  there is a constant $C$ such that $\|\Phi(s;s_{a})\|\leq C$ for all $-\infty<s_{a}\leq s\leq 0$ in the domain of definition of $\g$. This implies,
  in particular, that
  \[
  |\xi(s)|\leq C|\xi(s_{a})|+C\int_{s_{a}}^{s}|\zeta(\tau)|d\tau.
  \]
  Letting $s_{a}\rightarrow -\infty$ in this estimate, keeping in mind that $|\zeta(s)|\leq C\ldr{s}^{a_{0}}e^{2(p_{2}-p_{1}+\e)s}$, yields
  \begin{equation}\label{eq:xiestimate}
    |\xi(s)|\leq C\ldr{s}^{a_{0}}e^{2(p_{2}-p_{1}+\e)s}.
  \end{equation}
  Next, assume inductively that for some $2\leq m\leq n-1$, there are constants $a$ and $C$ such that 
  \begin{equation}\label{eq:XoneAindass}
    |X^{1}_{A}|\leq C\ldr{\varrho}^{a}e^{2\e\varrho}\max\{e^{2(p_{A}-p_{1})\varrho},e^{2(p_{m}-p_{1})\varrho}\}
  \end{equation}
  for all $A\geq 2$. Note that we know this inductive assumption to hold with $m=2$. Then, if $A>m$,
  \begin{equation}\label{eq:firstindstepmtomplusone}
    \frac{d}{ds}X_{A}^{1}\circ\g = \textstyle{\sum}_{B=1}^{m}\mW_{A}^{B}\circ\g\cdot X_{B}^{1}\circ\g
    +\textstyle{\sum}_{B>m}\mW_{A}^{B}\circ\g\cdot X_{B}^{1}\circ\g.
  \end{equation}
  Due to (\ref{eq:mWABestoptimal}) and (\ref{eq:XoneAindass}), the first sum on the right hand side is $O(\ldr{s}^{a}e^{2(p_{A}-p_{1}+\e)s})$.
  Viewing (\ref{eq:firstindstepmtomplusone}) as an equation for $X_{A}^{1}\circ\g$ for $A>m$, we can argue as above in order to conclude that
  the inductive hypothesis holds with $m$ replaced by $m+1$. To conclude, there are constants $a$ and $C$ such that 
  \begin{equation}\label{eq:XoneAfinaliso}
    |X^{1}_{A}|\leq C\ldr{\varrho}^{a}e^{2\e\varrho}e^{2(p_{A}-p_{1})\varrho}
  \end{equation}
  for all $A\geq 2$.

  As a next inductive step, assume that for some $1\leq m\leq n-2$, there are constants $a$ and $C$ such that 
  \begin{equation}\label{eq:XABinductiveassumption}
    |X^{A}_{B}|\leq C\ldr{\varrho}^{a}e^{2\e\varrho}e^{2(p_{B}-p_{A})\varrho}
  \end{equation}
  for all $B>A$ such that $A\leq m$. We know this to be true for $m=1$. Next, for $A>m+1$,
  \[
  \frac{d}{ds}X^{m+1}_{A}\circ\g=\textstyle{\sum}_{B\leq m+1}\mW^{B}_{A}\circ\g\cdot X^{m+1}_{B}\circ\g
  +\textstyle{\sum}_{B>m+1}\mW^{B}_{A}\circ\g\cdot X^{m+1}_{B}\circ\g.
  \]
  The first sum on the right hand side is $O(\ldr{s}^{a_{0}}e^{2(p_{A}-p_{m+1})s})$. We can thus argue as before in order to conclude that
  \[
  |X^{m+1}_{A}|\leq C\ldr{\varrho}^{a_{0}}e^{2(p_{m+2}-p_{m+1}+\e)\varrho}
  \]
  on $M$ for all $A>m+1$. Assume now, inductively, that there is a $k$, satisfying $m+1<k\leq n-1$, and constants $a$ and $C$ such that 
  \begin{equation}\label{eq:indhypmkstep}
    |X^{m+1}_{A}|\leq C\ldr{\varrho}^{a}\max\{e^{2(p_{k}-p_{m+1}+\e)\varrho},e^{2(p_{A}-p_{m+1}+\e)\varrho}\}
  \end{equation}
  on $M$ for all $A>m+1$. We know this to be true for $k=m+2$. If $A>k$,
  \[
  \frac{d}{ds}X^{m+1}_{A}\circ\g=\textstyle{\sum}_{B\leq k}\mW^{B}_{A}\circ\g\cdot X^{m+1}_{B}\circ\g
  +\textstyle{\sum}_{B>k}\mW^{B}_{A}\circ\g\cdot X^{m+1}_{B}\circ\g.
  \]
  Moreover, the first term on the right hand side is $O(\ldr{s}^{a}e^{2(p_{A}-p_{m+1}+\e)})$ and we can argue as before in order to prove that
  $k$ can be replaced by $k+1$ in the inductive hypothesis (\ref{eq:indhypmkstep}). This means that we can replace $m$ by $m+1$ in
  (\ref{eq:XABinductiveassumption}). Thus (\ref{eq:XABinductiveassumption}) holds for all $B>A$. In particular, the first term on the
  left hand side of (\ref{eq:XABYABoptimalestimates}) is bounded by the right hand side. 

  Finally, note that $Y^{A}_{B}$ are the components of the inverse of the matrix with components $X^{A}_{B}$. Combining this observation with
  the estimates for $X^{A}_{B}$ and an argument similar to the one presented at the end of the proof of Lemma~\ref{lemma:XABest} yields the
  conclusion that the second term on the left hand side of (\ref{eq:XABYABoptimalestimates}) satisfies the desired bound.
\end{proof}

At this stage, we can derive asymptotics for the metric, $\mK$ and $\hml_{U}\mK$. In other words, we can prove Theorem~\ref{thm:asymptoticformofmetric}. 

\begin{proof}[Theorem~\ref{thm:asymptoticformofmetric}]
  Note, to begin with, that
  \begin{equation}\label{eq:metricintermsofYAandbmuA}
    g=-N^{2}dt\otimes dt+\textstyle{\sum}_{A}e^{2\bmu_{A}}Y^{A}\otimes Y^{A}.
  \end{equation}
  In particular,
  \[
  g(\msX_{B},\msX_{C})=\textstyle{\sum}_{A}e^{2\bmu_{A}}Y^{A}(\msX_{B})Y^{A}(\msX_{C})
  =\textstyle{\sum}_{A}e^{2\bmu_{A}}Y^{A}_{B}Y^{A}_{C}.
  \]
  Combining this observation with (\ref{eq:rAasymptotics}) and (\ref{eq:XABYABoptimalestimates}) yields, if $A\neq B$,
  \[
  e^{-2p_{\max\{A,B\}}\varrho}|g(\msX_{A},\msX_{B})|\leq C\ldr{\varrho}^{a}e^{2\e\varrho}
  \]
  on $M$ (no summation). Moreover, if $r_{A}$ are the functions whose existence is guaranteed by Lemma~\ref{lemma:rAexist}, 
  \[
  |e^{-2p_{A}\varrho-2r_{A}}g(\msX_{A},\msX_{A})-1|\leq C\ldr{\varrho}^{a}e^{2\e\varrho}
  \]
  on $M$ (no summation). Due to these observations, it is clear that (\ref{eq:gbABgenasform}) and (\ref{eq:bABasymptotics}) hold. That
  (\ref{eq:bmuACzasymptoticsintro}) and (\ref{eq:XABYABoptimalestimatesintro}) hold follows immediately from (\ref{eq:rAasymptotics})
  and (\ref{eq:XABYABoptimalestimates}) respectively. 

  Next, note that
  \begin{equation}\label{eq:mKellAXAYAdecomp}
    \mK=\textstyle{\sum}_{A}\ell_{A}X_{A}\otimes Y^{A}.
  \end{equation}
  Thus
  \begin{equation}\label{eq:mKmsYAmsXBcomp}
    \mK(\msY^{A},\msX_{B})=\textstyle{\sum}_{C}\ell_{C}\msY^{A}(X_{C})Y^{C}(\msX_{B})
    =\textstyle{\sum}_{C}\ell_{C}X^{A}_{C}Y^{C}_{B}.
  \end{equation}
  Combining this equality with (\ref{eq:ellAminuspAestimateCz}) and (\ref{eq:XABYABoptimalestimates}) yields the conclusion that
  (\ref{eq:mKmsYAmsXBgenest}) holds. Finally, note that
  \begin{equation}\label{eq:hmlUfixedframintermediate}
    (\hml_{U}\mK)(\msY^{A},\msX_{B})=(\hml_{U}\mK)(X^{A}_{C}Y^{C},Y^{D}_{B}X_{D})=X^{A}_{C}(\hml_{U}\mK)(Y^{C},X_{D})Y^{D}_{B}.
  \end{equation}
  If $A\geq B$, then (\ref{eq:hmlUfixedframeopt}) is an immediate consequence of the fact that (\ref{eq:CkexpdechmlUmK}) holds with $k=0$.
  The interesting case is thus $A<B$. If $C=D$, then the estimate of the far right hand side of (\ref{eq:hmlUfixedframintermediate}) is
  similar to the proof of (\ref{eq:mKmsYAmsXBgenest}) for $A<B$. If $C\neq D$, the term that needs to be estimated is
  \[
  (\ell_{D}-\ell_{C})X^{A}_{C}\mW^{C}_{D}Y^{D}_{B},
  \]
  where we appealed to \cite[(6.8), p.~68]{RinWave}. Combining this observation with (\ref{eq:mWABestoptimal}) and (\ref{eq:XABYABoptimalestimates})
  yields (\ref{eq:hmlUfixedframeopt}). The lemma follows. 
\end{proof}

Next, we derive conclusions concerning the mean curvature, assuming (\ref{eq:bDlhUsqphiestimate}) to hold for $k=0$ and the limit, say $\Phia$,
of $\hU\phi$ to satisfy $\Phia^{2}+\tr\msK^{2}=1$. 

\begin{lemma}\label{lemma:lnthetavarrhoas}
  Assume the standard assumptions to be satisfied. Assume, moreover, that (\ref{eq:bDlhUsqphiestimate}) holds for $k=0$ and that the limit, say
  $\Phia$, of $\hU\phi$ satisfies $\Phia^{2}+\tr\msK^{2}=1$. Then there are constants $a$ and $C$ and a continuous function $r_{\theta}$ such that
  \[
  |\ln\theta+\varrho-r_{\theta}|\leq C\ldr{\varrho}^{a}e^{2\e\varrho}
  \]
  on $M$. 
\end{lemma}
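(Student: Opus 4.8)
The plan is to reduce the statement to a scalar ODE estimate along the integral curves of the conformally rescaled normal $\hU$, in the spirit of the proof of Lemma~\ref{lemma:lntthetatthetaminusoneestimate}, but tracking the rate $\ldr{\varrho}^{a}e^{2\e\varrho}$ rather than powers of $t$. Since $\hU(\varrho)=1$ (cf.\ \cite[(7.9), p.~74]{RinWave}), each integral curve $\g$ of $\hU$ may be parametrised so that $\varrho\circ\g(s)=s$, with $s$ ranging over an interval $(-\infty,s_{0}]$ (the parametrisation already used in Lemma~\ref{lemma:mKconvtomsK}); as $s\to-\infty$ the curve runs into the singularity, because $\varrho$ is strictly increasing in $t$ (as $\d_{t}\varrho=N\theta>0$) and diverges uniformly to $-\infty$. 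The goal is to show that $\hU(\varrho+\ln\theta)$ is bounded by $C\ldr{\varrho}^{a}e^{2\e\varrho}$ on $M_{0}:=\bM\times(0,t_{0}]$ for suitable $t_{0}$. Granting this, $\tfrac{d}{ds}[(\varrho+\ln\theta)\circ\g](s)$ is bounded by $C\ldr{s}^{a}e^{2\e s}$, which is integrable over $(-\infty,s_{0}]$, so $(\varrho+\ln\theta)\circ\g$ converges as $s\to-\infty$ to a limit depending only on the curve, i.e.\ on the corresponding $\bx\in\bM$; call it $r_{\theta}(\bx)$. The tail estimate $\big|(\varrho+\ln\theta)\circ\g(s)-r_{\theta}\big|\le\int_{-\infty}^{s}C\ldr{\tau}^{a}e^{2\e\tau}\,d\tau\le C\ldr{s}^{a}e^{2\e s}$ gives $|\varrho+\ln\theta-r_{\theta}|\le C\ldr{\varrho}^{a}e^{2\e\varrho}$ on $M_{0}$, and, $\varrho$ diverging uniformly, this forces $\varrho+\ln\theta$ to converge uniformly on $\bM$ to $r_{\theta}$, whence $r_{\theta}$ is continuous. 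Finally, on $M\setminus M_{0}=\bM\times(t_{0},t_{+})$ the function $\varrho+\ln\theta-r_{\theta}$ is bounded, so after enlarging $C$ the estimate holds on all of $M$.

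The decay of $\hU(\varrho+\ln\theta)$ comes from Einstein's equations. By definition of the deceleration parameter, $\hU(n\ln\theta)=-1-q$, so, using $\hU(\varrho)=1$, $\hU(\varrho+\ln\theta)=-\tfrac1n[q-(n-1)]$. Proceeding as in the derivation of (\ref{eq:qmnmone}) — using the formula for $q$ in \cite[(61), p.~27]{RinGeo}, the Hamiltonian constraint (\ref{eq:Hamiltonianconstraint}) to eliminate the spatial scalar curvature, and the scalar-field identities \cite[(153), p.~48]{RinGeo} and \cite[Remark~84, p.~48]{RinGeo} — one arrives at the analogue of (\ref{eq:hUvarrhopluslntheta}),
\[
\hU(\varrho+\ln\theta)=\tfrac{2}{n-1}\theta^{-2}\Lambda+\big(1-(\hU\phi)^{2}-\mathrm{tr}\,\mK^{2}\big)+R_{N},
\]
where $R_{N}$ collects the lapse-dependent contributions (which enter through $N^{-1}\Delta_{\bg}N$ and vanish when $N$ is constant). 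Granting that $R_{N}$ can be absorbed (see below), the middle group of terms is already controlled: $\mK\to\msK$ and $\hU\phi\to\Phi_{a}$ at the rate $\ldr{\varrho}^{a_{0}}e^{2\e\varrho}$ by Lemma~\ref{lemma:mKconvtomsK} and (\ref{eq:hUphiPhiaunifconv}), while $\mathrm{tr}\,\mK^{2}-\mathrm{tr}\,\msK^{2}=\mathrm{tr}[(\mK-\msK)\mK]+\mathrm{tr}[\msK(\mK-\msK)]$ and $(\hU\phi)^{2}-\Phi_{a}^{2}=(\hU\phi-\Phi_{a})(\hU\phi+\Phi_{a})$, so the hypothesis $\Phi_{a}^{2}+\mathrm{tr}\,\msK^{2}=1$ yields $\big|1-(\hU\phi)^{2}-\mathrm{tr}\,\mK^{2}\big|\le C\ldr{\varrho}^{a}e^{2\e\varrho}$ on $M$.

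It remains to absorb the cosmological-constant term, for which one runs the bootstrap of Lemma~\ref{lemma:lntthetatthetaminusoneestimate}. If $\Lambda\le 0$, the term $\tfrac{2}{n-1}\theta^{-2}\Lambda$ is nonpositive, so $\hU(\varrho+\ln\theta)$ is bounded above by an integrable function and below by $\tfrac{2}{n-1}\theta_{0}^{-2}\Lambda-C\ldr{\varrho}^{a}e^{2\e\varrho}$ (using $\theta\ge\theta_{0}$); integrating from $s$ up to $s_{0}$ along $\g$ shows $\varrho+\ln\theta$ is bounded below on $M_{0}$, hence $\ln\theta\ge-\varrho-C$, hence $\theta^{-2}\le Ce^{2\varrho}\le Ce^{2\e\varrho}$ (recall $\e\in(0,1)$ and $\varrho<0$ near the singularity). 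If $\Lambda>0$, one uses the strict inequality $\theta_{0}>[2\Lambda/(n-1)]^{1/2}$ (part of Definition~\ref{def:volsing}) to deduce $\hU(\ln\theta)=\hU(\varrho+\ln\theta)-1\le-\delta<0$ near the singularity, whence $\theta\to\infty$ uniformly and $\theta^{-2}\le Ce^{2\delta'\varrho}$ for some $\delta'>0$; reinserting this into the displayed identity again gives $\theta^{-2}\le Ce^{2\varrho}$. In either case $\tfrac{2}{n-1}\theta^{-2}\Lambda\le C\ldr{\varrho}^{a}e^{2\e\varrho}$, so $|\hU(\varrho+\ln\theta)|\le C\ldr{\varrho}^{a}e^{2\e\varrho}$ on $M_{0}$ and the integration argument of the first paragraph applies.

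The main obstacle is the control of $R_{N}$ in the identity of the second paragraph. Under the mere standard assumptions (as opposed to the $k$-standard assumptions of Definition~\ref{def:kstandass}) we have no quantitative control on the lapse, and the lapse-dependent contribution to $q$ does not in general cancel against the curvature terms removed via the Hamiltonian constraint; nor does combining this with the evolution formula (\ref{eq:mlUmKwithEinstein}) for $\hml_{U}\mK$ (where the Hessian of $N$ also appears) isolate it, the relevant contractions merely reproducing identities. Resolving this — i.e.\ showing that $R_{N}$ does not spoil the exponential estimate, either by a further use of Einstein's equations or by an appropriate reformulation — is the heart of the argument; the remaining steps (the $\Lambda$-bootstrap, the scalar-field estimates, and the integration along integral curves of $\hU$) are routine adaptations of arguments already carried out in the excerpt.
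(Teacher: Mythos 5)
Your strategy is exactly the one the paper intends: its entire proof of this lemma is the remark that the argument is ``very similar to the beginning of the proof of Lemma~\ref{lemma:lntthetatthetaminusoneestimate}, combined with arguments similar to the proof of Lemma~\ref{lemma:mKconvtomsK}'', i.e.\ precisely your plan of writing $\hU(\varrho+\ln\theta)=-\tfrac{1}{n}[q-(n-1)]$, expressing $q-(n-1)$ via the Hamiltonian constraint and the scalar-field identities, using $\Phi_{a}^{2}+\tr\msK^{2}=1$ together with the decomposition (\ref{eq:assumptoHamcon}) to get the exponential decay of $1-(\hU\phi)^{2}-\tr\mK^{2}$, absorbing the $\Lambda\theta^{-2}$ term by the bootstrap (with $\theta_{0}>[2\Lambda/(n-1)]^{1/2}$ when $\Lambda>0$), and then integrating along integral curves of $\hU$ parametrised by $\varrho$ to obtain the limit $r_{\theta}$, its continuity and the tail estimate. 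All of those steps in your write-up are correct and match the paper.

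However, as submitted your argument is not a complete proof, and you say so yourself: you never control the lapse contribution $R_{N}=\theta^{-2}N^{-1}\Delta_{\bg}N$ that enters $q-(n-1)$ when $N$ is not constant (it is the same term that appears in (\ref{eq:mlUmKwithEinstein}) and is estimated via (\ref{eq:bmNexpdecay}) elsewhere in the paper). Your diagnosis of the obstruction is accurate: the identity (\ref{eq:qmnmone})/(\ref{eq:hUvarrhopluslntheta}) is derived in Lemma~\ref{lemma:lntthetatthetaminusoneestimate} for a Gaussian foliation, where this term is identically zero, whereas the present lemma assumes only the standard assumptions of Definition~\ref{def:standardassumptions}, which contain no quantitative information on the lapse; in the paper the analogous term is only ever bounded under additional hypotheses on $\ln\hN$ (cf.\ (\ref{eq:Crorelbd}) and (\ref{eq:lnthetalnNphipolbd}) in Theorem~\ref{thm:improvingasymptoticsgeneral}, or (\ref{eq:bDllnhNconditions}) in Definition~\ref{def:kstandass}) combined with lower bounds on the $\mu_{A}$, none of which are available here, and the paper's two-line proof of the present lemma does not indicate how this term is to be handled either. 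So the gap you flag is genuine relative to what is written in the paper, but since you leave it open, your proposal establishes the lemma only in the case $N\equiv 1$ (or, more generally, when $\theta^{-2}N^{-1}\Delta_{\bg}N$ is known to decay at the rate $\ldr{\varrho}^{a}e^{2\e\varrho}$); to count as a proof of the statement as formulated, the missing bound on $R_{N}$ must be supplied or the hypotheses strengthened.
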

\begin{proof}
  Since the proof is very similar to the beginning of the proof of Lemma~\ref{lemma:lntthetatthetaminusoneestimate}, combined with arguments similar
  to the proof of Lemma~\ref{lemma:mKconvtomsK}, we leave the details to the reader. 
\end{proof}

\subsection{Higher order derivatives}\label{ssection:higherorderderivativesconv}

In the previous subsection, we derive conclusions in $C^{0}$. In the present subsection, we derive estimates in $C^{k}$ for some $k\geq 1$. In order to
be able to do so, we need to impose additional assumptions; cf. Definition~\ref{def:kstandass}.

\begin{lemma}\label{lemma:kmKconvtomsK}
  Let $1\leq k\in\nn{}$ and assume the $k$-standard assumptions to hold; cf. Definition~\ref{def:kstandass}. Then the tensor field $\msK$ whose
  existence is guaranteed by Lemma~\ref{lemma:mKconvtomsK} is $C^{k}$. Moreover, there are constants $a_{k}$ and $C_{k}$ such that   
  \begin{equation}\label{eq:kmsKmKexpunifconv}
    \textstyle{\sum}_{l=0}^{k}|\bD^{l}(\mK-\msK)|_{\bg_{\refer}}\leq C_{k}\ldr{\varrho}^{a_{k}}e^{2\e\varrho}
  \end{equation}
  on $M$. In particular, if $\ell_{A}$, $X_{A}$, $Y^{A}$, $\msX_{A}$ and $\msY^{A}$ are given by Definition~\ref{lemma:nondegenerate} and $p_{A}$
  are the eigenvalues of $\msK$, ordered by size, then $\msX_{A}$, $\msY^{A}$ and $p_{A}$ are $C^{k}$ and there are constants $a_{k}$ and $C_{k}$ such that   
  \begin{equation}\label{eq:ellAXAYACkest}
    |\bD^{l}(\ell_{A}-p_{A})|_{\bg_{\refer}}+|\bD^{l}(X_{A}-\msX_{A})|_{\bg_{\refer}}
    +|\bD^{l}(Y^{A}-\msY^{A})|_{\bg_{\refer}}\leq C_{k}\ldr{\varrho}^{a_{k}}e^{2\e\varrho}
  \end{equation}
  on $M$ for all $0\leq l\leq k$ and all $A\in\{1,\dots,n\}$. 
\end{lemma}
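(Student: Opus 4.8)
The plan is to repeat, at the level of spatial derivatives, the argument used to prove Lemma~\ref{lemma:mKconvtomsK}: integrate an evolution equation along the integral curves of $\hU$, parametrised by $\varrho$. Fix a local orthonormal frame $\{E_i\}$ for $\bg_{\refer}$ on $\bM$, with dual coframe $\{\omega^i\}$, and write $\mK^i_j:=\mK(\omega^i,E_j)$; by compactness of $\bM$ all constants below may be taken uniform. Since the shift vanishes we have $\hU=\hN^{-1}\d_t$ with $\hN=\theta N$, and since $E_i$ is $t$-independent, $[\d_t,E_i]=0$, whence $[\hU,E_i]=(E_i\ln\hN)\hU$. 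Iterating, $[\hU,E_{\bfI}]$ (in the multi-index notation of the proof of Lemma~\ref{lemma:bDkvarrholnthetaest}) is a linear combination of terms $(E_{\bfJ_1}\ln\hN)\cdots(E_{\bfJ_m}\ln\hN)\,E_{\bfJ_0}\hU$ with $|\bfJ_0|+\cdots+|\bfJ_m|=|\bfI|$ and $|\bfJ_0|<|\bfI|$. Using $\hU(\mK^i_j)=(\hml_U\mK)(\omega^i,E_j)$ (Remark~\ref{remark:hmlUmKdef}) we obtain, for $|\bfI|\le k$,
\[
\hU(E_{\bfI}\mK^i_j)=E_{\bfI}\big[(\hml_U\mK)(\omega^i,E_j)\big]+[\hU,E_{\bfI}]\mK^i_j ,
\]
and every summand on the right is a product of a factor bounded polynomially in $\varrho$ (a spatial derivative of $\ln\hN$, controlled by (\ref{eq:bDllnhNconditions})) and a spatial derivative of order $\le k$ of $\hml_U\mK$ (controlled by (\ref{eq:CkexpdechmlUmK})). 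Crucially, no derivative of $\mK$ itself occurs, so the estimate closes with no a priori control of $\bD^l\mK$: we get $|\hU(E_{\bfI}\mK^i_j)|\le C_k\ldr{\varrho}^{a_k}e^{2\e\varrho}$ on $M$ for $|\bfI|\le k$.

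Next I would argue exactly as in the proof of Lemma~\ref{lemma:mKconvtomsK}. Parametrise an integral curve $\g$ of $\hU$ by $\varrho\circ\g(s)=s$, which is legitimate since $\hU\varrho=1$ (\cite[(7.9), p.~74]{RinWave}); then $\big|\tfrac{d}{ds}(E_{\bfI}\mK^i_j\circ\g)(s)\big|\le C_k\ldr{s}^{a_k}e^{2\e s}$, with $\e$ evaluated at the ($s$-independent) $\bM$-point of $\g$, and since $\e\ge\e_0>0$ the right-hand side is integrable on $(-\infty,0]$. Hence $E_{\bfI}\mK^i_j\circ\g$ has a limit as $s\to-\infty$ and, reverting to the variable $(\bx,t)$,
\[
|E_{\bfI}\mK^i_j-L^i_{j,\bfI}|\le C_k\ldr{\varrho}^{a_k}e^{2\e\varrho}\quad\text{on }M
\]
for functions $L^i_{j,\bfI}$ on $\bM$, which are continuous because $\varrho\to-\infty$ uniformly. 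Since $\mK^i_j\to\msK^i_j$ in $C^0$ (Lemma~\ref{lemma:mKconvtomsK}) and all the $E_{\bfI}\mK^i_j$ with $|\bfI|\le k$ converge uniformly, the standard differentiation-of-limits theorem (applied successively in each coordinate direction) shows $\msK^i_j\in C^k$ and $L^i_{j,\bfI}=E_{\bfI}\msK^i_j$; that is, $\msK$ is a $C^k$ $(1,1)$-tensor field. Rewriting the $E_{\bfI}$-derivatives of components in terms of the covariant derivatives $\bD^l$, $l\le k$ — which only introduces the fixed smooth connection coefficients of $\{E_i\}$ and their derivatives — turns the displayed bound into (\ref{eq:kmsKmKexpunifconv}).

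It remains to transfer this to the spectral data. The $p_A$ are real (being limits of the real eigenvalues $\ell_A$ of $\mK$), and the standard assumptions give $\min_{A\neq B}|p_A-p_B|\ge\e_{\rond}>0$; after shrinking $I$, also $\min_{A\neq B}|\ell_A-\ell_B|\ge\e_{\rond}$ on $M$. On the open set of $(1,1)$-tensors with simple real spectrum near the range of $\msK$, the $A$-th eigenvalue, the corresponding $\bg_{\refer}$-unit eigenvector (with sign fixed by continuity on the connected $\bM$, as already done for $X_A$), and its dual covector are $C^\infty$ functions of the tensor. Composing with $\mK$ (smooth on $M$) and with $\msK$ ($C^k$ on $\bM$) shows $\ell_A,X_A,Y^A$ are smooth on $M$ and $p_A,\msX_A,\msY^A$ are $C^k$ on $\bM$; these limits coincide with those of Remark~\ref{remark:msXAmsYAdef} because they already coincide in $C^0$, by (\ref{eq:XAasYAas}). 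Finally, writing each of $\ell_A-p_A$, $X_A-\msX_A$, $Y^A-\msY^A$ as $F(\mK)-F(\msK)=\int_0^1 DF\big(\msK+s(\mK-\msK)\big)[\mK-\msK]\,ds$ for the relevant $C^\infty$ map $F$ (all intermediate tensors still having simple spectrum once $t$ is small, since the endpoints do and are close), differentiating up to $k$ times, and using that $\bD^{\le k}\msK$ is bounded while $\bD^{\le k}(\mK-\msK)$ obeys (\ref{eq:kmsKmKexpunifconv}), gives (\ref{eq:ellAXAYACkest}), all polynomial-in-$\varrho$ factors being absorbed into $\ldr{\varrho}^{a_k}$.

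The only genuinely delicate point is the commutator bookkeeping in the first step — checking that no term in an iterated $[\hU,E_{\bfI}]$ involves a derivative of $\mK$ (only of $\ln\hN$ and of $\hU\mK=\hml_U\mK$), which is what makes the induction on $|\bfI|$ self-contained; everything else is a routine repetition of the $C^0$ analysis of Subsection~\ref{ssection:Czlim} combined with standard finite-regularity perturbation theory for simple eigenvalues.
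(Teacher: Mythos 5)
Your proposal is correct and follows essentially the same route as the paper: bound the spatial derivatives of $\hml_{U}\mK$, handle the commutators $[\hU,E_{\bfI}]$ via the assumed polynomial bounds on derivatives of $\ln\hN$ (noting that no derivative of $\mK$ itself appears), integrate along integral curves of $\hU$ parametrised by $\varrho$ to get uniform convergence of all derivatives up to order $k$, and hence the $C^{k}$ regularity of $\msK$ together with (\ref{eq:kmsKmKexpunifconv}). Your final paragraph on the eigenvalues and eigenvectors is simply a detailed expansion of what the paper dismisses as an immediate consequence, using smooth dependence of the eigendecomposition on tensors with simple spectrum.
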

\begin{proof}
  Let $\{E_{i}\}$ be a global orthonormal frame with respect to $\bg_{\refer}$ and let $\{\omega^{i}\}$ be the dual frame. If $\bfI=(i_{1},\dots,i_{m})$,
  where $i_{j}\in\{1,\dots,n\}$, then we use the notation $\bD_{\bfI}:=\bD_{E_{i_{1}}}\dots\bD_{E_{i_{m}}}$, $E_{\bfI}:=E_{i_{1}}\cdots E_{i_{m}}$ and $|\bfI|=m$.
  One can either check directly or verify by appealing to, e.g., \cite[Lemma~5.7, p.~62]{RinWave}, that if $|\bfI|\leq k$, then
  \begin{equation}\label{eq:bDbfIhmlUmKestimate}
    |\bD_{\bfI}\hml_{U}\mK|_{\bg_{\refer}}\leq C_{k}\ldr{\varrho}^{a_{k}}e^{2\e\varrho}
  \end{equation}
  on $M$, where we appealed to (\ref{eq:CkexpdechmlUmK}). Next, we wish to prove that for $|\bfI|\leq k$,
  \begin{equation}\label{eq:EbfIhUmKijestimate}
    |E_{\bfI}[\hU(\mK^{i}_{j})]|\leq C_{k}\ldr{\varrho}^{a_{k}}e^{2\e\varrho}
  \end{equation}
  on $M$, where $\mK^{i}_{j}=\mK(\omega^{i},E_{j})$. Note, to this end, that, due to \cite[(A.4), p.~204]{RinWave},
  \[
  \hU(\mK^{i}_{j})=(\hml_{U}\mK)(\omega^{i},E_{j}).
  \]
  It can therefore be demonstrated inductively that $E_{\bfI}[\hU(\mK^{i}_{j})]$ consists of a linear combination of terms of the form
  \[
  (\bD_{\bfJ}\hml_{U}\mK)(\bD_{\bfK}\omega^{i},\bD_{\bfL}E_{j}),
  \]
  where $|\bfJ|+|\bfK|+|\bfL|=|\bfI|$. Combining this observation with (\ref{eq:bDbfIhmlUmKestimate}) yields the conclusion that
  (\ref{eq:EbfIhUmKijestimate}) holds. Next, we wish to commute $E_{\bfI}$ and $\hU$. Note, to this end, that $[E_{\bfI},\hU]$ can be written
  as a linear combination of terms of the form
  \begin{equation}\label{eq:termsarisingfromhUEbfIcomm}
    E_{\bfI_{1}}(\ln\hN)\cdots E_{\bfI_{m}}(\ln\hN)E_{\bfJ}\hU,
  \end{equation}
  where $|\bfI_{1}|+\dots+|\bfI_{m}|+|\bfJ|=|\bfI|$, $|\bfI_{j}|\neq 0$ and $|\bfJ|<|\bfI|$. Combining this observation with (\ref{eq:EbfIhUmKijestimate})
  and the assumptions yields the conclusion that if $|\bfI|\leq k$, then
  \[
  |\hU[E_{\bfI}(\mK^{i}_{j})]|\leq C_{k}\ldr{\varrho}^{a_{k}}e^{2\e\varrho}
  \]
  on $M$. At this stage, we can argue as in the proof of Lemma~\ref{lemma:mKconvtomsK} in order to conclude that $E_{\bfI}(\mK^{i}_{j})$ converges
  at the rate $\ldr{\varrho}^{a_{k}}e^{2\e\varrho}$. Since the convergence is uniform, we conclude that the $\msK$ whose existence is guaranteed by
  Lemma~\ref{lemma:mKconvtomsK} is in fact $C^{k}$. Moreover, we obtain (\ref{eq:kmsKmKexpunifconv}). The estimate (\ref{eq:ellAXAYACkest}) is
  an immediate consequence of this. 
\end{proof}

Next, we derive information concerning the asymptotics of the scalar field, given estimates of the form (\ref{eq:bDlhUsqphiestimate}).  

\begin{lemma}\label{lemma:PhiaPhibasgivhUsqphidec}
  Assume that the standard assumptions hold; cf. Definition~\ref{def:standardassumptions}. Assume, moreover, that
  (\ref{eq:bDlhUsqphiestimate}) holds for some $k\in\nn{}$. If $k\geq 1$, assume, in addition, that (\ref{eq:bDllnhNconditions}) holds. Then
  there are $C^{k}$-functions $\Phia$ and $\Phib$ on $\bM$ and constants $a_{k}$ and $C_{k}$ such that
  \begin{equation}\label{eq:bDlhUphimPhiaPhibestimate}
    |\bD^{l}[\hU(\phi)-\Phia]|_{\bg_{\refer}}
    +|\bD^{l}(\phi-\Phia\varrho-\Phib)|_{\bg_{\refer}}\leq C_{k}\ldr{\varrho}^{a_{k}}e^{2\e\varrho}
  \end{equation}
  on $M$ for all $0\leq l\leq k$. 
\end{lemma}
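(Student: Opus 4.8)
The plan is to integrate the estimate $|\bD^{l}\hU^{2}\phi|_{\bg_{\refer}}\leq C_{k}\ldr{\varrho}^{a_{k}}e^{2\e\varrho}$ twice along integral curves of $\hU$, parametrised so that $\varrho$ is the curve parameter, exactly as in the proof of Lemma~\ref{lemma:mKconvtomsK} and Lemma~\ref{lemma:kmKconvtomsK}. First I would treat the $C^{0}$ statement. Let $\g$ be an integral curve of $\hU$; by \cite[(7.9), p.~74]{RinWave}, $\hU(\varrho)=1$, so we may parametrise $\g$ by $\varrho\circ\g(s)=s$. Since $\frac{d}{ds}[\hU(\phi)\circ\g]=[\hU^{2}(\phi)]\circ\g=O(\ldr{s}^{a_{0}}e^{2\e s})$ (using $k=0$ and the fact that the $\bM$-component of $\g$ is $s$-independent, so $\e$ is evaluated at a fixed point), the quantity $\hU(\phi)\circ\g$ is Cauchy as $s\to-\infty$ and converges to a limit; define $\Phi_{a}$ to be this limit. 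Integrating the bound on $\frac{d}{ds}[\hU(\phi)\circ\g]$ from $-\infty$ to $s$ yields $|\hU(\phi)-\Phi_{a}|\leq C\ldr{\varrho}^{a_{0}}e^{2\e\varrho}$ on $M$, and since $\varrho\to-\infty$ uniformly, $\hU(\phi)(\cdot,t)\to\Phi_{a}$ uniformly, so $\Phi_{a}$ is continuous. For the second term, note $\hU(\phi-\Phi_{a}\varrho)=\hU(\phi)-\Phi_{a}$ by $\hU(\varrho)=1$, which we have just shown decays at the stated rate; integrating once more along $\g$ produces a continuous limit $\Phi_{b}$ with $|\phi-\Phi_{a}\varrho-\Phi_{b}|\leq C\ldr{\varrho}^{a_{0}}e^{2\e\varrho}$ on $M$. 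This settles the case $l=0$.

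For the higher order statement I would mimic the commutator bookkeeping already carried out in the proof of Lemma~\ref{lemma:kmKconvtomsK}. Fix a global orthonormal frame $\{E_{i}\}$ for $\bg_{\refer}$ with dual $\{\omega^{i}\}$ and, for a multi-index $\bfI$ with $|\bfI|\leq k$, consider $E_{\bfI}\hU^{2}(\phi)$. Since covariant derivatives of $E_{i}$, $\omega^{i}$ with respect to $\bD$ are bounded in every $C^{m}$-norm, $E_{\bfI}\hU^{2}(\phi)$ is controlled by $\sum_{l\leq|\bfI|}|\bD^{l}\hU^{2}\phi|_{\bg_{\refer}}$, hence by $C_{k}\ldr{\varrho}^{a_{k}}e^{2\e\varrho}$ via (\ref{eq:bDlhUsqphiestimate}). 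Next commute $E_{\bfI}$ past $\hU$: the commutator $[E_{\bfI},\hU]$ is a linear combination of terms of the form $E_{\bfI_{1}}(\ln\hN)\cdots E_{\bfI_{m}}(\ln\hN)E_{\bfJ}\hU$ with $|\bfI_{1}|+\dots+|\bfI_{m}|+|\bfJ|=|\bfI|$, $|\bfI_{j}|\neq 0$, $|\bfJ|<|\bfI|$, exactly as in (\ref{eq:termsarisingfromhUEbfIcomm}). Using the assumption (\ref{eq:bDllnhNconditions}) to bound the $\ln\hN$ factors polynomially in $\ldr{\varrho}$, together with an induction on $|\bfJ|$ that supplies control of $E_{\bfJ}\hU(\phi)$ at the rate $\ldr{\varrho}^{a_{k}}e^{2\e\varrho}$ (the base case $|\bfJ|=0$ being the $C^{0}$ result just proved, applied componentwise), I conclude $|\hU[E_{\bfI}\hU(\phi)]|\leq C_{k}\ldr{\varrho}^{a_{k}}e^{2\e\varrho}$ on $M$. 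Integrating along $\g$ as before shows $E_{\bfI}\hU(\phi)$ converges at this rate to a limit; assembling these over $|\bfI|\leq k$ and re-expressing in terms of $\bD^{l}$ yields that $\Phi_{a}$ is in fact $C^{k}$ and that $|\bD^{l}[\hU(\phi)-\Phi_{a}]|_{\bg_{\refer}}\leq C_{k}\ldr{\varrho}^{a_{k}}e^{2\e\varrho}$ for $0\leq l\leq k$.

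For the remaining term $\phi-\Phi_{a}\varrho-\Phi_{b}$, I would again use $\hU(\phi-\Phi_{a}\varrho)=\hU(\phi)-\Phi_{a}$, which now decays at the rate $\ldr{\varrho}^{a_{k}}e^{2\e\varrho}$ in every $C^{l}$-norm for $l\leq k$ by the previous paragraph, noting that $\bD^{l}$ of the right side is controlled since $\Phi_{a}$ is $C^{k}$ and $\varrho$ itself has spatial derivatives bounded polynomially in $\ldr{\varrho}$ under the standing assumptions (cf.\ the analogues used around (\ref{eq:phiPhiavarrhoPhibCkest})). One must be slightly careful that $E_{\bfI}\hU(\varrho)=E_{\bfI}(1)=0$, so the only $\varrho$-derivatives entering are those hitting $\Phi_{a}$, which are fine. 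Integrating once more along $\g$ then produces a $C^{k}$-function $\Phi_{b}$ on $\bM$ with the desired bound $|\bD^{l}(\phi-\Phi_{a}\varrho-\Phi_{b})|_{\bg_{\refer}}\leq C_{k}\ldr{\varrho}^{a_{k}}e^{2\e\varrho}$ for $0\leq l\leq k$, completing the proof.

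The main obstacle I anticipate is the bookkeeping in the commutator step: one must verify that, after commuting $E_{\bfI}$ past the two factors of $\hU$, every term that arises either is already controlled by (\ref{eq:bDlhUsqphiestimate}) and (\ref{eq:bDllnhNconditions}), or involves strictly fewer $E$-derivatives acting on $\hU(\phi)$ (so that the induction closes), and that the polynomial weights $\ldr{\varrho}^{a_{k}}$ absorb the products of $\ln\hN$-derivatives without disturbing the exponential gain $e^{2\e\varrho}$. This is routine given the machinery of \cite{RinWave} invoked in Lemma~\ref{lemma:kmKconvtomsK}, but it is the only place requiring genuine care; the two integrations along $\hU$-integral curves are entirely parallel to the arguments already given for $\mK$.
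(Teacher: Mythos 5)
Your proposal is correct and follows essentially the same route as the paper, which proves the $\Phi_{a}$ part exactly as in Lemma~\ref{lemma:kmKconvtomsK} (integration along $\hU$-integral curves parametrised by $\varrho$, commutators of $E_{\bfI}$ with $\hU$ controlled via (\ref{eq:termsarisingfromhUEbfIcomm}) and (\ref{eq:bDllnhNconditions})) and obtains $\Phi_{b}$ from the identity $\hU(\phi-\Phi_{a}\varrho)=\hU(\phi)-\Phi_{a}$, i.e.\ (\ref{eq:hUphiminusPhiavarrho}). The only cosmetic remark is that the commutator $[\hU,E_{\bfI}]$ here acts on $\hU\phi$, so its terms involve $E_{\bfJ}(\hU^{2}\phi)$, which are already controlled directly by (\ref{eq:bDlhUsqphiestimate}); your auxiliary induction on $E_{\bfJ}\hU(\phi)$ is therefore not needed, though it does no harm.
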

\begin{remark}
  In order to derive conclusions concerning the scalar field, another possibility is to impose assumptions on the geometry and to deduce conclusions
  concerning the scalar field; cf. the proof of Lemma~\ref{lemma:scalarfieldfromgeometry}. We encourage the interested reader to provide the
  details. 
\end{remark}
\begin{proof}
  The proof of the fact that $\hU\phi$ converges to a $C^{k}$-function $\Phia$ and the fact that the first term on the left hand side of
  (\ref{eq:bDlhUphimPhiaPhibestimate}) is bounded by the right hand side is very similar to the proof of Lemma~\ref{lemma:kmKconvtomsK}.
  We therefore omit the details. On the other hand, since (\ref{eq:hUphiminusPhiavarrho}) holds, the proof of the existence of $\Phib$ and
  the fact that the second term on the left hand side of (\ref{eq:bDlhUphimPhiaPhibestimate}) is bounded by the right hand side is identical.
  The lemma follows. 
\end{proof}

Next, we prove that the functions $r_{A}$ whose existence is guaranteed by Lemma~\ref{lemma:rAexist} are $C^{k}$.

\begin{lemma}\label{lemma:rAfunctionsareCk}
  Let $1\leq k\in\nn{}$ and assume the $k$-standard assumptions to hold; cf. Definition~\ref{def:kstandass}. Then the functions $r_{A}$ whose
  existence is guaranteed by Lemma~\ref{lemma:rAexist} are $C^{k}$. Moreover, there are constants $a_{k}$ and $C_{k}$ such that
 \begin{equation}\label{eq:CkrAasymptotics}
    |\bD^{l}(\bmu_{A}-p_{A}\varrho-r_{A})|_{\bg_{\refer}}\leq C_{k}\ldr{\varrho}^{a_{k}}e^{2\e\varrho}
  \end{equation}
  on $M$ for all $A\in\{1,\dots,n\}$ and $0\leq l\leq k$. 
\end{lemma}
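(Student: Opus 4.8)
The plan is to upgrade the $C^{0}$ argument of Lemma~\ref{lemma:rAexist} to $C^{k}$ in the same way that Lemma~\ref{lemma:kmKconvtomsK} upgrades Lemma~\ref{lemma:mKconvtomsK}. Recall from the proof of Lemma~\ref{lemma:rAexist} that $\hU(\bmu_{A}-p_{A}\varrho)=\ell_{A}+\mW^{A}_{A}-p_{A}$, so the whole task reduces to controlling $\bD^{l}$ of the right hand side and then integrating along integral curves of $\hU$ parametrised by $\varrho$. First I would fix a global orthonormal frame $\{E_{i}\}$ with dual $\{\omega^{i}\}$ (as in the proof of Lemma~\ref{lemma:kmKconvtomsK}) and observe that $\bmu_{A}$ is $C^{\infty}$ in $t$ for each fixed point of $\bM$, while the question is purely about spatial ($E_{\bfI}$) derivatives with $|\bfI|\le k$. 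I would then commute $E_{\bfI}$ past $\hU$, exactly as in the proof of Lemma~\ref{lemma:kmKconvtomsK}: the commutator $[E_{\bfI},\hU]$ is a linear combination of terms of the form $E_{\bfI_{1}}(\ln\hN)\cdots E_{\bfI_{m}}(\ln\hN)E_{\bfJ}\hU$ with $|\bfJ|<|\bfI|$, and these are controlled by (\ref{eq:bDllnhNconditions}) together with the already-established bounds on lower order derivatives of $\bmu_{A}-p_{A}\varrho$.

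The key inputs for the right hand side are: (i) $\bD^{l}(\ell_{A}-p_{A})=O(\ldr{\varrho}^{a_{k}}e^{2\e\varrho})$, which is (\ref{eq:ellAXAYACkest}) from Lemma~\ref{lemma:kmKconvtomsK}; and (ii) a $C^{k}$ bound on $\mW^{A}_{A}$. For (ii) I would go back to the defining relations \cite[(6.7)--(6.8), p.~68]{RinWave} for $\mW^{A}_{B}$, which express it (for $A\ne B$) as $(\hml_{U}\mK)(Y^{A},X_{B})/(\ell_{B}-\ell_{A})$ and, for $A=B$, in terms of $(\hml_{U}\mK)$ and the connection; taking $\bD^{l}$, using (\ref{eq:CkexpdechmlUmK}), the $C^{k}$ control of $X_{A},Y^{A}$ from (\ref{eq:ellAXAYACkest}), the uniform lower bound $\min_{A\ne B}|\ell_{A}-\ell_{B}|\ge\e_{\rond}$, and the fact that $|Y^{A}|_{\bg_{\refer}}$ is bounded in $C^{k}$ (by \cite[Lemma~5.5, p.~60]{RinWave} or a direct inductive argument), gives $|\bD^{l}\mW^{A}_{B}|_{\bg_{\refer}}\le C_{k}\ldr{\varrho}^{a_{k}}e^{2\e\varrho}$ for $l\le k$. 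Combining (i) and (ii), $|E_{\bfI}[\hU(\bmu_{A}-p_{A}\varrho)]|\le C_{k}\ldr{\varrho}^{a_{k}}e^{2\e\varrho}$, and then after commuting, $|\hU[E_{\bfI}(\bmu_{A}-p_{A}\varrho)]|$ satisfies the same bound once the lower order spatial derivatives are controlled.

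With this in hand the conclusion is the ODE integration argument that already appears repeatedly (proofs of Lemmas~\ref{lemma:mKconvtomsK}, \ref{lemma:rAexist}, \ref{lemma:kmKconvtomsK}): along an integral curve $\g$ of $\hU$ with $\varrho\circ\g(s)=s$, one has $\bigl|\tfrac{d}{ds}[E_{\bfI}(\bmu_{A}-p_{A}\varrho)\circ\g](s)\bigr|\le C_{k}\ldr{s}^{a_{k}}e^{2\e s}$; since the right hand side is integrable near $s=-\infty$ (because $\e\ge\e_{0}>0$), $E_{\bfI}(\bmu_{A}-p_{A}\varrho)\circ\g$ converges to a limit and the remainder is bounded by $C_{k}\ldr{\varrho}^{a_{k}}e^{2\e\varrho}$. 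Since $\varrho\to-\infty$ uniformly, the convergence of all spatial derivatives up to order $k$ is uniform, so the limit is a $C^{k}$ function of $\bM$; by uniqueness of limits it must agree with the continuous function $r_{A}$ produced in Lemma~\ref{lemma:rAexist}. This yields (\ref{eq:CkrAasymptotics}).

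I would proceed inductively on $l$ (from $l=0$, which is Lemma~\ref{lemma:rAexist}, up to $l=k$), since the commutator terms and the bound on $|\bD^{l}\mW^{A}_{B}|$ at level $l$ both require control of $\bmu_{A}-p_{A}\varrho$ and its derivatives at levels $<l$. The main obstacle I anticipate is bookkeeping rather than anything deep: carefully tracking that every factor arising from commuting $E_{\bfI}$ past $\hU$, and from differentiating the quotient $(\hml_{U}\mK)(Y^{A},X_{B})/(\ell_{B}-\ell_{A})$, either decays like $e^{2\e\varrho}$ (up to powers of $\ldr{\varrho}$) or else grows at most polynomially in $\varrho$ — the latter being exactly what (\ref{eq:bDllnhNconditions}) and the polynomial growth of $\bmu_{A}$, $\mu_{A}$ provide — so that the product always ends up bounded by $C_{k}\ldr{\varrho}^{a_{k}}e^{2\e\varrho}$. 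Given the machinery already set up in the earlier lemmas, the argument is essentially a transcription, so the write-up can largely refer back to the proofs of Lemmas~\ref{lemma:kmKconvtomsK} and \ref{lemma:rAexist}.
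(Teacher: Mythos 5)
Your proposal is correct and follows essentially the same route as the paper: bound $\bD^{l}\mW^{A}_{B}$ via \cite[(6.7)--(6.8), p.~68]{RinWave} together with Lemma~\ref{lemma:kmKconvtomsK} and the assumptions, feed this into the first equality of (\ref{eq:bmuAminuspAvarrho}) using (\ref{eq:ellAXAYACkest}), commute $E_{\bfI}$ with $\hU$ as in the proof of Lemma~\ref{lemma:kmKconvtomsK}, and integrate along integral curves of $\hU$ as in Lemma~\ref{lemma:rAexist}. The extra bookkeeping you describe (induction on $l$, identifying the limit with $r_{A}$) is exactly what the paper leaves implicit by referring back to those earlier proofs.
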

\begin{proof}
  Consider, to begin with, \cite[(6.7)--(6.8), p.~68]{RinWave}. Combining these formulae with the conclusions of Lemma~\ref{lemma:kmKconvtomsK} and
  the assumptions yields the conclusion that
  \[
  |\bD^{l}\mW_{A}^{B}|_{\bg_{\refer}}\leq C_{k}\ldr{\varrho}^{a_{k}}e^{2\e\varrho}
  \]
  on $M$ for $0\leq l\leq k$ and $A,B\in\{1,\dots,n\}$. Combining this with the first equality in (\ref{eq:bmuAminuspAvarrho})
  and (\ref{eq:ellAXAYACkest}) yields the conclusion that
  \[
  |E_{\bfI}\hU(\bmu_{A}-p_{A}\varrho)|\leq C_{k}\ldr{\varrho}^{a_{k}}e^{2\e\varrho}
  \]
  on $M$ for $0\leq l\leq k$ and $A\in\{1,\dots,n\}$. Due to the observations made in the proof of Lemma~\ref{lemma:kmKconvtomsK}, we can commute
  $E_{\bfI}$ and $\hU$. Integrating the resulting estimate as in the proof of Lemma~\ref{lemma:rAexist} yields the conclusion of the lemma. 
\end{proof}

In what follows, it is of interest to estimate the spatial derivatives of $\bmu_{A}$ and $\mu_{A}$.

\begin{lemma}\label{lemma:lnthetaCkasymptotics}
  Let $1\leq k\in\nn{}$ and assume the $k$-standard assumptions to hold; cf. Definition~\ref{def:kstandass}. Then there are constants
  $C_{k}$ and $a_{k}$ such that
  \begin{equation}\label{eq:bDlvarrhobDlbmuApolest}
    |\bD^{l}\varrho|_{\bg_{\refer}}+|\bD^{l}\bmu_{A}|_{\bg_{\refer}}\leq C_{k}\ldr{\varrho}^{a_{k}}
  \end{equation}
  on $M$ for all $0\leq l\leq k$ and $A\in\{1,\dots,n\}$.

  Assume, in addition to the above, that (\ref{eq:bDlhUsqphiestimate}) holds and that the limit $\Phia$, whose existence is guaranteed by
  Lemma~\ref{lemma:PhiaPhibasgivhUsqphidec}, satisfies $\Phia^{2}+\tr\msK^{2}=1$. Then the function $r_{\theta}$, whose existence is guaranteed
  by Lemma~\ref{lemma:lnthetavarrhoas}, satisfies $r_{\theta}\in C^{k}(\bM)$. Moreover, there are constants $L_{k}$ and $b_{k}$ such that
  \begin{equation}\label{eq:lnthetapvarrhoasymptotics}
    \textstyle{\sum}_{l=0}^{k}|\bD^{l}(\ln\theta+\varrho-r_{\theta})|_{\bg_{\refer}}\leq L_{k}\ldr{\varrho}^{b_{k}}e^{2\e\varrho}
  \end{equation}
  on $M$. In particular, there are constants $M_{k}$ and $c_{k}$ such that
  \begin{equation}\label{eq:muAasymptoticsprel}
    \textstyle{\sum}_{l=0}^{k}|\bD^{l}(\mu_{A}-(p_{A}-1)\varrho-r_{A}-r_{\theta})|_{\bg_{\refer}}\leq M_{k}\ldr{\varrho}^{c_{k}}e^{2\e\varrho}
  \end{equation}
  on $M$ for all $A\in\{1,\dots,n\}$.
\end{lemma}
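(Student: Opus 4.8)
The plan is to first establish the polynomial bounds (\ref{eq:bDlvarrhobDlbmuApolest}) using only that $\d_{t}\varrho=\hN$ together with the lapse bound (\ref{eq:bDllnhNconditions}), and then, under the additional hypotheses, to bootstrap these into the decay estimates (\ref{eq:lnthetapvarrhoasymptotics}) and (\ref{eq:muAasymptoticsprel}) via the evolution equation for $\ln\theta+\varrho$. Throughout I work with a fixed global orthonormal frame $\{E_{i}\}$ of $\bg_{\refer}$ and the iterated frame derivatives $E_{\bfI}$; conversion between $\sum_{|\bfI|\le l}|E_{\bfI}f|$ and $|\bD^{l}f|_{\bg_{\refer}}$ costs only factors built from the (bounded, $t$-independent) connection coefficients of $\bg_{\refer}$, and so is harmless.

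\emph{Polynomial bounds.} Since the shift vanishes and $\hU\varrho=1$ (\cite[(7.9), p.~74]{RinWave}), one has $\d_{t}\varrho=\hN$ with $\hN=\theta N$, and, since the $E_{i}$ are $t$-independent, $\d_{t}(E_{\bfI}\varrho)=E_{\bfI}\hN$, so that $\hU(E_{\bfI}\varrho)=\hN^{-1}E_{\bfI}\hN$ is a linear combination of products $E_{\bfI_{1}}(\ln\hN)\cdots E_{\bfI_{m}}(\ln\hN)$ with $|\bfI_{1}|+\dots+|\bfI_{m}|=|\bfI|$ and each $|\bfI_{j}|\ge 1$; by the $k$-standard assumptions, in particular (\ref{eq:bDllnhNconditions}), this is $O(\ldr{\varrho}^{a_{k}})$ for $|\bfI|\le k$. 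Integrating along an integral curve $\g$ of $\hU$ — a vertical line, since $\chi=0$ — reparametrised so that $\varrho\circ\g(s)=s$, exactly as in the proof of Lemma~\ref{lemma:bDkvarrholnthetaest}, and using that $E_{\bfI}\varrho$ is bounded on $\bM\times\{t_{0}\}$, gives $|\bD^{l}\varrho|_{\bg_{\refer}}\le C_{k}\ldr{\varrho}^{a_{k}}$ for $l\le k$ on $M_{0}$, hence on all of $M$ by compactness. Writing $\bmu_{A}=(\bmu_{A}-p_{A}\varrho-r_{A})+p_{A}\varrho+r_{A}$ and invoking (\ref{eq:CkrAasymptotics}) (with $r_{A}\in C^{k}(\bM)$ and $p_{A}\in C^{k}(\bM)$ by Lemmas~\ref{lemma:rAfunctionsareCk} and \ref{lemma:kmKconvtomsK}) together with the bound just obtained yields the bound on $\bD^{l}\bmu_{A}$, completing (\ref{eq:bDlvarrhobDlbmuApolest}).

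\emph{Upgrading $\ln\theta+\varrho$.} By \cite[(3.4), p.~26]{RinWave} and $\hU\varrho=1$ one has $\hU(\ln\theta+\varrho)=\tfrac{1}{n}(n-1-q)$, and the same constraint manipulations as in the proof of Lemma~\ref{lemma:lntthetatthetaminusoneestimate} (cf. (\ref{eq:qmnmone})) give
\[
q-(n-1)=-\frac{2n}{n-1}\frac{\Lambda}{\theta^{2}}+n\big[(\hU\phi)^{2}+\tr\mK^{2}-1\big].
\]
First one shows $|\bD^{l}(\ln\theta+\varrho)|_{\bg_{\refer}}\le C_{k}\ldr{\varrho}^{b_{k}}$ for $l\le k$ by induction on $l$: the case $l=0$ is Lemma~\ref{lemma:lnthetavarrhoas}. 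For the step, apply $E_{\bfI}$, $|\bfI|\le l$, to the evolution equation, commuting $E_{\bfI}$ past $\hU$ at the cost of terms involving $E_{\bfJ}(\ln\hN)$ (polynomially bounded by (\ref{eq:bDllnhNconditions})) times lower‑order $E_{\bfK}\hU(\ln\theta+\varrho)=E_{\bfK}[(n-1-q)/n]$, which are polynomially bounded by the inductive hypothesis together with (\ref{eq:bDlvarrhobDlbmuApolest}), Lemma~\ref{lemma:kmKconvtomsK} and Lemma~\ref{lemma:PhiaPhibasgivhUsqphidec}; writing $\theta^{-2}=e^{2\varrho}e^{-2(\ln\theta+\varrho)}$ one sees that the only appearance of the top‑order unknown $E_{\bfI}(\ln\theta+\varrho)$ on the right‑hand side is through the $\ln\theta$‑dependence of $\Lambda/\theta^{2}$, hence with a coefficient bounded by $Ce^{2\varrho}$. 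This produces a scalar ODE $\tfrac{d}{ds}(f\circ\g)=(a\circ\g)(f\circ\g)+b\circ\g$ along $\g$ (reparametrised by $\varrho$) with $|a|\le Ce^{2\varrho}$ integrable near $-\infty$ and $|b|$ polynomially bounded, so Grönwall gives a polynomial bound on $f=E_{\bfI}(\ln\theta+\varrho)$. With $\bD^{l}\ln\theta$ now polynomially bounded, $\bD^{l}(\Lambda/\theta^{2})=O(\ldr{\varrho}^{a_{k}}e^{2\varrho})$, and, using $\Phi_{a}^{2}+\tr\msK^{2}=1$ with Lemmas~\ref{lemma:kmKconvtomsK} and \ref{lemma:PhiaPhibasgivhUsqphidec}, $\bD^{l}[(\hU\phi)^{2}+\tr\mK^{2}-1]=O(\ldr{\varrho}^{a_{k}}e^{2\e\varrho})$; hence $\hU(\ln\theta+\varrho)=(n-1-q)/n$ is $O(\ldr{\varrho}^{a_{k}}e^{2\e\varrho})$ in $C^{k}$. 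Commuting $E_{\bfI}$ past $\hU$ as before and integrating along $\g$ (the integrand is now integrable near $-\infty$) shows $E_{\bfI}(\ln\theta+\varrho)$ converges at the rate $\ldr{\varrho}^{b_{k}}e^{2\e\varrho}$; the limits define $\bD^{l}r_{\theta}$, so $r_{\theta}\in C^{k}(\bM)$ and (\ref{eq:lnthetapvarrhoasymptotics}) holds. Finally $\mu_{A}-(p_{A}-1)\varrho-r_{A}-r_{\theta}=(\bmu_{A}-p_{A}\varrho-r_{A})+(\ln\theta+\varrho-r_{\theta})$, so (\ref{eq:muAasymptoticsprel}) follows from (\ref{eq:CkrAasymptotics}) and (\ref{eq:lnthetapvarrhoasymptotics}).

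\emph{Main obstacle.} The delicate point is the inductive polynomial bound on $\bD^{l}(\ln\theta+\varrho)$: the evolution equation for $E_{\bfI}(\ln\theta+\varrho)$ feeds this very quantity back through the $\Lambda/\theta^{2}$ term, and the argument closes only because that feedback coefficient is exponentially small in $\varrho$, so the Grönwall factor stays bounded; one must also check that every commutator produced in moving $E_{\bfI}$ through $\hU$ is genuinely of strictly lower $E$-order, so the induction is well founded. The remaining steps are the routine bookkeeping of Leibniz- and chain-rule estimates already used repeatedly in the preceding lemmas.
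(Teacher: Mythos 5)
Your treatment of the first and last assertions is fine and essentially coincides with the paper's: the polynomial bound (\ref{eq:bDlvarrhobDlbmuApolest}) is obtained in the paper by commuting $E_{\bfI}$ with $\hU$ and using $\hU\varrho=1$ rather than via $\d_{t}\varrho=\hN$, but this is the same computation as in Lemma~\ref{lemma:bDkvarrholnthetaest}, and (\ref{eq:muAasymptoticsprel}) is, as you say, just (\ref{eq:CkrAasymptotics}) plus (\ref{eq:lnthetapvarrhoasymptotics}) together with $\mu_{A}=\bmu_{A}+\ln\theta$.

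The gap is in the main step, the proof of (\ref{eq:lnthetapvarrhoasymptotics}). You import the identity (\ref{eq:qmnmone}), $q-(n-1)=-\tfrac{2n}{n-1}\Lambda\theta^{-2}+n[(\hU\phi)^{2}+\tr\mK^{2}-1]$, from Lemma~\ref{lemma:lntthetatthetaminusoneestimate}, but that identity is derived there under the Gaussian assumption $N=1$ (cf. (\ref{eq:qGaussianfol})). In the setting of the present lemma the lapse is general -- the $k$-standard assumptions only control $\bD^{l}\ln\hN$ -- and the Raychaudhuri equation contributes the additional term $-n\frac{\Delta_{\bg}N}{N\theta^{2}}$ to $q-(n-1)$; this is exactly the kind of lapse term that appears in (\ref{eq:mlUmKwithEinstein}) and that the paper has to control by a separate lemma, (\ref{eq:bmNdef})--(\ref{eq:bmNexpdecay}), in the proof of Theorem~\ref{thm:improvingasymptoticsgeneral} (under the stronger hypothesis (\ref{eq:lnthetalnNphipolbd})). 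Your scheme requires every term in the evolution equation for $\ln\theta+\varrho$ to decay exponentially in $\varrho$, or at worst to feed back the unknown with an integrable coefficient, and the omitted term satisfies neither claim without proof: $\ln N=\ln\hN-\ln\theta$ is only polynomially controlled, so any decay must come from the weight $\theta^{-2}\bg^{-1}$, i.e. from $e^{-2\mu_{A}}$, which in turn requires the uniform gap $p_{A}\leq 1-\e_{\Spe}$ (a consequence of $\Phi_{a}^{2}+\tr\msK^{2}=1$, the distinctness of the eigenvalues and $n\geq 3$) and the asymptotics of $\mu_{A}$. Moreover, estimating $\bD^{l}\bigl(\theta^{-2}N^{-1}\Delta_{\bg}N\bigr)$ for $l\leq k$ consumes $l+2$ spatial derivatives of $\ln\hN$ and $\ln\theta$, two more than the $k$-standard assumptions provide, so even after inserting the missing term your induction does not close at the claimed order $k$ without an additional idea (a loss of derivatives, or an argument avoiding the Raychaudhuri/lapse term altogether). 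You need to either control this lapse contribution explicitly or replace the identity you use by one valid for general $N$; the remaining structure of your argument -- the two-step induction, the Gr\"{o}nwall estimate with the integrable $\Lambda\theta^{-2}$ feedback, and the identification of the limits of $E_{\bfI}(\ln\theta+\varrho)$ with the derivatives of $r_{\theta}$ -- is sound.
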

\begin{proof}
  Due to \cite[(7.9), p.~74]{RinWave}, $\hU(\varrho)=1$. Thus, for $\bfI\neq 0$, 
  \[
  \hU(E_{\bfI}\varrho)=[\hU,E_{\bfI}]\varrho+E_{\bfI}\hU\varrho=[\hU,E_{\bfI}]\varrho.
  \]
  However, the right hand side consists of a sum of terms of the form (\ref{eq:termsarisingfromhUEbfIcomm}) applied to $\varrho$, where
  $|\bfI_{1}|+\dots+|\bfI_{m}|+|\bfJ|=|\bfI|$ and $|\bfI_{j}|\neq 0$; cf. the proof of Lemma~\ref{lemma:kmKconvtomsK}. Since $E_{\bfJ}\hU\varrho$
  equals zero if $\bfJ\neq 0$ and equals $1$ if $\bfJ=0$, $\hU(E_{\bfI}\varrho)$ does not grow faster than polynomially in $\varrho$. In particular, if
  $\g$ is an integral curve of $\hU$ with $\varrho\circ\g(s)=s$, then
  \[
  \frac{d}{ds}(E_{\bfI}\varrho)\circ\g=f_{\bfI}(s),
  \]
  where $|f_{\bfI}(s)|\leq C_{k}\ldr{s}^{a_{k}}$ on the domain of definition of $\g$, assuming $|\bfI|\leq k$, and $C_{k}$ is a constant independent of $\g$.
  In particular, it is clear that the first term on the left hand side of (\ref{eq:bDlvarrhobDlbmuApolest}) is bounded by the right hand side. Combining
  this estimate with (\ref{eq:CkrAasymptotics}) yields the conclusion that (\ref{eq:bDlvarrhobDlbmuApolest}) holds.

  The proof of (\ref{eq:lnthetapvarrhoasymptotics}) is similar to the proof of Lemma~\ref{lemma:lntthetatthetaminusoneestimate}, and we omit the
  details. Finally, (\ref{eq:muAasymptoticsprel}) is an immediate consequence of (\ref{eq:CkrAasymptotics}) and (\ref{eq:lnthetapvarrhoasymptotics}),
  keeping the fact that $\mu_{A}=\bmu_{A}+\ln\theta$ in mind; cf. the comments made in connection with \cite[(3.10)--(3.11), p.~29]{RinWave}. 
\end{proof}

At this stage, we can derive a higher order analogue of Lemma~\ref{lemma:mWABestoptimal}. 

\begin{lemma}
  Let $1\leq k\in\nn{}$ and assume the $k$-standard assumptions to hold; cf. Definition~\ref{def:kstandass}. Then there are constants
  $C_{k}$ and $a_{k}$ such that
  \begin{equation}\label{eq:bDlmWABoptimal}
    |\bD^{l}\mW^{A}_{B}|_{\bg_{\refer}}\leq C_{k}\ldr{\varrho}^{a_{k}}e^{2\e\varrho}\min\{1,e^{2(p_{B}-p_{A})\varrho}\}
  \end{equation}
  on $M$ for all $A,B\in\{1,\dots,n\}$ and all $0\leq l\leq k$. 
\end{lemma}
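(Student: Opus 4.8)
The plan is to mimic the $C^0$ argument of Lemma~\ref{lemma:mWABestoptimal}, but now tracking an arbitrary number of spatial derivatives, and to feed in the higher-order inputs established in Lemmas~\ref{lemma:kmKconvtomsK}, \ref{lemma:rAfunctionsareCk} and \ref{lemma:lnthetaCkasymptotics}. First I would record the rough bound: combining \cite[(6.7)--(6.8), p.~68]{RinWave} with the $k$-standard assumptions, Lemma~\ref{lemma:kmKconvtomsK} (giving $\bD^{l}(\ell_{A}-p_{A})$, $\bD^{l}(X_{A}-\msX_{A})$, $\bD^{l}(Y^{A}-\msY^{A})$ controlled by $C_{k}\ldr{\varrho}^{a_{k}}e^{2\e\varrho}$), the normalisation $|X_{A}|_{\bg_{\refer}}=1$ and the bound on $|Y^{A}|_{\bg_{\refer}}$ from \cite[Lemma~5.5, p.~60]{RinWave}, together with Lemma~\ref{lemma:lnthetaCkasymptotics} to absorb derivatives of $\hN$, $\varrho$ and $\bmu_{A}$ that enter when differentiating the defining expression for $\mW^{A}_{B}$, one obtains
\[
|\bD^{l}\mW^{A}_{B}|_{\bg_{\refer}}\leq C_{k}\ldr{\varrho}^{a_{k}}e^{2\e\varrho}
\]
on $M$ for $0\leq l\leq k$ and all $A,B$. (Here one should be a little careful: differentiating $\mW^{A}_{B}$ produces terms involving $\bD^{l}\ln\hN$ which are only polynomially bounded in $\varrho$ by (\ref{eq:bDllnhNconditions}); but these always multiply an exponentially decaying factor, so the product still decays, at the cost of enlarging $a_{k}$.) This already establishes (\ref{eq:bDlmWABoptimal}) in the case $B\leq A$, where $\min\{1,e^{2(p_{B}-p_{A})\varrho}\}=1$.

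Next I would exploit the antisymmetry. For $A\neq B$, \cite[(6.19), p.~70]{RinWave} says that $e^{\bmu_{A}-\bmu_{B}}\mW^{A}_{B}$ is antisymmetric in $A,B$; equivalently, for $A<B$,
\[
\mW^{A}_{B}=-e^{2(\bmu_{B}-\bmu_{A})}\mW^{B}_{A}.
\]
Applying $\bD^{l}$ via the Leibniz rule, every term is a product of some $\bD^{m}e^{2(\bmu_{B}-\bmu_{A})}$ with some $\bD^{l-m}\mW^{B}_{A}$. By Lemma~\ref{lemma:rAfunctionsareCk} (giving $\bmu_{A}-p_{A}\varrho-r_{A}$ controlled exponentially) together with Lemma~\ref{lemma:lnthetaCkasymptotics} (giving $\bD^{l}\varrho$ polynomially bounded), one has $|\bD^{m}e^{2(\bmu_{B}-\bmu_{A})}|_{\bg_{\refer}}\leq C_{k}\ldr{\varrho}^{a_{k}}e^{2(p_{B}-p_{A})\varrho}$; here I use that $p_{B}-p_{A}>0$, that $\varrho\to-\infty$, and that the factor $e^{2(\bmu_{B}-\bmu_{A}-(p_{B}-p_{A})\varrho)}$ and its derivatives are bounded since the exponent converges to $2(r_{B}-r_{A})$ in $C^{k}$. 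Combining this with the rough bound on $\bD^{l-m}\mW^{B}_{A}$ (case $B>A$, already proved above) yields
\[
|\bD^{l}\mW^{A}_{B}|_{\bg_{\refer}}\leq C_{k}\ldr{\varrho}^{a_{k}}e^{2(p_{B}-p_{A})\varrho}e^{2\e\varrho}
\]
for $A<B$, which is precisely $\min\{1,e^{2(p_{B}-p_{A})\varrho}\}$ times the expected factor. Finally the case $B=A$: then $\mW^{A}_{A}$ appears in the first equality of (\ref{eq:bmuAminuspAvarrho}), $\hU(\bmu_{A}-p_{A}\varrho)=\ell_{A}+\mW^{A}_{A}-p_{A}$, so $\mW^{A}_{A}=\hU(\bmu_{A}-p_{A}\varrho)-(\ell_{A}-p_{A})$, and its spatial derivatives are controlled exponentially by Lemmas~\ref{lemma:kmKconvtomsK} and \ref{lemma:rAfunctionsareCk} after commuting $\bD^{l}$ past $\hU$ (using the commutator structure (\ref{eq:termsarisingfromhUEbfIcomm}) and (\ref{eq:bDllnhNconditions}), as in the proof of Lemma~\ref{lemma:kmKconvtomsK}). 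Collecting the three cases gives (\ref{eq:bDlmWABoptimal}) for all $A,B$ and all $0\leq l\leq k$.

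The main obstacle is the bookkeeping in the antisymmetry step: one must verify that differentiating the exponential weight $e^{2(\bmu_{B}-\bmu_{A})}$ does not spoil the off-diagonal gain $e^{2(p_{B}-p_{A})\varrho}$. This works precisely because, by Lemma~\ref{lemma:rAfunctionsareCk}, $\bmu_{B}-\bmu_{A}-(p_{B}-p_{A})\varrho$ converges in $C^{k}$ to the fixed function $r_{B}-r_{A}$, while all $\varrho$-derivatives entering through the chain rule are merely polynomial in $\varrho$ (Lemma~\ref{lemma:lnthetaCkasymptotics}) and hence are swallowed by any residual exponential factor; the price is only an enlargement of the polynomial exponent $a_{k}$, which is harmless. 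A secondary point requiring care is that $\min\{1,e^{2(p_{B}-p_{A})\varrho}\}$ is, a priori, only Lipschitz (not $C^{k}$) across the locus $p_{A}=p_{B}$; but under the standard assumptions the $p_{A}$ are bounded apart by $\e_{\rond}$, so on $M$ either $p_{B}-p_{A}\geq\e_{\rond}/2>0$ everywhere or $p_{A}-p_{B}\geq\e_{\rond}/2>0$ everywhere for each fixed ordered pair, and the two cases are handled separately without any gluing.
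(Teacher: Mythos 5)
Your proof is correct and follows essentially the same route as the paper: first the rough bound $|\bD^{l}\mW^{A}_{B}|_{\bg_{\refer}}\leq C_{k}\ldr{\varrho}^{a_{k}}e^{2\e\varrho}$ from \cite[(6.7)--(6.8), p.~68]{RinWave} together with (\ref{eq:ellAXAYACkest}) and (\ref{eq:CkexpdechmlUmK}), which settles $B\leq A$, and then differentiation of the antisymmetry relation (\ref{eq:crucialantisymmetry}) combined with the estimates for $\bmu_{A}$ and the polynomial bounds on $\bD^{l}\varrho$ to gain the factor $e^{2(p_{B}-p_{A})\varrho}$ when $A<B$. The separate treatment of $B=A$ and the worry about the regularity of the minimum are superfluous (the diagonal case is already covered by $B\leq A$, and for each fixed ordered pair the exponent $p_{B}-p_{A}$ has a fixed sign), but they do no harm.
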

\begin{proof}
  Due to \cite[(6.7)--(6.8), p.~68]{RinWave}, (\ref{eq:ellAXAYACkest}) and (\ref{eq:CkexpdechmlUmK}), it is clear that
  \begin{equation}\label{eq:bDlmWABintermediateestimate}
    |\bD^{l}\mW^{A}_{B}|_{\bg_{\refer}}\leq C_{k}\ldr{\varrho}^{a_{k}}e^{2\e\varrho}
  \end{equation}
  on $M$ for all $A,B\in\{1,\dots,n\}$ and all $0\leq l\leq k$. Thus (\ref{eq:bDlmWABoptimal}) holds for $A\geq B$. In case
  $A<B$, (\ref{eq:crucialantisymmetry}) holds. Differentiating this equality, keeping (\ref{eq:rAasymptotics}), (\ref{eq:bDlvarrhobDlbmuApolest})
  and (\ref{eq:bDlmWABintermediateestimate}) in mind, yields (\ref{eq:bDlmWABoptimal}) for $A<B$. 
\end{proof}

The next goal is to derive a higher order version of Lemma~\ref{eq:optimalestimatesforXABYAB}.

\begin{lemma}\label{eq:optimalestimatesforXABYABhigherorder}
  Let $1\leq k\in\nn{}$ and assume the $k$-standard assumptions to hold; cf. Definition~\ref{def:kstandass}. Then there are constants $C_{k}$ and
  $a_{k}$ such that
  \begin{equation}\label{eq:XABYABoptimalestimateshigherorder}
    |\bD^{l}(X_{B}^{A}-\delta_{B}^{A})|_{\bge_{\refer}}+|\bD^{l}(Y_{B}^{A}-\delta_{B}^{A})|_{\bge_{\refer}}
    \leq C_{k}\ldr{\varrho}^{a_{k}}e^{2\e\varrho}\min\{1,e^{2(p_{B}-p_{A})\varrho}\}
  \end{equation}
  on $M$ for all $0\leq l\leq k$ and all $A,B\in\{1,\dots,n\}$. 
\end{lemma}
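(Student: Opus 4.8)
The plan is to mirror the $C^{0}$ argument of Lemma~\ref{eq:optimalestimatesforXABYAB}, but now carrying spatial derivatives through the same ODE analysis. The key equation is again $\hU(X_{A}^{C})=\mW_{A}^{B}X_{B}^{C}$, which follows from (\ref{eq:mWdef}) exactly as (\ref{eq:evolutionXAB}) in the proof of Lemma~\ref{eq:optimalestimatesforXABYAB}. Applying a string of $\bge_{\refer}$-orthonormal frame derivatives $E_{\bfI}$ with $|\bfI|\le k$ and commuting $E_{\bfI}$ past $\hU$ — using that $[E_{\bfI},\hU]$ is a linear combination of terms of the form (\ref{eq:termsarisingfromhUEbfIcomm}), as established in the proof of Lemma~\ref{lemma:kmKconvtomsK} — one obtains, for each multi-index, an ODE along an integral curve $\g$ of $\hU$ with $\varrho\circ\g(s)=s$ of the schematic form $\tfrac{d}{ds}(E_{\bfI}X_{A}^{C})\circ\g = (\text{coefficients})\cdot (\text{lower-order and same-order }E_{\bfJ}X_{B}^{C})\circ\g + (\text{source})$. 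The coefficients and sources are controlled by (\ref{eq:bDlmWABoptimal}), (\ref{eq:CkrAasymptotics}), (\ref{eq:bDlvarrhobDlbmuApolest}) and the already-known $C^{0}$ estimate (\ref{eq:XABYABoptimalestimates}) together with the inductive hypothesis on fewer derivatives; the $\ln\hN$-factors arising from the commutator grow at worst polynomially in $\varrho$ by (\ref{eq:bDllnhNconditions}), which only costs extra powers of $\ldr{\varrho}$.

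The core of the argument is the same nested induction as in Lemma~\ref{eq:optimalestimatesforXABYAB}: first one proves the estimate for the first column $X_{A}^{1}$ ($A\ge 2$) by an induction on $m$ showing $|\bD^{l}X_{A}^{1}|_{\bge_{\refer}}\le C_{k}\ldr{\varrho}^{a_{k}}e^{2\e\varrho}\max\{e^{2(p_{A}-p_{1})\varrho},e^{2(p_{m}-p_{1})\varrho}\}$, then improving to the clean exponent $e^{2(p_{A}-p_{1})\varrho}$; then one runs the analogous induction on the column index to get $|\bD^{l}X_{B}^{A}|_{\bge_{\refer}}\le C_{k}\ldr{\varrho}^{a_{k}}e^{2\e\varrho}e^{2(p_{B}-p_{A})\varrho}$ for all $B>A$, which is the content of (\ref{eq:XABYABoptimalestimateshigherorder}) for the $X$-part (the diagonal and $B<A$ cases are already covered by (\ref{eq:XABmdeABYABmdeAB}) upgraded via (\ref{eq:ellAXAYACkest})). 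The integration step is the standard one: writing the relevant block of equations as $\dot\xi = M\xi + \zeta$ with $\|M(s)\|\le C\ldr{s}^{a_{k}}e^{2\e s}$ so that the transition matrix $\Phi(s;s_a)$ is uniformly bounded, one lets $s_a\to-\infty$ and reads off the decay of $\xi$ from the decay of $\zeta$. The $Y$-part is then obtained, exactly as at the end of the proof of Lemma~\ref{lemma:XABest} and of Lemma~\ref{eq:optimalestimatesforXABYAB}, by noting that $(Y^{A}_{B})$ is the inverse matrix of $(X^{A}_{B})$, writing $X = U + L$ with $U$ upper triangular, $U-\mathrm{Id}$ small and $L$ strictly-lower with the column-wise exponential decay just established, and expanding $X^{-1}=(\mathrm{Id}+U^{-1}L)^{-1}U^{-1}$ as a geometric series whose terms preserve the triangular decay structure; the higher-order derivatives are handled by differentiating $YX=\mathrm{Id}$ and feeding the derivatives of $X$ into the same series.

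The main obstacle I expect is bookkeeping rather than anything conceptual: one has to verify that when $E_{\bfI}$ is distributed over the product $\mW_{A}^{B}X_{B}^{C}$ and over the commutator terms, the same-order derivative $E_{\bfI}X_{B}^{C}$ with $B$ in the ``bad'' range (i.e. where the target exponent is not yet established) does not appear with a coefficient that fails to decay — precisely the point where one must order the induction correctly (inner induction on the auxiliary index $m$ or $k$, outer on the number of derivatives $l$) so that every same-order term that is not the one being solved for has already been estimated with at least the decay needed. This is exactly the structure that works at the $C^{0}$ level in Lemma~\ref{eq:optimalestimatesforXABYAB}, and the spatial derivatives do not change it; they only multiply everything by harmless powers of $\ldr{\varrho}$ coming from (\ref{eq:bDllnhNconditions}), (\ref{eq:CkrAasymptotics}) and (\ref{eq:bDlvarrhobDlbmuApolest}). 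Once (\ref{eq:XABYABoptimalestimateshigherorder}) is in hand, the remaining estimates of Theorem~\ref{thm:asymptoticformofmetrichod} — (\ref{eq:bABasymptoticshod}), (\ref{eq:mKmsYAmsXBgenesthod}), (\ref{eq:hmlUfixedframeopthod}) — follow by substituting into the identities (\ref{eq:metricintermsofYAandbmuA}), (\ref{eq:mKmsYAmsXBcomp}) and (\ref{eq:hmlUfixedframintermediate}) exactly as in the proof of Theorem~\ref{thm:asymptoticformofmetric}, now using the $C^{k}$ versions of all the constituent estimates.
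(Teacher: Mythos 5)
Your proposal is correct and follows essentially the same route as the paper: commute the frame derivatives $E_{\bfI}$ through the evolution equation $\hU(X_{A}^{C})=\mW_{A}^{B}X_{B}^{C}$ (the paper works with $Z^{A}_{B}=X^{A}_{B}-\delta^{A}_{B}$, which is cosmetic), use the commutator structure from the proof of Lemma~\ref{lemma:kmKconvtomsK} together with (\ref{eq:bDlmWABoptimal}) and the lower-order estimates to reduce the top-order system to the same form as in the $C^{0}$ case with optimally decaying sources, run the nested index induction at each derivative level (outer induction on the number of derivatives, inner on the indices), and finally obtain the $Y^{A}_{B}$ bound from the triangular matrix-inversion argument at the end of the proof of Lemma~\ref{lemma:XABest}. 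No gaps worth flagging.
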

\begin{proof}
  Note, to begin with, that Lemma~\ref{lemma:kmKconvtomsK} implies that $X_{A}^{B}$ is $C^{k}$ with respect to the spatial variables and infinitely
  differentiable with respect to $t$. Moreover, due to the proof of Lemma~\ref{eq:optimalestimatesforXABYAB}, $X^{A}_{B}$ satisfies the relation
  $\hU(X_{A}^{C})=\mW_{A}^{B}X_{B}^{C}$. Introducing $Z^{A}_{B}=X^{A}_{B}-\delta^{A}_{B}$, this relation can be written
  \begin{equation}\label{eq:ZCAevolution}
    \hU(Z^{C}_{A})=\mW_{A}^{B}Z_{B}^{C}+\mW^{C}_{A}.
  \end{equation}
  Let $0\leq m<k$ and assume, inductively, that
  \begin{equation}\label{eq:hotXABinductiveassumption}
    |E_{\bfI}Z^{A}_{B}|\leq C_{m}\ldr{\varrho}^{a_{m}}e^{2\e\varrho}\min\{1,e^{2(p_{B}-p_{A})\varrho}\}
  \end{equation}
  on $M$ for all $A,B\in\{1,\dots,n\}$ and $|\bfI|\leq m$. Note that if this estimate holds, then 
  \begin{equation}\label{eq:indassconsequencehUZAB}
    |E_{\bfI}\hU Z^{A}_{B}|\leq C_{m}\ldr{\varrho}^{a_{m}}e^{2\e\varrho}\min\{1,e^{2(p_{B}-p_{A})\varrho}\}
  \end{equation}
  on $M$ for all $A,B\in\{1,\dots,n\}$ and $|\bfI|\leq m$. This is a consequence of combining (\ref{eq:hotXABinductiveassumption}) with
  (\ref{eq:bDlmWABoptimal}) and (\ref{eq:ZCAevolution}).

  Due to Lemma~\ref{eq:optimalestimatesforXABYAB}, we know that for $m=0$, the inductive hypothesis (\ref{eq:hotXABinductiveassumption}) holds.
  Next, let $\bfI$ be such that $|\bfI|=m+1$ and commute the equation (\ref{eq:ZCAevolution}) with $E_{\bfI}$. This yields
  \begin{equation}\label{eq:hUEbfIZACevolution}
    \hU(E_{\bfI}Z^{C}_{A})=[\hU,E_{\bfI}]Z^{C}_{A}+E_{\bfI}(\mW_{A}^{B}Z_{B}^{C})+E_{\bfI}(\mW^{C}_{A}).
  \end{equation}
  On the other hand, $[\hU,E_{\bfI}]$ can be written as a linear combination of terms of the form (\ref{eq:termsarisingfromhUEbfIcomm}) where
  $|\bfI_{1}|+\dots+|\bfI_{m}|+|\bfJ|=|\bfI|$, $|\bfI_{j}|\neq 0$ and $|\bfJ|<|\bfI|$. Due to this fact, the assumptions, (\ref{eq:bDlmWABoptimal}),
  (\ref{eq:hotXABinductiveassumption}) and (\ref{eq:indassconsequencehUZAB}), the equation (\ref{eq:hUEbfIZACevolution}) can be rewritten
  \[
  \hU(E_{\bfI}Z^{C}_{A})=\mW_{A}^{B}E_{\bfI}Z_{B}^{C}+F_{A}^{C},
  \]
  where
  \[
  |F_{B}^{A}|\leq C_{m}\ldr{\varrho}^{a_{m}}e^{2\e\varrho}\min\{1,e^{2(p_{B}-p_{A})\varrho}\}
  \]
  on $M$. At this stage, an argument similar to the proof of Lemma~\ref{eq:optimalestimatesforXABYAB} yields the conclusion that
  (\ref{eq:hotXABinductiveassumption}) holds with $m$ replaced by $m+1$. Thus (\ref{eq:hotXABinductiveassumption}) holds with $m=k$.
  Thus the first term on the left hand side of (\ref{eq:XABYABoptimalestimateshigherorder}) is bounded by the right hand side.

  Finally, since the $Y^{A}_{B}$ are the components of the inverse of the matrix with components $X^{A}_{B}$, an argument similar to the one
  presented at the end of the proof of Lemma~\ref{lemma:XABest} yields the conclusion that the second term on the left hand side of
  (\ref{eq:XABYABoptimalestimateshigherorder}) satisfies the desired bound.
\end{proof}

Finally, we are in a position to prove Theorem~\ref{thm:asymptoticformofmetrichod}.

\begin{proof}[Theorem~\ref{thm:asymptoticformofmetrichod}]
  Since $\{\msY^{A}\}$ is a frame for the cotangent space for the leaves of the foliation, it is clear that the metric can be written in the
  form (\ref{eq:gbABgenasformhod}). Moreover,
  \[
  b_{AB}=e^{-2p_{\max\{A,B\}}\varrho}g(\msX_{A},\msX_{B})=\textstyle{\sum}_{C}e^{2\bmu_{C}-2p_{\max\{A,B\}}\varrho}Y^{C}_{A}Y^{C}_{B},
  \]
  where we appealed to (\ref{eq:metricintermsofYAandbmuA}). Combining this equality with Lemma~\ref{lemma:rAfunctionsareCk} and
  (\ref{eq:XABYABoptimalestimateshigherorder}) yields the conclusion that (\ref{eq:bABasymptoticshod}) holds. 

  The estimates (\ref{eq:CkrAasymptoticsintro}) and (\ref{eq:XABYABoptimalestimateshigherorderintro}) follow immediately from
  (\ref{eq:CkrAasymptotics}) and (\ref{eq:XABYABoptimalestimateshigherorder}) respectively. Next, note that (\ref{eq:mKellAXAYAdecomp}) and
  (\ref{eq:mKmsYAmsXBcomp}) hold. Combining these equalities with (\ref{eq:ellAXAYACkest}) and
  (\ref{eq:XABYABoptimalestimateshigherorder}) yields (\ref{eq:mKmsYAmsXBgenesthod}). The proof of
  (\ref{eq:hmlUfixedframeopthod}) is similar to the end of the proof of Theorem~\ref{thm:asymptoticformofmetric}, given the assumptions,
  (\ref{eq:ellAXAYACkest}), (\ref{eq:bDlmWABoptimal}) and (\ref{eq:XABYABoptimalestimateshigherorder}).

  The estimate (\ref{eq:bDlhUphimPhiaPhibestimateintro}) is an immediate consequence of (\ref{eq:bDlhUphimPhiaPhibestimate}) and
  (\ref{eq:lnthetapvarrhoasymptoticsintro}) is an immediate consequence of (\ref{eq:lnthetapvarrhoasymptotics}). 
  The equality (\ref{eq:gbABgenasformtheta}) and the estimate (\ref{eq:bABasymptoticstheta}) follow by combining
  (\ref{eq:lnthetapvarrhoasymptoticsintro})
  with (\ref{eq:gbABgenasformhod}) and (\ref{eq:bABasymptoticshod}). Next, the estimates (\ref{eq:CkrAasymptoticsintrolnth}) and
  (\ref{eq:XABYABoptimalestimateshigherorderintrolnth}) follow from (\ref{eq:CkrAasymptoticsintro}) and
  (\ref{eq:XABYABoptimalestimateshigherorderintro}) respectively. The estimate (\ref{eq:muAasymptoticsprelintro}) follows from
  (\ref{eq:muAasymptoticsprel}). Next, (\ref{eq:mKmsYAmsXBgenesttheta}) and (\ref{eq:hmlUfixedframeopttheta})
  follow from (\ref{eq:lnthetapvarrhoasymptotics}), (\ref{eq:mKmsYAmsXBgenesthod}) and (\ref{eq:hmlUfixedframeopthod}). The estimate
  (\ref{eq:hUlnthetaasymptoticsintermsoftheta}) follows from (\ref{eq:bDlhUphimPhiaPhibestimateintro}), (\ref{eq:qmnmone}), (\ref{eq:assumptoHamcon}),
  (\ref{eq:kmsKmKexpunifconv}), (\ref{eq:bDlvarrhobDlbmuApolest}) and (\ref{eq:lnthetapvarrhoasymptotics}).

  Next, assume $N=1$ and compute
  \begin{equation}\label{eq:dtthetainv}
    \d_{t}\theta^{-1}=-\theta^{-1}\d_{t}\ln\theta=-n^{-1}\hU(n\ln\theta)=n^{-1}(q+1)=1+n^{-1}[q-(n-1)].
  \end{equation}
  Since the right hand side converges uniformly to $1$ as $t\rightarrow 0+$, there is a $T>0$ such that on $(0,T]$, $1/2\leq  \d_{t}\theta\leq 2$
  on $\bM\times (0,T]$. Since $\theta\rightarrow\infty$ uniformly as $t\rightarrow 0+$, we conclude that $t/2\leq \theta^{-1}\leq 2t$ on $(0,T]$.
  Combining this observation with (\ref{eq:dtthetainv}) and (\ref{eq:hUlnthetaasymptoticsintermsoftheta}) yields the
  conclusion that there are constants $C_{k}$ and $a_{k}$ such that 
  \[
  |\bD^{l}(\d_{t}\theta^{-1}-1)|_{\bg_{\refer}}\leq C_{k}\ldr{\ln t}^{a_{k}}t^{2\e}
  \]
  on $M$ for $0\leq l\leq k$. Integrating this estimate yields (\ref{eq:thetaasymptoticsGaussianfoliation}). Combining
  (\ref{eq:thetaasymptoticsGaussianfoliation}) with (\ref{eq:gbABgenasformtheta}) and (\ref{eq:bABasymptoticstheta}) yields
  (\ref{eq:gbABgenasformGauss}) and (\ref{eq:bABasymptoticsGauss}). 
\end{proof}

\subsection{Reproducing the conditions on data on the singularity}\label{ssection:reproducingcondonsing}
Next, we prove Theorem~\ref{thm:reprod}.

\begin{proof}[Theorem~\ref{thm:reprod}]
  Recall, to begin with, the Hamiltonian constraint (\ref{eq:Hamiltonianconstraint}). Next, 
  \[
  2\Omega+\tr\mK^{2}=(\hU\phi)^{2}+\tr\mK^{2}+\textstyle{\sum}_{A}e^{-2\mu_{A}}(X_{A}\phi)^{2};
  \]
  see \cite[(151), p.~49]{RinGeo}. Combining these equalities yields
  \begin{equation}\label{eq:provingthetamtwobSexpdec}
    \theta^{-2}\bS=(\hU\phi)^{2}+\tr\mK^{2}-1+\textstyle{\sum}_{A}e^{-2\mu_{A}}(X_{A}\phi)^{2}+2\Omega_{\Lambda}.
  \end{equation}
  Our first goal is to prove that the right hand side of (\ref{eq:provingthetamtwobSexpdec}) converges to zero. Due to
  Lemma~\ref{lemma:lnthetavarrhoas}, it is clear that $\Omega_{\Lambda}$ decays as $e^{2\varrho}$. 
  Due to (\ref{eq:XAasYAas}), (\ref{eq:bDlhUphimPhiaPhibestimate}) and (\ref{eq:bDlvarrhobDlbmuApolest}), we know that $X_{A}\phi$ does not grow
  faster than polynomially in $\varrho$. Next, note that since the sum of the $p_{A}^{2}$ is bounded from above by $1$, since there is an
  $\e_{\rond}>0$ such that $|p_{A}-p_{B}|\geq\e_{\rond}$, and since $n\geq 3$, there is an $\e_{\Spe}>0$ such that $p_{A}\leq 1-\e_{\Spe}$. Combining this
  observation with (\ref{eq:muAasymptoticsprel}) yields the conclusion that there is a constant $C$ such that
  \begin{equation}\label{eq:muAlowbdrepr}
    \mu_{A}\geq -\e_{\Spe}\varrho-C.
  \end{equation}
  Due to (\ref{eq:msKmKexpunifconv}), (\ref{eq:bDlhUphimPhiaPhibestimate}), the fact that $\bPhi_{a}^{2}+\tr\msK^{2}=1$ and the above observations,
  there is an $\eta>0$ such that the right hand side of (\ref{eq:provingthetamtwobSexpdec}) decays as $e^{2\eta\varrho}$. In other words, there is a
  constant $C$ such that 
  \begin{equation}\label{eq:thetamtwobSexpdecay}
    |\theta^{-2}\bS|\leq Ce^{2\eta\varrho}.
  \end{equation}  
  Next, define $\lambda^{A}_{BC}$ by $[X_{B},X_{C}]=\lambda_{BC}^{A}X_{A}$. Due to (\ref{eq:ellAXAYACkest}), it is clear that $\lambda^{A}_{BC}$ and
  $X_{D}(\lambda^{A}_{BC})$ converge in $C^{0}$. Due to (\ref{eq:ellAXAYACkest}) and (\ref{eq:bDlvarrhobDlbmuApolest}), it is clear that $X_{A}(\bmu_{B})$
  and $X_{B}X_{B}(\bmu_{C})$ do not grow faster than polynomially. On the other hand, $e^{-\mu_{A}}$ decays to zero exponentially due to
  (\ref{eq:muAlowbdrepr}). Combining these observations with the formula \cite[(279), p.~69]{RinGeo} (with $\g^{A}_{BC}$ replaced by $\lambda^{A}_{BC}$)
  yields
  \begin{equation}\label{eq:bSexpnormasformula}
    \left|\theta^{-2}\bS+\textstyle{\frac{1}{4}\sum}_{A,B,C}e^{2\mu_{B}-2\mu_{A}-2\mu_{C}}(\lambda^{B}_{AC})^{2}\right|\leq C\ldr{\varrho}^{a}e^{2\e_{\Spe}\varrho}
  \end{equation}
  on $M$. Combining this estimate with (\ref{eq:thetamtwobSexpdecay}) yields the conclusion that there is an $\eta>0$ and a $C$ such that
  \begin{equation}\label{eq:sumstructureconstsq}
    \textstyle{\sum}_{A,B,C}e^{2\mu_{A}-2\mu_{B}-2\mu_{C}}(\lambda^{A}_{BC})^{2}\leq Ce^{2\eta\varrho}
  \end{equation}
  on $M$. Assume now that $\bx\in\bM$ is such that $1+p_{A}(\bx)-p_{B}(\bx)-p_{C}(\bx)\leq 0$. Then, due to the continuity of the $p_{A}$, there is
  an open neighbourhood $V$ of $\bx$ such that $1+p_{A}-p_{B}-p_{C}\leq \eta/2$ in $V$. On the other hand, due to (\ref{eq:muAasymptoticsprel}),
  \[
  2\mu_{A}-2\mu_{B}-2\mu_{C}=2(1+p_{A}-p_{B}-p_{C})\varrho+O(1)\geq \eta\varrho-C
  \]
  on $V$, for some constant $C$. Combining this estimate with (\ref{eq:sumstructureconstsq}) yields
  \[
  (\lambda^{A}_{BC})^{2}\leq Ce^{\eta\varrho}
  \]
  on $V$. This means that $\lambda^{A}_{BC}$ converges to zero on $V$. On the other hand, $\lambda^{A}_{BC}$ converges to $\g^{A}_{BC}$ due to
  (\ref{eq:ellAXAYACkest}). This proves the first statement of the theorem; note that $A$, $B$ and $C$ have to be distinct in order for the
  statement not to be void, in which case $\msY^{A}([\msX_{B},\msX_{C}])=0$ is equivalent to $\bmsY^{A}([\bmsX_{B},\bmsX_{C}])=0$. 
  
  Next, note that $\tr\msK=1$ is an immediate consequence of the fact that $\mK\rightarrow\msK$ and the fact that $\tr\mK=1$. The fact that
  $\msK$ is symmetric with respect to $\bh$ is an immediate consequence of the fact that there is a frame of eigenvector fields of $\msK$ which is
  orthonormal with respect to $\bh$. The relation $\mathrm{tr}\msK^{2}+\bPhi_{a}^{2}=1$ holds by assumption. That
  (\ref{eq:mKmsKconvergencehhbhconvergence}) holds is an immediate consequence of (\ref{eq:kmsKmKexpunifconv}), (\ref{eq:lnthetapvarrhoasymptotics}),
  (\ref{eq:bABasymptoticstheta}) and the definition of $\chh$; cf. (\ref{eq:chhdeffinal}). That (\ref{eq:bDlhUphimPhiaPhibestimatereprodver}) holds
  is an immediate consequence of (\ref{eq:bDlhUphimPhiaPhibestimatereprodverprel}). What remains to be demonstrated is that
  $\mathrm{div}_{\bh}\msK=\bPhi_{a}d\bPhi_{b}$. The template for doing so is the proof of Lemma~\ref{lemma:thelimitsoftheconstrainequations}. However,
  the definition of $X_{A}$ etc. used in this proof differs from the definition of the present section. Let $\bX_{A}=e^{-\br_{A}}X_{A}$ (no summation)
  and define $e^{\chmu_{A}}$ by $|\bX_{A}|_{\bge}=e^{\chmu_{A}}$. Then $\{\bX_{A}\}$ is a $C^{k}$-frame with dual $C^{k}$ frame $\{\bY^{A}\}$. Note also that
  $\{\bmsX_{A}\}$ satisfies the conditions stated in Remark~\ref{remark:msOdefsf}, except for the regularity; the vector fields
  $\bmsX_{A}$ are $C^{k}$. Next, since $\br_{A}=r_{A}+p_{A}r_{\theta}$,
  \begin{equation*}
    \begin{split}
      \chmu_{A}+p_{A}\ln\theta = & \bmu_{A}-\br_{A}+p_{A}\ln\theta
      =(\bmu_{A}-p_{A}\varrho-r_{A})+p_{A}\varrho+r_{A}-r_{A}-p_{A}r_{\theta}+p_{A}\ln\theta\\
      = & (\bmu_{A}-p_{A}\varrho-r_{A})+p_{A}(\varrho+\ln\theta-r_{\theta}).
    \end{split}
  \end{equation*}
  Combining this equality with (\ref{eq:CkrAasymptotics}) and (\ref{eq:lnthetapvarrhoasymptotics}) yields
  \[
  |\bD^{l}(\chmu_{A}+p_{A}\ln\theta)|_{\bg_{\refer}}\leq C_{k}\ldr{\varrho}^{a_{k}}e^{2\e\varrho}
  \]
  on $M$ for all $A$ and all $0\leq l\leq k$. Given the above estimates, we can revisit the proof of
  Lemma~\ref{lemma:thelimitsoftheconstrainequations}, with suitable substitutions: $X_{A}$ and $Y^{A}$ are replaced by $\bX_{A}$ and $\bY^{A}$
  respectively; $\bmu_{A}$ is replaced by $\chmu_{A}$; $\Phia$ and $\Phib$ are replaced by $\bPhi_{a}$ and $\bPhi_{b}$ respectively; the
  structure constants are calculated with respect to $\bX_{A}$ etc. The conclusion is that $\mathrm{div}_{\bh}\msK=\bPhi_{a}d\bPhi_{b}$. The
  theorem follows. 
\end{proof}

Finally, we are in a position to prove Theorem~\ref{thm:finaloriginalsetting}.

\begin{proof}[Theorem~\ref{thm:finaloriginalsetting}]
  Note that the conditions of Theorems~\ref{thm:asymptoticformofmetrichod} and \ref{thm:reprod} are fulfilled. In particular,
  it thus follows that the metric can be represented as in (\ref{eq:gbABgenasformGauss}). The corresponding $\chh$ is given by
  \[
  \chh:=\textstyle{\sum}_{A,B}c_{AB}\msY^{A}\otimes\msY^{B}.
  \]
  Due to (\ref{eq:cABdef}), it follows that the limit of $\chh$ is the $\bh$ appearing in the statement of Theorem~\ref{thm:reprod}; in fact,
  \[
  \textstyle{\sum}_{l=0}^{k}|\bD^{l}(\hh-\bh)|_{\bg_{\refer}} \leq C_{k}\ldr{\ln t}^{a_{k}}t^{2\e}.
  \]
  Moreover, the limit of $\mK$, i.e. $\msK$, is of course the same. This means that $\rodiv_{\bh}\msK=0$, that $\tr\msK^{2}=1$, $\tr\msK=1$ and that
  $\msK$ is symmetric with respect to $\bh$. Moreover, $\msY^{1}[\msX_{2},\msX_{3}]=0$. Assuming $\msK$ to be $C^{\infty}$,
  Lemma~\ref{lemma:localcoordinates} 
  yields coordinates $(V,\bsfx)$ such that the conclusions of Lemma~\ref{lemma:localcoordinates} hold; denote the renormalised bases by
  $\{\hmsX_{A}\}$ and $\{\hmsY^{A}\}$. Then $\msY^{A}=\b_{A}\hmsY^{A}$ (no summation) for some positive functions $\b_{A}$. Moreover, 
  \begin{equation*}
    \begin{split}
      \textstyle{\sum}_{A,B}c_{AB}t^{2p_{\max\{A,B\}}}\msY^{A}\otimes \msY^{B} = & \textstyle{\sum}_{A,B}\bc_{AB}t^{2p_{\max\{A,B\}}}\hmsY^{A}\otimes \hmsY^{B}\\
      = & \textstyle{\sum}_{i,j}d_{ij}t^{2p_{\max\{i,j\}}}d\bsfx^{i}\otimes d\bsfx^{j},
    \end{split}
  \end{equation*}  
  where $\bc_{AB}=\beta_{A}c_{AB}\beta_{B}$ (no summation),
  \[
  \textstyle{\sum}_{l=0}^{k}|\bD^{l}(\bc_{AB}-\bc_{A}^{2}\delta_{AB})|\leq C\ldr{\ln t}^{a_{k}}t^{2\e}
  \]
  and $\bc_{A}=\b_{A}e^{\bar{r}_{A}}$ (no summation); cf. (\ref{eq:bABasymptoticsGauss}). Moreover, it can be deduced that
  (\ref{eq:dijlimits}) holds on $W$, where $\bd_{ii}=\bc_{i}^{2}>0$. Combining this observation with Lemma~\ref{lemma:localcoordinates} yields
  (\ref{eq:bdijrelations}). Next, note that (\ref{eq:mKmsYAmsXBgenesttheta}) can be written
  \begin{equation}\label{eq:mKmmsKintermediateestimate}
    \textstyle{\sum}_{l=0}^{k}|\bD^{l}[\mK(\msY^{A},\msX_{B})-\msK(\msY^{A},\msX_{B})]| \leq C_{k}\ldr{\ln t}^{a_{k}}t^{2\e}\min\{1,t^{2(p_{B}-p_{A})}\}.
  \end{equation}
  One immediate consequence of this estimate is that 
  \[
  \textstyle{\sum}_{l=0}^{k}|\bD^{l}[\mK^{i}_{j}-\msK^{i}_{j}]|_{\bge_{\refer}} \leq C_{k}\ldr{\ln t}^{a_{k}}t^{2\e}
  \]
  on $W$ for $i\geq j$, where $\mK^{i}_{j}=\mK(d\bsfx^{i},\d_{j})$. On the other hand,
  \[
  \mK(d\bsfx^{1},\d_{2})=\mK(\hmsY^{1},\hmsX_{2}-\hmsX^{3}_{2}\hmsX_{3})=O_{k}(\ldr{\ln t}^{a_{k}}t^{2(p_{2}-p_{1}+\e)})
  \]
  on $W$, where we appealed to (\ref{eq:mKmmsKintermediateestimate}) and Lemma~\ref{lemma:localcoordinates}, and the notation $O_{k}$
  signifies that the relevant estimates hold for up to $k$ derivatives. The arguments concerning the remaining components are similar and
  the estimate
  \begin{equation}\label{eq:mKijmsKijestimate}
    \textstyle{\sum}_{l=0}^{k}|\bD^{l}[\mK^{i}_{j}-\msK^{i}_{j}]|_{\bge_{\refer}} \leq C_{k}\ldr{\ln t}^{a_{k}}t^{2\e}\min\{1,t^{2(p_{j}-p_{i})}\}
  \end{equation}
  follows (on $W$). By a similar argument, appealing to (\ref{eq:hmlUfixedframeopttheta}), 
  \begin{equation}\label{eq:hmlUmKijestimate}
    \textstyle{\sum}_{l=0}^{k}|\bD^{l}[(\hml_{U}\mK)^{i}_{j}]| \leq C_{k}\ldr{\ln t}^{a_{k}}t^{2\e}\min\{1,t^{2(p_{j}-p_{i})}\}
  \end{equation}
  on $W$. Combining these estimates with (\ref{eq:thetaasymptoticsGaussianfoliation}), it can be deduced that (\ref{eq:bKijasesthmlUbKijasest}) holds. 

  Finally, combining the fact that $\rodiv_{\bh}\msK=0$ with the arguments presented in the proof of Lemma~\ref{lemma:momcondivmsKz} yields the
  final conclusion of the lemma. 
\end{proof}


\begin{thebibliography}{1}
\bibitem{ABIF} Ames, E.; Beyer, F.; Isenberg, J.; LeFloch, P. G.: Quasilinear hyperbolic Fuchsian systems and AVTD
behavior in $T^2$-symmetric vacuum spacetimes. Ann. Henri Poincar\'{e} {\bf 14}, no. 6, 1445--1523 (2013)
\bibitem{aeta} Ames, E.; Beyer, F.; Isenberg, J.; LeFloch, P. G.: A class of solutions to the Einstein equations with AVTD behavior
  in generalized wave gauges. J. Geom. Phys. {\bf 121}, 42--71 (2017)
\bibitem{aaf} Andersson, L.; Fajman, D.: Nonlinear stability of the Milne model with matter. Comm. Math. Phys. {\bf 378}, no. 1, 261--298 (2020)   
\bibitem{aammilne} Andersson, L.; Moncrief, V.: Future complete vacuum spacetimes. The Einstein equations and the large scale behavior of
  gravitational fields, 299--330, Birkh\"{a}user, Basel (2004)
\bibitem{aameflow} Andersson, L.; Moncrief, V.: Einstein spaces as attractors for the Einstein flow. J. Differential Geom.
  {\bf 89}, no. 1, 1--47 (2011)
\bibitem{aarendall} Andersson, L.; Rendall, A. D.: Quiescent cosmological singularities. Comm. Math. Phys. {\bf 218}, 479--511 (2001)
\bibitem{AAR} Andr\'{e}asson; H.; Ringstr\"{o}m, H.: Proof of the cosmic no-hair conjecture in the $\mathbb{T}^{3}$-Gowdy symmetric Einstein-Vlasov 
setting. J. Eur. Math. Soc. (JEMS) \textbf{18}, no. 7, 1565--1650 (2016)  
\bibitem{beguin} B\'{e}guin, F.: Aperiodic oscillatory asymptotic behavior for some 
  Bianchi spacetimes. Class. Quantum Grav. {\bf 27}, 185005 (2010)
\bibitem{du} Béguin, F.; Dutilleul, T.: Chaotic dynamics of spatially homogeneous spacetimes. Comm. Math. Phys. {\bf 399}, no.2, 737--927 (2023)
\bibitem{bkl1} Belinski\v{\i}, V. A.; Khalatnikov, I. M.; Lifshitz, E. M.: Oscillatory approach to a
singular point in the relativistic cosmology. Adv. Phys. {\bf 19}, 525--573 (1970)
\bibitem{bkl15} Belinskii, V. A.; Khalatnikov, I. M.: Effect of scalar and vector fields on the nature of the cosmological
singularity. Sov. Phys. JETP {\bf 36}, 591--597 (1973)
\bibitem{bkl2} Belinski\v{\i}, V. A.; Khalatnikov, I. M.; Lifshitz, E. M.: A general solution of the
Einstein equations with a time singularity. Adv. Phys. {\bf 31}, 639--667 (1982)
\bibitem{bieri} Bieri, L.: Part I: Solutions of the Einstein vacuum equations. Extensions of the stability theorem of the Minkowski space
  in general relativity, 1–295, AMS/IP Stud. Adv. Math., {\bf 45}, Amer. Math. Soc., Providence, RI, (2009)
\bibitem{baz} Bieri, L.; Zipser, N.: Extensions of the stability theorem of the Minkowski space in general relativity. AMS/IP Studies in Advanced
  Mathematics, {\bf 45}. Amer. Math. Soc., Providence, RI; International Press, Cambridge, MA, xxiv+491 pp. (2009)
\bibitem{betal} Bigorgne, L.; Fajman, D.; Joudioux, J.; Smulevici, J.; Thaller, M.: Asymptotic stability of Minkowski space-time with
  non-compactly supported massless Vlasov matter. Arch. Ration. Mech. Anal. {\bf 242}, no. 1, 1--147 (2021)
\bibitem{brehm} Brehm, B.: Bianchi VIII and IX vacuum cosmologies: Almost every solution forms particle horizons and converges 
to the Mixmaster attractor. Preprint, \href{https://arxiv.org/abs/1606.08058}{arXiv:1606.08058}
\bibitem{cah10} Calogero, S.; Heinzle, J. M.: Oscillations toward the singularity of locally rotationally symmetric Bianchi type IX cosmological
  models with Vlasov matter. SIAM J. Appl. Dyn. Syst. {\bf 9}, no. 4, 1244--1262 (2010)
\bibitem{cah11} Calogero, S.; Heinzle, J. M.: Bianchi cosmologies with anisotropic matter: locally rotationally symmetric models. Phys. D {\bf 240},
  no. 7, 636--669 (2011)
\bibitem{cak} Christodoulou, D.; Klainerman, S.: The global nonlinear stability of the Minkowski space. Princeton Mathematical Series, 41.
  Princeton University Press, Princeton, NJ. x+514 pp. (1993)
\bibitem{dafetal} Dafermos, M.; Holzegel, G.; Rodnianski, I.; Taylor, M.: The non-linear stability of the Schwarzschild family of black holes.
  Preprint, \href{https://arxiv.org/abs/2104.08222}{arXiv:2104.08222}  
\bibitem{dal} Dafermos, M.; Luk, J.: The interior of dynamical vacuum black holes I: The C0-stability of the Kerr Cauchy horizon.
  Preprint, \href{https://arxiv.org/abs/1710.01722}{arXiv:1710.01722}  
\bibitem{dadB} Damour, T.; de Buyl, S.: Describing general cosmological singularities in Iwasawa variables. Phys. Rev. D {\bf 77}, no. 4,
  043520, 26 pp. (2008)
\bibitem{damouretal} Damour, T.; Henneaux, M.; Rendall, A. D.; Weaver, M.: 
Kasner-like behaviour for subcritical Einstein-matter systems. 
Ann. Henri Poincar\'{e} {\bf 3}, 1049--1111 (2002)
\bibitem{dhn} Damour, T.; Henneaux, M.; Nicolai, H.: Cosmological billiards. Class. Quantum Grav. {\bf 20}, R145--R200 (2003)
\bibitem{dah} Damour, T.; Nicolai, H.: Higher order M theory corrections and the Kac-Moody
algebra E10. Class. Quantum Grav. {\bf 22}, 2849--2879 (2005)
\bibitem{Henneauxetal} Demaret, J.; Henneaux, M.; Spindel, P.: Nonoscillatory Behavior In Vacuum Kaluza-Klein Cosmologies. 
Phys. Lett. {\bf 164B}, 27--30 (1985)
\bibitem{ELS} Eardley, D; Liang, E.; Sachs, R.: Velocity-dominated singularities in irrotational dust cosmologies. J. Math. Phys. {\bf 13},
  99--106 (1972)
\bibitem{fjs} Fajman, D.; Joudioux, J.; Smulevici, J.: The stability of the Minkowski space for the Einstein-Vlasov system.
  Anal. PDE {\bf 14}, no. 2, 425--531 (2021)
\bibitem{fow} Fajman, D.; Oliynyk, T. A.; Wyatt, Z.: Stabilizing relativistic fluids on spacetimes with non-accelerated expansion.
  Comm. Math. Phys. {\bf 383}, no. 1, 401--426 (2021)
\bibitem{faw} Fajman, D.; Wyatt, Z.: Attractors of the Einstein-Klein-Gordon system. Comm. Partial Differential Equations
  {\bf 46}, no. 1, 1--30 (2021)
\bibitem{fal} Fournodavlos, G. and Luk, J.: Asymptotically Kasner-like singularities. Amer. J. Math. {\bf 145}, no.4, 1183--1272 (2023)
\bibitem{GIJ} Fournodavlos, G.; Rodnianski, I.; Speck, J.: Stable Big Bang Formation for Einstein’s Equations: The Complete
Sub-critical Regime. J. Amer. Math. Soc. {\bf 36}, no.3, 827--916 (2023)
\bibitem{heflambda} Friedrich, H.: Existence and structure of past asymptotically simple solutions of Einstein's field equations
  with positive cosmological constant. J. Geom. Phys. {\bf 3}, no. 1, 101--117 (1986)
\bibitem{hefmink} Friedrich, H.: On the existence of n-geodesically complete or future complete solutions of Einstein's field equations
  with smooth asymptotic structure. Comm. Math. Phys. {\bf 107}, no. 4, 587--609 (1986)
\bibitem{hefYM} Friedrich, H.: On the global existence and the asymptotic behavior of solutions to the Einstein-Maxwell-Yang-Mills equations.
  J. Differential Geom. {\bf 34}, no. 2, 275--345 (1991)
\bibitem{grisales} Grisales, A. F.: Solutions to the Einstein-scalar field equations with initial data on the singularity. Preprint,
  \href{https://arxiv.org/abs/2409.17065}{arXiv:2409.17065}
\bibitem{haz} Had\v{z}i\'{c}, M.; Speck, J.: The global future stability of the FLRW solutions to the dust-Einstein system with a
  positive cosmological constant. J. Hyperbolic Differ. Equ. {\bf 12}, no. 1, 87--188 (2015)
\bibitem{HU} Heinzle, J. M.; Uggla, C.: A new proof of the Bianchi type IX attractor theorem. Class. Quantum Grav. {\bf 26}, no. 7, 075015 (2009)
\bibitem{huar} Heinzle, J. M.; Uggla, C.; Niklas Rohr, N.: The cosmological billiard attractor.
Adv. Theor. Math. Phys. {\bf 13}, 293--407 (2009)
\bibitem{HUL} Heinzle, J. M.; Uggla, C.; Lim, W. C.: Spike oscillations. Phys. Rev. D {\bf 86}, 104049 (2012)    
\bibitem{havKdS} Hintz, Peter; Vasy, András The global non-linear stability of the Kerr–de Sitter family of black holes.
  Acta Math. {\bf 220}, no. 1, 1--206 (2018)
\bibitem{havmink} Hintz, P.; Vasy, A.: Stability of Minkowski space and polyhomogeneity of the metric. Ann. PDE {\bf 6},
  no. 1, Paper No. 2, 146 pp. (2020)
\bibitem{iak} Isenberg, J.; Kichenassamy, S.: Asymptotic behavior in polarized $T^{2}$-symmetric vacuum space-times. J. Math. Phys. {\bf 40},
  no. 1, 340--352 (1999)
\bibitem{JimandVince} Isenberg, J.; Moncrief, V.: Asymptotic behavior of the gravitational field and the nature of singularities in Gowdy spacetimes.
  Ann. Physics {\bf 199}, no. 1, 84--122 (1990)
\bibitem{ial} Isenberg, J.; Luo, X.: Power law inflation with electromagnetism. Ann. Phys. {\bf 334}, 420--454 (2013)
\bibitem{iam} Isenberg, J.; Moncrief, V.: Asymptotic behaviour in polarized and half-polarized $U(1)$ symmetric vacuum spacetimes.
  Class. Quantum Grav. {\bf 19}, no. 21, 5361--5386 (2002)
\bibitem{kar} Kichenassamy, S.; Rendall, A.: Analytic description of singularities in Gowdy spacetimes. Class. Quantum Grav. {\bf 15}, no. 5,
  1339--1355 (1998)
\bibitem{kas} Klainerman, S.; Szeftel, J.: Global nonlinear stability of Schwarzschild spacetime under polarized perturbations. Annals of
  Mathematics Studies, {\bf 210}. Princeton University Press, Princeton, NJ, 2020. xi+840 pp. 
\bibitem{klinger} Klinger, P.: A new class of asymptotically non-chaotic vacuum singularities. Ann. Physics {\bf 363}, 1--35 (2015)
\bibitem{Lee} Lee, J. M.: Introduction to Smooth Manifolds, Graduate Texts in Mathematics, Springer Verlag, New York, 2013
\bibitem{lea} Liebscher, S.; H\"{a}rterich, J.; Webster, K.; Georgi, M.: Ancient dynamics in Bianchi models: approach to periodic 
cycles. Comm. Math. Phys. {\bf 305}, no. 1, 59--83 (2011)
\bibitem{lrt} Liebscher, S.; Rendall, A. D.; Tchapnda, S. B.: Oscillatory singularities in Bianchi models with magnetic fields.
  Ann. Henri Poincar\'{e} {\bf 14}, no. 5, 1043--1075 (2013)
\bibitem{LK} Lifshitz, E. M.; Khalatnikov, I. M.: Investigations in relativistic cosmology. Adv. Phys. {\bf 12}, 185--249 (1963)
\bibitem{larannals} Lindblad, H.; Rodnianski, I.: The global stability of Minkowski space-time in harmonic gauge. Ann. of Math. (2) {\bf 171},
  no. 3, 1401--1477 (2010)
\bibitem{larcmp} Lindblad, H.; Rodnianski, I.: Global existence for the Einstein vacuum equations in wave coordinates. Comm. Math. Phys.
  {\bf 256}, no. 1, 43--110 (2005)
\bibitem{lat} Lindblad, H.; Taylor, M.: Global stability of Minkowski space for the Einstein-Vlasov system in the harmonic gauge.
  Arch. Ration. Mech. Anal. {\bf 235}, no. 1, 517--633 (2020)
\bibitem{lavk} L\"{u}bbe, C.; Valiente Kroon, J. A.: A conformal approach for the analysis of the non-linear stability of radiation cosmologies.
  Ann. Phys. {\bf 328}, 1--25 (2013)
\bibitem{misner} Misner, C. W.: Mixmaster Universe. Phys. Rev. Lett. {\bf 22} 1071 (1969)
\bibitem{oliynyk} Oliynyk, T. A.: Future stability of the FLRW fluid solutions in the presence of a positive cosmological constant.
  Comm. Math. Phys. {\bf 346}, no. 1, 293--312 (2016)
\bibitem{ren} Rendall, A. D.: Fuchsian analysis of singularities in Gowdy spacetimes beyond analyticity. Class. Quantum Grav. {\bf 17},
  no. 16, 3305--3316 (2000)
\bibitem{raw} Rendall, A. D.; Weaver, M.: Manufacture of Gowdy spacetimes with spikes. Class. Quantum Grav. {\bf 18}, no. 15, 2959--2975 (2001)
\bibitem{cbu} Ringstr\"{o}m,  H.: Curvature blow up in Bianchi VIII and IX  vacuum spacetimes. Class. Quantum Grav. \textbf{17}, 
713--731 (2000)
\bibitem{BianchiIXattr} Ringstr\"{o}m, H.: The Bianchi IX attractor, Annales Henri Poincar\'{e} {\bf 2}, 405--500 (2001)
\bibitem{rininv} Ringstr\"{o}m, H.: Future stability of the Einstein non-linear scalar field system. Invent. Math. {\bf 173}, 123--208 (2008)
\bibitem{rinpl} Ringstr\"{o}m, H.: Power law inflation. Comm. Math. Phys. {\bf 290}, 155--218 (2009)
\bibitem{RinCauchy} Ringstr\"{o}m, H.: The Cauchy problem in General Relativity, European Mathematical Society, Z\"{u}rich (2009)
\bibitem{rintopstab} Ringstr\"{o}m, H.: On the Topology and Future Stability of the Universe. Oxford Univ. Press, Oxford (2013)
\bibitem{RinWave} Ringstr\"{o}m, H.: Wave equations on silent big bang backgrounds. Preprint, 
  \href{https://arxiv.org/abs/2101.04939}{arXiv:2101.04939}. Accepted for publication, Memoirs of the AMS
\bibitem{RinGeo} Ringstr\"{o}m, H.: On the geometry of silent and anisotropic big bang singularities. Preprint,
  \href{https://arxiv.org/abs/2101.04955}{arXiv:2101.04955v2}
\bibitem{RinDoS} Ringstr\"{o}m, H.: Initial data on big bang singularities in symmetric settings. Pure Appl. Math. Q. {\bf 20}, no. 4, 1505--1539 (2024)
\bibitem{rasJEMS} Rodnianski, I.; Speck, J.: The nonlinear future stability of the FLRW family of solutions to the irrotational Euler–Einstein
system with a positive cosmological constant. J. Eur. Math. Soc. {\bf 15}, 2369--2462 (2013)
\bibitem{rasql} Rodnianski, I.; Speck, J.: A regime of linear stability for the Einstein-scalar field system with applications to
  nonlinear big bang formation. Ann. Math. (2) {\bf 187}, no. 1, 65--156 (2018)
\bibitem{rasq} Rodnianski, I.; Speck, J.: Stable big bang formation in near-FLRW solutions to the Einstein-scalar field and 
Einstein-stiff fluid systems. Selecta Math. (N.S.) {\bf 24}, no. 5, 4293--4459 (2018)
\bibitem{rsh} Rodnianski, I.; Speck, J.: On the nature of Hawking's incompleteness for the Einstein-vacuum equations: the regime of 
moderately spatially anisotropic initial data. J. Eur. Math. Soc. (JEMS) {\bf 24}, no. 1, 167--263 (2022)
\bibitem{speck} Speck, J.: The nonlinear future stability of the FLRW family of solutions to the Euler–
  Einstein system with a positive cosmological constant. Selecta Math. (N.S.) {\bf 18}, 633--715 (2012)
\bibitem{speckradiation} Speck, J.: The stabilizing effect of spacetime expansion on relativistic fluids with sharp results for the
  radiation equation of state. Arch. Ration. Mech. Anal. {\bf 210}, 535–579 (2013)
\bibitem{specks3} Speck, J.: The maximal development of near--FLRW data for the Einstein--scalar field system with spatial topology 
$\sn{3}$. Comm. Math. Phys. {\bf 364}, no. 3, 879--979 (2018)  
\bibitem{sta} St\aa hl, F.: Fuchsian analysis of $S^{2}\times S^{1}$ and $S^{3}$ Gowdy spacetimes. Class. Quantum Grav. {\bf 19}, no. 17,
  4483--4504 (2002)
\bibitem{svedberg} Svedberg, C.: Future stability of the Einstein–Maxwell-scalar field system. Ann. Henri
Poincar\'{e} {\bf 12}, 849--917 (2011)  
\bibitem{wea} Weaver, M.: Dynamics of magnetic Bianchi VI${}_{0}$-cosmologies. Class. Quantum Grav. {\bf 17}, no. 2, 421--434 (2000)
\bibitem{zipser} Zipser, N.: The global nonlinear stability of the trivial solution of the Einstein-Maxwell equations.
  Thesis (Ph.D.)–Harvard University. 198 pp. (2000)
\end{thebibliography}
\end{document}